\documentclass[11pt,a4paper,reqno]{amsart}
\usepackage[margin=2.5cm]{geometry}
\usepackage{microtype}
\usepackage{amsmath,amsfonts,amssymb,amsthm,mathrsfs}
\usepackage{graphicx}
\usepackage[colorlinks=true,linkcolor=cyan]{hyperref}
\usepackage{enumitem}
\setlist{noitemsep,topsep=3pt}

\newtheorem{theorem}{Theorem}[section]
\newtheorem{lemma}[theorem]{Lemma}
\newtheorem{proposition}[theorem]{Proposition}
\newtheorem{corollary}[theorem]{Corollary}

\theoremstyle{definition}
\newtheorem{definition}[theorem]{Definition}
\newtheorem{example}[theorem]{Example}
\theoremstyle{remark}
\newtheorem{remark}[theorem]{Remark}

\newcommand\U{{\mathrm U}}
\newcommand\SU{{\mathrm {SU}}}

\newcommand\Z{{\mathbb{Z}}}
\newcommand\N{{\mathbb N}}
\newcommand\C{{\mathbb C}}
\newcommand\R{{\mathbb R}}
\newcommand\E{{\mathbb E}}
\newcommand\Tr{{\mathrm{Tr}}}
\newcommand\supp{{\mathrm{supp}}}
\newcommand\dist{{\mathrm{dist}}}
\newcommand\HK{{\mathrm{HK}}}
\newcommand\Id{{\mathrm{Id}}}

\newcommand\Proj{{\mathrm{Proj}}}

\newcommand\conn{{\mathrm{conn}}}

\newcommand\End{{\mathrm{End}}}
\newcommand\Hom{{\mathrm{Hom}}}
\newcommand\Pfr{{\mathcal{P}}}

\newcommand\Pbb{{\mathbb{P}}}
\newcommand\Cref{{\mathcal C_{\mathrm{ref}}}}

\title{The heat-kernel master field on $\Z^d$ at strong coupling}
\author{Thibaut Lemoine}
\address{Universit\'e de Strasbourg, CNRS, IRMA UMR 7501, Strasbourg, France.}
\email{thibaut.lemoine@math.unistra.fr}

\subjclass[2020]{Primary 81T13; Secondary 81T25, 60B20, 60B15, 05E10, 22C05, 82B20}

\keywords{Large-$N$ Yang--Mills theory; lattice gauge theory; heat-kernel action; master field; Wilson loops; strong coupling; area law; Schur--Weyl duality; spin-network estimates; master loop equations.}

\begin{document}
\maketitle

\begin{abstract}
We solve large-$N$ Yang--Mills theory on $\Z^d$, for every $d\geq2$, at strong coupling, for structure group $\mathrm U(N)$ and for the heat-kernel action. More precisely, we prove that normalized Wilson loop expectations have infinite-volume large-$N$ limits, factorize at leading order, and admit an all-order $1/N$-expansion with exponentially local coefficients, whose leading order characterizes the master field. We also prove an area-law upper bound for the heat-kernel master field, with a stronger coefficientwise version.

The proof is based on a rooted heat-kernel master loop equation. Unlike the Wilson-action equation or the two-dimensional Makeenko--Migdal equation, this equation does not close on Wilson loop observables alone; it closes on an extended space of loop observables coupled to compactly supported plaquette decorations. We prove a strong-coupling, order-truncated rooted trajectory expansion and then identify its leading term with the master field. The main inputs are the universal finite-$N$ duality formulas developed in the companion paper \cite{Lem26a} and large-$N$ heat-kernel estimates from \cite{LemMai25,LM2}.

\end{abstract}

\tableofcontents

\section{Introduction}

\subsection{Yang--Mills: from lattice to continuum}

Lattice Yang--Mills theory is a probabilistic model designed to describe Euclidean Yang--Mills theory in a discrete setting. It relies on the following ingredients:
\begin{itemize}
\item A discrete spacetime: an oriented two-dimensional CW complex $\Lambda=(\Lambda^0,\Lambda^1,\Lambda^2)$, which can be an embedded graph $\mathbb G=(V,E,F)$ in a compact surface (for the two-dimensional setting) or a sublattice $\Lambda\subset\Z^d$ for any $d\geq 2$ (for the general setting). In the latter convention, we denote by $E(\Lambda)=\Lambda^1$ its edges and $P(\Lambda)=\Lambda^2$ its plaquettes.
\item A structure group: a compact connected Lie group $G$, typically $\U(N)$ or $\SU(N)$.

\item An action weight: a family $Q=(Q_t)_{t>0}$ of positive central functions $Q_t:G\to\R$.  The parameter $t$ should be thought of as a temperature or area parameter; in the heat-kernel case it is literally the heat time.
\item A coupling: a family $\beta=(\beta_p)_{p\in P(\Lambda)}$ of positive real numbers for each plaquette.  We use the convention $t_p=1/\beta_p$, so that strong coupling, or small inverse temperature in the usual lattice terminology, corresponds to large $t_p$.
\end{itemize}
The discrete Yang--Mills measure on $\Lambda$ with structure group $G$ and coupling constant $\beta$ is a measure $\mu_{\Lambda,\beta,G}$ on the configuration space $G^{E(\Lambda)}$: if we denote by $U=(U_e)_{e\in E(\Lambda)}$ a typical configuration and $dU=\prod_{e\in E(\Lambda)}dU_e$ the associated uniform probability measure, then
\[
d\mu_{\Lambda,\beta,G}(U)=\frac{1}{Z}\prod_{p\in P(\Lambda)}Q_{1/\beta_p}(U_{\partial p})dU,
\]
where $U_{\partial p}$ is the ordered product of the edge variables $U_e$ for $e\in\partial p$, and $Z$ is a normalization constant called partition function.

This convention reconciles two standard notational traditions.  In the Wilson-action literature one usually writes the inverse-temperature parameter $\beta$ in the exponent,
\[
Q_{1/\beta}^{\mathrm W}(g)=e^{-\beta\Re\Tr(I-g)},
\]
whereas in heat-kernel Yang--Mills one usually writes the plaquette weight in terms of the heat time,
\[
Q_t^{\HK}(g)=p_t(g).
\]
Equivalently, the Wilson family may be written as $Q_t^{\mathrm W}(g)=e^{-t^{-1}\Re\Tr(I-g)}$.  The Wilson action was introduced in \cite{Wil74}; the heat-kernel action, first considered by Migdal~\cite{Mig75}, uses the fundamental solution $(p_t)_{t\geq 0}$ of the heat equation
\[
\partial_t p_t=\frac12\Delta p_t,\qquad \lim_{t\downarrow 0}p_t(g)dg=\delta_{1_G}.
\]
With this heat-equation convention, the large-$N$ heat-kernel plaquette weight used in the sequel is the $N$-scaled heat kernel $Q^{\HK}_{T,N}=p_{T/N}$, whose Fourier coefficient is fixed once and for all in~\eqref{eq:Fourier-coef-HK}.  Thus the parameter denoted by $T$ below is the rescaled heat time, and strong coupling corresponds to large $T$.
Both actions are natural approximations of the formal continuum Yang--Mills action, defined as the squared $L^2$ norm of the curvature of a connection of a principal $G$-bundle over the underlying spacetime. There have been good reasons to consider the Wilson action and the heat-kernel action in each situation: on the one hand, the Wilson action is usually easier to analyze, and it links lattice Yang--Mills with quite general random matrix models \cite{CPS25}; on the other hand, the heat-kernel action is used in the construction of a \emph{continuum} two-dimensional Yang--Mills measure, either directly in the continuum \cite{Dri89,Sen97,Lev03} or as a scaling limit of discrete models \cite{ChevShen26,DangNohra26,LemNoh26}. In this sense the heat-kernel action is not merely a convenient alternative to Wilson's action, but the canonical heat-semigroup representative in the two-dimensional Brownian Yang--Mills setting\footnote{See also the paper by Chevyrev--Garban \cite{ChevGar25} for a partial result in higher dimensions.}. The present paper asks whether the heat-kernel action can also be controlled in the higher-dimensional large-$N$ strong-coupling regime. Let us note that, although some partial progress has been made toward higher-dimensional continuum Yang--Mills theory \cite{CCHS24,CC24,ChevShen26b}, these results are formulated in terms of Yang--Mills heat flow rather than in terms of a specific lattice plaquette action. More generally, a rigorous construction for a nonabelian structure group in dimension 3 or 4 remains a major challenge today \cite{JW}. 

\subsection{Large-$N$ regime}

Since 't Hooft's seminal paper \cite{Hoo74}, a special regime has been investigated: the large-$N$ limit. It consists in taking $G$ as a group of $N\times N$ matrices, typically $\U(N)$ or $\SU(N)$, and letting $N$ tend to infinity. The main observables of interest are either the partition function
\[
Z=\int_{G^{E(\Lambda)}}\prod_{p\in P(\Lambda)}Q_{1/\beta_p}(U_{\partial p}) dU,
\]
or Wilson loop expectations. If $L=(\ell_1,\ldots,\ell_k)$ is a family of loops traced in $\Lambda$ and $G$ is a subgroup of $\U(N)$, the Wilson loop functional is the functional $W_{\Lambda,L}:G^{E(\Lambda)}\to\C$ defined by
\[
W_{\Lambda,L}(U)=\Tr(U_{\ell_1})\ldots\Tr(U_{\ell_k}),
\]
where $U_\ell$ is defined as the ordered product of the edge variables $U_e$ traversed by the loop $\ell$. The associated Wilson loop expectation is then the matrix integral
\[
\E_{\Lambda,\beta,G}[W_{\Lambda,L}(U)]=\frac1Z\int_{G^{E(\Lambda)}}\Tr(U_{\ell_1})\ldots\Tr(U_{\ell_k})\prod_{p\in P(\Lambda)}Q_{1/\beta_p}(U_{\partial p})dU.
\]
It is often more convenient to work with \emph{normalized} Wilson loop expectations:
\[
\Phi_{\Lambda,\beta,G}(L)=N^{-k}\E_{\Lambda,\beta,G}[W_{\Lambda,L}(U)].
\]
A major conjecture is the existence of the \emph{master field}, a functional $\phi_{\Lambda,\beta,\infty}$ on the space $\mathcal{L}$ of loops in $\Lambda$ (or a functional $\Phi_\infty$ on the space of loops in the ambient surface if we work with the continuous Yang--Mills measure in two dimensions) such that
\[
\lim_{N\to\infty}\Phi_{\Lambda,\beta,\U(N)}(\ell_1,\ldots,\ell_k)=\Phi_{\Lambda,\beta,\infty}(\ell_1)\ldots\Phi_{\Lambda,\beta,\infty}(\ell_k).
\]
In other words, the normalized Wilson loop expectations should not only converge to finite values, but their limit should also satisfy an asymptotic factorization property.

For the two-dimensional continuous Yang--Mills theory, the existence of the master field was first investigated by Singer \cite{Sing95}, and proved successively in the plane, the sphere, the torus and higher genus surfaces in a series of papers \cite{Lev17,DN,DL23,DL25,PPSY23,Dah26}. The proofs combine harmonic analysis techniques on $\U(N)$ (or other compact classical groups) and loop equations known as Makeenko--Migdal equations \cite{MM79}, whose large-$N$ version has been proved to admit a unique solution: the master field.

Historically, the master field has also been understood as a non-commutative probabilistic object, rather than only as a collection of scalar Wilson loop limits.  This point of view is already present in Singer's formulation of the problem \cite{Sing95}, and it is particularly explicit in the two-dimensional theory: in the planar case, after choosing elementary lassos, the planar master field can be described through free unitary Brownian motion \cite{Lev17}. In that setting, independence of the finite-$N$ lasso variables becomes freeness at large $N$. This is no longer the case on compact surfaces, where elementary lasso variables are no longer independent. Among the notable consequences, the master field on a two-dimensional torus of area $T$ provides a new kind of interpolation between independent $(T=0)$ and free $(T=\infty)$ Haar unitaries \cite{DL25}. The present paper provides a higher-dimensional construction of the master field as a tracial non-commutative law on lattice loop holonomies.  This distinction is discussed in Subsection~\ref{subsec:concluding-remarks}. An extension of the planar master field from the heat semigroup to other L\'evy semigroups was obtained in \cite{CDG17}.

In the case of higher-dimensional lattice Yang--Mills, the existence of the master field for the Wilson action was first proved for several group types by Chatterjee and Jafarov \cite{Jaf16,Cha19}, and novel proofs have been obtained since then using different techniques \cite{SZZ23,BCSK24,BCSK25}. The rigorous large-$N$ results currently known in $\Z^d$ are proved in the \emph{strong coupling} regime, meaning that we consider $\beta$ very small, or $t=1/\beta$ very large.

Let us finally mention that, along with the existence of the master field, another aspect of large-$N$ theory is the conjectured link with string theory: partition functions and Wilson loop expectations should be asymptotically described in terms of string theory, or more loosely stated, ``surface-sums''. This principle, often called ``gauge/string duality'', could already be attributed to 't Hooft \cite{Hoo74} but was really developed by Gross and Taylor \cite{GT,GT2}; it has been investigated both in the lattice and in the 2d continuum settings \cite{Lev08,Jaf16,BaGan18,Cha19,Nov2024,LemMai25,CPS25,LM2}. See \cite{CPS25,LM2,Lem26a} for more detailed accounts on this program, which is currently better understood in two dimensions because it has deep relationships with Hurwitz and Gromov--Witten theories, which describe genuine string theories.

\subsection{Confinement and mass gap}

One of the main motivations for studying Wilson loop expectations is the problem of confinement. In physical terms, confinement means that isolated colour charges cannot be observed at large distances.  On the lattice, Wilson proposed to detect this phenomenon through the decay of Wilson loop expectations: a confining phase should satisfy an \emph{area law}
\[
|\E[W_\ell]|\le C^{|\ell|}\exp\{-\sigma A(\ell)\},
\]
where $A(\ell)$ is a minimal lattice filling area and $\sigma>0$ is the string tension.  This should be contrasted with a perimeter law, where the decay is governed by the length of the loop rather than by the area it encloses.  The area law is the lattice manifestation of a linear potential between static test charges.

The area-law problem is closely related to the mass-gap problem \cite{JW}, and we refer the reader to \cite{Cha21} for a recent probabilistic overview. The rigorous strong-coupling theory of confinement goes back to the classical lattice gauge theory literature: Wilson's original argument predicted area decay at strong coupling \cite{Wil74}; Osterwalder--Seiler developed an early constructive framework for lattice gauge fields and proved strong-coupling area-law estimates \cite{OS78}, and Seiler's monograph gives a systematic account of the constructive and statistical-mechanical approach to gauge theories \cite{Sei82}.  Cao--Nissim--Sheffield recently enlarged the parameter regimes where Wilson area law is known for pure $\U(N)$ lattice Yang--Mills, especially at large $N$, by treating the master loop equation as a linear inhomogeneous equation for Wilson string expectations and controlling the merger term through a truncated model \cite{CNS25a}.  In a subsequent note \cite{CNS25}, they proved Wilson's area law in the 't Hooft regime by verifying a mass-gap criterion of Durhuus--Fr\"ohlich through the dynamical approach of Shen--Zhu--Zhu \cite{DF80,SZZ23}. These results provide a sharp contemporary benchmark for strong-coupling confinement in lattice Yang--Mills; the modern large-$N$ results just mentioned, however, are only formulated for the Wilson action.

\subsection{Main results}

Before stating the main results, let us emphasize the main message of the paper: the universal finite-$N$ formalism of \cite{Lem26a} provides a new way of analyzing lattice Yang--Mills actions beyond the Wilson case, provided that the corresponding actions can be controlled in the large-$N$ regime. The present paper shows that this formalism, combined with the large-$N$ heat-kernel estimates obtained in a series of works in collaboration with Myl\`ene Ma\"ida \cite{LemMai25,LM2}, yields a full strong-coupling large-$N$ analysis of heat-kernel lattice Yang--Mills on $\Z^d$, $d\ge 2$. In this sense the results below should be viewed as the first major application of the universal duality machinery to a non-Wilson action.

Our first result is a large-$N$ analysis of normalized Wilson loop expectations $\Phi_{\Lambda,T,N}:=\Phi_{\Lambda,T,\U(N)}$ in the heat-kernel model with plaquette weight $Q^{\HK}_{T,N}=p_{T/N}$, with the same rescaled parameter $T>0$ on every plaquette, for any finite family $L=(\ell_1,\ldots,\ell_k)$ of loops in $\Z^d$ with finite support, and any finite lattice $\Lambda\subset\Z^d$ that contains $L$. As in the case of the Wilson action, we shall consider only strong coupling, i.e. $T$ large enough. In particular, it provides the existence and characterization of the master field in $\Z^d$ for the heat-kernel action at strong coupling.

\begin{theorem}[Construction of the master field]\label{thm:main}
Let $d\ge2$. There exists $T_0(d)<\infty$ such that, for every $T>T_0(d)$ and every fixed finite loop family $L$, the normalized Wilson expectations admit, for every $R\ge0$, an expansion
\begin{equation}\label{eq:intro-full-expansion}
\Phi_{\Lambda,T,N}(L)
=\sum_{r=0}^R N^{-r}\Phi^{(r)}_{\Lambda,T}(L)+O_{R,L,T}(N^{-R-1}),
\end{equation}
with a remainder uniform over all finite volumes $\Lambda$ containing $L$. Each coefficient has an infinite-volume limit $\Phi^{(r)}_{\infty,T}(L)\in\C$, and there are constants $C_{r,L,T},c>0$ such that
\begin{equation}\label{eq:intro-exp-locality}
|\Phi^{(r)}_{\Lambda,T}(L)-\Phi^{(r)}_{\infty,T}(L)|
\le C_{r,L,T}e^{-c\operatorname{dist}(\operatorname{supp}L,\partial\Lambda)}.
\end{equation}
Consequently, for every sequence of finite volumes $\Lambda_n$ and ranks $N_n$ satisfying
\[
\operatorname{dist}(\operatorname{supp}L,\partial\Lambda_n)\longrightarrow\infty,
\qquad N_n\longrightarrow\infty,
\]
one has the joint limit
\begin{equation}\label{eq:intro-joint-limit}
\Phi_{\Lambda_n,T,N_n}(L)\longrightarrow\Phi^{(0)}_{\infty,T}(L).
\end{equation}
In particular this contains the iterated thermodynamic/large-$N$ limit along every exhaustion. The leading coefficient satisfies
\begin{equation}
\Phi^{(0)}_{\infty,T}(\ell_1,\ldots,\ell_k)=\prod_{i=1}^k\Phi^{(0)}_{\infty,T}(\ell_i).
\end{equation}
It defines a functional $\Phi^{(0)}_{\infty,T}$ on loop families on $\Z^d$ that we call the \emph{heat-kernel master field}.
\end{theorem}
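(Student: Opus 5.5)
The plan follows the strategy announced in the abstract: a rooted heat-kernel master loop equation, solved by a strong-coupling trajectory expansion.

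\emph{Step 1: the rooted equation.} I will start from the universal finite-$N$ duality formulas of \cite{Lem26a}. Applied to the heat-kernel weight $p_{T/N}$, they should yield a rooted master loop equation for the heat-kernel model. Integration by parts on a single edge variable at the root of a loop produces four kinds of operations: splitting, merger with other loops, deformation by a plaquette, and a decoration term coming from the derivative of $p_{T/N}$ at the plaquettes adjacent to that edge. Unlike the Wilson case, this last term is not a Wilson loop. I will therefore enlarge the observable space to pairs $(L,D)$, where $L$ is a loop family and $D$ is a finitely supported plaquette decoration, that is, a character-type insertion on finitely many plaquettes. The decorated Wilson expectations $\Phi_{\Lambda,T,N}(L,D)$ then satisfy a closed linear system. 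Its coefficients are $N^{-1}$-polynomials, with Fourier coefficients of the heat kernel given by \eqref{eq:Fourier-coef-HK}. Using the large-$N$ heat-kernel estimates of \cite{LemMai25,LM2}, I expect each decoration to cost a factor $e^{-cT}$ uniformly in $N$, up to combinatorial factors $K^{|D|}$ and $K^{|L|}$.

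\emph{Step 2: expansion in $1/N$ and trajectories.} I will expand every coefficient of the system in powers of $N^{-1}$ up to order $R$, with an explicit remainder. Iterating the rooted equation gives, as in Chatterjee's and Jafarov's treatment of the Wilson action \cite{Cha19,Jaf16}, a sum over rooted trajectories. These are sequences of operations that reduce the total length plus decoration size until reaching the empty family, whose value is $1$. I will then bound the number of trajectories with $n$ deformation/decoration steps by $C(d)^n$. Each such step carries weight $e^{-cT}$, while splittings and mergers are controlled by length, with mergers carrying $N^{-1}$ or $N^{-2}$ factors. Choosing $T_0(d)$ so that $C(d)e^{-cT}<1$ makes the order-truncated expansion absolutely convergent, uniformly in $N$ and $\Lambda$. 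This defines $\Phi^{(r)}_{\Lambda,T}(L)$ as the sum over trajectories with total $N$-order $r$, and gives \eqref{eq:intro-full-expansion} with a remainder uniform in $\Lambda$.

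\emph{Step 3: locality.} A trajectory sees $\partial\Lambda$ only if it travels distance at least $\operatorname{dist}(\operatorname{supp}L,\partial\Lambda)$. That requires at least that many deformation steps, each costing $e^{-cT}$. So $\Phi^{(r)}_{\Lambda,T}$ and $\Phi^{(r)}_{\Lambda',T}$ differ by at most a geometric tail, which gives the Cauchy property and \eqref{eq:intro-exp-locality}. The joint limit \eqref{eq:intro-joint-limit} then follows by combining the $R=0$ case of \eqref{eq:intro-full-expansion}, with its uniform remainder, with this locality estimate.

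\emph{Step 4: factorization.} At order $N^0$, mergers between distinct components are suppressed and decorations attached to one component never interact with another. The trajectory sum for $(\ell_1,\ldots,\ell_k)$ at leading order is therefore a shuffle of independent trajectories, one for each $\ell_i$, and the weights multiply. Equivalently, both sides of the factorization identity solve the same leading-order rooted equation, whose solution is unique in the convergent regime by contraction.

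\emph{Main obstacle.} I expect the hardest part to be the uniform-in-$N$ control in Step 2. The heat-kernel Fourier coefficients depend on $N$ through Casimirs and dimension ratios, and the decoration space must not grow uncontrollably under iteration. I would first try to show that each decoration can be absorbed into a single plaquette character whose normalized coefficient is at most $e^{-cT}$ for all $N$, with explicit $1/N$ corrections from \cite{LM2}. This would make the decorated system look like a Wilson-type system with modified weights.
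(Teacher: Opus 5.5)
Your overall architecture matches the paper's: a rooted equation on decorated-loop states, order-truncated trajectory sums, locality from displacement, and the joint limit from the uniform $R=0$ remainder plus locality. However, the estimate that drives it is missing, and two structural claims are false.

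\textbf{The missing estimate.} In Steps~1--2 you assert that each decoration costs $e^{-cT}$ uniformly in $N$, with coefficients polynomial in $N^{-1}$. This fails term by term. We have $\widehat Q^{\HK}_{T,N}(\lambda)=d_\lambda e^{-TC_2(\lambda)/2N}$ and $|\chi_\lambda|\le d_\lambda$, where $d_\lambda$ is of order $N^{m(\lambda)}$, while the Casimir only supplies $e^{-cTm(\lambda)}$. After the label is moved onto the reference law, a transfer carries the ratio $\widehat Q^{\HK}_{T,N}(\alpha_p)/\widehat Q^{\HK}_{T,N}(\beta_p)$. This ratio is summed over \emph{all} input labels $\alpha_p\in\widehat{\U(N)}$ at undecorated plaquettes, including labels of mass comparable to $N$, where no $1/N$ expansion exists, and arbitrary determinant shifts. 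Normalizing a decoration as a single character, as you suggest, only moves the dimension factor elsewhere. In the paper the uniform bound comes from cancelling plaquette dimensions against edge Haar integrations before any absolute value is taken. This is the pinned block estimate (Theorem~\ref{thm:pinned-block}), proved by two steps:
\begin{enumerate}
\item peeling tree appendices with the partial-trace identity of Lemma~\ref{lem:partial-trace-haar-projection};
\item summing reciprocal dimensions along stable Pieri paths (Lemmas~\ref{lem:oscillating-path} and~\ref{lem:core-path-encoding}).
\end{enumerate}
This estimate feeds the adjoint total-variation bound for the transfer (Lemma~\ref{lem:paid-local-adjoint-row}, Proposition~\ref{prop:raw-plaquette-transfer}). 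Unstable and large-mass labels are removed by tilted tail bounds (Lemma~\ref{lem:product-tilted-mass-tail}, Proposition~\ref{prop:nonstable-polymer-tail}). Determinant shifts are decoupled by the central selection rule (Lemma~\ref{lem:central-selection}). The block estimate is also what guarantees that no retained coefficient starts with a positive power of $N$, which your definition of order-$r$ trajectories presupposes. Without a substitute for it, Step~2 has no input.

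\textbf{Termination of trajectories.} Trajectories do not end at the empty family with value $1$. A transfer only updates the constraint ($\zeta\mapsto\zeta^{p\to\beta_p}$) and never removes it. Once the loop family is trivial, no root is left. So the terminal objects are decorated vacuum ratios $\Phi_{\Lambda,T,N}(M;\zeta)$, which are not $1$. The paper controls them by a separate linked-cluster expansion pinned to $S(\zeta)$ (Lemma~\ref{lem:constraint-component-cancellation}, Proposition~\ref{prop:decorated-terminal-sources}), again resting on the block bounds. There is also no monotone ``length plus decoration'' quantity: transfers lengthen the loop, and same-orientation splittings need not lower the rooted multiplicity. Hence the loop-only part has no bounded resolvent. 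It must be handled by the order-truncated Catalan entropy bound of Lemma~\ref{lem:loop-only-entropy}, not by a contraction. This also rules out the uniqueness-by-contraction variant of your Step~4.

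\textbf{Factorization.} Step~4's claim that decorations of different components never interact is false. $\zeta$ is a global constraint on plaquette labels. Once $\ell_1$ has become trivial, the root moves to $\ell_2$ with the same $\zeta$, and $\ell_2$ transfers through $A_p(\zeta)=\{\zeta(p)\}$. Take $\Lambda$ a single plaquette $p$ and $L=(\partial p,\partial p)$. Then
\[
\Phi_{\Lambda,T,N}(L)=N^{-2}\sum_{\lambda} d_\lambda e^{-TC_2(\lambda)/2N},\qquad \lambda\in\{[\varnothing,(2)]_N,[\varnothing,(1,1)]_N\}.
\]
Here a single label serves both traces, and the contribution is of order $N^0$. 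It agrees with $\Phi_{\Lambda,T,N}(\partial p)^2=e^{-T}$ up to $O(N^{-2})$ only because $d_{(2)}+d_{(1,1)}=N^2$ and the leading Casimir term $Nm(\lambda)$ is additive. So the shared-label contribution is itself order $N^0$, and only the cumulant combination vanishes. Factorization is therefore a cancellation to be proved, not a shuffle of independent trajectories. The product of one-loop solutions does not even solve the decorated multi-loop equation, since $\zeta$ cannot be split between components. The paper handles this through connected cumulants and the linked-cluster expansion (Lemma~\ref{lem:wilson-cumulant-linked-cluster}, Proposition~\ref{prop:factorization}).
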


Theorem~\ref{thm:main-expanded} gives the coefficientwise construction underlying this statement and the quantitative finite-volume estimates used in its proof. A discussion on why the resulting asymptotic expansion does not directly yield a gauge/string surface expansion is given in Subsection~\ref{subsec:concluding-remarks}.

The second main result is the confinement of the heat-kernel master field at strong coupling. Let $\mathcal A(\ell)$ denote the lattice filling area of a loop $\ell$, see Definition~\ref{def:lattice-area}.

\begin{theorem}[Area law for the heat-kernel master field]\label{thm:main-area-law}
Let $d\ge2$. There exists $T_{\mathrm{area}}(d)<\infty$ such that, for every $T>T_{\mathrm{area}}(d)$, there are constants $C(T)<\infty$ and $\sigma(T)>0$, with $\sigma(T)$ growing linearly as $T\to\infty$, such that every lattice loop $\ell$ satisfies
\begin{equation}\label{eq:mf-area-law1}
\left|\phi^{\HK}_{\infty,T}(\ell)\right|\le C(T)^{|\ell|}\exp\{-\sigma(T)\mathcal A(\ell)\},
\end{equation}
where $\phi^{\HK}_{\infty,T}(\ell):=\Phi^{(0)}_{\infty,T}(\ell)$ is the one-loop heat-kernel master field. For simple non-trivial loops, after increasing the strong-coupling threshold if necessary, the boundary-length prefactor can be absorbed into the area term, giving a purely exponential bound
\begin{equation}\label{eq:mf-area-law2}
\left|\phi^{\HK}_{\infty,T}(\ell)\right|\le \exp\{-\widetilde\sigma(T)\mathcal A(\ell)\}
\end{equation}
for some $\widetilde\sigma(T)>0$.
\end{theorem}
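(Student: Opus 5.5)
The bound will come from the same rooted trajectory expansion that produces the master field in Theorem~\ref{thm:main}, now read at leading order and with attention to how many plaquettes each surviving term is forced to carry. Concretely, $\phi^{\HK}_{\infty,T}(\ell)=\Phi^{(0)}_{\infty,T}(\ell)$ is the infinite-volume, $N^0$ coefficient of the rooted heat-kernel master loop expansion. That expansion is written as a sum over rooted trajectories. A trajectory starts from $\ell$, performs splitting, deformation and plaquette-insertion moves on the extended space of loops with compactly supported plaquette decorations, and ends on the trivial loop. Its weight is a product of elementary factors.

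\textbf{Step 1: factors per plaquette.} I would first extract from the large-$N$ heat-kernel estimates of \cite{LemMai25,LM2}, via the duality formulas of \cite{Lem26a}, the behaviour of each plaquette factor. Fourier coefficients of $p_{T/N}$ on nontrivial irreducible representations of size $n$ behave like $e^{-nT/2}$ times a polynomial factor at leading order. So every plaquette decoration that a trajectory attaches costs a factor $\lesssim e^{-cT}$ with $c>0$ independent of $N$. Every other move (splitting, deformation, removal of backtracks) costs at most a constant $K(d)$.

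\textbf{Step 2: the topological constraint.} The key claim is that any trajectory contributing at leading order, that is, with no $1/N$ penalty, and terminating at the trivial loop must have a multiset of inserted plaquettes whose $\Z$-linear combination has boundary $\ell$. In other words, it spans a lattice $2$-chain filling $\ell$, so it uses at least $\mathcal A(\ell)$ plaquettes, with $\mathcal A$ as in Definition~\ref{def:lattice-area}. This holds because deformation moves change the homology class of the loop only by plaquette boundaries. The splitting moves at order $N^0$ preserve the total boundary chain. Hence the loop can only be contracted to the trivial loop once the decorations carry a filling surface.

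\textbf{Step 3: counting and summing.} Next I would bound the number of trajectories with length $m$ rooted at $\ell$ by $K(d)^{|\ell|+m}$, as in the Chatterjee--Jafarov style counting. The convergence of the trajectory expansion for $T>T_0(d)$ already gives a geometric series in $m$. Combining the three steps,
\[
|\phi^{\HK}_{\infty,T}(\ell)|\le \sum_{m\ge \mathcal A(\ell)} K^{|\ell|+m}e^{-cTm}\le C(T)^{|\ell|}e^{-(cT-\log K)\mathcal A(\ell)},
\]
provided $T>T_{\mathrm{area}}(d)$ is large enough that $Ke^{-cT}<1/2$. This gives $\sigma(T)=cT-\log K-\log 2$, which grows linearly in $T$. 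The coefficientwise version is the same bound applied before summation.

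\textbf{Step 4: simple loops.} For a simple non-trivial loop, the isoperimetric relation on the lattice gives $\mathcal A(\ell)\ge c'|\ell|$ with $c'>0$; for example $|\ell|\le 4\mathcal A(\ell)$, since each boundary edge must lie on some filling plaquette. Then $C(T)^{|\ell|}\le e^{4\log C(T)\mathcal A(\ell)}$. Since $C(T)$ stays bounded in $T$ while $\sigma(T)$ grows, increasing $T$ gives $\widetilde\sigma=\sigma-4\log C>0$.

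I expect Step~2 to be the main obstacle. The heat-kernel equation does not close on Wilson loops, and decorations may be inserted and later cancelled, for example a plaquette paired with its inverse. I would therefore have to show that cancelling pairs still cost $e^{-2cT}$, or else count only the net chain, not the multiset, and check that the uniform per-plaquette bound survives the decoration bookkeeping.
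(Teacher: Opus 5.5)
Your argument is a variant of the paper's rather than a different strategy. Both combine a conservation law that forces a lattice filling with an exponential tail bound, and your Step~4 is exactly the paper's last step.

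\textbf{The difference: what dominates the area.} The paper uses the aggregate projective mass $M(\mathfrak h)$. Lemma~\ref{lem:mass-fills-loop} shows that the transferred reduced central charge $s_{\mathfrak h}$ is a filling with $|s_{\mathfrak h}(p)|\le m_{\mathfrak h}(p)$. Proposition~\ref{prop:mass-sensitive-summability} then inserts the weight $e^{\theta M(\mathfrak h)}$ and applies Markov. You instead use the number of plaquette transfers in a rooted history.

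\textbf{Why your count works.} It is legitimate here, and more simply than you fear. Each transfer replaces $\ell_i$ by its merger with a single copy $\gamma_{p,u}$ of $(\partial p)^{\pm1}$, so by \eqref{eq:local-reduced-charge-balance} it changes the current by exactly $\pm\partial^*\mathbf 1_p$. Loop-only moves preserve the current, and the current vanishes at a terminal source. The net chain $n$ is therefore a filling, and the number of transfers is at least $\|n\|_1\ge\mathcal A_\Lambda(\ell)\ge\mathcal A(\ell)$. Cancelling pairs only add transfers, each paid for separately, so the obstacle in your last paragraph disappears. Your count buys you independence from the terminal decorations and from any comparison of charge with mass. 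The paper's mass version is instead tied directly to the heat-kernel decay in the labels and to the mass-weighted summability built for it.

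\textbf{Step~1 must be replaced.} It is false as written. The coefficient $\widehat Q^{\HK}_{T,N}(\lambda)=d_\lambda e^{-TC_2(\lambda)/(2N)}$ contains $d_\lambda\ge N$ for every non-trivial reduced label; for the fundamental it equals $Ne^{-T/2}$. So a plaquette factor is not $O(e^{-cT})$ uniformly in $N$. The smallness only appears after the dimension is cancelled by the $1/N$ prefactor, the adjunction ratio, and the pinned Haar-projection estimates.

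\textbf{Step~3 must also be replaced.} The count $K^{|\ell|+m}$ is not available: each transfer sums over infinitely many input labels, and loop-only segments carry entropy exponential in the current loop length. The correct input is the history bound in the proof of Proposition~\ref{prop:trajectory-estimate}. Histories with input energies $r_1,\dots,r_m\ge1$ have total weight at most $C_{0,\gamma,X_0}\prod_iC'e^{-c'Tr_i}$, with $C_{0,\gamma,X_0}\le C^{|\ell|}$. Summing over $r_i\ge1$ and over $m\ge\mathcal A_\Lambda(\ell)$ gives a constant times $C^{|\ell|}(2C'e^{-c'T})^{\mathcal A_\Lambda(\ell)}$ for large $T$, uniformly in $\Lambda$. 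This yields $\sigma(T)=c'T-\log(2C')$, linear in $T$. The bound passes to infinite volume because $\mathcal A_\Lambda\ge\mathcal A$.
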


\subsection{Overview of the proof}
\label{subsec:beyond-wilson}

Let us describe the proof at a structural level. The main difference with the Wilson-action strong-coupling theory is that the heat-kernel plaquette weight is Fourier-theoretic rather than trace-polynomial. Consequently, the integration-by-parts identity does not close on Wilson loop observables alone. We enlarge the state space by allowing compactly supported plaquette decorations, and we prove a rooted master loop equation on this extended space.

The first input is the finite-$N$ duality formalism of \cite{Lem26a}. It rewrites Wilson loop expectations as sums over plaquette representation fields, with heat-kernel spectral weights and universal topological coefficients. After gauge fixing, these topological coefficients admit a local-channel representation on the dual incidence graph. We revisit this representation in a projection-supported form, in which edge integrations remain Haar projections and the non-spectral part of a local-channel summand is separated from the heat-kernel Casimir weights.

The key estimate is a pinned block bound for connected local-channel components attached to the Wilson insertion. Its proof has two steps. First, a peeling argument removes tree appendices and cancels the corresponding plaquette dimension factors before absolute values are taken. Second, the remaining residual core is controlled by a fixed-degree recoupling estimate and by uniform reciprocal-dimension summability along stable Young-graph paths. Together with the large-$N$ heat-kernel estimates of \cite{LemMai25,LM2}, this gives exponentially summable pinned components at strong coupling.

We then use the coefficientwise master loop equation of \cite{Lem26a}. Rooting it at a single occurrence gives a finite-$N$ identity with two parts: a loop-only cut-and-join operator and a local loop--plaquette transfer operator. After adjunction with respect to the heat-kernel reference law, the plaquette-transfer operator satisfies a strong-coupling total-variation bound. Since the loop-only part does not have a uniformly bounded global resolvent in the natural norm, we use an order-truncated rooted trajectory expansion instead of an infinite resolvent series.

This trajectory expansion yields, uniformly in the finite volume, a full $1/N$-expansion for rooted decorated observables. Exponential locality of the coefficients gives infinite-volume limits, and the leading connected-cumulant estimate gives factorization. This proves the existence of the heat-kernel master field on $\mathbb Z^d$ at strong coupling.

Finally, the area-law estimate follows from the reduced central-charge selection rule. In the stable small-mass sector, the projective charge carried by the plaquette decorations forms an integral plaquette filling of the loop. Its total projective mass is therefore at least the lattice filling area. Combining this filling constraint with the heat-kernel decay and the mass-sensitive summability of rooted histories gives the coefficientwise area law, and hence the stated area-law upper bound for the master field.

Let us also emphasize the modularity of the proof.  The finite-$N$ state-sum, the local-channel formalism, the pinned Haar-projection estimates, and the rooted coefficientwise loop equation are essentially action-agnostic.  The heat-kernel action is used through the spectral coercivity and tail estimates which control the plaquette-transfer estimates.  This suggests that the same strategy should apply to other central actions satisfying analogous large-$N$ spectral estimates.  Similarly, the argument is written here for $\U(N)$ because the required finite-$N$ local-channel formalism and the stable heat-kernel estimates are available in this form, but the representation-theoretic mechanisms are not intrinsically unitary: the extension to the other compact classical groups should follow by combining the corresponding finite-$N$ duality formalism with the large-$N$ heat-kernel estimates for the classical series developed in \cite{LM2}.

\subsection{Standing lattice conventions}\label{subsec:lattice-loop-conventions}

We close the introduction by fixing the lattice conventions used throughout the paper.

We use the standard cubical cell structure of $\Z^d$.  Let $(\mathbf e_1,\ldots,\mathbf e_d)$ be the canonical basis.  An oriented nearest-neighbour edge is a pair
\[
e=(x,i),\qquad x\in\Z^d,\quad i\in\{1,\ldots,d\},
\]
with initial vertex $e^-=x$ and terminal vertex $e^+=x+\mathbf e_i$.  Its inverse orientation is denoted by $e^{-1}$; it has initial vertex $e^+$ and terminal vertex $e^-$.  We write $\bar e=\{e,e^{-1}\}$ for the corresponding unoriented edge.  If a gauge configuration assigns a matrix $U_e$ to the chosen positive orientation of $\bar e$, then throughout the paper $U_{e^{-1}}=U_e^{-1}.$ An oriented plaquette is an oriented elementary square.  More explicitly, if $1\le i<j\le d$ and $x\in\Z^d$, the positively oriented plaquette in the $(i,j)$-plane based at $x$ has boundary word
\[
\partial p=(x,i)(x+\mathbf e_i,j)(x+\mathbf e_j,i)^{-1}(x,j)^{-1},
\]
and the opposite orientation is denoted by $p^{-1}$.  We shall often identify a plaquette with its underlying unoriented square when orientation is irrelevant.  The signed incidence number between an oriented edge and an oriented plaquette is denoted by $\varepsilon(e,p)$; it is equal to $\pm 1$ if $e^{\pm 1}$ occurs in $\partial p$, and $0$ otherwise. This is the convention used later in the discrete coboundary operator $\partial^*$.

\begin{figure}[h!]
\centering
\includegraphics[width=0.7\linewidth]{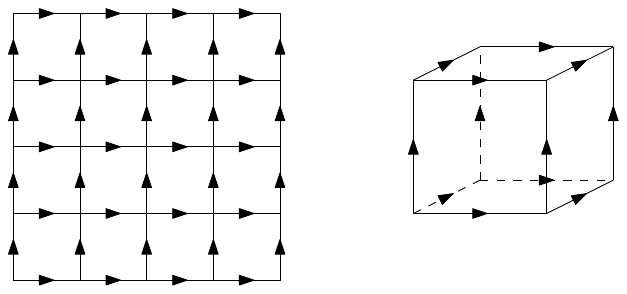}
\caption{Orientation convention of sublattices of $\Z^2$ and $\Z^3$.}
\label{fig:orientation-lattice}
\end{figure}

A finite lattice $\Lambda\subset\Z^d$ is a finite connected two-dimensional subcomplex of this cubical complex.  We denote by $\overline E(\Lambda)$ and $\overline P(\Lambda)$ its unoriented edges and plaquettes.  Once and for all we choose one orientation of every unoriented edge and plaquette, and we denote the resulting oriented sets by $E(\Lambda)$ and $P(\Lambda)$, see Figure~\ref{fig:orientation-lattice}.  Thus $E(\Lambda)$ and $P(\Lambda)$ contain one representative of each unoriented cell; inverse orientations such as $e^{-1}$ and $p^{-1}$ are used as formal oriented cells when writing words and boundaries, but they do not introduce additional integration variables or plaquette weights.  The vertex set $V(\Lambda)$ consists of all endpoints of edges in $\overline E(\Lambda)$.  Connectedness means connectedness of the one-skeleton.  The final gauge-invariant quantities are independent of these auxiliary orientation choices.

The boundary $\partial\Lambda$ is the set of unoriented edges of $\Lambda$ which are adjacent, in the ambient cubical complex of $\Z^d$, to at least one elementary plaquette not belonging to $\overline P(\Lambda)$.  Equivalently, $\bar e\in\partial\Lambda$ if some plaquette of $\Z^d$ contains $\bar e$ and is not part of the finite plaquette set of $\Lambda$.  This convention is used only in locality estimates such as $\dist(\supp L,\partial\Lambda)$; replacing it by any equivalent cell-boundary convention changes such distances by at most a constant depending only on $d$.

A loop in $\Lambda$ is a finite cyclically composable word of oriented edges
\[
\ell=e_1\ldots e_n,\qquad e_r^+=e_{r+1}^-\ (1\le r<n),\qquad e_n^+=e_1^- .
\]
Its length is $|\ell|=n$, and its holonomy is $U_\ell=U_{e_1}\cdots U_{e_n}.$ The trace $\Tr(U_\ell)$ is invariant under cyclic shifts of the word, but we keep the word-level occurrences of edges because rooted loop equations distinguish a chosen occurrence.  A finite loop family is an ordered tuple $L=(\ell_1,\ldots,\ell_k)$; we write
\[
\begin{aligned}
\#L&=k,\qquad \ell(L)=\sum_{i=1}^k|\ell_i|,\\
j_L(e)&=\#\{\text{occurrences of }e\text{ in }L\}
-\#\{\text{occurrences of }e^{-1}\text{ in }L\}.
\end{aligned}
\]
The associated Wilson functional is $W_{\Lambda,L}(U)=\prod_{i=1}^k \Tr(U_{\ell_i}).$ We say that $\Lambda$ contains $L$ if every edge occurrence in the words $\ell_i$ is of the form $e^{\pm1}$ for some chosen representative $e\in E(\Lambda)$, equivalently if every underlying unoriented edge traversed by $L$ belongs to $\overline E(\Lambda)$.

The support of a loop $\ell$ is the finite set of unoriented edges it traverses, together with their endpoints; the support of a loop family is the union of the supports of its components.  If spectral plaquette labels or compactly supported decorations are present, their support also includes the corresponding plaquettes.  Distances between supports are measured in the incidence graph of the cubical complex: two cells are adjacent if one lies in the boundary of the other, and $\dist(A,B)$ is the minimal graph distance between cells of $A$ and cells of $B$.  With this convention, all supports used below are finite cell sets.

\section{Stable representations and heat-kernel estimates}
\label{sec:heat-kernel-estimates}

This section collects the harmonic-analysis and representation-theoretic estimates used in the proofs of the main results. It recalls and complements the analytical large-$N$ estimates for the heat kernel \cite{LemMai25,LM2} as well as algebraic constructions related to the finite-$N$ construction of the companion paper~\cite{Lem26a}.

\subsection{Harmonic analysis and heat-kernel normalization}

We begin with the Fourier conventions for central functions on $\U(N)$. These conventions fix the normalization of dimensions, Casimir eigenvalues and heat-kernel coefficients used in all later estimates. The set $\widehat{\U(N)}$ of irreducible representations of $\U(N)$ is countable, and it is in bijection with the set of highest weights
\[
\lambda=(\lambda_1,\ldots,\lambda_N)\in\Z^N\quad \text{such that}\quad \lambda_1\geq\ldots\geq\lambda_N.
\]
We shall denote by $(\rho_\lambda,V_\lambda)$ an irreducible representation associated to the highest weight $\lambda$; by Schur's lemma, all such representations are equivalent to each other, hence they have the same character $\chi_\lambda$. We also denote by $d_\lambda=d_{V_\lambda}$ the associated dimension. By the Peter--Weyl theorem, any square integrable central function $f:\U(N)\to\R$ admits the decomposition in $L^2$
\[
f(x)=\sum_{\lambda\in\widehat{\U(N)}}\widehat{f}(\lambda)\chi_\lambda(x),\qquad \widehat{f}(\lambda)=\langle f,\chi_\lambda\rangle_{L^2}=\int_{\U(N)}f(g)\overline{\chi_\lambda(g)}dg.
\]
For an irreducible representation $\lambda\in\widehat{\U(N)}$, the quantity $\widehat{f}(\lambda)$ is the non-commutative Fourier coefficient of $f$ associated to $\lambda$. Another noteworthy property of the irreducible characters $\chi_\lambda$ is that they are eigenfunctions of the Laplace--Beltrami operator on $\U(N)$:
\begin{equation}
\Delta\chi_\lambda=-C_2(\lambda)\chi_\lambda,\qquad \lambda\in\widehat{\U(N)},
\end{equation}
with eigenvalues $C_2(\lambda)\geq 0$ called \emph{Casimir eigenvalues}, or \emph{Casimir numbers}. The name Casimir refers to the fact that they are also related to the Casimir operator, an operator in the center of the enveloping algebra of $\mathfrak u(N)$. The dimension and Casimir of an irreducible representation of highest weight $\lambda$ admit explicit formulas\footnote{For the Casimir number, we choose a different normalization than in \cite{Lem22,LemMai25,LM2}, where there is a $1/N$ prefactor, but the prefactor will appear back in the time scaling, so that we are effectively in the same regime.}:
\begin{equation}\label{eq:dim-cas}
d_\lambda=\prod_{1\leq i<j\leq N}\frac{\lambda_i-i+j-\lambda_j}{j-i},\qquad C_2(\lambda)=\sum_{i=1}^N\lambda_i(\lambda_i+N+1-2i).
\end{equation}

Let $Q=(Q_t)_{t>0}$ be a family of smooth positive central functions on $\U(N)$.  The parameter $t$ is the action parameter of the family $Q_t$.  In this paper the lattice coupling is denoted by $\beta$, and we set $t_p=\frac1{\beta_p}.$ This convention reconciles the two standard traditions:
\begin{itemize}
\item In Wilson-action lattice gauge theory, the inverse-temperature parameter is usually denoted by $\beta$ and appears in the exponent,
\[
Q^{\mathrm W}_{1/\beta}(g)=\exp\{-\beta\Re\Tr(I-g)\}.
\]
\item In heat-kernel Yang--Mills, the natural parameter is instead the heat time, so that $Q_t^{\HK}=p_t$.
\end{itemize}
In the strong-coupling heat-kernel part of the paper we use the large-$N$ heat-kernel normalization
\begin{equation}\label{eq:Fourier-coef-HK}
Q^{\HK}_{T,N}=p_{T/N},
\qquad
\widehat Q^{\HK}_{T,N}(\lambda)=d_\lambda\exp\left\{-\frac{T}{2N}C_2(\lambda)\right\}.
\end{equation}
Thus $T$ is the $N$-scaled heat time.  In the informal coupling notation, this corresponds to a constant rescaled coupling parameter rather than to an unscaled heat time independent of $N$.

\subsection{Stable coordinates and tail bounds}

There is a coordinate system, closely related to mixed Schur--Weyl duality, in which the large-$N$ heat-kernel coefficient has a transparent form. It is based on a decomposition of highest weights in terms of a determinant shift, playing the role of a highest weight of $\U(1)$, and two integer partitions playing the role of irreducible representations of symmetric groups.

Recall that an integer partition is a finite nonincreasing family of positive integers $\mu=(\mu_1,\ldots,\mu_k)$; we denote its length by $\ell(\mu)=k$ and its size by $\vert\mu\vert=\sum_i\mu_i$. We also write $\mu\vdash n$ if $\mu$ is a partition of size $\vert\mu\vert=n$. We write $\Pfr_n=\{\mu\mid \mu\vdash n\}$ the set of partitions of size $n$, and $\Pfr=\bigsqcup_{n\geq 0}\Pfr_n$ the set of all integer partitions, with the convention that the only partition of size $0$ is the empty partition $\varnothing.$ Given two partitions $\lambda^+,\lambda^-$, for any $N\geq \ell(\lambda^+)+\ell(\lambda^-)$, one can define the \emph{stable highest weight}
\[
[\lambda^+,\lambda^-]_N=(\lambda_1^+,\ldots,\lambda_{\ell(\lambda^+)}^+,0,\ldots,0,-\lambda_{\ell(\lambda^-)}^-,\ldots,-\lambda_1^-)\in\widehat{\U(N)},
\]
obtained by adding the coefficients of $\lambda^+$ to the first coefficients of the flat highest weight $(0,\ldots,0)\in\widehat{\U(N)}$, and removing the coefficients of $\lambda^-$ in reverse order to its last coefficients. Using the representation of partitions in terms of Young diagrams, a display of a stable highest weight is given in Figure~\ref{fig:stable-highest-weight}. We denote by $\widehat{\U(N)}_{\mathrm{st}}$ the set of stable highest weights.

\begin{figure}[h!]
\centering
\includegraphics[width=0.3\linewidth]{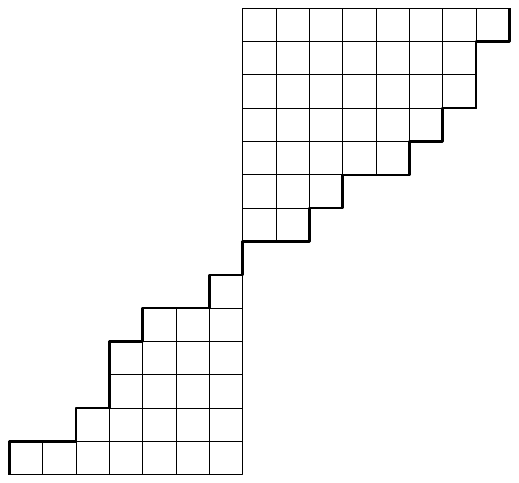}
\caption{An example of stable highest weight $[\lambda^+,\lambda^-]_N$ with only one zero.}
\label{fig:stable-highest-weight}
\end{figure}

For any $N$, introduce two length cutoffs on partitions
\begin{equation}\label{eq:length-cutoff}
A_N=\lfloor (N+1)/2\rfloor-1,\quad B_N=N-1-A_N.
\end{equation}
As discussed in \cite{Lem22}, or more recently in \cite{LemMai25,LM2}, there is a bijection $\lambda_N:\Omega_N\to\widehat{\U(N)}$, where
\[
\Omega_N=\{(\lambda^+,\lambda^-,q)\in\Pfr\times\Pfr\times\Z\mid \ell(\lambda^+)\leq A_N,\ \ell(\lambda^-)\leq B_N\},
\]
given by
\[
\begin{aligned}
\lambda_N(\lambda^+,\lambda^-,q)
&=q\mathbf{1}_N+[\lambda^+,\lambda^-]_N\\
&=(\lambda_1^++q,\ldots,\lambda_{\ell(\lambda^+)}^++q,
q,\ldots,q,
q-\lambda_{\ell(\lambda^-)}^-,\ldots,q-\lambda_1^-).
\end{aligned}
\]

\begin{definition}\label{def:stable-coordinates}
For any $\lambda\in\widehat{\U(N)}$, its \emph{stable coordinates} are $(\lambda^+,\lambda^-,q)$ such that $\lambda=\lambda_N(\lambda^+,\lambda^-,q)$. Its \emph{projective mass} is $m(\lambda)=\vert\lambda^+\vert+\vert\lambda^-\vert$, and its \emph{determinant shift} is $q$.
\end{definition}

In stable coordinates, the Casimir becomes easier to analyze in the large-$N$ regime: for each cell $\square$ in the Young diagram associated to $\lambda^+$ or $\lambda^-$, if we write $(i,j)$ its coordinates in the diagram, the content of the cell is defined by $c(\square)=c(i,j)=j-i.$ The \emph{total content} is a map $K:\Pfr\to\Z$ that assigns to a partition the sum of contents of all its cells:
\[
K(\lambda)=\sum_{\square\in\lambda}c(\square).
\]
The Casimir of a stable highest weight can be rewritten
\[
C_2([\lambda^+,\lambda^-]_N)=Nm([\lambda^+,\lambda^-]_N)+2K(\lambda^+)+2K(\lambda^-),
\]
and for a general highest weight we get
\begin{equation}\label{eq:casimir-det-shift}
C_2(\lambda)=Nq^2+2q(\vert\lambda^+\vert-\vert\lambda^-\vert)+C_2([\lambda^+,\lambda^-]_N).
\end{equation}
We shall use later the following domination estimate \cite[Lemma~4.1]{LemMai25}:
\begin{equation}\label{eq:LM-casimir-domination}
\frac{1}{N}C_2(\lambda)\geq \frac12m(\lambda)+\left(q+\frac{|\lambda^+|-|\lambda^-|}{N}\right)^2 .
\end{equation}
As an elementary consequence of \eqref{eq:LM-casimir-domination},
\begin{equation}\label{eq:spectral-decay}
\exp\left(-\frac{T}{2N}C_2(\lambda)\right)\leq \exp\left(-\frac{T}{4}\mathcal E(\lambda)\right)\exp\left(\frac{T m(\lambda)^2}{2N^2}\right),
\end{equation}
where $\mathcal E(\lambda)=m(\lambda)+q^2$ is an energy functional corresponding to the leading term of the large-$N$ expansion of the Casimir operator \cite{LM2}. Equation~\eqref{eq:spectral-decay} follows by applying \eqref{eq:LM-casimir-domination} and using
$\bigl(q+(|\lambda^+|-|\lambda^-|)/N\bigr)^2
\geq \frac12 q^2-m(\lambda)^2/N^2$.

Following~\cite{LemMai25}, define the \emph{Gaussian measure} on $\widehat{\U(N)}$ by
\[
\widehat\Pbb_{T,N}(\lambda)=\frac{1}{\mathcal Z_{T,N}}e^{-\frac{T}{2N}C_2(\lambda)},\qquad \lambda\in\widehat{\U(N)},\ T\in(0,\infty),\ N\in\N^*,
\]
and for any $A\geq 1$, define the tilted Gaussian measure
\[
\widehat\Pbb_{A,T,N}(\lambda)=\frac{1}{\mathcal Z_{A,T,N}}A^{m(\lambda)}e^{-\frac{T}{2N}C_2(\lambda)},
\]
where $\mathcal Z_{A,T,N}$ is the normalization constant.  If $S$ is a finite plaquette set, we use the product laws
\begin{equation}\label{eq:product-ref-laws}
\widehat\Pbb_{S,T,N}:=\widehat\Pbb_{T,N}^{\otimes S},\qquad
\widehat\Pbb_{S,A,T,N}:=\widehat\Pbb_{A,T,N}^{\otimes S},
\end{equation}
and write $\widehat\E_{S,T,N}$ and $\widehat\E_{S,A,T,N}$ for their expectations.  Thus, for any function $F$ of a plaquette-label field on $S$,
\[
\sum_{\lambda_S}F(\lambda_S)\prod_{p\in S}e^{-\frac{T}{2N}C_2(\lambda_p)}=\mathcal Z_{T,N}^{|S|}\widehat\E_{S,T,N}[F(\lambda_S)],
\]
and
\[
\sum_{\lambda_S}F(\lambda_S)A^{\sum_{p\in S}m(\lambda_p)}\prod_{p\in S}e^{-\frac{T}{2N}C_2(\lambda_p)}=\mathcal Z_{A,T,N}^{|S|}\widehat\E_{S,A,T,N}[F(\lambda_S)].
\]

\begin{lemma}\label{lem:nonstable-weighted-tail}
Let $A\geq1$.  If $T>4\log A$, then there are constants $C(A,T)<\infty$ and $\gamma(A,T)>0$, independent of $N$, such that
\begin{equation}\label{eq:nonstable-weighted-tail}
\widehat\Pbb_{A,T,N}\left(m(\lambda)>\frac{N}{2}\right)\leq C(A,T)e^{-\gamma(A,T)N}.
\end{equation}
\end{lemma}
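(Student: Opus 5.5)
The plan is to bound the numerator and denominator of $\widehat\Pbb_{A,T,N}$ separately. For the normalizing constant, the trivial representation ($\lambda^+=\lambda^-=\varnothing$, $q=0$) has $m=0$ and $C_2=0$. It therefore contributes exactly $1$, so $\mathcal Z_{A,T,N}\geq 1$ for every $N$. It then suffices to show
\[
\sum_{\lambda:\,m(\lambda)>N/2} A^{m(\lambda)}e^{-\frac{T}{2N}C_2(\lambda)}\leq C(A,T)e^{-\gamma(A,T)N}.
\]

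For the numerator, I would parametrize $\lambda$ by its stable coordinates $(\lambda^+,\lambda^-,q)\in\Omega_N$ and drop the length constraints $\ell(\lambda^\pm)\le A_N,B_N$, which can only enlarge the sum. The domination estimate \eqref{eq:LM-casimir-domination} gives
\[
A^{m}e^{-\frac{T}{2N}C_2(\lambda)}\leq \bigl(Ae^{-T/4}\bigr)^{m}\exp\Bigl\{-\tfrac T2\bigl(q+x\bigr)^2\Bigr\},\qquad x=\tfrac{|\lambda^+|-|\lambda^-|}{N}.
\]
I will use this exact form rather than the cruder \eqref{eq:spectral-decay}, whose factor $e^{Tm^2/2N^2}$ would be harmful in the regime $m>N/2$. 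Summing over $q\in\Z$ first, for fixed partitions, gives $\sum_{q}e^{-\frac T2(q+x)^2}\leq C_1(T)$ uniformly in the real shift $x$. For instance, one can take $C_1(T)=2\sum_{k\ge0}e^{-Tk^2/2}$, since at most two integers lie within distance $k$ of any given point for each increment. This step is where uniformity in $N$ must be checked: the shift $x$ depends on $N$, but the bound does not.

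Next, the number of pairs $(\lambda^+,\lambda^-)$ with $|\lambda^+|+|\lambda^-|=m$ equals $\sum_{k=0}^m p(k)p(m-k)\le (m+1)p(m)^2\le e^{c\sqrt m}$ for an absolute constant $c$, by the Hardy--Ramanujan bound. Set $\theta=Ae^{-T/4}$; the hypothesis $T>4\log A$ is exactly $\theta<1$. The numerator is then bounded by
\[
C_1(T)\sum_{m>N/2}e^{c\sqrt m}\,\theta^{m}.
\]
Choose $\gamma'=\tfrac12\log(1/\theta)>0$. Since $c\sqrt m-\gamma' m$ is bounded above by a constant $C_2(A,T)$, each term is at most $e^{C_2}\theta^{m/2}$. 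Summing the geometric tail over $m>N/2$ gives at most $e^{C_2}\theta^{N/4}/(1-\theta^{1/2})$. This yields \eqref{eq:nonstable-weighted-tail} with
\[
\gamma(A,T)=\tfrac14\log(1/\theta)=\tfrac14\Bigl(\tfrac T4-\log A\Bigr),\qquad C(A,T)=\frac{C_1(T)\,e^{C_2}}{1-\theta^{1/2}},
\]
both independent of $N$.

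The only delicate points are keeping every constant independent of $N$ and using the linear-in-$m$ part of \eqref{eq:LM-casimir-domination} directly. The first is handled by the uniform Gaussian sum over $q$. The second avoids the quadratic error term of \eqref{eq:spectral-decay}, which is not small when $m\sim N$. The subexponential growth of partition counts is what makes the sharp threshold $T>4\log A$ sufficient.
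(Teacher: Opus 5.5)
Your proof is correct and follows the paper's argument: you use the domination estimate \eqref{eq:LM-casimir-domination} directly, bound the Gaussian sum over $q$ uniformly in the shift, drop the length constraints, and reduce $T>4\log A$ to $\rho=Ae^{-T/4}<1$. The one difference is the tail step: you bound the convolution $\sum_k p(k)p(m-k)$ subexponentially and sum the tail directly, which gives the explicit rate $\gamma=\frac14\bigl(\frac T4-\log A\bigr)$, whereas the paper splits $a+b>N/2$ into $a>N/4$ or $b>N/4$ and cites \cite[Lemma~2.4]{LemMai25} for $\sum_{b>N/4}p(b)\rho^b$; you also state explicitly the bound $\mathcal Z_{A,T,N}\ge1$, which the paper leaves implicit.
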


\begin{proof}
By \eqref{eq:LM-casimir-domination},
\[
\begin{aligned}
\exp\left(-\frac{T}{2N}C_2(\lambda)\right)
&\leq
\exp\left(-\frac{T}{4}(|\lambda^+|+|\lambda^-|)\right)\\
&\qquad\times
\exp\left[-\frac{T}{2}\left(q+\frac{|\lambda^+|-|\lambda^-|}{N}\right)^2\right].
\end{aligned}
\]
Set $\rho=Ae^{-T/4}$.  The assumption $T>4\log A$ is exactly $\rho<1$.  Since
\[
\Theta_T:=\sup_{x\in\mathbb R}\sum_{n\in\Z}e^{-\frac{T}{2}(n+x)^2}<\infty,
\]
we obtain, after forgetting the length restrictions on $\lambda^\pm$,
\begin{align*}
\sum_{m(\lambda)>N/2}A^{m(\lambda)}e^{-TC_2(\lambda)/(2N)}&\leq \Theta_T \sum_{\substack{a,b\geq0\\ a+b>N/2}}p(a)p(b)\rho^{a+b}.
\end{align*}
If $a+b>N/2$, then either $a>N/4$ or $b>N/4$.  Hence
\[
\sum_{a+b>N/2}p(a)p(b)\rho^{a+b}\leq 2\left(\sum_{a\geq0}p(a)\rho^a\right)\left(\sum_{b>N/4}p(b)\rho^b\right).
\]
The first factor is finite because $\rho<1$ and the generating function of the number of partitions has convergence radius 1, and the second is exponentially small in $N$ by \cite[Lemma~2.4]{LemMai25}. This proves \eqref{eq:nonstable-weighted-tail}.
\end{proof}

\begin{lemma}[Product tilted-mass tail]\label{lem:product-tilted-mass-tail}
Let $A\ge1$ and $T>4\log A$. There exist constants $C(A,T)<\infty$ and $\gamma(A,T)>0$, independent of $N$ and of the finite set $S$, such that
\begin{equation}\label{eq:product-tilted-mass-tail}
\widehat\Pbb_{S,A,T,N}\left(\sum_{p\in S}m(\lambda_p)>\frac N2\right)
\le C(A,T)^{|S|}e^{-\gamma(A,T)N}.
\end{equation}
\end{lemma}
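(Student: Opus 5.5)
The plan is to use an exponential Chebyshev (Chernoff) bound, exploiting the product structure \eqref{eq:product-ref-laws}. Under $\widehat\Pbb_{S,A,T,N}$ the labels $\lambda_p$, $p\in S$, are independent, each with law $\widehat\Pbb_{A,T,N}$. Choose an auxiliary tilt $B>1$ with $T>4\log(AB)$; for instance $B=e^{\delta}$ with $\delta=\frac12(T/4-\log A)>0$. Then
\[
\widehat\Pbb_{S,A,T,N}\Bigl(\sum_{p\in S}m(\lambda_p)>\tfrac N2\Bigr)\le B^{-N/2}\,\widehat\E_{S,A,T,N}\Bigl[B^{\sum_p m(\lambda_p)}\Bigr]=B^{-N/2}\Bigl(\frac{\mathcal Z_{AB,T,N}}{\mathcal Z_{A,T,N}}\Bigr)^{|S|}.
\]
This gives the claimed form with $\gamma=\frac12\log B=\delta/2$, provided the one-plaquette ratio $\mathcal Z_{AB,T,N}/\mathcal Z_{A,T,N}$ is bounded by a constant $C(A,T)$ independent of $N$.

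To bound the ratio, I would control the numerator and denominator separately.

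\textbf{Denominator.} Since $A\ge1$, we have $\mathcal Z_{A,T,N}\ge \sum_{q\in\Z} e^{-\frac{T}{2N}C_2(q\mathbf 1_N)}$, restricting to $\lambda^\pm=\varnothing$. By \eqref{eq:casimir-det-shift} this equals $\sum_q e^{-Tq^2/2}\ge 1$, which is uniform in $N$.

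\textbf{Numerator.} Repeat the computation in the proof of Lemma~\ref{lem:nonstable-weighted-tail}, with $A$ replaced by $AB$. The domination \eqref{eq:LM-casimir-domination} together with the uniform theta bound $\Theta_T$ gives
\[
\mathcal Z_{AB,T,N}\le \Theta_T\Bigl(\sum_{a\ge0}p(a)\rho'^a\Bigr)^2,\qquad \rho'=ABe^{-T/4}<1.
\]
This is finite and independent of $N$, because the partition generating function converges on the open unit disk.

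Combining the two bounds, the ratio is at most $C(A,T):=\Theta_T\bigl(\sum_a p(a)\rho'^a\bigr)^2$, which yields \eqref{eq:product-tilted-mass-tail}.

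The only delicate point is the choice of the auxiliary tilt. The room $T>4\log A$ is strict, so $B>1$ can be chosen so that $AB$ still satisfies the hypothesis of the single-plaquette estimate. There is no genuine obstacle beyond checking that the denominator lower bound does not degenerate; it does not, because the trivial sector $\lambda^\pm=\varnothing$ already contributes at least $1$. Nor can a union bound over plaquettes work directly, since the event $\sum_p m(\lambda_p)>N/2$ does not force any single $m(\lambda_p)>N/2$. This is exactly why the moment-generating-function route is preferable to invoking Lemma~\ref{lem:nonstable-weighted-tail} plaquette by plaquette.
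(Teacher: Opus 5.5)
Your proposal is correct and follows essentially the same route as the paper: an exponential Markov bound under the product law, with the one-site moment generating function written as $\mathcal Z_{Ae^{\theta},T,N}/\mathcal Z_{A,T,N}$. The denominator is bounded below by $1$ and the numerator by $\Theta_T\bigl(\sum_a p(a)(Ae^{\theta}e^{-T/4})^a\bigr)^2$, giving $\gamma=\theta/2$. Your explicit choice $B=e^{\delta}$ with $\delta=\frac12(T/4-\log A)$ is just a concrete instance of the paper's small $\theta$.
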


\begin{proof}
Write $\widehat\E_{A,T,N}$ for expectation under the one-site law $\widehat\Pbb_{A,T,N}$. Choose $\theta>0$ so small that
\[
Ae^{\theta}e^{-T/4}<1,
\]
which is possible because $T>4\log A$. Under the one-site tilted law,
\[
\widehat\E_{A,T,N}\bigl[e^{\theta m(\lambda)}\bigr]
=\frac{\mathcal Z_{Ae^{\theta},T,N}}{\mathcal Z_{A,T,N}}.
\]
The denominator is at least the contribution of the trivial representation, hence at least $1$. The proof of Lemma~\ref{lem:nonstable-weighted-tail}, with $A$ replaced by $Ae^{\theta}$ and without imposing a tail event, gives the uniform bound
\[
\mathcal Z_{Ae^{\theta},T,N}
\le \Theta_T\left(\sum_{a\ge0}p(a)\bigl(Ae^{\theta}e^{-T/4}\bigr)^a\right)^2
=:C_\theta(A,T)<\infty.
\]
Consequently, by the product structure~\eqref{eq:product-ref-laws} and exponential Markov inequality,
\begin{align*}
\widehat\Pbb_{S,A,T,N}\left(\sum_{p\in S}m(\lambda_p)>\frac N2\right)
&\le e^{-\theta N/2}
\prod_{p\in S}\widehat\E_{A,T,N}\bigl[e^{\theta m(\lambda_p)}\bigr]\\
&\le C_\theta(A,T)^{|S|}e^{-\theta N/2}.
\end{align*}
This is~\eqref{eq:product-tilted-mass-tail} with $\gamma(A,T)=\theta/2$.
\end{proof}

\subsection{Mixed Schur--Weyl duality}
\label{subsec:mixed-schur-weyl}

We next recall the fixed-degree mixed Schur--Weyl model in which the local channel coordinates are taken, following for instance \cite{Dah26,Lem26a}. Let $V_N=\C^N$, with standard basis $(e_i)_{1\le i\le N}$, and let $V_N^*=(\C^N)^*$ be the dual representation, with dual basis $(e^i)_{1\le i\le N}$.  For any $n^+,n^-\ge0$, define the mixed tensor space
\[
T_N^{n^+,n^-}=V_N^{\otimes n^+}\otimes (V_N^*)^{\otimes n^-},
\qquad
T_N^{0,0}=\C.
\]
For $n^+,n^-\ge1$ and $1\le i\le n^+$, $1\le j\le n^-$, let $c_{i,j}:T_N^{n^+,n^-}\longrightarrow T_N^{n^+-1,n^--1}$ be the contraction which evaluates the $j$-th dual factor on the $i$-th covariant factor and leaves all other tensor factors in their induced order.  Dually, the coevaluation map $\operatorname{coev}_{i,j}:T_N^{n^+-1,n^--1}\longrightarrow T_N^{n^+,n^-}$ is obtained by inserting the invariant tensor $\sum_{a=1}^N e_a\otimes e^a$ in the $i$-th covariant and $j$-th contravariant slots.  If $n^+=0$ or $n^-=0$, there are no contractions and we set $\mathring T_N^{n^+,n^-}=T_N^{n^+,n^-}$. The \emph{mixed Schur--Weyl duality} proved by Koike~\cite{Koi89} states that the space of traceless tensors
\[
\mathring{T}_N^{n^+,n^-}:=\bigcap_{\substack{1\leq i\leq n^+\\ 1\leq j\leq n^-}}\ker c_{i,j}
\]
admits a decomposition
\begin{equation}\label{eq:Koike}
\mathring{T}_N^{n^+,n^-}=\bigoplus_{\substack{\lambda^+\vdash n^+,\lambda^-\vdash n^-\\ \ell(\lambda^+)+\ell(\lambda^-)\leq N}}V^{[\lambda^+,\lambda^-]}\otimes V_{[\lambda^+,\lambda^-]_N},
\end{equation}
where the vector spaces of the right-hand side are defined as follows:
\begin{itemize}
\item $V_{[\lambda^+,\lambda^-]_N}$ denotes the space associated to the stable representation $[\lambda^+,\lambda^-]_N$ of $\U(N)$;
\item for $n\ge1$, the irreducible representations of $S_n$ are labelled by $V^\lambda$, $\lambda\vdash n$; for $n=0$ we use the convention $S_0=\{1\}$ and $V^\varnothing=\C$.  We set $V^{[\lambda^+,\lambda^-]}=V^{\lambda^+}\otimes V^{\lambda^-}$ as a representation of $S_{n^+}\times S_{n^-}$.
\end{itemize}
Denote by $P_N^{[\lambda^+,\lambda^-]}\in\End(T_N^{n^+,n^-})$ the orthogonal projector onto the $(\lambda^+,\lambda^-)$-component in the decomposition
\begin{equation}\label{eq:Koike2}
T_N^{n^+,n^-}=(\mathring{T}_N^{n^+,n^-})^\perp\oplus\bigoplus_{\substack{\lambda^+\vdash n^+,\lambda^-\vdash n^-\\ \ell(\lambda^+)+\ell(\lambda^-)\leq N}}V^{[\lambda^+,\lambda^-]}\otimes V_{[\lambda^+,\lambda^-]_N}.
\end{equation}
Let $\rho_{n^+,n^-}$ be the representation of $\U(N)$ on $T_N^{n^+,n^-}$ obtained by tensorizing $n^+$ times the fundamental representation and $n^-$ times the dual representation:
\[
U\cdot v_1\otimes\ldots\otimes v_{n^+}\otimes v_1^*\otimes\ldots\otimes v_{n^-}^*=\bigotimes_{i=1}^{n^+}U v_i\otimes\bigotimes_{j=1}^{n^-}v_j^*(U^{-1}).
\]
The irreducible character associated to the stable representation $[\lambda^+,\lambda^-]_N$ admits the trace representation
\begin{equation}\label{eq:trace-character}
\chi_{[\lambda^+,\lambda^-]_N}(U)=\Tr_{T_N^{n^+,n^-}}
\left(P_N^{[\lambda^+,\lambda^-]}\rho_{n^+,n^-}(U)\right).
\end{equation}

We now fix the diagrammatic notation used to expand $P_N^{[\lambda^+,\lambda^-]}$.  Put
\[
I_+=\{1,\ldots,n^+\},\qquad I_-=\{-1,\ldots,-n^-\},\qquad I=I_+\sqcup I_-.
\]
A walled Brauer diagram of type $(n^+,n^-)$ is a perfect matching of the two-row vertex set $I\times\{\mathrm t,\mathrm b\}$ such that vertical strings preserve the side of the wall and horizontal strings cross the wall.  Equivalently, if two vertices $(a,\epsilon)$ and $(b,\epsilon')$ are matched, then
\[
\epsilon\ne\epsilon' \Rightarrow ab>0,\qquad\epsilon=\epsilon' \Rightarrow ab<0.
\]
We denote the finite set of such diagrams by $\mathcal B_{n^+,n^-}$.  The corresponding endomorphism $\rho_N(\tau)\in\End(T_N^{n^+,n^-})$ is defined by Kronecker deltas as follows.  For a multi-index $\mathbf i=(i_a)_{a\in I}\in\{1,\ldots,N\}^I$, set
\[
E_{\mathbf i}:=e_{i_1}\otimes\cdots\otimes e_{i_{n^+}}\otimes e^{i_{-1}}\otimes\cdots\otimes e^{i_{-n^-}}.
\]
For top and bottom multi-indices $\mathbf j,\mathbf i\in\{1,\ldots,N\}^I$, define
\[
\delta_\tau(\mathbf j,\mathbf i):=\prod_{\{(a,\epsilon),(b,\epsilon')\}\in\tau}\delta_{\kappa_{a,\epsilon},\kappa_{b,\epsilon'}},\qquad\kappa_{a,\mathrm t}:=j_a,\quad\kappa_{a,\mathrm b}:=i_a.
\]
Then
\[
\rho_N(\tau)E_{\mathbf i}=\sum_{\mathbf j\in\{1,\ldots,N\}^I}\delta_\tau(\mathbf j,\mathbf i)E_{\mathbf j}.
\]
This is the usual walled-Brauer action on mixed tensors: vertical strands permute covariant indices among themselves and contravariant indices among themselves, while horizontal strands implement contractions and coevaluations across the wall. An illustration of walled-Brauer diagrams is given in Figure~\ref{fig:Brauer}.

\begin{figure}[h!]
\centering
\includegraphics[width=0.3\linewidth]{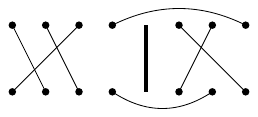}
\caption{An example of walled-Brauer diagram $\tau\in\mathcal B_{4,3}$.}
\label{fig:Brauer}
\end{figure}

The following result will be important later; it is the standard walled-Brauer expansion of the mixed Schur--Weyl projector, in the form used for instance in \cite{Dah26,Lem26a}.
\begin{proposition}\label{prop:expansion-projector}
For any fixed mixed degree $(n^+,n^-)$ and any stable label $[\lambda^+,\lambda^-]_N$ with $\lambda^+\vdash n^+$ and $\lambda^-\vdash n^-$, the projector $P_N^{[\lambda^+,\lambda^-]}$ belongs to the image of the walled-Brauer algebra. Equivalently, after fixing the diagrammatic spanning family $\mathcal B_{n^+,n^-}$, there are coefficients $c_N^{[\lambda^+,\lambda^-]}(\tau)$, $\tau\in\mathcal B_{n^+,n^-}$, such that
\begin{equation}\label{eq:projector-walled-brauer-expansion}
P_N^{[\lambda^+,\lambda^-]}=\sum_{\tau\in\mathcal B_{n^+,n^-}}c_N^{[\lambda^+,\lambda^-]}(\tau)\rho_N(\tau).
\end{equation}
For fixed $(n^+,n^-)$ and fixed $(\lambda^+,\lambda^-)$, the coefficients can be chosen as rational functions of $N$ for all sufficiently large $N$; in particular they admit controlled asymptotic expansions to all orders in $1/N$.
\end{proposition}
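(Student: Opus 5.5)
We first show that $P_N^{[\lambda^+,\lambda^-]}$ lies in the image of the walled Brauer algebra, and then obtain rationality in $N$ by an explicit construction.

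\textbf{Step 1: membership in the commutant.} The projector $P_N^{[\lambda^+,\lambda^-]}$ is built from the isotypic decomposition \eqref{eq:Koike2}. Its range is the $[\lambda^+,\lambda^-]_N$-isotypic part of $\mathring T_N^{n^+,n^-}$, and it vanishes on the orthogonal complement of the traceless space. That complement is $\U(N)$-stable and is the sum of the images of the coevaluations $\operatorname{coev}_{i,j}$.

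For $N$ large, the label $[\lambda^+,\lambda^-]_N$ with $\lambda^\pm\vdash n^\pm$ does not occur in that complement. Indeed, the complement decomposes into stable labels of strictly smaller mass $|\mu^+|+|\mu^-|<n^++n^-$. Hence $P_N^{[\lambda^+,\lambda^-]}$ is the full isotypic projector of $\rho_{n^+,n^-}$ onto $[\lambda^+,\lambda^-]_N$. In particular it commutes with $\rho_{n^+,n^-}(\U(N))$.

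By mixed Schur--Weyl duality (Koike), the commutant $\End_{\U(N)}(T_N^{n^+,n^-})$ equals $\rho_N(\mathrm{span}\,\mathcal B_{n^+,n^-})$. This gives the existence of coefficients $c_N^{[\lambda^+,\lambda^-]}(\tau)$. For small $N$ the isotypic projector still commutes with $\U(N)$, so existence holds for all $N$; only the rationality claim needs $N$ large.

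\textbf{Step 2: an explicit rational formula for $N$ large.} We write
\[
P_N^{[\lambda^+,\lambda^-]}=\bigl(P^{\lambda^+}\otimes P^{\lambda^-}\bigr)\,\Pi_N ,
\]
where $P^{\lambda^\pm}$ are the central Young idempotents in $\C[S_{n^\pm}]$, whose coefficients do not depend on $N$, and $\Pi_N$ is the orthogonal projector onto $\mathring T_N^{n^+,n^-}$. Both factors commute with $\U(N)$ and with each other, since $\Pi_N$ commutes with the $S_{n^+}\times S_{n^-}$ action.

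The key computation is $\Pi_N$. Let $E_{i,j}=\operatorname{coev}_{i,j}c_{i,j}$, which is the walled Brauer diagram with one horizontal cup-cap pair. The complement of $\mathring T_N$ is $\sum_{i,j}\mathrm{Im}\,E_{i,j}$. Because the standard inner product makes $c_{i,j}^*=\operatorname{coev}_{i,j}$, the operator
\[
X=\sum_{i,j}E_{i,j}
\]
is self-adjoint and positive semidefinite, with $\ker X=\mathring T_N$. Thus $\Pi_N$ is a polynomial in $X$, namely the Lagrange interpolation polynomial $\prod_{\theta\neq 0}(1-X/\theta)$ over the nonzero eigenvalues $\theta$ of $X$.

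$X$ is a Jucys--Murphy-type central element of the walled Brauer algebra. On the component indexed by a pair $(\mu^+,\mu^-)$ of sizes $(n^+-k,n^--k)$ it acts by the scalar
\[
kN+\text{(content-type terms)}+\ldots ,
\]
which is affine in $N$. These eigenvalues are pairwise distinct and nonzero for $k\geq1$ once $N$ is large. The coefficients of $\Pi_N$ in the diagram basis are therefore rational in $N$, with poles only at the finitely many $N$ where some eigenvalue vanishes. Multiplying by the $N$-independent Young idempotents preserves rationality, since products of diagrams are diagrams times powers of $N$ coming from closed loops.

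An alternative route is to invert the Gram matrix of the diagram basis. For $N\geq n^++n^-$ the diagrams $\rho_N(\tau)$ are linearly independent, so the coefficients are unique. The Gram matrix $\Tr(\rho_N(\tau)^*\rho_N(\tau'))=N^{\#\mathrm{loops}}$ has polynomial entries, and Cramer's rule then gives rational coefficients directly.

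\textbf{Step 3: asymptotic expansions.} Rational functions of $N$ have convergent Laurent expansions in $1/N$ for $N$ large, which gives the controlled all-order expansion.

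\textbf{Main obstacle.} The delicate point is verifying, uniformly in $N$ large, that the $[\lambda^+,\lambda^-]_N$ label does not reappear inside $(\mathring T_N)^\perp$, and that the eigenvalues of $X$ separate. We plan to use the Gram-matrix argument as the primary route, because it avoids computing the eigenvalues of $X$. The isotypic identification then follows from Koike's decomposition applied to each $\mathring T_N^{n^+-k,n^--k}$, combined with a mass count.
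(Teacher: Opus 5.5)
Your Step~1 is correct. It is in fact more explicit than the paper, which only cites the mixed Schur--Weyl commutant theorem. The key observation is that $(\mathring T_N^{n^+,n^-})^\perp=\sum_{i,j}\operatorname{Im}\operatorname{coev}_{i,j}$ carries only labels of projective mass $<n^++n^-$. Hence $P_N^{[\lambda^+,\lambda^-]}$ is the full $\U(N)$-isotypic projector, and so lies in the commutant.

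Your Gram-matrix route is the paper's own argument: faithfulness for $N\ge n^++n^-$, a polynomial trace pairing, and Cramer's rule. As written, however, it skips the input that makes Cramer's rule output rational functions. The right-hand side $b_\tau(N)=\Tr\bigl(\rho_N(\tau)^*P_N^{[\lambda^+,\lambda^-]}\bigr)$ involves the unknown projector, and a polynomial Gram matrix says nothing about it; you must show it is rational in $N$. It is:
\begin{itemize}
\item A diagram that is not a permutation has a horizontal strand in its bottom row. So $\rho_N(\tau)=A\,c_{i,j}$, which annihilates the range of $P_N^{[\lambda^+,\lambda^-]}$, and therefore $b_\tau=0$.
\item For $\sigma=(\sigma_+,\sigma_-)\in S_{n^+}\times S_{n^-}$, Koike's decomposition gives $b_\sigma=\chi^{\lambda^+}(\sigma_+)\chi^{\lambda^-}(\sigma_-)\,d_{[\lambda^+,\lambda^-]_N}$, with $\chi^{\lambda^\pm}$ the irreducible characters of $S_{n^\pm}$. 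This is rational (indeed polynomial) in $N$ by the Weyl dimension formula~\eqref{eq:dim-cas}.
\end{itemize}

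Your route through $X=\sum_{i,j}E_{i,j}$ is genuinely different and more informative. It gives the closed formula $P_N^{[\lambda^+,\lambda^-]}=(P^{\lambda^+}\otimes P^{\lambda^-})\prod_\theta(1-X/\theta)$ with explicit poles. But its key claim is false as stated: $X$ is not central in the walled Brauer algebra, and it does not act by a scalar on the $V_{[\mu^+,\mu^-]_N}$-isotypic component.

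Already for $(n^+,n^-)=(2,1)$, with $s$ the flip of the two covariant factors, one has $E_{1,1}E_{2,1}=E_{1,1}s$ and $E_{2,1}E_{1,1}=E_{2,1}s$. Hence $[E_{1,1},X]=(E_{1,1}-E_{2,1})s\neq0$ and $X^2=NX+Xs$, so $X$ has the two eigenvalues $N+1$ and $N-1$ on the $V_N$-isotypic part.

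The correct mechanism is the Casimir identity $X=\tfrac12\bigl((n^++n^-)N-C\bigr)+J$. Here $C$ is the Casimir operator on $T_N^{n^+,n^-}$, acting by $C_2(\nu)$ on $V_\nu$, and $J$ is the sum of transpositions in $S_{n^+}\times S_{n^-}$. Since $C$ and $J$ commute, on their joint eigenspaces $X=kN+K(\alpha)+K(\beta)-K(\mu^+)-K(\mu^-)$. Here $(\alpha,\beta)$ is the symmetric-group type and $|\mu^\pm|=n^\pm-k$. This is affine in $N$ with slope $k\ge1$ off $\mathring T_N^{n^+,n^-}$, hence nonzero for large $N$, which is all you need.

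Pairwise distinctness is unnecessary: $\prod_{\theta\in S}(1-X/\theta)$ is the projection onto $\ker X$ for any finite $S\not\ni0$ containing the nonzero spectrum. With either repair the proof is complete. The Gram route is shorter and matches the paper, while the $X$-route is explicit.
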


\begin{proof}
The inclusion in the image of the walled-Brauer algebra is the standard mixed Schur--Weyl commutant statement, in the form used in~\cite{Dah26,Lem26a}. For the asymptotic assertion, keep the mixed degree fixed and take $N\ge n^++n^-$. In this stable range the diagram representation of the walled-Brauer algebra is faithful. The diagrammatic multiplication table and trace pairing have entries which are polynomials in the parameter $N$: every closed index loop contributes one factor $N$. The central/isotypic idempotent corresponding to the fixed pair $(\lambda^+,\lambda^-)$ is therefore obtained by solving a finite linear system whose matrix has polynomial entries in $N$. Since the idempotent exists and is unique for all sufficiently large $N$, the relevant determinant is not the zero polynomial. Cramer's rule consequently gives a choice of its diagram coefficients as rational functions of $N$. Expanding these rational functions at $N=\infty$ yields the claimed controlled $1/N$ expansions. The finitely many ranks below the stable threshold play no role in the fixed-degree asymptotic arguments below.
\end{proof}

\subsection{Fixed-degree coordinates and orthonormalization}
\label{subsec:channel-recoupling}
\label{subsec:fixed-degree-coordinates}

The walled-Brauer coordinates are algebraically convenient, but the estimates in Sections~\ref{sec:finiteN} and~\ref{sec:pinned-haar-projection} are taken in orthonormal channel bases. We therefore record the fixed-degree coordinate and orthonormalization conventions.

Throughout the paper, when we say that a quantity admits a controlled asymptotic expansion to all orders in powers of $1/N$, we mean that, after subtracting the terms up to any prescribed order $R$, the remainder is $O_R(N^{-R-1})$. For local data of fixed support and fixed stable mass, the constant in the remainder is uniform over the corresponding finite set of choices and over the ambient finite volume. For infinite stable sums, the coefficient sums are absolutely convergent and the remainder is bounded by $N^{-R-1}$ times a summable stable majorant, up to large-mass or unstable contributions which are exponentially small in $N$. No pointwise expansion is asserted for arbitrary unstable representation labels.

We now make explicit the coordinate conventions used below.  Let
$\varepsilon=(\varepsilon_1,\ldots,\varepsilon_r)\in\{+,-\}^r$, put
$V_N^+=V_N$, $V_N^-=V_N^*$, and set $E_N^\varepsilon=\bigotimes_{a=1}^r V_N^{\varepsilon_a}.$ More generally, a \emph{fixed mixed tensor space} means any tensor product obtained from finitely many copies of $V_N$ and $V_N^*$, with a degree independent of $N$.  If a reduced stable label $\lambda$ occurs in such a space $E_N$, we write
\[
M_\lambda(E_N):=\Hom_{\U(N)}(V_\lambda,E_N),
\qquad
\langle D,D'\rangle_\lambda
:=d_\lambda^{-1}\Tr_{V_\lambda}(D^*D') .
\]
Thus the isotypic decomposition can be written
\begin{equation}\label{eq:isotypic-decomposition}
E_N\simeq \bigoplus_\lambda M_\lambda(E_N)\otimes V_\lambda .
\end{equation}
The multiplicity space $M_\lambda(E_N)$ is the space in which coordinate choices are made.

In the special case $E_N=T_N^{n^+,n^-}$, the Koike decomposition~\eqref{eq:Koike2} identifies the relevant multiplicity spaces with the symmetric-group modules $V^{[\lambda^+,\lambda^-]}$, together with the possible traceless-complement part.  Choosing a basis in these multiplicity spaces gives what we call \emph{mixed Schur--Weyl coordinates}. They must be opposed to the \emph{walled-Brauer coordinates}, which are the diagrammatic coordinates associated with the spanning family $(\rho_N(\tau))_{\tau\in\mathcal B_{n^+,n^-}}$ of Subsection~\ref{subsec:mixed-schur-weyl}: if an equivariant operator $A_N\in\End(T_N^{n^+,n^-})$ is written in the form
\[
A_N=\sum_{\tau\in\mathcal B_{n^+,n^-}} a_N(\tau)\rho_N(\tau),
\]
then the scalars $a_N(\tau)$ are its walled-Brauer coordinates in this chosen spanning family.  For instance, the coefficients $c_N^{[\lambda^+,\lambda^-]}(\tau)$ in~\eqref{eq:projector-walled-brauer-expansion} are the walled-Brauer coordinates of the projector $P_N^{[\lambda^+,\lambda^-]}$.  Tensor products, contractions, coevaluations and compositions of such diagrammatic operators give the algebraic coordinate families used below.  We shall refer to them collectively as \emph{algebraic diagrammatic coordinates in fixed mixed degree}.

Finally, an \emph{orthonormal channel basis} of $M_\lambda(E_N)$ is an orthonormal basis for the inner product $\langle\cdot,\cdot\rangle_\lambda$.  If $\mathcal I_\lambda(E_N)=(I^{E,\lambda}_{a,N})_a$ is such a basis, then
\[
I^{E,\lambda}_{a,N}:V_\lambda\longrightarrow E_N,
\qquad
(I^{E,\lambda}_{a,N})^*I^{E,\lambda}_{b,N}=\delta_{ab}\Id_{V_\lambda}.
\]
If $A_N:E_N\to F_N$ is a $\U(N)$-equivariant map between fixed mixed tensor spaces and if orthonormal channel bases have been chosen in $E_N$ and $F_N$, its orthonormal channel matrix coefficients are
\begin{equation}\label{eq:orthonormal-channel-coordinates}
A^{\mathcal I}_{\lambda;b,a}(N)
:=d_\lambda^{-1}\Tr_{V_\lambda}\!\left((I^{F,\lambda}_{b,N})^*A_NI^{E,\lambda}_{a,N}\right).
\end{equation}
The following elementary orthonormalization device explains how we pass from algebraic diagrammatic coordinates to orthonormal channel coordinates without losing controlled expansions.

\begin{lemma}\label{lem:fixed-degree-orthonormalization}
Fix a mixed tensor space $E_N$ and a fixed reduced stable label $\lambda$ occurring in $E_N$. Let $D^{(N)}_1,\ldots,D^{(N)}_m\in M_\lambda(E_N)=\Hom_{\U(N)}(V_\lambda,E_N)$ be an algebraic diagrammatic coordinate family in fixed mixed degree, and suppose that its scalar coefficients in these coordinates, equivalently its matrix coefficients after the fixed mixed-tensor realization of $V_\lambda$, have controlled asymptotic expansions in powers of $1/N$. Define the normalized Gram matrix
\[
G^{(N)}_{ij}=\langle D^{(N)}_i,D^{(N)}_j\rangle_\lambda
=d_\lambda^{-1}\Tr_{V_\lambda}\!\left((D^{(N)}_i)^*D^{(N)}_j\right).
\]
Assume that $G^{(N)}=G^{(\infty)}+O(N^{-1}),$ with $G^{(\infty)}>0$. Then, for $N$ large enough, $G^{(N)}$ is invertible and the orthonormalized intertwiners
\[
I^{(N)}_a=
\sum_{i=1}^m D^{(N)}_i (G^{(N)})^{-1/2}_{ia}
\]
form an orthonormal channel basis of their span and have controlled asymptotic expansions to all orders in powers of $1/N$.
\end{lemma}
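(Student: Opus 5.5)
The argument has two independent parts: an exact orthonormality computation that uses only the definition of $G^{(N)}$ and the invertibility of $G^{(N)}$, and an analytic step showing that $G^{(N)}$ has a controlled $1/N$ expansion, after which the inverse square root does too.

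\emph{Orthonormality.} Since $G^{(\infty)}>0$ and $G^{(N)}=G^{(\infty)}+O(N^{-1})$, there is $N_0$ such that for $N\ge N_0$ the smallest eigenvalue of $G^{(N)}$ is at least $\frac12\lambda_{\min}(G^{(\infty)})>0$. For such $N$, $G^{(N)}$ is positive definite, and $(G^{(N)})^{-1/2}$ is well defined and Hermitian. Each $I^{(N)}_a$ is a linear combination of intertwiners, hence lies in $M_\lambda(E_N)$. The inner product is sesquilinear and $\langle D_i,D_j\rangle_\lambda=G_{ij}$, so
\[
\langle I_a,I_b\rangle_\lambda=\sum_{i,j}\overline{(G^{-1/2})_{ia}}\,G_{ij}\,(G^{-1/2})_{jb}=\bigl((G^{-1/2})^*G\,G^{-1/2}\bigr)_{ab}=\delta_{ab}.
\]
Since $G^{-1/2}$ is invertible, the $I_a$ span the same space as the $D_i$, so they form an orthonormal channel basis of that span. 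Because $\langle\cdot,\cdot\rangle_\lambda$ is exactly the inner product defining orthonormal channel bases, we have $I_a^*I_b=\delta_{ab}\Id_{V_\lambda}$ by Schur's lemma.

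\emph{Expansion of $G^{(N)}$.} The matrix coefficients of the $D_i$ in the fixed mixed-tensor realization have controlled expansions, and the realization of $V_\lambda$ is a fixed-degree subspace. Then $d_\lambda^{-1}\Tr_{V_\lambda}(D_i^*D_j)$ can be written diagrammatically: $D_i^*D_j$ is an endomorphism of $V_\lambda$, which by Schur equals $G_{ij}\Id$. One then takes a single matrix coefficient against a fixed unit vector of the realization, for example one built from $P_N^{[\lambda^+,\lambda^-]}$ applied to a fixed tensor, normalized. This gives $G_{ij}$ as a finite sum of products of expandable coefficients, divided by a norm which is itself a rational function of $N$ with nonzero limit, using Proposition~\ref{prop:expansion-projector}. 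Hence each entry $G^{(N)}_{ij}$ has a controlled expansion to all orders.

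\emph{Expansion of $(G^{(N)})^{-1/2}$.} This is the main obstacle. Write $G^{(N)}=G^{(\infty)}+H_N$, where $H_N=\sum_{r=1}^R N^{-r}H_r+O(N^{-R-1})$. The map $X\mapsto X^{-1/2}$ is real-analytic on the open cone of positive definite matrices. One can see this from the Cauchy–Riesz integral
\[
X^{-1/2}=\frac1{2\pi i}\oint_\Gamma z^{-1/2}(z-X)^{-1}\,dz,
\]
with a contour $\Gamma$ in the right half plane enclosing the spectrum of $G^{(\infty)}$, which contains the spectrum of $G^{(N)}$ for $N$ large. Expanding the resolvent as a Neumann series in $H_N$ and Taylor expanding to order $R$ gives a polynomial in $H_1,\dots,H_R$ times powers of $N^{-1}$. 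The remainder is $O(\|H_N\|^{R+1})+O(N^{-R-1})=O(N^{-R-1})$, uniformly on the contour. Thus $(G^{(N)})^{-1/2}$ has a controlled expansion.

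\emph{Conclusion.} Each $I^{(N)}_a$ is a finite sum of products of expandable quantities, and products and sums preserve controlled expansions with finitely many terms. So $I^{(N)}_a$ has a controlled expansion to all orders. The finitely many ranks $N<N_0$ are irrelevant to the asymptotic statement.
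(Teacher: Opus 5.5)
Your proposal is correct and follows the paper's proof. Orthonormality comes from $(G^{-1/2})^*G\,G^{-1/2}=\Id$ together with Schur's lemma. The expansion of $(G^{(N)})^{-1/2}$ comes from holomorphic functional calculus for $z\mapsto z^{-1/2}$ on a contour enclosing a spectrum bounded away from $0$. The paper simply attributes the expansion of $G^{(N)}$ to the fixed-degree assumption, and your extra justification of it is fine, provided $\langle D_iv,D_jv\rangle$ is evaluated diagrammatically, with closed loops giving explicit powers of $N$. Summing over an $N$-dependent basis of $E_N$ would not give a finite sum of expandable terms.
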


\begin{proof}
The entries of $G^{(N)}$ have controlled asymptotic expansions by the fixed-degree assumption. Since $G^{(\infty)}$ is positive definite, the spectrum of $G^{(N)}$ stays in a compact subset of $(0,\infty)$ for $N$ large enough. Holomorphic functional calculus applied to $z\mapsto z^{-1/2}$ gives a controlled expansion for $(G^{(N)})^{-1/2}$. Multiplying the expansions of the $D^{(N)}_i$'s by this matrix gives the announced expansions for the $I^{(N)}_a$'s. The displayed normalization of the Gram matrix gives $(I^{(N)}_a)^*I^{(N)}_b=\delta_{ab}\Id_{V_\lambda}$, hence orthonormality.
\end{proof}

In all later applications the hypothesis $G^{(\infty)}>0$ is arranged at the coordinate-selection stage. Indeed, the relevant diagrammatic families have fixed finite cardinality and controlled Laurent expansions in $1/N$. After discarding linear dependencies and, when necessary, multiplying individual coordinates by their leading powers of $N$, one chooses a maximal subfamily whose limiting Gram matrix has non-zero determinant. Its limit is then positive definite because every $G^{(N)}$ is a Gram matrix. Thus Lemma~\ref{lem:fixed-degree-orthonormalization} applies to every fixed-degree channel family used below.

\subsection{Orthonormal channel resolutions and contraction--coevaluation projectors}
\label{subsec:orthonormal-channel-recoupling}

We now fix the concrete channel-recoupling conventions used in the pinned block estimates of Section~\ref{sec:pinned-haar-projection}. Let $E_N$ be a fixed mixed tensor space, possibly with some fixed external tensor factors of bounded degree. As we have seen, it satisfies the isotypic decomposition~\eqref{eq:isotypic-decomposition}, with multiplicity spaces $M_\eta(E_N)=\Hom_{\U(N)}(V_\eta,E_N)$ for all $\eta\in\widehat{\U(N)}$. Choosing an orthonormal basis $(I_{\eta,a})_{a\in A_\eta(E_N)}$ of $M_\eta(E_N)$, viewed as isometric intertwiners $I_{\eta,a}:V_\eta\to E_N$, gives the orthogonal decomposition
\[
\Id_{E_N}=
\sum_{\eta\in\widehat{\U(N)}}\sum_{a\in A_\eta(E_N)}I_{\eta,a}I_{\eta,a}^* .
\]
We call such a choice an \emph{orthonormal Clebsch--Gordan resolution} of $E_N$. If $E_1\otimes\cdots\otimes E_r$ is a tensor product with $r>2$, a resolution is obtained by choosing a binary bracketing and applying the preceding decomposition recursively. The internal irreducible labels which appear along the binary channel tree are called \emph{internal channels}, and the corresponding basis indices are called multiplicity indices. Given two binary bracketings of the same tensor product, the corresponding orthonormal Clebsch--Gordan bases are related by a unitary change-of-bracketing matrix. In fixed mixed degree, the coefficients of these matrices, and of all intertwiners obtained by composing Clebsch--Gordan maps, duality contractions, coevaluations and algebraic diagrammatic coordinate maps, have controlled asymptotic expansions in powers of $1/N$, by Lemma~\ref{lem:fixed-degree-orthonormalization}.

\begin{example}\label{ex:CG-resolution}
\label{ex:first-change-of-bracketing}
Let $V_N=\mathbb C^N$, and assume $N\geq 3$. Consider the tensor product $E=V_N\otimes V_N\otimes V_N$. There are two natural binary Clebsch--Gordan resolutions of $E$, corresponding to the two binary bracketings
\[
((V_N\otimes V_N)\otimes V_N),\qquad (V_N\otimes (V_N\otimes V_N)).
\]
Using the classical decompositions
\[
\begin{aligned}
V_N\otimes V_N
&\simeq \operatorname{Sym}^2V_N\oplus \wedge^2V_N,\\
\operatorname{Sym}^2V_N\otimes V_N
&\simeq V_{(3)}\oplus V_{(2,1)},\\
\wedge^2V_N\otimes V_N
&\simeq V_{(2,1)}\oplus V_{(1,1,1)}.
\end{aligned}
\]
we obtain for the left bracketing
\[
V_N^{\otimes3}\simeq V_{(3)}\oplus V_{(2,1)}^{(12,+)}\oplus V_{(2,1)}^{(12,-)}\oplus V_{(1,1,1)}.
\]
The superscripts indicate whether the first two factors have first been symmetrized or antisymmetrized. The right bracketing gives similarly
\[
V_N^{\otimes3}\simeq V_{(3)}\oplus V_{(2,1)}^{(23,+)}\oplus V_{(2,1)}^{(23,-)}\oplus V_{(1,1,1)}.
\]
The two decompositions agree on the totally symmetric component $V_{(3)}$ and on the totally antisymmetric component $V_{(1,1,1)}$. The only non-trivial change of basis takes place in the two-dimensional multiplicity space of $V_{(2,1)}$.
\end{example}

If $V_\eta$ is an irreducible unitary representation and $(e_a)_{a=1}^{d_\eta}$ is an orthonormal basis of $V_\eta$, let $(e^a)_{a=1}^{d_\eta}$ be the dual basis. Define evaluation $\operatorname{ev}_\eta:V_\eta\otimes V_\eta^*\to\mathbb C$ and coevaluation $\operatorname{coev}_\eta:\mathbb C\to V_\eta\otimes V_\eta^*$ by
\[
\operatorname{ev}_\eta(v\otimes f)=f(v),\qquad\operatorname{coev}_\eta(1)=\sum_{a=1}^{d_\eta}e_a\otimes e^a.
\]
The normalized invariant vector in $V_\eta\otimes V_\eta^*$ is $\Omega_\eta=d_\eta^{-1/2}\operatorname{coev}_\eta(1)$. We denote by
\begin{equation}\label{eq:contraction-coevaluation-projector}
\Pi^{\operatorname{inv}}_\eta
:=\operatorname{Proj}_{(V_\eta\otimes V_\eta^*)^{\U(N)}}
=\Omega_\eta\Omega_\eta^*
=d_\eta^{-1}\operatorname{coev}_\eta\operatorname{ev}_\eta
\end{equation}
the associated contraction--coevaluation projector. Every insertion of the contraction--coevaluation projector associated with a channel $\eta$ carries the scalar normalization $d_\eta^{-1}$. We call this scalar the \emph{reciprocal dimension factor} of the channel.

We now record an elementary Haar-projection mechanism that will be used as a black-box estimate in the peeling argument of Section~\ref{sec:pinned-haar-projection}.

\begin{lemma}\label{lem:partial-trace-haar-projection}
Let $(\rho_\lambda,V_\lambda)$ be an irreducible unitary representation of $\U(N)$, and let $(\pi_H,H)$ be a finite-dimensional unitary representation of $\U(N)$.  Set
\[
\Pi_{\lambda,H}=\int_{\U(N)}\rho_\lambda(g)\otimes\pi_H(g) dg,
\]
the orthogonal projection onto $(V_\lambda\otimes H)^{\U(N)}$.  Then $d_\lambda\Tr_{V_\lambda}(\Pi_{\lambda,H})$ is the orthogonal projection of $H$ onto its $V_{\lambda^*}$-isotypic component. In particular,
\[
\left\|d_\lambda\Tr_{V_\lambda}(\Pi_{\lambda,H})\right\|_{2\to2}\le1.
\]
\end{lemma}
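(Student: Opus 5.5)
The plan is to identify $\Pi_{\lambda,H}$ concretely, take the partial trace over $V_\lambda$ by Schur orthogonality, and read off both claims.

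First, $\Pi_{\lambda,H}$ is the orthogonal projection onto $(V_\lambda\otimes H)^{\U(N)}$. Indeed, the integrand $g\mapsto\rho_\lambda(g)\otimes\pi_H(g)$ is a unitary representation of $\U(N)$. Its Haar average is therefore idempotent, by invariance of Haar measure, and self-adjoint, by unimodularity and the substitution $g\mapsto g^{-1}$. Its range is exactly the invariant vectors.

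Next, compute $d_\lambda\Tr_{V_\lambda}(\Pi_{\lambda,H})$ directly. Fix an orthonormal basis $(e_a)$ of $V_\lambda$. The partial trace is the operator on $H$ given by
\[
d_\lambda\sum_a\int_{\U(N)}\langle e_a,\rho_\lambda(g)e_a\rangle\,\pi_H(g)\,dg
=d_\lambda\int_{\U(N)}\chi_\lambda(g)\,\pi_H(g)\,dg .
\]
Decompose $H=\bigoplus_\mu H_\mu$ into isotypic components, with $H_\mu\simeq M_\mu\otimes V_\mu$. On $H_\mu$ the operator acts as $\Id_{M_\mu}\otimes d_\lambda\int\chi_\lambda(g)\rho_\mu(g)\,dg$. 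Schur orthogonality gives
\[
d_\lambda\int\chi_\lambda(g)\rho_\mu(g)\,dg
=d_\lambda\int\overline{\chi_{\lambda^*}(g)}\rho_\mu(g)\,dg
=\delta_{\mu,\lambda^*}\Id_{V_\mu},
\]
using $\chi_\lambda=\overline{\chi_{\lambda^*}}$ and the standard identity $\int\overline{\chi_\nu}\rho_\mu=\delta_{\mu\nu}d_\mu^{-1}\Id$, where $d_{\lambda^*}=d_\lambda$.

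Therefore $d_\lambda\Tr_{V_\lambda}(\Pi_{\lambda,H})$ is the identity on $H_{\lambda^*}$ and zero on the other isotypic components. Since the isotypic decomposition of a unitary representation is orthogonal, this operator is the orthogonal projection onto the $V_{\lambda^*}$-isotypic component. An orthogonal projection has operator norm at most $1$, which gives the bound in $\|\cdot\|_{2\to2}$.

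The only delicate step is bookkeeping the convention $\chi_\lambda$ versus $\overline{\chi_\lambda}$. This is what determines that the surviving component is labelled $\lambda^*$ rather than $\lambda$, and it should be fixed consistently with the Fourier convention $\widehat f(\lambda)=\langle f,\chi_\lambda\rangle$. There is no analytic obstacle, since everything is finite-dimensional.
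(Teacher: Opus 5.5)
Your proof is correct, but it takes a different route from the paper's.

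The paper argues structurally. It decomposes $H\simeq\bigoplus_\mu M_\mu\otimes V_\mu$ and observes that $(V_\lambda\otimes H)^{\U(N)}=M_{\lambda^*}\otimes\C\Omega_\lambda$, with $\Omega_\lambda=d_\lambda^{-1/2}\sum_i e_i\otimes e_i^*$. Hence $\Pi_{\lambda,H}$ acts, up to reordering of tensor factors, as $\Omega_\lambda\Omega_\lambda^*\otimes\Id_{M_{\lambda^*}}$ on that summand and as zero elsewhere. It then uses that the partial trace over $V_\lambda$ of the rank-one projection $\Omega_\lambda\Omega_\lambda^*$ equals $d_\lambda^{-1}\Id_{V_{\lambda^*}}$.

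You instead commute the partial trace with the Haar integral to obtain $d_\lambda\int\chi_\lambda(g)\pi_H(g)\,dg$. Using $\chi_\lambda=\overline{\chi_{\lambda^*}}$ and Schur orthogonality, you recognize this as the classical character formula for the $\lambda^*$-isotypic projector.

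\textbf{What your route buys.} It is more self-contained and computational. It never needs the invariant subspace of $V_\lambda\otimes H$ explicitly, and it justifies, rather than assumes, that the Haar average is the orthogonal projection onto invariants.

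\textbf{What the paper's route buys.} It exhibits the normalized invariant vector $\Omega_\lambda$, and the factor $d_\lambda^{-1}$ comes from the contraction--coevaluation projector \eqref{eq:contraction-coevaluation-projector}. This is exactly the reciprocal-dimension mechanism reused later in the peeling and channel-opening estimates, so that argument ties the lemma directly to the paper's subsequent bookkeeping.
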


\begin{proof}
Decompose $H$ into isotypic components:
\[
H\simeq\bigoplus_{\mu\in\widehat{\U(N)}}M_\mu\otimes V_\mu .
\]
The invariant subspace of $V_\lambda\otimes H$ is zero except on the summand $M_{\lambda^*}\otimes V_{\lambda^*}$.  On this summand it is
\[
M_{\lambda^*}\otimes (V_\lambda\otimes V_{\lambda^*})^{\U(N)}.
\]
The normalized invariant vector in $V_\lambda\otimes V_{\lambda^*}$ is $\Omega_\lambda=d_\lambda^{-1/2}\sum_i e_i\otimes e_i^*$, and the orthogonal projection onto $\mathbb C\Omega_\lambda$ has partial trace $d_\lambda^{-1}\Id_{V_{\lambda^*}}$.  Therefore $d_\lambda\Tr_{V_\lambda}(\Pi_{\lambda,H})$ is the identity on the $V_{\lambda^*}$-isotypic component and zero on every other component.
\end{proof}

An opened channel is a place where an internal channel of a tensor contraction has been separated by inserting the resolution of the identity on an irreducible summand. If the opened channel has label $\eta$, then the contraction--coevaluation projector $\Pi^{\mathrm{inv}}_\eta$ contributes the reciprocal dimension factor $d_\eta^{-1}$. We shall use this convention in Section~\ref{sec:pinned-haar-projection}, where the fully resolved channel summands and their labelled channel graphs are defined in the concrete local-channel setting.

\subsection{Stable Young graphs and reciprocal dimension bounds}
\label{subsec:stable-young-graphs}

We shall repeatedly tensor irreducible $\U(N)$-representations by the fundamental representation $V_N=\C^N$ or by its dual. The relevant Clebsch--Gordan rule is the Pieri rule. Let $\lambda,\mu\in\widehat{\U(N)}$ be highest weights. We write $\lambda\nearrow \mu$ if $\mu$ is obtained from $\lambda$ by incrementing one coefficient by one:
\[
\exists i_0\in\{1,\ldots,N\}:\quad \mu_{i_0}=\lambda_{i_0}+1,\qquad \mu_i=\lambda_i\ \text{for }i\neq i_0 .
\]
Then the Pieri decomposition reads
\begin{equation}\label{eq:Pieri}
V_\lambda\otimes V_N \simeq\bigoplus_{\lambda\nearrow\mu} V_\mu,\qquad V_\lambda\otimes V_N^*\simeq\bigoplus_{\mu\nearrow\lambda} V_\mu .
\end{equation}
Taking traces of
$\rho_\lambda(x)\otimes x$ and
$\rho_\lambda(x)\otimes x^*$ in these decompositions gives the character
form
\begin{equation}\label{eq:Pieri-trace}
\Tr(x)\chi_\lambda(x)=\sum_{\lambda\nearrow\mu}\chi_\mu(x),\qquad\Tr(x^*)\chi_\lambda(x)=\sum_{\mu\nearrow\lambda}\chi_\mu(x),\qquad x\in \U(N).
\end{equation}
Let us now translate this rule into stable coordinates. Suppose first that $\lambda=[\lambda^+,\lambda^-]_N$ is reduced stable and that the resulting labels stay in the stable range. Then tensoring by $V_N$ either adds one box to the positive partition or removes one box from the negative partition\footnote{There is a boundary case when $N=\ell(\alpha)+\ell(\beta)$, but it is irrelevant in the stable range considered below.}:
\begin{equation}\label{eq:Pieri-stable}
V_{[\lambda^+,\lambda^-]_N}\otimes V_N\simeq\bigoplus_{\lambda^+\nearrow\mu^+}V_{[\mu^+,\lambda^-]_N}\ \oplus\ \bigoplus_{\mu^-\nearrow\lambda^-} V_{[\lambda^+,\mu^-]_N}.
\end{equation}
Similarly, tensoring by $V_N^*$ either removes one box from the positive partition or adds one box to the negative partition:
\begin{equation}\label{eq:Pieri-stable2}
V_{[\lambda^+,\lambda^-]_N}\otimes V_N^*\simeq\bigoplus_{\mu^+\nearrow\lambda^+}V_{[\mu^+,\lambda^-]_N}\ \oplus\ \bigoplus_{\lambda^-\nearrow\mu^-}V_{[\lambda^+,\mu^-]_N}.
\end{equation}
The same formulas hold after adding a determinant shift $q\mathbf 1_N$: the shift $q$ is unchanged, and only the reduced pair $(\lambda^+,\lambda^-)$ moves by one box. Terms which do not satisfy the length cutoffs are omitted. For partitions $\rho$ and $\sigma$, we write $\rho\nearrow\sigma$ if $\sigma$ is obtained from $\rho$ by adding one box.

\begin{definition}
Let $\lambda\in\widehat{\U(N)}$ be a highest weight, and let $(\lambda^+,\lambda^-,q)$ be its stable coordinates. A \emph{stable Pieri move} of $\lambda$ consists in adding or removing a box in either $\lambda^+$ or $\lambda^-$.
\end{definition}

We now encode these Pieri moves graph-theoretically.  Let $\mathbb Y$ be the Young graph: its vertices are partitions, and two partitions are adjacent if one is obtained from the other by adding one box.  Equivalently, $\mathbb Y$ is the Hasse graph of Young's lattice after forgetting orientations.  If $k\geq 1$, we denote by $\mathbb Y_{\le k}\subset \mathbb Y$ the induced subgraph on partitions of length at most $k$; we call it the $k$-truncated Young graph.  An illustration is given in Figure~\ref{fig:Young-Lattice}.

\begin{figure}[h!]
\centering
\includegraphics[width=0.4\linewidth]{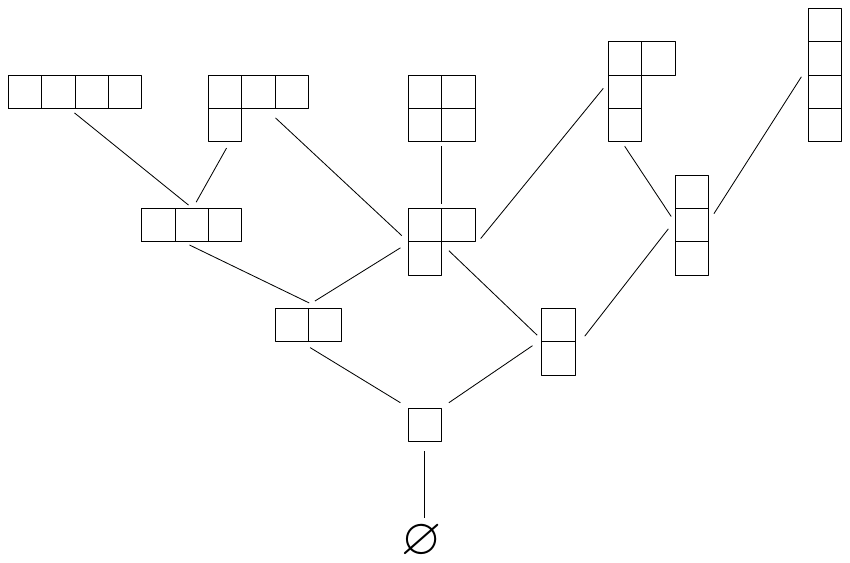}
\hspace{2em}
\includegraphics[width=0.4\linewidth]{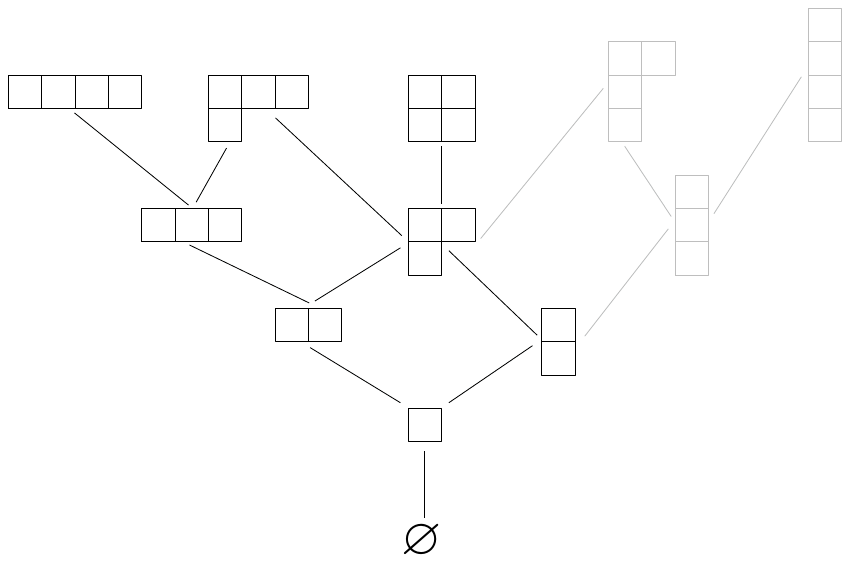}
\caption{A portion of the Young graph (left), and the corresponding portion of the $2$-truncated Young graph (right).}
\label{fig:Young-Lattice}
\end{figure}

We shall use the standard Cartesian product of graphs.  If $G$ and $H$ are two unoriented graphs, their Cartesian product $G\square H$ is the graph with vertex set
\[
V(G\square H)=V(G)\times V(H),
\]
and with adjacency relation
\[
(g,h)\sim (g',h')
\]
if and only if either $g=g'$ and $h\sim h'$ in $H$, or $h=h'$ and $g\sim g'$ in $G$.  In other words, a path in $G\square H$ is an interleaving of two paths in the factors: at each elementary step, exactly one coordinate moves and the other one remains fixed.  This is the product which is adapted to mixed tensoring by $\C^N$ and $(\C^N)^*$, since a stable Pieri move changes only one of the two stable partitions.

We define the reduced stable Young graph of rank $N$ by
\[
\mathbb Y_N^{\mathrm{red}}
:=
\mathbb Y_{\le A_N}\square\mathbb Y_{\le B_N},
\]
where $A_N$ and $B_N$ are the length-cutoffs defined in~\eqref{eq:length-cutoff}. Its vertices are pairs
\[
\gamma=(\gamma^+,\gamma^-),
\qquad
\ell(\gamma^+)\le A_N,
\qquad
\ell(\gamma^-)\le B_N,
\]
which we identify with the stable representation $[\gamma^+,\gamma^-]_N$.  Two vertices $\gamma$ and $\eta$ are adjacent if and only if exactly one of the two coordinates moves by one Young-graph step:
\[
\eta^+=\gamma^+,
\quad \eta^-\sim\gamma^- ,
\qquad\text{or}\qquad
\eta^- =\gamma^-,
\quad \eta^+\sim\gamma^+ .
\]
Thus a path in $\mathbb Y_N^{\mathrm{red}}$ is equivalently a pair of lazy paths on the truncated Young graphs $\mathbb Y_{\le A_N}$ and $\mathbb Y_{\le B_N}$, coupled by interleaving their non-trivial moves.  We shall use this graph only in the stable mass range $m(\gamma)\le N/2$.

This terminology is classical in a nearby form.  Directed paths in Young's lattice from $\varnothing$ to a partition $\lambda$ are standard Young tableaux of shape $\lambda$ and they are related to representation theory of the symmetric group \cite{Sag01}. If one allows both upward and downward steps, one obtains oscillating tableaux, also called up-down tableaux.  These paths appear naturally in the representation theory of classical groups, starting with Berele's Schensted-type correspondence for the symplectic group \cite{Ber86} and Sundaram's work on symplectic and orthogonal tableaux \cite{Sun90a,Sun90b}; closely related vacillating tableaux occur in the representation theory of partition and diagram algebras, for example in the work of Halverson--Lewandowski \cite{HalLew05}.  The paths used below are a stable mixed-tensor analogue of these objects: instead of one Young diagram moving up and down, we have a pair $(\gamma^+,\gamma^-)$, and each elementary tensoring by $\C^N$ or $(\C^N)^*$ moves exactly one of the two diagrams by one box.

\begin{definition}\label{def:oscillating-path}
An \emph{elementary stable oscillating path} is a path $\gamma_0,\gamma_1,\ldots,\gamma_r$ in the reduced stable Young graph $\mathbb Y_N^{\mathrm{red}}$. We also denote by $\gamma$ the stable highest weight $[\gamma^+,\gamma^-]_N$; in particular, $m(\gamma)=|\gamma^+|+|\gamma^-|$ is its projective mass and $d_\gamma$ its dimension.
\end{definition}

The estimate below says that reciprocal dimensions make the oscillating walk uniformly summable in the stable mass range, and it will be used in Section~\ref{sec:pinned-haar-projection}.

\begin{lemma}\label{lem:oscillating-path}
There exists a universal constant $C<\infty$ such that, for every $N$, every integer $s\ge0$, and every initial reduced stable label $\gamma_0$ with $m(\gamma_0)\le N/2$,
\begin{equation}\label{eq:oscillating-bound}
\sum_{\gamma_1,\ldots,\gamma_s}\prod_{j=1}^s d_{\gamma_j}^{-1}\le C^s,
\end{equation}
where the sum is over all elementary stable oscillating paths
$\gamma_0,\gamma_1,\ldots,\gamma_s$ staying in total mass at most $N/2$.
\end{lemma}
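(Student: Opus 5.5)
The plan is to prove a one-step bound that holds uniformly in the starting point, and then iterate it. Concretely, I aim to show that for every reduced stable label $\gamma$ with $m(\gamma)\le N/2$,
\[
\sum_{\eta\sim\gamma,\ m(\eta)\le N/2} d_\eta^{-1}\le C,
\]
where $\eta\sim\gamma$ means adjacency in $\mathbb Y_N^{\mathrm{red}}$ and $C$ is universal. Granting this, \eqref{eq:oscillating-bound} follows by summing out the variables in reverse order. First sum over $\gamma_s$ with $\gamma_{s-1}$ fixed, which gives at most $C$. Then sum over $\gamma_{s-1}$ with $\gamma_{s-2}$ fixed, again at most $C$. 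After $s$ steps the total is at most $C^s$. The mass constraint only restricts the summation set, so it can only help. Every intermediate $\gamma_j$ still satisfies $m(\gamma_j)\le N/2$, so the one-step bound applies at each stage.

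For the one-step bound I split the neighbours into two classes: the trivial label $\eta=(\varnothing,\varnothing)$, and everything else.

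The trivial label is a neighbour only when $m(\gamma)=1$, and it contributes $d_\eta^{-1}=1$.

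Every other neighbour has $1\le m(\eta)\le N/2$. I first count these. A partition of size $n$ has at most $\sqrt{2n}+1$ addable corners and at most $\sqrt{2n}$ removable corners, since the number of distinct part sizes $k$ satisfies $k(k+1)/2\le n$. Moves in $\mathbb Y_N^{\mathrm{red}}$ change exactly one of $\gamma^\pm$ by one box, so
\[
\#\{\eta\sim\gamma\}\le 2\bigl(2\sqrt{2m(\gamma)}+1\bigr)\le C_0\sqrt N .
\]
It then suffices to prove the dimension lower bound
\[
d_\eta\ge c\,N\qquad\text{whenever }1\le m(\eta)\le N/2,
\]
because this gives a one-step sum of at most $1+C_0\sqrt N/(cN)\le C$, uniformly in $N\ge1$.

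The dimension lower bound is the step that needs care. The first argument I would try is abstract. A stable label $[\eta^+,\eta^-]_N$ with $1\le m\le N/2$ is not a multiple of $\mathbf 1_N$: its entries include both a nonzero part (from $\eta^\pm$) and a zero (because the length cutoffs leave a middle block of zeros). Hence its restriction to $\SU(N)$ is a nontrivial irreducible representation. Every nontrivial irreducible representation of $\SU(N)$ has dimension at least $N$, which is the classical minimal-degree statement, valid for $N\ge2$.

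If I prefer a self-contained argument, I would instead use the Weyl dimension formula \eqref{eq:dim-cas}. Suppose a box moves, for instance $\eta^+\neq\varnothing$, and let $i_0\le\ell(\eta^+)$ be the last row of $\eta^+$. Consider only the factors pairing row $i_0$ with the zero rows $j$; all other factors are at least $1$. This gives
\[
d_\eta\ge\prod_{j}\frac{\eta_{i_0}^+ + j-i_0}{j-i_0}\ge \frac{\ell'+\eta^+_{i_0}}{\eta^+_{i_0}}\cdot\ldots,
\]
where $\ell'$ is the number of zero rows, and this should be bounded below by roughly $\ell'\ge N-m\ge N/2$. The case $\eta^+=\varnothing$ is handled symmetrically using $\eta^-$.

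Finally, finitely many small $N$ are absorbed into the constant $C$, since for such $N$ the number of neighbours is bounded and each $d_\eta^{-1}\le1$.
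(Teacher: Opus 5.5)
Your proposal is correct and follows essentially the same route as the paper: you prove a uniform one-step bound on the reciprocal-dimension-weighted neighbour sum and then iterate it over the $s$ steps. In that one-step bound the trivial label contributes $1$ only when $m(\gamma)=1$, and every other neighbour contributes $O(N^{-1})$ because non-zero reduced labels have dimension of order $N$. Your $O(\sqrt N)$ corner count is sharper than the paper's cruder $4(m+1)\le 2N+4$ bound, and your justification of $d_\eta\ge cN$ (via the minimal degree of $\SU(N)$ or the Weyl formula) is more explicit than the paper's, but neither changes the structure of the argument.
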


\begin{proof}
It is useful to think of \eqref{eq:oscillating-bound} as a weighted path-counting estimate on the graph $\mathbb Y_N^{\mathrm{red}}$.  On vertices of mass at most $N/2$, define the positive weighted adjacency operator
\[
(Rf)(\gamma)=\sum_{\eta\sim\gamma : m(\eta)\le N/2}d_\eta^{-1} f(\eta).
\]
For the constant function $\mathbf 1$, the left-hand side of \eqref{eq:oscillating-bound} is precisely $R^s\mathbf 1(\gamma_0).$ Thus it is enough to prove a uniform one-step bound
\[
\sup_{m(\gamma)\le N/2} R\mathbf 1(\gamma)\le C.
\]

Let $m=m(\gamma)$.  A partition of size $a$ has at most $a+1$ addable corners and at most $a$ removable corners.  Applying this to the two partitions $\gamma^+$ and $\gamma^-$ gives the rough bound
\[
\#\{\eta:\eta\sim\gamma\}\le 4(m+1).
\]
On the other hand, every non-zero reduced stable $\U(N)$-label has dimension at least $N$.  Indeed, the one-dimensional representations of $\U(N)$ are determinant powers, and among reduced labels the only determinant power is the zero label; equivalently, the smallest non-trivial reduced representation is the fundamental representation or its dual, both of dimension $N$.

If $\gamma=0$, then all neighbours are the fundamental or dual fundamental label, so
\[
R\mathbf 1(0)\le 2N^{-1}\le 2.
\]
If $m(\gamma)=1$, there may be one step back to the zero label, contributing $1$, and all other neighbours are non-zero, hence contribute at most $N^{-1}$ each.  Therefore
\[
R\mathbf 1(\gamma)\le 1+\frac{C}{N}\le C.
\]
Finally, if $m(\gamma)\ge 2$, every neighbour is non-zero.  Since we only consider the range $m\le N/2$,
\[
R\mathbf 1(\gamma)
\le
\frac{4(m+1)}{N}
\le C.
\]
This proves the uniform one-step estimate.  Iterating it gives
\[
R^s\mathbf 1(\gamma_0)
\le
\|R\mathbf 1\|_\infty^s
\le C^s,
\]
which is exactly \eqref{eq:oscillating-bound}.
\end{proof}

For absolute estimates outside the stable small-mass regime we use a coarser version which does not rely on stable coordinates.  Let $\mathbb P_N^{\mathrm{full}}$ be the unoriented Pieri graph on $\widehat{\U(N)}$: two highest weights are adjacent when one occurs in the tensor product of the other with $V_N$ or $V_N^*$.

\begin{lemma}[Full Pieri reciprocal-dimension bound]\label{lem:full-pieri-path}
There is a universal constant $C<\infty$ such that, for every $N$, every $s\ge0$, and every initial irreducible label $\lambda_0\in\widehat{\U(N)}$,
\begin{equation}\label{eq:full-pieri-path}
\sum_{\lambda_1,\ldots,\lambda_s}
\prod_{j=1}^s d_{\lambda_j}^{-1}
\le C^s,
\end{equation}
where the sum is over all length-$s$ paths in $\mathbb P_N^{\mathrm{full}}$ starting at $\lambda_0$.
\end{lemma}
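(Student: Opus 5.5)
The plan is to reduce \eqref{eq:full-pieri-path} to a uniform one-step bound, exactly as in the proof of Lemma~\ref{lem:oscillating-path}. Define on $\widehat{\U(N)}$ the positive weighted adjacency operator
\[
(R f)(\lambda)=\sum_{\mu\sim\lambda}d_\mu^{-1}f(\mu),
\]
where $\mu\sim\lambda$ means adjacency in $\mathbb P_N^{\mathrm{full}}$. The left-hand side of \eqref{eq:full-pieri-path} is $R^s\mathbf 1(\lambda_0)$. Hence
\[
R^s\mathbf 1(\lambda_0)\le\|R\mathbf 1\|_\infty^s,
\]
and it suffices to prove $\sup_{\lambda}R\mathbf 1(\lambda)\le C$ with $C$ independent of $N$. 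No stable coordinates and no mass restriction will be used.

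\emph{Counting neighbours.} By the Pieri rule \eqref{eq:Pieri}, the neighbours of $\lambda$ are obtained by adding $\pm1$ to a single coordinate $\lambda_{i_0}$, subject to dominance. So there are at most $2N$ of them. For $N=1$ every dimension equals $1$ and there are exactly $2$ neighbours, so $R\mathbf 1\le 2$. From now on I assume $N\ge2$.

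\emph{Dimension dichotomy.} I would show that every $\mu\in\widehat{\U(N)}$ is either a determinant power $q\mathbf 1_N$, with $d_\mu=1$, or satisfies $d_\mu\ge N$.

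1. Suppose $\mu$ is not constant. Then some consecutive gap $g_k=\mu_k-\mu_{k+1}$ is positive.
2. In the Weyl formula \eqref{eq:dim-cas}, each factor $(\mu_i-\mu_j+j-i)/(j-i)$ is at least $1$ and nondecreasing in $\mu_i-\mu_j=\sum_{i\le r<j}g_r$. So $d_\mu$ is nondecreasing in each gap $g_r$.
3. Lower all gaps to $0$, except $g_k$, which I lower to $1$. This gives the weight $\omega_k$ (up to a determinant shift), whose dimension is $\binom Nk\ge N$, since $1\le k\le N-1$.

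This monotone-reduction step is the only point needing care. Because each factor is individually monotone and $\ge1$, I expect it to go through cleanly.

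\emph{Determinant-power neighbours.} Next I would bound how many neighbours of $\lambda$ can be determinant powers.
\begin{itemize}
\item A neighbour obtained by adding a box is $q\mathbf 1_N$ only if $\lambda=q\mathbf 1_N-\mathbf e_{i_0}$ for some $i_0$. Dominance forces $i_0=N$, and then $q$ is determined by $\lambda$. So there is at most one such neighbour.
\item Symmetrically, at most one neighbour obtained by removing a box is a determinant power.
\end{itemize}

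\emph{Conclusion.} The at most two determinant-power neighbours contribute at most $2$. The remaining at most $2N$ neighbours each contribute at most $N^{-1}$. Hence
\[
R\mathbf 1(\lambda)\le 2+\frac{2N}{N}=4
\]
for all $\lambda$ and $N$, and $C=4$ works in \eqref{eq:full-pieri-path}.
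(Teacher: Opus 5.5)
Your proof is correct and follows the paper's argument: you prove the one-step bound $\sum_{\mu\sim\lambda}d_\mu^{-1}\le 2+2N/N=4$, using that each label has at most $2N$ neighbours, at most two of which are determinant powers, and that every other label has $d_\mu\ge N$, and then iterate it to get $C=4$. Your monotone reduction of the Weyl formula to a fundamental weight $\omega_k$, and your dominance argument forcing $i_0=N$, make explicit two points that the paper states without detail.
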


\begin{proof}
By the Pieri rule, tensoring an irreducible representation by $V_N$ has at most $N$ irreducible summands, and the same is true for tensoring by $V_N^*$.  Hence every vertex of $\mathbb P_N^{\mathrm{full}}$ has at most $2N$ neighbours, counted without multiplicity.

The one-dimensional irreducible representations are the determinant powers $\det^q$.  A fixed highest weight can be adjacent to at most two determinant powers, one through a $V_N$-move and one through a $V_N^*$-move.  Every other irreducible representation has dimension at least $N$; this follows directly from the Weyl dimension formula, since after removing a determinant twist the smallest non-trivial dominant weight is a fundamental weight, whose representation has dimension at least $N$.

Consequently, for every $\lambda$,
\[
\sum_{\mu\sim\lambda}d_\mu^{-1}
\le 2+\frac{2N}{N}\le4.
\]
Iterating this one-step estimate gives~\eqref{eq:full-pieri-path} with $C=4$.
\end{proof}

\section{\texorpdfstring{Finite-$N$}{Finite-N} Fourier expansion and local channels}
\label{sec:finiteN}

This section collects the finite-$N$ objects imported from~\cite{Lem26a} and puts them in the form used by the large-$N$ estimates. In particular, we rewrite the Wilson loop expectation as a ratio of expectations of observables of random plaquette decorations $\lambda:P(\Lambda)\to\widehat{\U(N)}$.

\subsection{State-sum formula}

Given a plaquette decoration $\lambda:P(\Lambda)\longrightarrow \widehat{\U(N)},$ define the spectral coefficient
\begin{equation}\label{eq:kappa-def}
\kappa_{\Lambda,Q}(\lambda):=\prod_{p\in P(\Lambda)}\widehat Q_{1/\beta_p}(\lambda_p),
\end{equation}
and the universal topological coefficient
\begin{equation}\label{eq:kappa-W}
\widehat W_{\Lambda,L}(\lambda):=\int_{\U(N)^{E(\Lambda)}}\prod_{i=1}^k\Tr(U_{\ell_i})\prod_{p\in P(\Lambda)}\chi_{\lambda_p}(U_{\partial p}) dU .
\end{equation}
The case $L=\varnothing$ is the vacuum topological coefficient. The state-sum expansion~\cite[Theorem~1.1]{Lem26a} reads
\begin{equation}\label{eq:state-sum}
\E_{\Lambda,Q}[W_{\Lambda,L}]=\frac{1}{Z}\sum_{\lambda:P(\Lambda)\to\widehat{\U(N)}}\kappa_{\Lambda,Q}(\lambda)\widehat W_{\Lambda,L}(\lambda).
\end{equation}
It follows naturally that the normalized Wilson expectation is
\begin{equation}\label{eq:normalized-state-sum}
\Phi_{\Lambda,Q,N}(L)
:=N^{-\#L}\E_{\Lambda,Q}[W_{\Lambda,L}]
=\frac{1}{Z}
\sum_{\lambda}
\kappa_{\Lambda,Q}(\lambda)N^{-\#L}\widehat W_{\Lambda,L}(\lambda).
\end{equation}

Two comments will be used repeatedly.  First, $\kappa_{\Lambda,Q}$ is a Boltzmann weight on plaquette representation fields.  Second, $\widehat W_{\Lambda,L}(\lambda)$ is a universal invariant integral: it depends on $\Lambda$, $L$ and $\lambda$, but not on the chosen plaquette action.  This action/topology separation is the very starting point of~\cite{Lem26a}. For this reason, all statements obtained for topological coefficients do not depend specifically on the fact that we work with the heat-kernel action; in particular, the pinned Haar-projection estimates from Section~\ref{sec:pinned-haar-projection} are action-agnostic. Throughout this paper, $\lambda$ denotes a full highest-weight field, while $\alpha$ will be reserved for its determinant-reduced stable part when we pass to stable coordinates. In the heat-kernel case with constant time $T$, we will abbreviate
\begin{equation}\label{eq:heat-kernel-kappa}
\kappa_{\Lambda,T,N}(\lambda):=\prod_{p\in P(\Lambda)}\widehat Q^{\HK}_{T,N}(\lambda_p).
\end{equation}

\subsection{Central charge constraints}

In this subsection we record the simple conservation law imposed by the central characters of the plaquette representations. Let us first introduce some terminology.

\begin{definition}
Let $\lambda\in\widehat{\U(N)}$ be a highest weight, and let $(\lambda^+,\lambda^-,q)$ be its stable coordinates. Its \emph{determinant charge} is the integer $Nq$, its \emph{reduced central charge} is the integer $s(\lambda)=\vert\lambda^+\vert-\vert\lambda^-\vert$, and its \emph{full central charge} is the integer $c(\lambda)=Nq+s(\lambda).$
\end{definition}

Let $j_L(e)$ be the signed number of traversals of the oriented edge $e$ by the loop family $L$.  If a plaquette $p$ carries the label $\lambda_p=\lambda_N(\lambda_p^+,\lambda_p^-,q_p)$, set $c_p=c(\lambda_p)$ and $s_p=s(\lambda_p)$. Define the signed plaquette-edge incidence operator $\partial^*:\Z^{P(\Lambda)} \longrightarrow \Z^{E(\Lambda)}$ by
\[
(\partial^*f)(e)=\sum_{p\in P(\Lambda)}\varepsilon(e,p)f(p).
\]

\begin{lemma}[Central selection rule]\label{lem:central-selection}
If $\widehat W_{\Lambda,L}(\lambda)\neq0$, then
\begin{equation}\label{eq:central-selection}
\partial^*c+j_L=0.
\end{equation}
If moreover $2dM+\ell(L)<N$, where $M=\sum_p m(\lambda_p)=\sum_p(|\lambda_p^+|+|\lambda_p^-|)$, then \eqref{eq:central-selection} is equivalent to
\begin{equation}\label{eq:split-selection}
\partial^*q=0,
\qquad
\partial^*s+j_L=0.
\end{equation}
\end{lemma}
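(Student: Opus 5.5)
Both parts rest on invariance of the integrand in~\eqref{eq:kappa-W} under the central $\U(1)$ acting on a single edge variable.

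\textbf{First part.} Fix an oriented edge $e\in E(\Lambda)$ and $z\in\U(1)$. Replace $U_e$ by $zU_e$, leaving all other edges unchanged. This is a Haar-preserving change of variables on the $U_e$ factor, so $\widehat W_{\Lambda,L}(\lambda)$ is unchanged.

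Next, track how each factor of the integrand transforms. Every occurrence of $e$ in a word contributes a factor $z$, and every occurrence of $e^{-1}$ contributes $z^{-1}$, because the scalar $z$ is central and commutes out of the ordered products. Hence $\prod_i\Tr(U_{\ell_i})$ is multiplied by $z^{j_L(e)}$. Similarly $U_{\partial p}$ is multiplied by $z^{\varepsilon(e,p)}$, and since $\lambda_p$ acts on the scalar $z\cdot I$ by $z^{c(\lambda_p)}$, we get
\[
\chi_{\lambda_p}(z^{\varepsilon}U)=z^{\varepsilon c_p}\chi_{\lambda_p}(U),\qquad c_p=\textstyle\sum_i(\lambda_p)_i .
\]
I still need to check that $c(\lambda)=Nq+|\lambda^+|-|\lambda^-|$ equals $\sum_i\lambda_i$; this is immediate from the formula for $\lambda_N(\lambda^+,\lambda^-,q)$.

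Putting these together gives
\[
\widehat W=z^{\,j_L(e)+(\partial^*c)(e)}\,\widehat W\qquad\text{for all } z\in\U(1),
\]
so either $\widehat W=0$ or the exponent vanishes for every $e$. This is~\eqref{eq:central-selection}. Sign conventions must match the definition of $\partial^*$: the loop $\ell$ uses $U$ while the plaquette uses $\chi_\lambda$, so if the orientation convention produces $\partial^*c-j_L$ instead, I would adjust by noting that characters appear as in the state sum. I will check this against the incidence convention for $\varepsilon(e,p)$.

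\textbf{Second part.} Write $c=Nq+s$. Equation~\eqref{eq:central-selection} reads
\[
N(\partial^*q)(e)=-(\partial^*s)(e)-j_L(e).
\]
The key step is a size bound on the right-hand side. Each edge lies in at most $2(d-1)$ plaquettes, and $|s_p|\le m(\lambda_p)$, so
\[
|(\partial^*s)(e)|\le\sum_{p\ni e}m(\lambda_p)\le M .
\]
Also $|j_L(e)|\le\ell(L)$. The right-hand side therefore has absolute value at most $M+\ell(L)\le 2dM+\ell(L)<N$. The left-hand side is an integer multiple of $N$, so both sides vanish, which gives~\eqref{eq:split-selection}. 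The converse is trivial: adding $N$ times the first equation of~\eqref{eq:split-selection} to the second recovers~\eqref{eq:central-selection}.

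The main obstacle is only bookkeeping: getting the orientation signs to match the stated convention. The hypothesis $2dM+\ell(L)<N$ is generous compared with what this argument needs.
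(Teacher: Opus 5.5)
Your proof is correct and is essentially the paper's argument: the paper also multiplies the edge variables by central scalars $z_e\in\U(1)$ (all edges at once rather than one at a time), uses Haar invariance to force the exponent $j_L(e)+(\partial^*c)(e)$ to vanish, and then uses $|(\partial^*s)(e)+j_L(e)|<N$ to force the multiple of $N$ to vanish. Your sign worry is unfounded, since the computation you wrote already gives exactly $\partial^*c+j_L$ with the paper's convention for $\varepsilon(e,p)$; your remark that $M+\ell(L)<N$ would suffice in place of $2dM+\ell(L)<N$ is also correct.
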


\begin{proof}
Replace, in the integral \eqref{eq:kappa-W}, each edge variable by $z_eU_e$, with $z_e\in\U(1)$.  Haar invariance leaves the value of the integral unchanged.  The Wilson insertion is multiplied by $\prod_e z_e^{j_L(e)}$.  The plaquette character at $p$ is multiplied by the scalar character $z_{\partial p}^{c_p}$, because the determinant shift contributes charge $Nq_p$ and the rational reduced part contributes charge $s_p$.  Averaging independently over the scalars $z_e$ kills the integral unless the exponent of every $z_e$ vanishes, which is exactly \eqref{eq:central-selection}.

If $2dM+\ell(L)<N$, then for every edge $e$ one has $| (\partial^*s)_e+j_L(e)|<N$.  Since \eqref{eq:central-selection} says
\[
N(\partial^*q)_e+(\partial^*s)_e+j_L(e)=0,
\]
the multiple of $N$ must vanish and the remaining term must vanish.  This proves \eqref{eq:split-selection}.
\end{proof}

This is the only charge constraint needed at this stage.  Its heat-kernel cutoff consequence will be stated after active supports and projective masses have been defined in Subsection~\ref{subsec:pinned-block-observables}.  We now turn from this global charge constraint to the local-channel representation of the topological coefficients.

\subsection{Dual incidence graph and local channels}\label{subsec:local-channels}

The coefficient $\widehat W_{\Lambda,L}(\lambda)$ is an integral of products of matrix coefficients over one independent Haar variable per edge. Its combinatorial study should be viewed as a refined version of Weingarten calculus. In the latter, Haar integration produces permutations or pairings between tensor indices. Here the plaquettes carry arbitrary irreducible representations, so Haar integration produces orthogonal projections onto invariant tensors and multiplicity spaces.  A \emph{local channel} is the finite collection of choices which makes all these projections explicit.

We start by gauge-fixing redundant variables: fix a spanning tree $\mathcal T\subset E(\Lambda)$ and set all edge variables $U_e$ to $I_N$, for $e\in\mathcal T$. Define the dual incidence graph $D_{\mathcal T}(\Lambda)$ as the bipartite graph with vertex set
\[
V(D_{\mathcal T}(\Lambda))=P(\Lambda)\sqcup(E(\Lambda)\setminus\mathcal T),\]
and with an incidence $(p,e)$ whenever $e$ occurs in the boundary of $p$.  Plaquette vertices carry representation labels; non-tree edge vertices carry Haar projections. An example of such graph is displayed in Figure~\ref{fig:dual-incidence}: the bold black graph is the chosen spanning tree $\mathcal T$, the red disks are plaquette vertices of the dual incidence graph $D_{\mathcal T}(\Lambda)$, the grey squares are non-tree edge vertices, and the blue segments are incidence edges.  The orange rectangle represents a Wilson loop insertion $\ell$.

\begin{figure}[h!]
\centering
\includegraphics[width=0.3\linewidth]{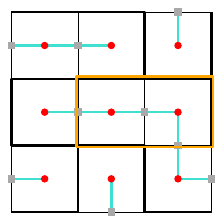}
\caption{The dual incidence graph of a lattice $\Lambda\subset\Z^2$ with a Wilson loop insertion.}
\label{fig:dual-incidence}
\end{figure}

The construction proceeds locally: a plaquette character $\chi_{\lambda_p}(U_{\partial p})$ is resolved into tensor legs distributed along the boundary incidences $(p,e)$, then each non-tree edge variable $U_e$ is integrated independently.  The integral at $e$ is the orthogonal projection onto the invariant subspace of the tensor product of all representation spaces incident to $e$, together with the possible Wilson strands crossing $e$.  Schematically, an edge kernel has the form
\[
\Pi_e
=\int_{\U(N)}
\bigotimes_{(p,e)}\rho_{p,e}(g^{\varepsilon(p,e)})
\otimes
\bigotimes_{\text{Wilson legs at }e} g^{\pm1} dg,
\]
after the incident plaquette tensors have been placed in suitable mixed tensor spaces.  Choosing bases in the relevant multiplicity spaces turns these projections into scalar local kernels.

The local-channel representation~\cite[Theorem~1.3]{Lem26a} says that, for every plaquette label field $\lambda$, the topological coefficient $\widehat{W}_{\Lambda,L}(\lambda)$ can be decomposed on a finite set $\mathcal C_\Lambda^L(\lambda)$ of local channel fields $\Gamma$.  A channel field records, at every incidence $(p,e)$ and every non-tree edge $e$, the local resolution of tensor contractions coming from the integration with respect to the edge variable $U_e$. The representation can be stated as follows: there are scalar plaquette coefficients $A_p^N(\Gamma_{\partial p};\lambda_p)$ and local edge kernels $K_e^N(\Gamma_e;\lambda,L)$ such that
\begin{equation}\label{eq:local-channel}
\widehat W_{\Lambda,L}(\lambda)=\sum_{\Gamma\in\mathcal C^L_\Lambda(\lambda)}\left(\prod_{p\in P(\Lambda)}A_p^N(\Gamma_{\partial p};\lambda_p)\right)\left(\prod_{e\in E(\Lambda)\setminus\mathcal T}K^N_e(\Gamma_e;\lambda,L)\right).
\end{equation}
Two points will be used repeatedly:
\begin{itemize}
\item The local-channel representation depends on a choice of spanning tree $\mathcal T$, but the topological coefficient does not: the representation is gauge-fixed but the coefficient itself is gauge-invariant.
\item The vacuum data are allowed: if $\lambda_p$ is trivial and no Wilson strand interacts with $p$, the scalar plaquette coefficient is the vacuum coefficient; similarly, an edge kernel can be the vacuum scalar. Before scalarization, these coefficients come from plaquette tensors and edge Haar projections.
\end{itemize}
An illustration of local channels is given in Figure~\ref{fig:local-channels}: plaquette-edge incidences are filled with tensor legs coming out of plaquettes, and Haar integration over adjacent edges results in pairings of tensor legs, eventually with other tensor legs coming from the Wilson loop insertion (in orange). If we zoom in the dual incidence graph from Figure~\ref{fig:dual-incidence} and assign local channels to all incidences, we get Figure~\ref{fig:dual-incidence-2}. Note that, in this case, there are two loop insertions through non-tree edges.

\begin{figure}[h!]
\centering
\includegraphics[width=0.45\linewidth]{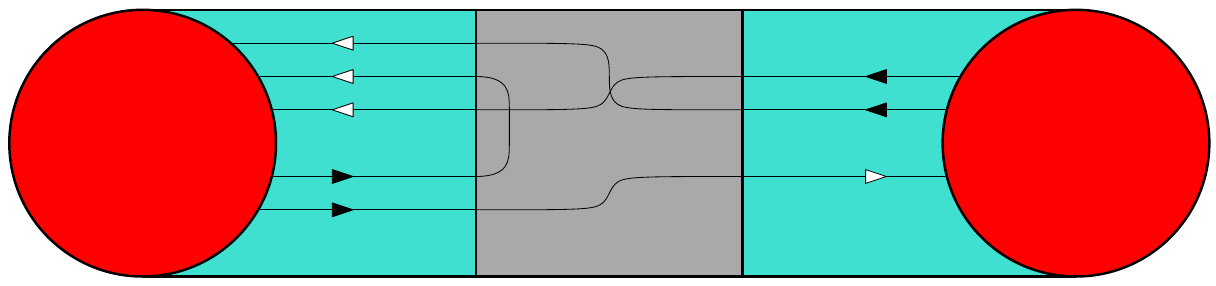}
\includegraphics[width=0.45\linewidth]{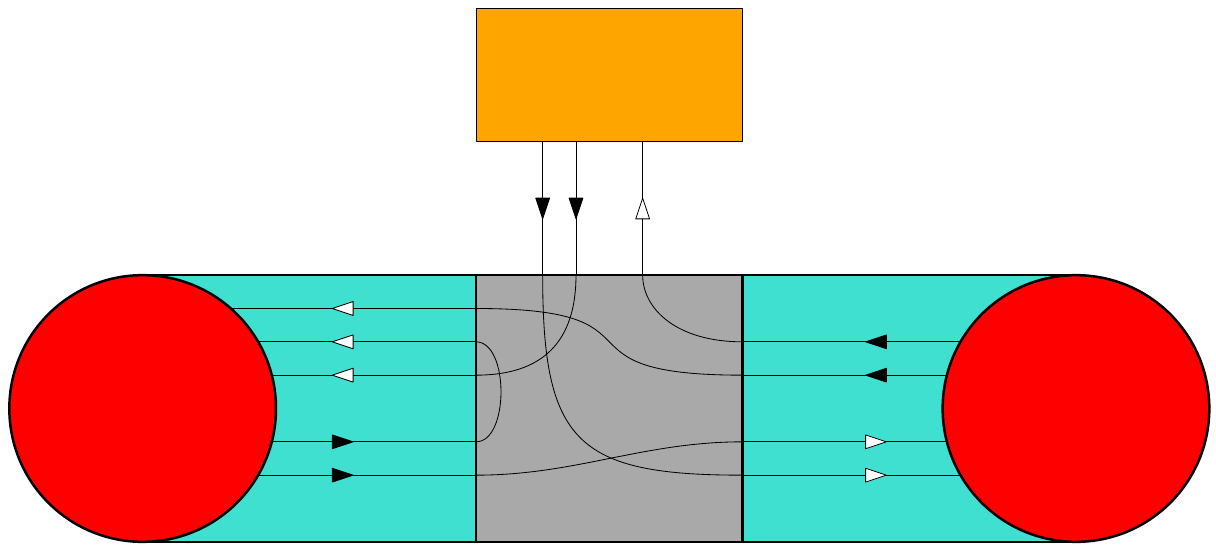}
\caption{Two local channel examples: one with no Wilson loop insertion (left) and one with a Wilson loop insertion (right).}
\label{fig:local-channels}
\end{figure}

\begin{figure}[h!]
\centering
\includegraphics[width=0.4\linewidth]{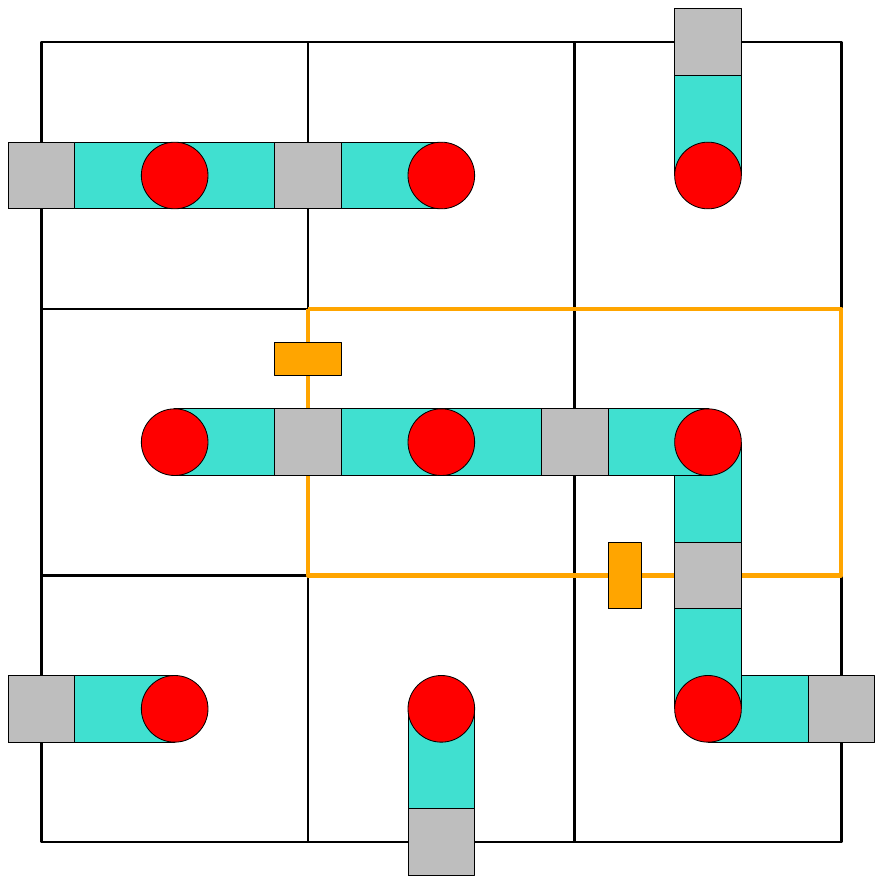}
\caption{A local channel representation of the dual incidence graph illustrated in Figure~\ref{fig:dual-incidence}.}
\label{fig:dual-incidence-2}
\end{figure}

\subsection{Local channel expansion revisited}\label{subsec:trace-topological-coefficients}

We now reproduce the proof of the local-channel expansion~\eqref{eq:local-channel} slightly differently from~\cite{Lem26a}, writing the scalar edge kernels as traces of endomorphisms of explicit Hilbert spaces. We work in the determinant-reduced stable sector; determinant shifts will be reinserted later through the central selection rule and through the reference law $\widehat\Pbb_{T,N}$.

Fix a finite lattice $\Lambda$, a spanning tree $\mathcal T\subset E(\Lambda)$ and a loop family $L=(\ell_1,\ldots,\ell_k)$.  Let
\[
\alpha:P(\Lambda)\longrightarrow \widehat{\U(N)}_{\mathrm{red},\mathrm{st}}
\]
be a reduced stable plaquette decoration.
After gauge fixing, all tree-edge variables are set equal to $I_N$. Thus the topological coefficient becomes
\[
\widehat W_{\Lambda,L}(\alpha)=\int_{\U(N)^{E(\Lambda)\setminus\mathcal T}}\prod_{i=1}^k \Tr(U_{\ell_i})\prod_{p\in P(\Lambda)} \chi_{\alpha_p}(U_{\partial p})\prod_{e\in E(\Lambda)\setminus\mathcal T} dU_e .
\]
Tree variables which occur in a plaquette boundary or in a loop word are omitted from the notation below.

We shall use the following convention. If $V$ is a finite-dimensional Hilbert space and $n\geq 1$, define the cyclic operator
\[
\operatorname{Cyc}_V^{(n)}(v_1\otimes\cdots\otimes v_n)=v_n\otimes v_1\otimes\cdots\otimes v_{n-1}.
\]
For $n=0$, we set $V^{\otimes 0}=\C$ and
$\operatorname{Cyc}_V^{(0)}=\Id_{\C}$. Then
\[
\Tr_{V^{\otimes n}}\Bigl(\operatorname{Cyc}_V^{(n)}(A_1\otimes\cdots\otimes A_n)\Bigr)=\Tr_V(A_1\cdots A_n).
\]

Let $p\in P(\Lambda)$, and write the gauge-fixed boundary word as
\[
\partial^{\mathcal T} p=e_{p,1}^{\varepsilon_{p,1}}\cdots e_{p,b_p}^{\varepsilon_{p,b_p}}, \qquad e_{p,r}\in E(\Lambda)\setminus\mathcal T,\quad \varepsilon_{p,r}\in\{+1,-1\}.
\]
Write $\alpha_p=[\lambda_p^+,\lambda_p^-]_N,$ $n_p^\pm=|\lambda_p^\pm|,$ $T_{\alpha_p}=T^{n_p^+,n_p^-}_N.$ Let $P_{\alpha_p}=P_N^{[\lambda_p^+,\lambda_p^-]}$ be the mixed Schur--Weyl projector defined in subsection~\ref{subsec:mixed-schur-weyl}. By the trace realization of stable characters,
\[
\chi_{\alpha_p}(U_{\partial p})=\Tr_{T_{\alpha_p}}\Bigl(P_{\alpha_p}\rho_{n_p^+,n_p^-}(U_{e_{p,1}}^{\varepsilon_{p,1}})\cdots\rho_{n_p^+,n_p^-}(U_{e_{p,b_p}}^{\varepsilon_{p,b_p}})\Bigr).
\]
Since $P_{\alpha_p}$ commutes with the $\U(N)$-action and is a projection, we insert a copy of $P_{\alpha_p}$ between consecutive boundary factors. By Proposition~\ref{prop:expansion-projector}, we have
\[
P_{\alpha_p}=\sum_{\tau\in \mathcal B_{n_p^+,n_p^-}}c_N^{\alpha_p}(\tau)\rho_N(\tau).
\]
After expanding each copy of $P_{\alpha_p}$, the character becomes a finite sum indexed by tuples $\tau_p=(\tau_{p,1},\ldots,\tau_{p,b_p}) \in \mathcal B_{n_p^+,n_p^-}^{ b_p}.$ We now distribute the resulting elementary mixed-tensor legs among the plaquette-edge incidences. Put $r_p=n_p^++n_p^-$, and let
\[
\kappa_p(u)=
\begin{cases}
+, & 1\leq u\leq n_p^+,\\
-, & n_p^+<u\leq r_p,
\end{cases}
\]
where $V_N^+=V_N$ and $V_N^-=V_N^*$. The effective type of the elementary leg $(r,u)$ is obtained by reversing the type when the boundary occurrence is inverse:
\[
\eta_{p,r,u}=
\begin{cases}
\kappa_p(u), & \varepsilon_{p,r}=+1,\\
-\kappa_p(u), & \varepsilon_{p,r}=-1.
\end{cases}
\]
Here $-(+)=-$ and $-(-)=+$. For $e\in\partial^{\mathcal T} p$, set
\[
S_{p,e}=\{(r,u):1\leq r\leq b_p,\ 1\leq u\leq r_p,\ e_{p,r}=e\},
\]
ordered lexicographically, and define the incidence space
\[
\mathcal H_{p,e}(\tau_p;\alpha_p):=\bigotimes_{(r,u)\in S_{p,e}} V_N^{\eta_{p,r,u}} .
\]
This tensor product depends only on $\alpha_p$, on the gauge-fixed boundary word, and on the incidence $(p,e)$; the notation keeps $\tau_p$ because the operator inserted on this space depends on $\tau_p$. Set
\[
\mathcal K_p(\tau_p;\alpha_p):=\bigotimes_{e\in\partial^{\mathcal T} p} \mathcal H_{p,e}(\tau_p;\alpha_p).
\]
After the canonical reordering from the boundary-occurrence ordering to the edge-incidence ordering, the previous walled-Brauer expansion gives
\[
\chi_{\alpha_p}(U_{\partial p})=\sum_{\tau_p}c_N^{\alpha_p}(\tau_p)\Tr_{\mathcal K_p(\tau_p;\alpha_p)}\left[\Phi^N_{p,\tau_p}\left(\bigotimes_{e\in\partial^{\mathcal T} p}\pi^N_{p,e,\tau_p}(U_e)\right)\right],
\]
where $c_N^{\alpha_p}(\tau_p)=\prod_{j=1}^{b_p}c_N^{\alpha_p}(\tau_{p,j})$, $\Phi^N_{p,\tau_p}\in\End(\mathcal K_p(\tau_p;\alpha_p))$, and $\pi^N_{p,e,\tau_p}$ is the tensor product of the tautological representations on the elementary $V_N$- and $V_N^*$-legs of $\mathcal H_{p,e}(\tau_p;\alpha_p)$.

Choose, for every incidence $(p,e)$, finite orthonormal families in the spaces $\End\bigl(\mathcal H_{p,e}(\tau_p;\alpha_p)\bigr),$ induced by the orthonormal channel coordinates of Subsection~\ref{subsec:orthonormal-channel-recoupling}. Expanding $c_N^{\alpha_p}(\tau_p)\Phi^N_{p,\tau_p}$ in the corresponding tensor-product coordinates gives
\begin{equation}\label{eq:decomp-cphi}
\begin{aligned}
c_N^{\alpha_p}(\tau_p)\Phi^N_{p,\tau_p}
&=\sum_{\iota_{\partial p}}
a_p^N(\tau_p,\iota_{\partial p};\alpha_p)\\
&\qquad\times
\bigotimes_{e\in\partial^{\mathcal T}p}
C^N_{p,e}(\tau_p,\iota_{p,e};\alpha_p).
\end{aligned}
\end{equation}
where $\iota_{\partial p}=(\iota_{p,e})_{e\in\partial^{\mathcal T} p}$ runs over a finite set, and
\[
C^N_{p,e}(\tau_p,\iota_{p,e};\alpha_p)
\in\End\bigl(\mathcal H_{p,e}(\tau_p;\alpha_p)\bigr).
\]

\begin{definition}
A \emph{local plaquette resolution} is a pair $r_p=(\tau_p,\iota_{\partial p})$ for which $a_p^N(\tau_p,\iota_{\partial p};\alpha_p)\neq 0$ in~\eqref{eq:decomp-cphi}. We denote the finite set of such resolutions by $\mathcal R_p(\alpha_p)$, and write
\[
a_p^N(r_p;\alpha_p):=a_p^N(\tau_p,\iota_{\partial p};\alpha_p),\qquad C^N_{p,e}(r_p;\alpha_p):=C^N_{p,e}(\tau_p,\iota_{p,e};\alpha_p),
\]
\[
\mathcal H_{p,e}(r_p;\alpha_p):=\mathcal H_{p,e}(\tau_p;\alpha_p),\qquad\pi^N_{p,e,r_p}:=\pi^N_{p,e,\tau_p}.
\]
\end{definition}

\begin{definition}
The \emph{incidence channel} associated with $r_p$ and $e\in\partial^{\mathcal T} p$ is the finite record
\[
\Gamma_{p,e}(r_p;\alpha_p)=\bigl(\mathcal H_{p,e}(r_p;\alpha_p),C^N_{p,e}(r_p;\alpha_p)\bigr),
\]
together with the ordered $V_N/V_N^*$-types inherited from the incidence $(p,e)$. The \emph{boundary-channel tuple} of $r_p$ is
\[
\Gamma_{\partial p}(r_p;\alpha_p):=\bigl(\Gamma_{p,e}(r_p;\alpha_p)\bigr)_{e\in\partial^{\mathcal T} p}.
\]
\end{definition}
With the notations above, we find
\begin{align*}
\chi_{\alpha_p}(U_{\partial p})
={}&\sum_{r_p\in\mathcal R_p(\alpha_p)}a_p^N(r_p;\alpha_p)
\Tr_{\mathcal K_p(r_p;\alpha_p)}\Bigg[
\left(\bigotimes_{e\in\partial^{\mathcal T}p}C^N_{p,e}(r_p;\alpha_p)\right)\\
&\hspace{5cm}\times
\left(\bigotimes_{e\in\partial^{\mathcal T}p}\pi^N_{p,e,r_p}(U_e)\right)
\Bigg].
\end{align*}
where
\[
\mathcal K_p(r_p;\alpha_p)=\bigotimes_{e\in\partial^{\mathcal T} p}\mathcal H_{p,e}(r_p;\alpha_p).
\]
We next add the Wilson factors. If the gauge-fixed word of $\ell_i$ is
\[
\ell_i=f_{i,1}^{\delta_{i,1}}\cdots f_{i,n_i}^{\delta_{i,n_i}},\qquad f_{i,r}\in E(\Lambda)\setminus\mathcal T,\quad \delta_{i,r}\in\{+1,-1\},
\]
set
\[
\mathcal W_e(L):=\bigotimes_{(i,r): f_{i,r}=e} V_N^{\delta_{i,r}},
\]
with $\mathcal W_e(L)=\C$ if $e$ does not occur in the gauge-fixed loop family. The edge variable acts on $\mathcal W_e(L)$ through the tensor product representation
\[
\pi^W_{e,N}(g):=\bigotimes_{(i,r): f_{i,r}=e} g^{\delta_{i,r}},
\]
where $g^{+1}$ acts on $V_N$ and $g^{-1}$ acts on $V_N^*$.

The cyclic Wilson trace tensor is a fixed operator
\[
C^W_{\Lambda,L,N}\in
\End\left(\bigotimes_{e\in E(\Lambda)\setminus\mathcal T}\mathcal W_e(L)\right).
\]
Choose once and for all a finite decomposable expansion
\[
C^W_{\Lambda,L,N}=\sum_{\omega\in\Omega^W_{\Lambda,L}}\bigotimes_{e\in E(\Lambda)\setminus\mathcal T} C^W_{e,\omega,N},\qquad C^W_{e,\omega,N}\in\End(\mathcal W_e(L)),
\]
absorbing the scalar coefficient of each decomposable term into one of the factors $C^W_{e,\omega,N}$. If $L=\emptyset$, this expansion consists of the single scalar term $1$.

A scalar local-channel field is obtained by choosing a local plaquette resolution $r_p\in\mathcal R_p(\alpha_p)$ for every plaquette $p$, together with a Wilson scalarization index $\omega\in\Omega^W_{\Lambda,L}$, and then grouping the data by non-tree edge. For each non-tree edge $e$, set $\Gamma_e:=\left(\bigl(\Gamma_{p,e}(r_p;\alpha_p)\bigr)_{p:\,e\in\partial^{\mathcal T} p},\omega\right).$ The associated edge space is
\[
\mathcal H_e(\Gamma_e;\alpha,L):=\left(\bigotimes_{p: e\in\partial^{\mathcal T} p}\mathcal H_{p,e}(r_p;\alpha_p)\right)\otimes \mathcal W_e(L),
\]
and the edge representation is
\[
\pi^\Gamma_{e,N}(g):=\left(\bigotimes_{p: e\in\partial^{\mathcal T} p}\pi^N_{p,e,r_p}(g)\right)\otimes \pi^W_{e,N}(g).
\]
The edge Haar projection is
\[
\Pi^\Gamma_{e,N}(\alpha,L):=\int_{\U(N)}\pi^\Gamma_{e,N}(g) dg=\Proj_{\mathcal H_e(\Gamma_e;\alpha,L)^{U(N)}} .
\]
The non-Haar part of the edge tensor is
\[
C_{e,N}(\Gamma_e;\alpha,L):=\left(\bigotimes_{p: e\in\partial^{\mathcal T} p}C^N_{p,e}(r_p;\alpha_p)\right)\otimes C^W_{e,\omega,N}\in\End\bigl(\mathcal H_e(\Gamma_e;\alpha,L)\bigr).
\]
We define the scalar edge kernel by
\[
K^N_e(\Gamma_e;\alpha,L):=\Tr_{\mathcal H_e(\Gamma_e;\alpha,L)}\bigl(C_{e,N}(\Gamma_e;\alpha,L) \Pi^\Gamma_{e,N}(\alpha,L)\bigr).
\]
For an admissible plaquette boundary-channel tuple $\eta_{\partial p}$, define the compressed scalar plaquette coefficient
\[
A_p^N(\eta_{\partial p};\alpha_p):=\sum_{\substack{r_p\in\mathcal R_p(\alpha_p)\\ \Gamma_{\partial p}(r_p;\alpha_p)=\eta_{\partial p}}}a_p^N(r_p;\alpha_p).
\]
Equality of boundary-channel tuples includes equality of the incidence spaces, ordered $V_N/V_N^*$-types and incidence coordinate operators. In particular, the
edge kernels are unchanged by this grouping.

Let $\mathcal C^L_\Lambda(\alpha)$ be the finite set of scalar local-channel fields $\Gamma$ obtained in this way, after the preceding grouping of local plaquette resolutions by boundary-channel tuples. Substituting the decomposable plaquette expansions and the decomposable Wilson expansion into the gauge-fixed integral, then integrating independently over the non-tree edge variables, gives the local-channel expansion~\eqref{eq:local-channel}.

\subsection{Block observables and pinned supports}\label{subsec:pinned-block-observables}\label{subsec:local-channel-expansion}

Let us now pass from local-channel formula to the connected pinned blocks which will be estimated in Section~\ref{sec:pinned-haar-projection}. Again, we work first in the determinant-reduced stable sector $\widehat{\U(N)}_{\mathrm{red},\mathrm{st}}$: a reduced stable label is a highest weight of the form $\lambda=[\lambda^+,\lambda^-]_N$, with $m(\lambda)=|\lambda^+|+|\lambda^-|\le N/2$.  Determinant shifts will be reinserted only after the determinant-reduced block estimate has been proved.

The projection-supported local-channel expansion of Subsection~\ref{subsec:trace-topological-coefficients} gives more than the scalar formula~\eqref{eq:local-channel}: after fixing a scalar channel $\Gamma$, every edge kernel is the trace of a deterministic local contraction operator against an orthogonal Haar projection.  For a full plaquette decoration $\lambda:P(\Lambda)\to\widehat{\U(N)}$ and a channel $\Gamma\in\mathcal C_\Lambda^L(\lambda)$, set
\[
\mathcal H^\Gamma_\Lambda(\lambda,L)
:=
\bigotimes_{e\in E(\Lambda)\setminus\mathcal T}
\mathcal H_e(\Gamma_e;\lambda,L),
\qquad
\Pi^\Gamma_{\Lambda,N}(\lambda,L)
:=
\bigotimes_{e\in E(\Lambda)\setminus\mathcal T}
\Pi^\Gamma_{e,N}(\lambda,L),
\]
and
\[
C^\Gamma_{\Lambda,N}(\lambda,L):=\bigotimes_{e\in E(\Lambda)\setminus\mathcal T}C_{e,N}(\Gamma_e;\lambda,L).
\]
Thus
\[
\begin{aligned}
\prod_{e\in E(\Lambda)\setminus\mathcal T}K_e^N(\Gamma_e;\lambda,L)
&=\Tr_{\mathcal H^\Gamma_\Lambda(\lambda,L)}\Bigl(
C^\Gamma_{\Lambda,N}(\lambda,L)\\
&\hspace{4.5cm}\times
\Pi^\Gamma_{\Lambda,N}(\lambda,L)\Bigr).
\end{aligned}
\]
The normalized non-spectral local-channel summand of the Fourier-side observable is
\begin{equation}\label{eq:observable-channel-summand}
\begin{aligned}
\mathcal O^L_{\Lambda,N}(\lambda;\Gamma)
:={}&N^{-\#L}
\left(
\prod_{p\in P(\Lambda)}d_{\lambda_p}A_p^N(\Gamma_{\partial p};\lambda_p)
\right)\\
&\times
\Tr_{\mathcal H^\Gamma_\Lambda(\lambda,L)}
\left(
\mathcal C^\Gamma_{\Lambda,N}(\lambda,L)
\Pi^\Gamma_{\Lambda,N}(\lambda,L)
\right).
\end{aligned}
\end{equation}
We then set
\begin{equation}\label{eq:local-channel-observable}
\mathcal O^L_{\Lambda,N}(\lambda)
:=
\sum_{\Gamma\in\mathcal C_\Lambda^L(\lambda)}
\mathcal O^L_{\Lambda,N}(\lambda;\Gamma).
\end{equation}
The coefficients $A_p^N$, $K_e^N$, and the operators $C_{e,N}$, $\Pi^\Gamma_{e,N}$ are topological quantities: they do not contain the heat-kernel Casimir exponential.  The dimension factors $d_{\lambda_p}$ and the Wilson normalization $N^{-\#L}$ have been included in $\mathcal O^L_{\Lambda,N}$, because this is the normalization that appears in Wilson expectations. Recall that we denote by $\widehat\E_{\Lambda,T,N}$ expectation with respect to the reference measure $\widehat\Pbb_{\Lambda,T,N}$ defined in~\eqref{eq:product-ref-laws}. Combining the heat-kernel state-sum formula with \eqref{eq:local-channel-observable}, the normalized Wilson expectation may be written as the Fourier-side expectation ratio
\begin{equation}\label{eq:observable-reference-ratio}
\Phi_{\Lambda,T,N}(L)
=
\frac{
\widehat\E_{\Lambda,T,N}\bigl[\mathcal O^L_{\Lambda,N}(\lambda)\bigr]
}{
\widehat\E_{\Lambda,T,N}\bigl[\mathcal O^\varnothing_{\Lambda,N}(\lambda)\bigr]
}.
\end{equation}
Indeed, the factor $\mathcal Z_{T,N}^{|P(\Lambda)|}$ produced by the change of measure cancels between numerator and vacuum denominator. In the stable small-mass range, writing $\lambda_p=q_p\mathbf 1_N+\alpha_p$ with $\alpha_p$ reduced stable, the central selection rule separates determinant shifts from reduced labels.  The determinant shifts remain only in the Casimir factor, while $d_{\lambda_p}=d_{\alpha_p}$.  Thus the non-spectral quantity to control is the determinant-reduced connected part of the observable $\mathcal O^L_{\Lambda,N}$, with the Casimir exponential removed.

A scalar channel $\Gamma$ decomposes its non-vacuum data into connected components in the dual incidence graph.  Components which do not meet the Wilson insertion are vacuum components and cancel in the expectation ratio~\eqref{eq:observable-reference-ratio}. We therefore estimate only connected components pinned to the Wilson insertion.

We use the following local terminology.  After gauge fixing, a \emph{marked Wilson occurrence} is an occurrence of a non-tree edge, or its inverse, in one of the gauge-fixed loop words.  A non-tree edge vertex is \emph{marked} if it carries at least one marked Wilson occurrence.  A \emph{marked trace component} is one of the normalized factors $N^{-1}\Tr(U_{\ell_i})$ which contributes at least one marked occurrence to the active support under consideration.  The \emph{Wilson collar} at an edge is the ordered list of its marked Wilson occurrences, their signs, and the cyclic trace-pairing data inherited from the fixed tensor $C^W_{\Lambda,L,N}$.

The vacuum local-channel datum is the one in which the plaquette label is trivial, all incidence spaces are $\C$, all incidence coordinate operators are the identity scalar, and every unmarked edge Haar projection is the scalar projection $1$ on $\C$.  A plaquette coordinate, incidence coordinate or edge coordinate is called non-vacuum precisely when it differs from this datum; Wilson-collar data are always recorded separately.

\begin{definition}\label{def:active-support}
Let $\alpha$ be a plaquette decoration, let $\Gamma\in\mathcal C_\Lambda^L(\alpha)$, and let $L$ be a loop family. The \emph{active incidence graph} $\mathcal I(\alpha,\Gamma;L)\subset D_{\mathcal T}(\Lambda)$ is the subgraph generated by the following vertices and incidences:
\begin{itemize}
\item a plaquette vertex $p$ is active if the corresponding plaquette coordinate is not the vacuum coordinate;
\item an edge vertex $e\in E(\Lambda)\setminus\mathcal T$ is active if the corresponding edge Haar projection is not the vacuum scalar, or if $e$ carries a marked Wilson occurrence;
\item an incidence $(p,e)$ is active if the scalar local-channel coordinate has a non-vacuum dependence between the plaquette coordinate at $p$ and the edge projection at $e$.
\end{itemize}
A \emph{connected active support} is a connected component of $\mathcal I(\alpha,\Gamma;L)$, and it is pinned if it contains at least one edge vertex carrying a marked Wilson occurrence.

Let $Y$ be a connected active support.  We denote by $P(Y)$ its plaquette vertices, by $V_E(Y)$ its edge vertices, and by $I(Y)$ its active incidences.  We set
\[
|Y|:=|P(Y)|+|V_E(Y)|+|I(Y)|.
\]
The bounded degree of $\Z^d$ implies that any of these three quantities controls the other two up to constants depending only on $d$, after allowing a loop-dependent additive error for marked Wilson occurrences.
\end{definition}

For a reduced stable plaquette decoration $\alpha_Y:P(Y)\to\widehat{\U(N)}_{\mathrm{red},\mathrm{st}}$, let $\bar\alpha_Y$ be its extension by the trivial representation outside $P(Y)$. Let $\mathcal C_Y^L(\alpha_Y)\subset\mathcal C_\Lambda^L(\bar\alpha_Y)$ be the set of local channel fields whose active support, in the presence of $L$, is exactly $Y$. Outside $Y$, all plaquette labels and all channel data are vacuum data.

For $\Gamma\in\mathcal C_Y^L(\alpha_Y)$, we use the projection-supported notation of Subsection~\ref{subsec:trace-topological-coefficients}: set
\[
\begin{aligned}
\mathcal H_Y^\Gamma(\bar\alpha_Y,L)
&:=\bigotimes_{e\in V_E(Y)}\mathcal H_e(\Gamma_e;\bar\alpha_Y,L),\\
\Pi^\Gamma_{Y,N}(\bar\alpha_Y,L)
&:=\bigotimes_{e\in V_E(Y)}\Pi^\Gamma_{e,N}(\bar\alpha_Y,L).
\end{aligned}
\]
and
\[
C_{Y,N}^\Gamma(\bar\alpha_Y,L):=\bigotimes_{e\in V_E(Y)}C_{e,N}(\Gamma_e;\bar\alpha_Y,L).
\]
The determinant-reduced pinned block observable is the connected summand
\begin{equation}\label{eq:scalarization-of-block}
\begin{aligned}
\mathcal O_{Y,N}^{\mathrm{red},L}(\alpha_Y,\Gamma)
:={}&N^{-\#L}
\left(\prod_{p\in P(Y)}d_{\alpha_p}A_p^N(\Gamma_{\partial p};\alpha_p)\right)\\
&\times
\Tr_{\mathcal H_Y^\Gamma(\bar\alpha_Y,L)}\left(
C_{Y,N}^\Gamma(\bar\alpha_Y,L)
\Pi^\Gamma_{Y,N}(\bar\alpha_Y,L)
\right).
\end{aligned}
\end{equation}

The scalar local-channel fields are auxiliary coordinates for a canonical finite-dimensional tensor contraction.  In the estimates below we shall not count these coordinates individually.  Instead, for a fixed reduced label field and a fixed connected support, we first resum all local scalarizations with that support and only then take absolute values:
\begin{equation}\label{eq:resummed-reduced-block}
\mathcal O_{Y,N}^{\mathrm{red},L}(\alpha_Y)
:=
\sum_{\Gamma\in\mathcal C_Y^L(\alpha_Y)}
\mathcal O_{Y,N}^{\mathrm{red},L}(\alpha_Y,\Gamma).
\end{equation}
This resummation is important.  The spaces of walled-Brauer or Schur--Weyl coordinates in mixed degree $m$ may have super-exponential raw cardinality, whereas the tensor contraction which they resolve has a uniform operator description.  All absolute estimates below are therefore performed either before scalarization or after the exact resummation~\eqref{eq:resummed-reduced-block}; multiplicity sums in the residual core are controlled by reciprocal dimensions along canonical Pieri paths rather than by counting basis vectors.

For any finite plaquette set $S$ and any label field $\alpha_S:S\to\widehat{\U(N)}$, we write $M(\alpha_S):=\sum_{p\in S}m(\alpha_p),$ where $m(\alpha_p)=|\alpha_p^+|+|\alpha_p^-|$ is the projective mass of Definition~\ref{def:stable-coordinates}, with stable coordinates understood through the bijection $\lambda_N(\lambda^+,\lambda^-,q)$.  This convention applies equally to reduced labels and to labels with determinant shift.

We shall also use the following consequence of the central selection rule and of the heat-kernel decay.  It is stated in the notation of the pinned block observables introduced above.

\begin{lemma}\label{lem:central-cutoff}
Fix $0<\varepsilon_d<(4d)^{-1}$.  Let $Y$ be a connected active support and let $M=M(\alpha_Y)$ be the total projective mass of the reduced plaquette decoration on $P(Y)$.  For every fixed loop family $L$ there exists $N_0(L,d)<\infty$ such that, whenever $N\ge N_0(L,d)$ and $M\leq \varepsilon_d N$, the central selection rule \eqref{eq:split-selection} holds on the part of the stable expansion supported on $Y$. Moreover, suppose that the corresponding non-spectral stable contribution on $Y$, after summing over all internal local-channel variables of total projective mass $M$ but before inserting the reduced heat-kernel factor, is bounded by $C_LC^{|Y|+M}$.  Then the complementary stable range $M>\varepsilon_dN$ contributes at most
\begin{equation}\label{eq:stable-large-mass-tail}
C_L C^{|Y|}\exp\{-(cT-\log C)\varepsilon_d N\}
\end{equation}
to the corresponding support, for a constant $c>0$ independent of $N$ and $\Lambda$.
\end{lemma}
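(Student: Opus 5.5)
The lemma has two parts, and I would prove them in turn. The first part reduces to Lemma~\ref{lem:central-selection}. The second is a Chernoff-type summation over projective mass, using the heat-kernel domination \eqref{eq:spectral-decay}.

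\emph{First part.} The only input needed is the hypothesis $2dM+\ell(L)<N$ of Lemma~\ref{lem:central-selection}. The part of the stable expansion supported on $Y$ has trivial labels outside $P(Y)$, so the total projective mass entering the selection rule is exactly $M=M(\alpha_Y)$. Each edge of $\Z^d$ lies in at most $2(d-1)\le 2d$ plaquettes, and $|j_L(e)|\le \ell(L)$. Hence
\[
|(\partial^*s)_e+j_L(e)|\le 2dM+\ell(L).
\]
If $M\le\varepsilon_dN$ with $\varepsilon_d<(4d)^{-1}$, then $2dM\le N/2$. Choosing $N_0(L,d)=2\ell(L)+1$ gives $2dM+\ell(L)<N$. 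Lemma~\ref{lem:central-selection} then yields the split rule \eqref{eq:split-selection}.

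\emph{Second part.} I would apply \eqref{eq:spectral-decay} to each plaquette label. The determinant shifts have already been separated, so for the reduced factor only $\exp(-\tfrac T{2N}C_2(\alpha_p))$ matters, and with $q_p=0$ it is bounded by
\[
\exp\{-\tfrac T4 m(\alpha_p)+\tfrac{T m(\alpha_p)^2}{2N^2}\}.
\]
In the stable range $m(\alpha_p)\le N/2$, the quadratic correction is at most $Tm(\alpha_p)/(4N)$, which is $o(T m(\alpha_p))$. I would absorb it by taking any $c<1/4$ and $N$ large. Alternatively, it suffices to use $m\le N/2$, which gives the correction $\le \tfrac T8 m$, so one can take $c=1/8$.

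Multiplying over $p\in P(Y)$, the reduced heat-kernel factor is at most $e^{-cTM}$. Combined with the assumed non-spectral bound $C_LC^{|Y|+M}$, the contribution at mass $M$ is at most
\[
C_LC^{|Y|}e^{-(cT-\log C)M}.
\]
Summing over integers $M>\varepsilon_dN$ is a geometric series when $cT>\log C$. This gives \eqref{eq:stable-large-mass-tail}, up to a factor $(1-e^{-(cT-\log C)})^{-1}$, which I absorb into $C_L$ or into a slight change of $c$.

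\emph{Main obstacle.} The delicate point is to ensure that the assumed bound $C_LC^{|Y|+M}$ already accounts for the sum over all label fields of total mass $M$ on $P(Y)$, and not just for a single field. If it does not, I would add the counting of label fields explicitly. The number of pairs of partitions distributed over $|P(Y)|$ plaquettes with total mass $M$ is at most $2^{|P(Y)|+M}\cdot C^M$, because the partition generating function has radius $1$, as in Lemma~\ref{lem:product-tilted-mass-tail}. This count is exponential in $|Y|+M$, so it is absorbed by enlarging $C$.

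I also need uniformity in $N$ and $\Lambda$. This holds because only the local support $Y$ and the constants from \eqref{eq:LM-casimir-domination} enter the estimate.
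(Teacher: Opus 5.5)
Your proposal is correct and matches the paper's proof. Lemma~\ref{lem:central-selection} gives the split rule once $2dM+\ell(L)<N$, and the tail follows from the plaquettewise decay $e^{-cTM}$ of \eqref{eq:spectral-decay}, multiplied by the assumed non-spectral majorant and summed geometrically over $M>\varepsilon_dN$; as a minor simplification, with $q_p=0$ the bound \eqref{eq:LM-casimir-domination} already gives $e^{-Tm(\alpha_p)/4}$ directly, so you never need to absorb the quadratic correction term.
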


\begin{proof}
If $M\leq\varepsilon_dN$ and $\varepsilon_d<(4d)^{-1}$, then $2dM+\ell(L)<N$ for all $N\ge N_0(L,d)$ after choosing $N_0(L,d)$ sufficiently large.  Lemma~\ref{lem:central-selection} then gives the split selection rule.  The finitely many values $N<N_0(L,d)$ play no role in the later asymptotic estimates and are absorbed into their constants, but no exact split-selection assertion is made for them.

For the tail, Equation~\eqref{eq:spectral-decay} gives the reduced heat-kernel decay $e^{-cTM}$ in the stable range.  Multiplying by the assumed non-spectral bound gives a summand bounded by $C_LC^{|Y|}(Ce^{-cT})^M$.  Summing over $M>\varepsilon_dN$ gives \eqref{eq:stable-large-mass-tail} once $T$ is large enough, after changing the constants.
\end{proof}

For fixed projective mass $M$, define the signed block coefficient
\begin{equation}\label{eq:signed-block-observable}
\mathcal O_{Y,N}^{\mathrm{red},L}[M]
:=
\sum_{\substack{\alpha_Y:P(Y)\to\widehat{\U(N)}_{\mathrm{red},\mathrm{st}}\\ M(\alpha_Y)=M}}
\mathcal O_{Y,N}^{\mathrm{red},L}(\alpha_Y).
\end{equation}
We estimate the absolute label mass
\begin{equation}\label{eq:absolute-block-observable}
\left\|\mathcal O_{Y,N}^{\mathrm{red},L}\right\|_{M,1}
:=
\sum_{\substack{\alpha_Y:P(Y)\to\widehat{\U(N)}_{\mathrm{red},\mathrm{st}}\\ M(\alpha_Y)=M}}
\left|\mathcal O_{Y,N}^{\mathrm{red},L}(\alpha_Y)\right|.
\end{equation}
Thus absolute values are taken after the finite local-channel resummation at fixed label field.  Clearly
$|\mathcal O_{Y,N}^{\mathrm{red},L}[M]|
\le \|\mathcal O_{Y,N}^{\mathrm{red},L}\|_{M,1}$.
Controlled asymptotic expansions in powers of $1/N$ will be asserted for the signed block coefficient.

We write $b(Y,L)$ for the number of distinct marked trace components of $L$ which meet $Y$.  In particular $Y$ is pinned if and only if $b(Y,L)\ge1$.  All constants denoted by $C_d$ depend only on $d$; constants denoted by $C_L$ may depend on the fixed loop family $L$, but never on $N$, $\Lambda$, $Y$ or $M$.

\section{Pinned block observables estimates}
\label{sec:pinned-haar-projection}

The purpose of this section is to prove the local large-$N$ estimate for the pinned block observables $\mathcal O_{Y,N}^{\mathrm{red},L}$ and their absolute mass. This estimate is the input which makes the heat-kernel master-loop transfer contractive in Section~\ref{sec:rooted-master-loop}.

\subsection{Pinned block bounds}

The trace form of $\mathcal O_{Y,N}^{\mathrm{red},L}(\alpha_Y,\Gamma)$ will be used throughout the proof. Since $\Pi^\Gamma_{Y,N}=\bigotimes_{e\in V_E(Y)}\Pi^\Gamma_{e,N}$ is a tensor product of orthogonal Haar projections, every active edge contribution is a matrix coefficient of a projection. More explicitly, after expanding the finite-rank operator
\[
N^{-\#L}
\left(\prod_{p\in P(Y)}d_{\alpha_p}A_p^N(\Gamma_{\partial p};\alpha_p)\right)
C_{Y,N}^\Gamma(\bar\alpha_Y,L)
\]
into rank-one terms, there are a finite set $\mathcal J(Y,\alpha_Y,\Gamma;L)$ and vectors
\[
\xi^{\rm in}_{j,N},\xi^{\rm out}_{j,N}
\in
\mathcal H_Y^\Gamma(\bar\alpha_Y,L),
\qquad j\in\mathcal J(Y,\alpha_Y,\Gamma;L),
\]
such that
\begin{equation}\label{eq:operator-block-contraction}
\mathcal O_{Y,N}^{\mathrm{red},L}(\alpha_Y,\Gamma)=\sum_{j\in\mathcal J(Y,\alpha_Y,\Gamma;L)}\left\langle\xi^{\rm out}_{j,N},\Pi^\Gamma_{Y,N}(\bar\alpha_Y,L)\xi^{\rm in}_{j,N}\right\rangle .
\end{equation}
All plaquette coefficients, dimension factors, Wilson normalization and fixed channel-coordinate maps are contained in the vectors. The only feature used below is that the edge integrations remain the orthogonal projections $\Pi^\Gamma_{e,N}$, and that the local maps which connect plaquette data to these projections have the fixed-degree bounds recorded in Lemma~\ref{lem:local-operator-complexity}.

A local Schur--Weyl resolution of a plaquette trace is summarized, as in the companion paper, by a finite set $\mathcal R_p(\lambda)$ of coordinate choices.  For $r\in\mathcal R_p(\lambda)$, we write $\mathcal I_p(r)$ for the corresponding finite set of incidence channel types.  The following lemma records the local complexity and stable-expansion bounds needed below.

\begin{lemma}[Local operator complexity]\label{lem:local-operator-complexity}
There is a constant $C_d<\infty$ with the following property.  Let $p$ be a plaquette and let $\lambda=[\lambda^+,\lambda^-]_N$ be a reduced stable label of mass $m$.  The projection-supported plaquette tensor and every local incidence map which occurs in the construction can be realized, without expanding the full walled-Brauer or multiplicity spaces into a raw basis, as a composition of at most $C_d(1+m)$ elementary operations: orthogonal Schur--Weyl projections, permutations of tensor factors, normalized duality maps, and bounded-valence incidence maps.  Consequently every such local operator $T_{\mathrm{loc}}$ satisfies
\[
\|T_{\mathrm{loc}}\|_{2\to2}\le C_d^{1+m}.
\]
For fixed $m$, its matrix coefficients in any of the orthonormal stable channel coordinates used below admit controlled asymptotic expansions to all orders in powers of $1/N$:
\[
\langle u,T_{\mathrm{loc}}(N)v\rangle
= \sum_{r=0}^{R} a_r(u,v)N^{-r}+O_{R,m}(N^{-R-1}),
\]
uniformly over the finitely many local incidence types of the lattice.

No exponential bound on the \emph{raw number} of walled-Brauer diagrams or multiplicity-basis vectors is asserted.  Whenever internal multiplicity variables have to be summed in the sequel, we use a fixed sequential Pieri resolution and the reciprocal-dimension estimate of Lemma~\ref{lem:oscillating-path}.
\end{lemma}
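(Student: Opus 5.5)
The plan is to build the local operators explicitly from the trace realization \eqref{eq:trace-character} of the stable character. I would then read off the three claims one by one: the operation count, the operator-norm bound, and the controlled expansion.

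\emph{Construction.} Fix $p$ and $\lambda=[\lambda^+,\lambda^-]_N$ with $n^\pm=|\lambda^\pm|$ and $m=n^++n^-$. The plaquette tensor is obtained by inserting $P_\lambda=P_N^{[\lambda^+,\lambda^-]}$ between consecutive boundary factors of the gauge-fixed boundary word. In $\Z^d$ this word has length $b_p\le4$. The only other ingredients are the canonical reordering from boundary-occurrence order to edge-incidence order, and the grouping of legs into the incidence spaces $\mathcal H_{p,e}$. I would not expand $P_\lambda$ in the walled-Brauer family of Proposition~\ref{prop:expansion-projector}. Instead I would treat it as one primitive orthogonal projection, or equivalently as a sequential Pieri resolution of $T_N^{n^+,n^-}$: tensor one elementary leg at a time and project onto the target isotypic component. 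This costs at most $m$ projections. The reordering is a permutation of at most $4m$ tensor legs, and the incidence maps act on the boundedly many incidences of $p$. Each orthonormal incidence coordinate $C^N_{p,e}$ is a composition of Clebsch--Gordan isometries $I_{\eta,a}$ and contraction--coevaluation projectors $\Pi^{\mathrm{inv}}_\eta$ along a binary channel tree with $O(m)$ nodes. Together these give the count $C_d(1+m)$.

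\emph{Norm bound.} Orthogonal projections, permutations of tensor factors, isometric Clebsch--Gordan maps and the normalized maps $\Omega_\eta\Omega_\eta^*$ from \eqref{eq:contraction-coevaluation-projector} all have norm at most $1$. The bare duality maps $\operatorname{ev}$ and $\operatorname{coev}$ only enter after normalization, so their $d_\eta^{\pm1/2}$ factors are absorbed into the reciprocal-dimension bookkeeping and not into $T_{\mathrm{loc}}$. The bounded-valence incidence maps are fixed lattice data, so their norm is at most a constant $C_d$. Submultiplicativity over at most $C_d(1+m)$ factors then gives $\|T_{\mathrm{loc}}\|_{2\to2}\le C_d^{1+m}$ after enlarging $C_d$.

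\emph{Expansion for fixed $m$.} There are finitely many labels of mass $m$ and finitely many incidence types. For each, the projections have walled-Brauer coordinates that are rational in $N$ by Proposition~\ref{prop:expansion-projector}, and permutations and dualities are $N$-independent diagrams. Orthonormal coordinates have controlled expansions by Lemma~\ref{lem:fixed-degree-orthonormalization}, and compositions of finitely many such maps in fixed degree preserve controlled expansions. Uniformity over the finite set of choices is then immediate.

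The main obstacle is the norm bound: one must check that no unnormalized duality factor or walled-Brauer coefficient sneaks into the composition, since $|c_N(\tau)|$ may be large. I would handle this by insisting that $P_\lambda$ is used only as a projection, never through its diagram expansion, in the norm estimate. The diagram expansion would be used solely in the fixed-$m$ asymptotic statement, where no uniformity in $m$ is required.
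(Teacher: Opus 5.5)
Your proposal follows the paper's proof essentially step for step: you keep the isotypic projector as a single norm-one operator rather than expanding it in walled-Brauer diagrams, you use the bounded boundary length and normalized evaluation--coevaluation maps to obtain $O_d(1+m)$ factors of bounded norm, and you derive the fixed-$m$ expansions from the rational $N$-dependence of diagrammatic coordinates together with Lemma~\ref{lem:fixed-degree-orthonormalization}. The only slip is the aside that $P_\lambda$ is ``equivalently'' a product of $m$ sequential Pieri projections, since a single Pieri path cuts out one copy of $V_\lambda$ rather than the whole isotypic component; your argument uses $P_\lambda$ itself as one norm-one projection, so nothing depends on this aside.
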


\begin{proof}
Realize $V_\lambda$ inside
\[
(\C^N)^{\otimes|\lambda^+|}
\otimes
((\C^N)^*)^{\otimes|\lambda^-|}.
\]
Keep the orthogonal isotypic projector onto $V_\lambda$ as a norm-one operator; in particular, do not expand it into all walled-Brauer diagrams.  Since a plaquette has bounded boundary length, distributing the elementary mixed-tensor legs among its incidences requires $O_d(1+m)$ permutations, contractions, coevaluations and orthogonal projections.  We use normalized evaluation--coevaluation maps whenever an invariant line is opened, so each elementary operator has norm bounded by a universal constant.  Their composition therefore has norm at most $C_d^{1+m}$.

For fixed $m$ all mixed tensor degrees are fixed.  Algebraic diagrammatic coordinate maps have rational dependence on $N$, and Lemma~\ref{lem:fixed-degree-orthonormalization} shows that passage to orthonormal stable channel coordinates preserves controlled asymptotic expansions.  Since only $O_d(1+m)$ fixed-degree compositions are involved, the asserted expansions follow.  The final sentence is a convention for the rest of the proof: multiplicity spaces are never estimated by their dimension.
\end{proof}

The same convention is used at edge Haar projections.  We keep each Haar integral as the orthogonal projection onto the invariant subspace until the residual core is resolved.  At that stage the incident plaquette representations are opened by tensoring their elementary $V_N$- and $V_N^*$-strands in a fixed lexicographic order and with a fixed left-associated bracketing.  The stable Pieri rules are multiplicity-free at each elementary step, so all multiplicity information is encoded by the resulting oscillating paths.  We shall estimate the corresponding sums by Lemma~\ref{lem:oscillating-path}; in particular, no estimate of the dimension of an invariant multiplicity space is used.

\begin{lemma}[Support resummation]\label{lem:support-resummation}
Fix a reduced label field $\alpha_Y$ and a connected support $Y$.  The resummed block
$\mathcal O_{Y,N}^{\mathrm{red},L}(\alpha_Y)$ from~\eqref{eq:resummed-reduced-block}
can be written as a linear combination of at most $C_LC_d^{|Y|}$ projection-supported tensor contractions with the following properties:
\begin{itemize}
\item every edge Haar integral is kept as an orthogonal projection;
\item every plaquette is represented by its unsplit isotypic projector together with the local incidence operators of Lemma~\ref{lem:local-operator-complexity};
\item the coefficients of the linear combination have absolute value at most $1$.
\end{itemize}
In particular, passing from the scalar-channel description to fixed support $Y$ introduces only an exponential-in-$|Y|$ cost and no multiplicity-space cardinality.
\end{lemma}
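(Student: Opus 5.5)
The plan is to undo the scalarization of Subsection~\ref{subsec:trace-topological-coefficients} on $Y$: sum over all $\Gamma\in\mathcal C_Y^L(\alpha_Y)$ \emph{before} any coordinate basis is expanded, and recover the canonical tensor contraction that the scalar channels resolve. At fixed reduced label field $\bar\alpha_Y$, the full sum over local plaquette resolutions $r_p\in\mathcal R_p(\alpha_p)$ recombines, by~\eqref{eq:decomp-cphi} read backwards, into the unsplit operators $c_N^{\alpha_p}(\tau_p)\Phi^N_{p,\tau_p}$, and then, summing over $\tau_p$, into the character $\chi_{\alpha_p}(U_{\partial p})$ written with copies of the isotypic projector $P_{\alpha_p}$ between consecutive boundary factors. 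Similarly, the sum over $\omega\in\Omega^W_{\Lambda,L}$ recombines into the fixed Wilson trace tensor $C^W_{\Lambda,L,N}$. After Haar integration over the non-tree edges of $V_E(Y)$, each edge factor is again the orthogonal projection $\Pi_{e,N}$ onto invariants of the full incident tensor product. This gives a single projection-supported contraction with exactly the required structure, except for the restriction that the active support is \emph{exactly} $Y$.

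The main work is therefore to implement the constraint ``active support equals $Y$''. For each incidence and each edge vertex of $D_{\mathcal T}(\Lambda)$ meeting $Y$, the orthonormal channel coordinates split $\End(\mathcal H_{p,e})$, and the edge data, into a vacuum part and a non-vacuum part. Both parts are given by orthogonal projections onto a fixed subspace, since the vacuum datum is one distinguished normalized coordinate. The set of $\Gamma$ with active support exactly $Y$ is then obtained by inclusion--exclusion over the vertices and incidences of $Y$ and its boundary in the dual incidence graph. Each term is the canonical contraction with, at each local slot, either the identity, the vacuum projection, or its complement inserted. The number of local slots is at most $C_d|Y|$ plus the number of marked Wilson occurrences, which is bounded by $C_L$, so the number of terms is at most $C_LC_d^{|Y|}$. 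Every coefficient is $\pm1$, which gives the third bullet.

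The remaining check is that the inserted local projections are compatible with the bullets. Each such projection is a norm-one operator acting on a single incidence space. It can therefore be absorbed into the local incidence operators of Lemma~\ref{lem:local-operator-complexity}, at the cost of one extra elementary operation. The plaquette factors keep $P_{\alpha_p}$ unsplit, and the edge factors keep $\Pi_{e,N}$, which gives the first two bullets. The normalizations $N^{-\#L}$ and $d_{\alpha_p}$ are scalars common to all terms and are carried along.

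I expect the main obstacle to be the second step. One must check that ``non-vacuum dependence'' in Definition~\ref{def:active-support} is genuinely a linear-projection condition in the chosen orthonormal coordinates, rather than a condition on the nonvanishing of individual scalar coefficients. If it is not, I would redefine activity through these vacuum/complement projections, which leaves the sum over all supports unchanged. This suffices for every later use of Lemma~\ref{lem:support-resummation}, since only that total is needed.
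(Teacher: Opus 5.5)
Your proposal is correct and follows the paper's proof essentially step by step. You resum all scalar coordinates to recover the unsplit plaquette, Wilson and Haar-projection operators. You then impose exact support $Y$ by inserting either the vacuum projection or its complement $\mathrm{Id}-P_{\mathrm{vac}}$ at each of the at most $C_d|Y|+C_L$ local positions, which yields $\pm1$ inclusion--exclusion coefficients and norm-one insertions. The paper likewise simply treats activity as this vacuum/complement projection condition at one distinguished vacuum coordinate per local position, which is your proposed fallback, so it settles the obstacle you flag by the same convention rather than by a separate argument.
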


\begin{proof}
The scalar local-channel representation is obtained by expanding a finite collection of plaquette, incidence and Wilson operators in tensor-product coordinates with one distinguished vacuum coordinate at every local position.  Summing all scalar coefficients restores the original finite-dimensional operator exactly.

To impose that the active support is precisely $Y$, at each local coordinate belonging to $Y$ we project onto the complement of its vacuum coordinate, while outside $Y$ we project onto the vacuum coordinate.  Writing each non-vacuum projector as $\mathrm{Id}-P_{\mathrm{vac}}$ gives an inclusion--exclusion expansion.  The number of local positions adjacent to a fixed connected support is at most $C_d|Y|+C_L$ by bounded lattice valence and by the fixed Wilson collar.  Hence the expansion has at most $C_LC_d^{|Y|}$ terms.  All vacuum projectors, complementary projectors and edge Haar projections have norm at most one, and the coefficients in the inclusion--exclusion are $\pm1$.  This proves the claim.
\end{proof}

The main result is the following, which bounds the absolute $\ell^1$-mass of the pinned block observable without producing any positive power of $N$.

\begin{theorem}\label{thm:pinned-block}
There exists $C_d<\infty$, depending only on $d$, such that the following holds.  Let $\Lambda\subset\Z^d$ be finite, let $L$ be fixed, and let $Y$ be a connected pinned support.  For every $N$ and every $M\le N/2$,
\begin{equation}
\label{eq:pinned-block-bound}
\left\|\mathcal O_{Y,N}^{\mathrm{red},L}\right\|_{M,1}
\le
C_L C_d^{|Y|+M}N^{-\delta(Y,L)},
\end{equation}
where $\delta(Y,L)\ge0$, and where $\delta(Y,L)\ge1$ if $b(Y,L)\ge2$.  Moreover, for fixed $Y$ and $M$, the signed block coefficient $\mathcal O_{Y,N}^{\mathrm{red},L}[M]$ admits a controlled asymptotic expansion to all orders in powers of $1/N$, uniformly in the ambient finite volume.
\end{theorem}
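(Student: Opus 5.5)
The plan follows the two-step strategy announced in the overview: a peeling argument that removes tree appendices of $Y$, then a residual-core estimate by sequential Pieri resolution. First, Lemma~\ref{lem:support-resummation} rewrites the resummed block $\mathcal O_{Y,N}^{\mathrm{red},L}(\alpha_Y)$ as at most $C_LC_d^{|Y|}$ projection-supported contractions with coefficients of modulus at most $1$. Each plaquette in such a contraction carries its unsplit isotypic projector, together with local incidence operators of norm at most $C_d^{1+m(\alpha_p)}$ (Lemma~\ref{lem:local-operator-complexity}). It then suffices to bound one such contraction, summed over label fields $\alpha_Y$ of total mass $M$, by $C_LC_d^{|Y|+M}N^{-\delta}$. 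The number of label fields of mass $M$ on $P(Y)$ is at most $\binom{M+|P(Y)|}{M}p(\cdot)^2$-type, hence $\le C_d^{|Y|+M}$, so the label sum is harmless once each summand is bounded. The real difficulty is the factor $\prod_p d_{\alpha_p}$, which is huge and must be cancelled before absolute values are taken.

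The peeling step handles this. Consider a plaquette $p$ that is a leaf of the active incidence graph, or more generally the tip of a tree appendix. Its character meets the rest of the contraction only through the Haar projections at its incident edges. Integrate out those edge variables first: on an edge whose only other occupant is already contracted, the projection acts as $\int\rho_{\alpha_p}(g)\otimes\pi_H(g)\,dg$. By Lemma~\ref{lem:partial-trace-haar-projection}, $d_{\alpha_p}\Tr_{V_{\alpha_p}}(\Pi_{\alpha_p,H})$ is an orthogonal projection of norm $\le1$. The dimension factor $d_{\alpha_p}$ is therefore absorbed exactly, and $p$ is replaced by a norm-one isotypic projection on the remaining tensor $H$. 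Iterating along the tree appendices strips every tree-like part of $Y$. Each step costs only $C_d^{1+m}$ and never produces a positive power of $N$.

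What remains is a residual core in which every plaquette lies on a cycle of the active incidence graph, together with the Wilson collars. I would open the incident plaquette representations strand by strand with the fixed left-associated bracketing. Each elementary tensoring is a multiplicity-free stable Pieri move, so the internal multiplicity sums become sums over elementary stable oscillating paths in $\mathbb Y_N^{\mathrm{red}}$. Each cycle closure, that is each contraction--coevaluation projector on an opened channel $\eta$, contributes $d_\eta^{-1}$. The leftover $d_{\alpha_p}$ of a core plaquette is cancelled by the closing channel of that plaquette, because the channel returns to $\alpha_p$. This leaves one reciprocal dimension per remaining Pieri step, and the number of steps is $O_d(|Y|+M)$. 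Lemma~\ref{lem:oscillating-path} bounds the resulting path sum by $C^{|Y|+M}$, and $M\le N/2$ keeps us in its range. The Wilson normalization $N^{-\#L}$ is matched by the at most $N$ trace closures of each marked trace component. When $b(Y,L)\ge2$, two distinct traces are joined inside one connected block, so one fewer free index loop is closed. This gives the extra factor $N^{-1}$, i.e. $\delta\ge1$, by the same genus-type count used in Weingarten calculus. I expect this residual-core bookkeeping to be the main obstacle: showing that every $d_{\alpha_p}$ not removed by peeling is exactly compensated by a closing channel, so that no positive power of $N$ survives.

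Finally, for fixed $Y$ and $M$, the signed coefficient $\mathcal O_{Y,N}^{\mathrm{red},L}[M]$ is a finite sum over label fields. Each term is a finite composition of fixed-degree local operators with controlled $1/N$ expansions (Lemmas~\ref{lem:local-operator-complexity} and~\ref{lem:fixed-degree-orthonormalization}). The dimension ratios $d_{\alpha}/N^{m}$ are rational in $N$ with expansions by~\eqref{eq:dim-cas}. The Haar projections are fixed-degree Weingarten expressions, which are also rational in $N$. Hence the whole coefficient has a controlled expansion to all orders. It depends only on the local data of $Y$, so it is uniform in $\Lambda$.
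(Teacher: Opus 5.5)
Most of your plan matches the paper's proof. You use the support resummation of Lemma~\ref{lem:support-resummation}, and you peel tree appendices through the norm-one partial trace $d_\lambda\Tr_{V_\lambda}(\Pi_{\lambda,H})$ of Lemma~\ref{lem:partial-trace-haar-projection}. You resolve the residual core by a canonical left-associated Pieri schedule summed with Lemma~\ref{lem:oscillating-path}, count label fields as in Lemma~\ref{lem:local-complexity-bound}, and get the signed expansion from fixed-degree coefficients.

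A minor repair is needed in the peeling. Your peeling only starts from plaquette leaves, but a tree appendix can end in an unmarked edge vertex with a single active incidence. Such terms must be discarded first, which the paper does with Lemma~\ref{lem:no-dangling-edge}: the Haar factor there is the projection onto $V_\eta^{\U(N)}=\{0\}$. Relatedly, the residual core is not made only of plaquettes lying on cycles. Core plaquettes can be leaves attached to marked edge vertices, as in Example~\ref{ex:peeling}.

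The genuine gap is the mechanism for $\delta(Y,L)\ge1$ when $b(Y,L)\ge2$. You appeal to a Weingarten-type genus count (``one fewer free index loop is closed''). Your own core bound, however, is not an index-loop count. It is a sum of products of reciprocal dimensions $d_\eta^{-1}$ along oscillating Pieri paths, with arbitrary irreducible plaquette labels and no genus expansion available. So you never say which factor in that bound supplies the extra $N^{-1}$.

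The paper extracts it inside the channel calculus, in two steps:
\begin{enumerate}
\item Normalized traces are unit collar vectors $\Omega_N$, so $N^{-\#L}$ is absorbed without producing positive powers of $N$.
\item In a connected labelled channel graph meeting two distinct marked trace components, any path between them passes through an opened channel with non-trivial reduced label. Such a label has dimension at least $N$, by Lemma~\ref{lem:degree-pinned-block}.
\end{enumerate}

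This only works if the connection between the two collars lies in the resolved core. If it ran through a peeled appendix, you would have kept only a norm-one black box there, with no reciprocal dimension to extract. This is the content of Lemma~\ref{lem:core-collar-connectivity}: each peeled tree appendix is attached to the core through a single attachment datum, so it cannot join two labelled components. Without the separating-channel argument and this connectivity lemma, your argument proves only the $\delta\ge0$ bound. That is not enough for the later factorization result, which relies on the $N^{-1}$ gain for clusters meeting several trace components.
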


The proof is performed after the exact local-channel resummation~\eqref{eq:resummed-reduced-block}.  This is essential for an exponential-in-mass bound: raw multiplicity spaces are not counted.  The only sums which are opened absolutely are canonical stable Pieri path sums, whose reciprocal-dimension weights are controlled by Lemma~\ref{lem:oscillating-path}.

The proof of Theorem~\ref{thm:pinned-block} occupies most of this section. Before turning to it, we first explain how it implies the block estimate used later. If a connected support $Y$ carries a plaquette decoration $\lambda_Y:P(Y)\to\widehat{\U(N)}$, we write $\mathcal E(\lambda_Y):=\sum_{p\in P(Y)}\mathcal E(\lambda_p),$ where $\mathcal{E}(\lambda)=m(\lambda)+q^2$.

\begin{lemma}
\label{lem:minimal-support-size}
There exist $a_d>0$ and $C_L<\infty$ such that, for every connected pinned support $Y$ and every plaquette decoration $\lambda_Y:P(Y)\to\widehat{\U(N)}$,
\[
\mathcal{E}(\lambda_Y)\ge a_d |Y|-C_L .
\]
\end{lemma}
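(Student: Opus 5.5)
The plan is to show that every active plaquette carries energy at least some fixed positive amount. Then the remaining parts of $Y$, namely edge vertices and incidences, are controlled by plaquettes through bounded valence, up to the Wilson collar.

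\emph{Step 1 (plaquette energy).} First I would check that an active plaquette $p\in P(Y)$ carries a nontrivial label. By definition, $p$ is active exactly when its plaquette coordinate differs from the vacuum datum, and the vacuum datum requires $\lambda_p$ to be trivial. So $\lambda_p\neq 0$, and either $m(\lambda_p)\ge1$ or $q_p\neq0$. In both cases
\[
\mathcal E(\lambda_p)=m(\lambda_p)+q_p^2\ge1 .
\]
Summing over $P(Y)$ gives $\mathcal E(\lambda_Y)\ge|P(Y)|$.

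If the active-support convention allows a plaquette to be active only through its incidence coordinates while carrying the trivial label, I would rule this out as follows. The trivial character is the constant $1$, its plaquette tensor lives on $\C$, and all its incidence spaces are $\C$. Hence every coordinate at such a plaquette is the vacuum one, and the plaquette is not active.

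\emph{Step 2 (edges and incidences controlled by plaquettes).} Next I would bound the other two components of $|Y|$.

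For incidences: each plaquette has at most $4$ boundary edges. So $|I(Y)|\le 4|P(Y)|$.

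For edge vertices: an edge vertex $e\in V_E(Y)$ is active either because it carries a marked Wilson occurrence, or because its Haar projection is non-vacuum.
\begin{itemize}
\item There are at most $\ell(L)$ edges of the first kind.
\item For an edge of the second kind, its space $\mathcal H_e$ is nontrivial. This requires either a Wilson leg, which is again at most $\ell(L)$ edges, or some incident plaquette with a nontrivial incidence space. Such a plaquette is active by Step 1.
\end{itemize}
Each plaquette has at most $4$ boundary edges, so
\[
|V_E(Y)|\le 4|P(Y)|+\ell(L).
\]
Combining the two bounds,
\[
|Y|\le 9|P(Y)|+\ell(L),
\]
and therefore
\[
\mathcal E(\lambda_Y)\ge \tfrac19|Y|-\tfrac19\ell(L).
\]
This is the claim with $a_d=1/9$ and $C_L=\ell(L)$. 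The remark after Definition~\ref{def:active-support} already records this comparison, and its constants could be used instead.

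\emph{Main obstacle.} The delicate point is Step 1, namely confirming from the construction in Subsection~\ref{subsec:trace-topological-coefficients} that nonvacuum coordinates at a plaquette force a nontrivial label. A related point is the case where the label is a pure determinant shift. Here $\alpha_p=0$ but $q_p\neq0$, and the reduced local-channel coordinates look vacuum. I would treat this by noting that the determinant power still contributes a scalar character, so the plaquette is activated through the central charge, and $q_p^2\ge1$ gives the energy directly.

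If instead the convention places pure determinant plaquettes outside $Y$, the bound only improves. In that case $P(Y)$ consists of plaquettes with $m\ge1$, and the inequality holds a fortiori.
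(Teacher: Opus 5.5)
Your proof is correct and follows essentially the same approach as the paper: every plaquette carrying a non-trivial label has $\mathcal E(\lambda_p)\ge1$, and bounded valence together with the Wilson collar bounds $|Y|$ by a multiple of the number of such plaquettes plus a loop-dependent constant. The only difference is that the paper does not rule out active plaquettes with trivial label; it bounds their number by $C_L$, arguing they can only be activated through Wilson-collar data near $L$. You instead show from the vacuum convention that no such plaquettes exist, which yields explicit constants such as $a_d=1/9$.
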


\begin{proof}
Let $P_{\rm nt}(Y)=\{p\in P(Y):\mathcal{E}(\lambda_p)>0\}$ be the set of non-trivial plaquette vertices.  If a plaquette vertex $p\notin P_{\rm nt}(Y)$ is active, then its plaquette label is trivial.  By the definition of the vacuum local-channel datum, such a plaquette can be active only through non-vacuum Wilson-collar data.  Hence it lies in a bounded neighbourhood, in the dual incidence graph, of the fixed loop family $L$, and the number of such plaquette vertices is bounded by $C_L$.

Similarly, every active edge vertex is incident either to a plaquette in $P_{\rm nt}(Y)$, or to a marked Wilson strand.  Since every lattice edge is incident to at most $2(d-1)$ plaquettes, we get
\[
|Y|\le C_d |P_{\rm nt}(Y)|+C_L .
\]
For every $p\in P_{\rm nt}(Y)$, one has $\mathcal E(\lambda_p)\ge1$.  Therefore
\[
\mathcal E(\lambda_Y)
\ge |P_{\rm nt}(Y)|
\ge a_d|Y|-C_L,
\]
after changing the constants.
\end{proof}

Let $\lambda_Y:P(Y)\to\widehat{\U(N)}$ be a plaquette decoration with stable coordinates.  Write
\[
\lambda_p=q_p\mathbf 1_N+\alpha_p,\qquad\alpha_p=[\lambda_p^+,\lambda_p^-]_N,\qquad\alpha_Y=(\alpha_p)_{p\in P(Y)}.
\]
The stable pinned block observable is
\begin{equation}\label{eq:hk-block-observable}
\mathcal O_{Y,N}^{\mathrm{st},L}(\lambda_Y)
:=
\mathbf 1_{\{\partial^*c(\lambda_Y)+j_L=0\}}
\mathcal O_{Y,N}^{\mathrm{red},L}(\alpha_Y),
\end{equation}
where $c(\lambda_Y)=(c(\lambda_p))_{p\in P(Y)}$.  We also set
\[
\mathcal O_{Y,N}^{\mathrm{abs},\mathrm{st},L}(\lambda_Y)
:=
\mathbf 1_{\{\partial^*c(\lambda_Y)+j_L=0\}}
\left|\mathcal O_{Y,N}^{\mathrm{red},L}(\alpha_Y)\right|.
\]
The indicator is the central charge selection rule of Lemma~\ref{lem:central-selection}.  On the small-mass range where this rule splits, the determinant characters cancel from the topological coefficient and the determinant shifts are sampled only through the reference law $\widehat\Pbb_{T,N}$.

The following tilted-tail estimate is the device which lets us pass from stable block estimates to full $\U(N)$ label expectations after paying an exponential non-spectral block cost.  The parameter $A$ stands for any exponential-in-mass factor left by the projection-supported local estimates.

\begin{proposition}\label{prop:nonstable-polymer-tail}
Fix $A\ge1$ and assume $T>4\log A$.  For every connected pinned support $Y$ and every plaquette label field $\alpha=(\alpha_p)_{p\in P(Y)}$, let $B_N^L(Y;\alpha)$ be a non-negative quantity satisfying
\begin{equation}
\label{eq:nonspectral-growth-assumption}
B_N^L(Y;\alpha)
\le
C_LC^{|Y|}A^{M(\alpha)},
\qquad
M(\alpha)=\sum_{p\in P(Y)}m(\alpha_p),
\end{equation}
with constants independent of $N$ and of the finite volume.  Then, for some $C_T<\infty$ and $\gamma(A,T)>0$,
\begin{equation}\label{eq:nonstable-polymer-tail-under-P}
\widehat\E_{Y,T,N}\left[
B_N^L(Y;\alpha)
\mathbf 1_{\{M(\alpha)>N/2\}}
\right]
\le
C_L C_T^{|Y|}e^{-\gamma(A,T)N},
\end{equation}
whenever $T>4\log A$, uniformly in $\Lambda$ and $N$.
\end{proposition}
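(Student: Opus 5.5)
The plan is to reduce to Lemma~\ref{lem:product-tilted-mass-tail}, after first paying for the difference between the untilted reference law $\widehat\Pbb_{Y,T,N}$ and the tilted product law $\widehat\Pbb_{Y,A,T,N}$. Here $\widehat\Pbb_{Y,T,N}$ means the product law over $S=P(Y)$. Using the growth assumption~\eqref{eq:nonspectral-growth-assumption}, we first bound
\[
\widehat\E_{Y,T,N}\bigl[B_N^L(Y;\alpha)\mathbf 1_{\{M(\alpha)>N/2\}}\bigr]
\le C_LC^{|Y|}\,\widehat\E_{Y,T,N}\bigl[A^{M(\alpha)}\mathbf 1_{\{M(\alpha)>N/2\}}\bigr].
\]
By the product structure~\eqref{eq:product-ref-laws}, the change of measure from $\widehat\Pbb_{Y,T,N}$ to $\widehat\Pbb_{Y,A,T,N}$ is exact:
\[
\widehat\E_{Y,T,N}\bigl[A^{M(\alpha)}F(\alpha)\bigr]
=\left(\frac{\mathcal Z_{A,T,N}}{\mathcal Z_{T,N}}\right)^{|P(Y)|}\widehat\E_{Y,A,T,N}[F(\alpha)],
\]
for every non-negative $F$, where the labels range over all of $\widehat{\U(N)}$ and $m$ is the projective mass in stable coordinates.

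Next we control the ratio of partition functions uniformly in $N$. The denominator satisfies $\mathcal Z_{T,N}\ge1$, coming from the trivial label. For the numerator, the computation in the proof of Lemma~\ref{lem:product-tilted-mass-tail} (with $\theta=0$) gives
\[
\mathcal Z_{A,T,N}\le\Theta_T\Bigl(\sum_{a\ge0}p(a)(Ae^{-T/4})^a\Bigr)^2=:C_0(A,T)<\infty,
\]
which is finite because $T>4\log A$. The ingredients there are the Casimir domination~\eqref{eq:LM-casimir-domination} and the uniform Gaussian sum $\Theta_T$ over the determinant shift. Hence the prefactor is at most $C_0(A,T)^{|P(Y)|}\le C_0(A,T)^{|Y|}$.

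Finally we apply Lemma~\ref{lem:product-tilted-mass-tail} with $S=P(Y)$, obtaining
\[
\widehat\Pbb_{Y,A,T,N}\bigl(M(\alpha)>N/2\bigr)\le C(A,T)^{|P(Y)|}e^{-\gamma(A,T)N}.
\]
Collecting the factors gives~\eqref{eq:nonstable-polymer-tail-under-P} with $C_T:=C\,C_0(A,T)\,C(A,T)$. Since every constant depends only on $A$, $T$ and the constant $C$ of~\eqref{eq:nonspectral-growth-assumption}, the bound is uniform in $\Lambda$ and $N$. The only point needing care is that~\eqref{eq:nonspectral-growth-assumption} is stated for arbitrary label fields $\alpha$, not only stable ones. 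The argument therefore requires $B_N^L$ to be defined, with the same growth bound, on all of $\widehat{\U(N)}^{P(Y)}$, with mass read through the bijection $\lambda_N$; the length cutoffs $A_N,B_N$ only shrink the sums. I do not expect a genuine obstacle: the key step is the uniform-in-$N$ bound on $\mathcal Z_{A,T,N}$, and that is already contained in the proofs of Lemmas~\ref{lem:nonstable-weighted-tail} and~\ref{lem:product-tilted-mass-tail}.
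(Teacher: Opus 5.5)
Your proposal is correct and follows essentially the same argument as the paper: bound $B_N^L$ by $C_LC^{|Y|}A^{M(\alpha)}$, perform the exact product change of measure to $\widehat\Pbb_{Y,A,T,N}$ at the cost $(\mathcal Z_{A,T,N}/\mathcal Z_{T,N})^{|P(Y)|}$, and conclude with Lemma~\ref{lem:product-tilted-mass-tail}. The differences are cosmetic. The paper changes measure on $B_N^L A^{-M}$ and applies the growth bound afterwards, whereas you apply the growth bound first. You also make explicit the uniform bound on the partition-function ratio (via $\mathcal Z_{T,N}\ge1$ and $\mathcal Z_{A,T,N}\le C_0(A,T)$), which the paper only asserts.
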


\begin{proof}
By~\eqref{eq:nonspectral-growth-assumption}, we have $B_N^L(Y;\alpha)A^{-M(\alpha)}\le C_LC^{|Y|}.$ Therefore
\[
\begin{split}
\widehat\E_{Y,A,T,N}\left[B_N^L(Y;\alpha)A^{-M(\alpha)}\mathbf 1_{\{M(\alpha)>N/2\}}\right] &\le C_LC^{|Y|} \widehat\Pbb_{Y,A,T,N}\left(M(\alpha)>N/2\right)  \\
&\le C_LC^{|Y|}C(A,T)^{|P(Y)|}e^{-\gamma(A,T)N}.
\end{split}
\]
The last inequality is Lemma~\ref{lem:product-tilted-mass-tail}, applied with $S=P(Y)$; the factor $C(A,T)^{|P(Y)|}$ is absorbed into $C_T^{|Y|}$. We get
\[
\widehat\E_{Y,A,T,N}\left[B_N^L(Y;\alpha)A^{-M(\alpha)}\mathbf 1_{\{M(\alpha)>N/2\}}\right]\le C_L C_T^{|Y|}e^{-\gamma(A,T)N}
\]
for some $C_T<\infty$ and $\gamma(A,T)>0$, uniformly in $\Lambda$ and $N$. Finally,
\begin{align*}
&\widehat\E_{Y,T,N}\left[
B_N^L(Y;\alpha)\mathbf 1_{\{M(\alpha)>N/2\}}
\right]\\
&\qquad=
\left(\frac{\mathcal Z_{A,T,N}}{\mathcal Z_{T,N}}\right)^{|P(Y)|}
\widehat\E_{Y,A,T,N}\left[
B_N^L(Y;\alpha)A^{-M(\alpha)}
\mathbf 1_{\{M(\alpha)>N/2\}}
\right],
\end{align*}
and the ratio of normalizing constants is bounded by a $T,A$-dependent constant.  This gives~\eqref{eq:nonstable-polymer-tail-under-P}.
\end{proof}

\begin{theorem}
\label{thm:stable-hk-block}
There exist constants $C,c>0$, depending only on $d$, with the following property.  For every fixed loop family $L$, there is $C_L<\infty$ such that, for every connected pinned support $Y$,
\begin{equation}
\label{eq:stable-hk-block}
\widehat\E_{Y,T,N}\left[
\mathcal O_{Y,N}^{\mathrm{abs},\mathrm{st},L}(\lambda_Y)
\right]
\le
C_L\exp\{-\mu(T)|Y|\},
\qquad
\mu(T)=cT-\log C .
\end{equation}

\end{theorem}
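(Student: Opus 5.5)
The plan is to split the expectation according to the total projective mass $M=M(\alpha_Y)$ into a stable small-mass part $M\le\varepsilon_dN$, an intermediate stable part $\varepsilon_dN<M\le N/2$, and a non-stable part $M>N/2$. Each part will be bounded by $C_L e^{-\mu(T)|Y|}$.

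\emph{Small-mass part.} On $M\le\varepsilon_dN$ with $N\ge N_0(L,d)$, Lemma~\ref{lem:central-cutoff} splits the indicator $\mathbf 1_{\{\partial^*c+j_L=0\}}$ into $\partial^*q=0$ and $\partial^*s+j_L=0$. I will drop the second constraint, since dropping it only enlarges an expectation of a nonnegative quantity. The integrand then factorizes into a function of the reduced labels $\alpha_Y$, namely $|\mathcal O^{\mathrm{red},L}_{Y,N}(\alpha_Y)|$, and a function of the shifts $q_Y$. The spectral weight is handled by \eqref{eq:spectral-decay}:
\[
e^{-\frac{T}{2N}C_2(\lambda_p)}\le e^{-\frac T4(m(\alpha_p)+q_p^2)}e^{Tm(\alpha_p)^2/(2N^2)}.
\]
For $M\le\varepsilon_dN$ the correction term is at most $e^{T\varepsilon_d M/(2N)\cdot}\le e^{\frac T8 M}$ once $\varepsilon_d\le1/4$, so the remaining decay is at least $e^{-\frac T8 M-\frac T4\sum_p q_p^2}$.

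\emph{Summing over the reduced labels.} Grouping by $M$ and applying Theorem~\ref{thm:pinned-block} with $\delta\ge0$ gives
\[
\sum_{M}C_LC_d^{|Y|+M}e^{-TM/8}.
\]
\emph{Summing over the shifts.} The sum over $q_Y$ is $\prod_p\sum_{q}e^{-Tq^2/4}\le(1+Ce^{-T/4})^{|P(Y)|}$.

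\emph{Normalization.} Dividing by $\mathcal Z_{T,N}^{|P(Y)|}\ge1$ is harmless.

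\emph{Extracting decay in $|Y|$.} The bound so far contains only $C_d^{|Y|}$, not decay in $|Y|$, so the decay must come from Lemma~\ref{lem:minimal-support-size}: $\mathcal E(\lambda_Y)\ge a_d|Y|-C_L$. I will therefore spend half of the spectral decay on this. Writing
\[
e^{-\frac T8\mathcal E}\le e^{-\frac T{16}\mathcal E}\,e^{-\frac{T}{16}(a_d|Y|-C_L)},
\]
and keeping $e^{-\frac T{16}\mathcal E}$ for summability of $M$ and $q$, the total is
\[
C_LC_d^{|Y|}e^{-\frac{T a_d}{16}|Y|}e^{TC_L/16}\sum_M(C_de^{-T/16})^M.
\]
This is summable for $T$ large. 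The factor $e^{TC_L/16}$ is awkward because it depends on $T$. I will absorb it by noting that the $C_L$ in Lemma~\ref{lem:minimal-support-size} only counts plaquettes near $L$ that are active through collar data, and by restricting the decay argument to $|P_{\rm nt}|$ rather than $|Y|$. Failing that, I will accept a $T$-dependent $C_L$, which the statement permits since it only requires $C_L$ to depend on $L$ (and implicitly on $T$). This yields $\mu(T)=cT-\log C$.

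\emph{Intermediate and non-stable parts.} For $\varepsilon_dN<M\le N/2$, the second half of Lemma~\ref{lem:central-cutoff} applies, with the non-spectral input from Theorem~\ref{thm:pinned-block}. Reserving again a fraction of the decay for $a_d|Y|$ gives $C_LC^{|Y|}e^{-\mu|Y|}e^{-(cT-\log C)\varepsilon_dN}$. For $M>N/2$, Theorem~\ref{thm:pinned-block} does not apply, so I will use the crude bound $|\mathcal O^{\mathrm{red},L}_{Y,N}|\le C_LC^{|Y|}A^{M}$. I expect this to come from Lemma~\ref{lem:local-operator-complexity}, Lemma~\ref{lem:support-resummation} and Lemma~\ref{lem:full-pieri-path} with projections of norm one, applied through the tilted measure of Proposition~\ref{prop:nonstable-polymer-tail} after first extracting $e^{-\frac T8\mathcal E}$ for the $|Y|$-decay. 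The Lemma~\ref{lem:product-tilted-mass-tail} tail is exponentially small in $N$.

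\emph{Main obstacle.} The hardest step is the large-mass regime. Getting a polynomial-free bound $A^M$ there, with no positive power of $N$ from dimension factors $d_{\alpha_p}$, requires the full-Pieri reciprocal-dimension argument, and I would check carefully that dimension factors are cancelled by Haar projections exactly as in the stable peeling.
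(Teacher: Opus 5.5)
Your proposal is correct and follows essentially the same route as the paper's proof. Both use the split into $M\le\varepsilon_dN$, $\varepsilon_dN<M\le N/2$ and $M>N/2$. On the small-mass range you combine heat-kernel domination, Theorem~\ref{thm:pinned-block} and Lemma~\ref{lem:minimal-support-size}, as the paper does; the paper also silently lets $C_L$ absorb the $T$-dependent factor you worry about. On the intermediate range you use Lemma~\ref{lem:central-cutoff}. For $M>N/2$ you feed the full-Pieri crude bound, which the paper records as Corollary~\ref{cor:coarse-nonspectral-block}, into Proposition~\ref{prop:nonstable-polymer-tail}.

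The only point to tighten is the extraction of the $|Y|$-decay in this last range. Take it from \eqref{eq:LM-casimir-domination}: each non-trivial plaquette has $C_2(\lambda_p)/N\ge\frac12$, so writing $T=\frac T2+\frac T2$ yields $e^{-T|P_{\rm nt}(Y)|/8}$ and leaves a heat-kernel weight at time $T/2$. Do not take it from \eqref{eq:spectral-decay}, because its correction factor $e^{Tm^2/(2N^2)}$ is no longer bounded once $m>N/2$.
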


\begin{proof}
We estimate the expectation in \eqref{eq:stable-hk-block} by using the determinant-reduced block observable and the product reference law $\widehat\Pbb_{Y,T,N}$. We first restrict to the stable small-mass range $M=\sum_{p\in P(Y)}m(\alpha_p)\le \varepsilon_d N,$ where $\varepsilon_d$ is as in Lemma~\ref{lem:central-cutoff}.  On this range the central selection rule splits as
\[
\partial^*q=0,
\qquad
\partial^*s+j_L=0.
\]
Thus the determinant characters cancel from the local-channel coefficient; the determinant variables remain only in the reference law $\widehat\Pbb_{T,N}$.

We now estimate the contribution at fixed projective mass $M$, after the local-channel variables have been resummed and the reduced labels are summed.  Using the heat-kernel domination bound~\eqref{eq:spectral-decay} plaquettewise, and absorbing the factor $\exp\{\frac{T}{2}\sum_p m(\alpha_p)^2/N^2\}$ into the constants on the range $M\le\varepsilon_dN$, there is a constant $c>0$ such that, after changing constants,
\[
\prod_{p\in P(Y)}
\exp\left\{-\frac{T}{2N}C_2(q_p\mathbf 1_N+\alpha_p)\right\}
\le C_L\exp\{-cT(M+Q)\}.
\]
Here and below,
\[
Q:=\sum_{p\in P(Y)}q_p^2.
\]
The possible factor coming from the cross term $2q_ps_p/N$ is absorbed by decreasing $c$, using $M\le\varepsilon_dN$ and the elementary inequality
\[
\frac{2|q_p|m_p}{N}
\le \frac{q_p^2}{2}+\frac{2m_p^2}{N^2}.
\]

For fixed $M$ and $q_Y$, Theorem~\ref{thm:pinned-block} already includes the sum over all reduced label fields of total mass $M$, with all local channel variables resummed at each fixed label field:
\[
\left\|\mathcal O_{Y,N}^{\mathrm{red},L}\right\|_{M,1}
\le C_LC_d^{|Y|+M}N^{-\delta(Y,L)} .
\]
Discarding the harmless factor $N^{-\delta(Y,L)}\le1$, the absolute contribution of all decorations with reduced mass $M$ and fixed determinant shifts $q_Y$ is therefore bounded by
\[
C_L C^{|Y|+M}\exp\{-cT(M+Q)\} .
\]
For every plaquette decoration, Lemma~\ref{lem:minimal-support-size} gives
\[
\mathcal E(\lambda_Y)=M+Q\ge a_d|Y|-C_L.
\]
Hence, after changing constants,
\[
C_L C^{|Y|+M}e^{-cT(M+Q)}
\le
C_L e^{-cTa_d|Y|/2}
C^{|Y|+M}e^{-cT(M+Q)/2}.
\]
The remaining sum over $M$ and over determinant shifts $q_Y\in\Z^{P(Y)}$ is bounded by $C^{|Y|}$ for $T$ large enough: the reduced labels have already been summed in $\left\|\mathcal O_{Y,N}^{\mathrm{red},L}\right\|_{M,1}$, the residual factor $C^M e^{-cTM/2}$ is summable, and the determinant shifts are controlled by the product Gaussian sum $\sum_q e^{-cTq^2/2}<\infty$.  Thus
\[
\widehat\E_{Y,T,N}\left[
\mathcal O_{Y,N}^{\mathrm{abs},\mathrm{st},L}(\lambda_Y)
\mathbf 1_{\{M(\alpha_Y)\le\varepsilon_dN\}}
\right]
\le
C_L\exp\{-(c'T-\log C')|Y|\}.
\]

It remains to treat the complement of the stable small-mass range.  We split it into
\[
\varepsilon_dN<M\le N/2
\qquad\text{and}\qquad
M>N/2.
\]
In the first range, Theorem~\ref{thm:pinned-block} gives the same non-spectral majorant $C_LC^{|Y|+M}$, and Lemma~\ref{lem:central-cutoff} gives an exponentially small cutoff in $N$ after insertion of the reduced heat-kernel factor.  The determinant variables are again controlled by Gaussian tails, and Lemma~\ref{lem:minimal-support-size} extracts the same support decay from $\mathcal E(\lambda_Y)$.  In the second range, Corollary~\ref{cor:coarse-nonspectral-block} gives the pointwise non-spectral bound
$C_LC^{|Y|}A^{M}$ for a fixed constant $A=A(d)$; the tilted tail estimate of Proposition~\ref{prop:nonstable-polymer-tail} therefore applies directly.  For $T$ large enough both complementary contributions are bounded by
\[
C_L\exp\{-(c''T-\log C'')|Y|\}
\]
after changing the constants, and are in fact exponentially small in $N$.  Combining this with the small-mass estimate proves~\eqref{eq:stable-hk-block}.
\end{proof}

The same proof gives the following coefficientwise refinement: for fixed $Y$, $\widehat\E_{Y,T,N}\left[\mathcal O_{Y,N}^{\mathrm{st},L}(\lambda_Y)\right]$ admits a controlled asymptotic expansion to all orders in powers of $1/N$: for every $R\ge0$ there are coefficients $\mathcal O_Y^{[r]}(L,T)$ such that
\begin{equation}\label{eq:stable-hk-block-expansion}
\widehat\E_{Y,T,N}\left[\mathcal O_{Y,N}^{\mathrm{st},L}(\lambda_Y)\right]=\sum_{r=0}^{R}N^{-r}\mathcal O_Y^{[r]}(L,T)+O_{R,Y,L,T}(N^{-R-1}).
\end{equation}
Indeed, let us fix an order $R\ge0$.  On every fixed stable projective-mass sector and fixed determinant-energy sector, the signed summand is a finite sum of stable Schur--Weyl matrix coefficients multiplied by the reference-law density.  By the controlled-expansion convention and the stable Schur--Weyl discussion in Subsection~\ref{subsec:channel-recoupling}, after subtracting the terms up to order $R$ the sectorwise remainder is bounded by $N^{-R-1}$ times a constant depending on the sector.  The estimates above give a summable majorant for these constants: the projective mass is controlled by the factor $C^M e^{-cTM/2}$, the determinant shifts by the product Gaussian factor, and the support dependence is already absorbed into $C_L e^{-(cT-\log C)|Y|}$.  Hence the coefficient sums obtained by summing the sectorwise coefficients are absolutely convergent, and the total remainder is $O_{R,Y,L,T}(N^{-R-1})$.  The complementary ranges $\varepsilon_dN<M\le N/2$ and $M>N/2$ are exponentially small in $N$, hence are also $O_{R,Y,L,T}(N^{-R-1})$.

In the rest of this section, we prove Theorem~\ref{thm:pinned-block}.

\subsection{The peeling process}\label{subsec:peeling}

The first step in the proof of Theorem~\ref{thm:pinned-block} will rely on a peeling process.  Fix a connected active support $Y$ and a connected active subgraph $K\subset Y$ which contains all marked edge vertices of $Y$.
The connected components of $Y\setminus K$ are the tree appendices which have to be removed, while the operator network supported on $K$ must be left untouched.  The point is to remove those appendices one plaquette at a time, and to cancel the corresponding plaquette dimension factor before any tensor-coordinate absolute value is taken.

The active incidence graph is bipartite, with plaquette vertices and edge vertices; a terminal leaf of a tree appendix can therefore be of two types.  The following terminology will be used throughout the subsection.

\begin{definition}
Let $G$ be an active subgraph of the dual incidence graph. An active edge vertex $e$ of $G$ is an \emph{edge leaf} if it is incident to exactly one active plaquette incidence in $G$.  It is an \emph{unmarked edge leaf} if, in addition, it carries no marked Wilson strand. A plaquette vertex $p$ of $G$ is a \emph{plaquette leaf} if it is incident to exactly one active edge vertex in $G$.  It is a \emph{vacuum plaquette leaf} if its unique adjacent active edge vertex carries no marked Wilson strand.
\end{definition}

If $K\subset Y$ is a connected active subgraph containing all marked edge vertices, a plaquette-leaf peeling order for $Y\setminus K$ directed towards $K$ is an ordering $p_1,\ldots,p_s$ of the plaquette vertices of $Y\setminus K$ such that, after removing $p_1,\ldots,p_{j-1}$ and all unmarked edge vertices which have become isolated, the plaquette $p_j$ is a vacuum plaquette leaf of the remaining active graph.  A pinned support $Y$ is peelable if it admits such an order with $K$ equal to the active subgraph left after all active plaquette vertices have been removed.

\begin{definition}\label{def:anchored-core}
Let $Y$ be a connected pinned support with active incidence graph $\mathcal I_Y$, and let $\mathsf M(Y,L)$ be the set of active edge vertices of $\mathcal I_Y$ carrying marked Wilson occurrences.  The \emph{residual core} $\operatorname{Core}_*(Y)$ is obtained by the following pruning procedure.  Starting from $\mathcal I_Y$, remove successively every vacuum plaquette leaf, together with its unique active incidence, and then remove every unmarked edge vertex which has become isolated.  The procedure terminates because $\mathcal I_Y$ is finite.
\end{definition}

\begin{figure}[h!]
\centering
\includegraphics[width=0.4\linewidth]{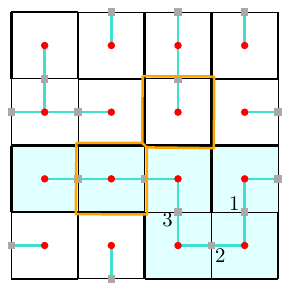}
\caption{A dual incidence graph illustrating the peeling process.  The connected active support $Y_{\mathrm{low}}$ (in light blue) will be peeled in the displayed order: $p_{34}\to p_{44}\to p_{43}$. The remaining residual core is
$\operatorname{Core}_*(Y_{\mathrm{low}})=\{p_{31},p_{32},p_{33}\}$.}
\label{fig:peeling}
\end{figure}

\begin{example}\label{ex:peeling}
Consider the configuration of Figure~\ref{fig:peeling}, and assume that the plaquette labels are indexed by rows and columns, starting from $p_{11}$ and ending at $p_{44}$. The orange loop $\ell$ produces two connected active supports: $Y_{\mathrm{high}}=\{p_{13},p_{23}\}$ and $Y_{\mathrm{low}}=\{p_{31},p_{32},p_{33},p_{43},p_{44},p_{34}\}.$ We focus on $Y_{\mathrm{low}}$.

Let $\mathfrak e_L=(p_{31},p_{32}),$ $\mathfrak e_M=(p_{32},p_{33})$ be the two grey edge vertices crossed by the lower orange Wilson insertion.  Thus the marked edge vertices of $Y_{\mathrm{low}}$ are precisely $\mathfrak e_L$ and $\mathfrak e_M$.  The remaining edge vertices in this support are vacuum vertices; we denote them by
\[
\mathfrak f_1=(p_{33},p_{43}),
\qquad
\mathfrak f_2=(p_{43},p_{44}),
\qquad
\mathfrak f_3=(p_{44},p_{34}).
\]
With this notation the active incidence graph of $Y_{\mathrm{low}}$ is the chain
\[
p_{31}-\mathfrak e_L-p_{32}-\mathfrak e_M-p_{33}-\mathfrak f_1-p_{43}-\mathfrak f_2-p_{44}-\mathfrak f_3-p_{34}.
\]
The vacuum appendix is peeled from the right.  First $p_{34}$ is a vacuum plaquette leaf, attached through the unmarked edge vertex $\mathfrak f_3$.  After removing $p_{34}$ and the now isolated edge vertex $\mathfrak f_3$, the plaquette $p_{44}$ becomes a vacuum plaquette leaf attached through $\mathfrak f_2$.  Removing $p_{44}$ and $\mathfrak f_2$ makes $p_{43}$ a vacuum plaquette leaf attached through $\mathfrak f_1$.  Hence the peeling order is
\[
p_{34}\quad\longrightarrow\quad p_{44}
\quad\longrightarrow\quad p_{43}.
\]
The pruning then stops at $\operatorname{Core}_*(Y_{\mathrm{low}})=\{p_{31},p_{32},p_{33}\},$ together with the marked edge vertices $\mathfrak e_L$ and $\mathfrak e_M$. Indeed $p_{31}$ and $p_{33}$ may be leaves in the remaining graph, but they are not vacuum leaves: they are attached to the Wilson insertion through marked edge vertices.
\end{example}

For a fixed scalar local channel $\Gamma$, the block supported on $Y$ is the scalar product~\eqref{eq:scalarization-of-block}.  In the proof we use only the projection-supported realization~\eqref{eq:operator-block-contraction}: all plaquette coefficients, dimension factors and Wilson-collar normalizations are absorbed into fixed finite-dimensional channel-coordinate maps, while each active edge integration remains the orthogonal projection $\Pi_{e,N}$.  This is precisely the structure which cancels plaquette dimensions before any coordinatewise absolute value is taken.

Let $K\subset Y$ be a connected active subgraph containing all marked edge vertices and let $\mathfrak p=(p_1,\ldots,p_s)$ be a plaquette-leaf peeling order for $Y\setminus K$ directed towards $K$.  Put $G_0:=Y$. For $1\le j\le s$, define
\[
I_j:=\left\{e\in V_E(G_{j-1}):
 e\text{ is unmarked and isolated in }G_{j-1}\setminus\{p_j\}\right\},
\]
and
\begin{equation}\label{eq:peeling-graphs}
G_j:=G_{j-1}\setminus\bigl(\{p_j\}\cup I_j\bigr).
\end{equation}
Thus $G_s=K$.  Let $e_j$ be the unique active edge vertex adjacent to $p_j$ in $G_{j-1}$.  Although the incidence $(p_j,e_j)$ is made of elementary fundamental/dual wires, the leaf plaquette tensor contains the Schur--Weyl projector selecting the irreducible label $\alpha_{p_j}$.  Fusing this whole leaf bundle gives the irreducible channel
\[
\lambda_j=
\begin{cases}
\alpha_{p_j}, & \text{if the boundary orientation gives the channel }V_{\alpha_{p_j}},\\
\alpha_{p_j}^*, & \text{if the boundary orientation gives the dual channel }V_{\alpha_{p_j}}^* .
\end{cases}
\]
Let $H_{j,N}$ be the tensor product of the fundamental and dual tensor factors incident to $e_j$ which remain in $G_{j-1}\setminus\{p_j\}$.  Set
\begin{equation}\label{eq:peeling-projection-definition}
\Pi_{\lambda,H}^{(N)}
:=
\int_{\U(N)}\rho_\lambda(g)\otimes\pi_H(g) dg,
\qquad
\mathsf P_{\lambda,H}^{(N)}
:=
d_\lambda\Tr_{V_\lambda}\bigl(\Pi_{\lambda,H}^{(N)}\bigr).
\end{equation}
The part of the local-channel tensor network involving the leaf plaquette $p_j$, the dimension factor $d_{\alpha_{p_j}}$, and the Haar projection at $e_j$, viewed as an operator on the remaining edge space $H_{j,N}$, has the factorization
\begin{equation}\label{eq:leaf-factorization}
\mathsf L_{j,N}
=
B_{j,N} \mathsf P_{\lambda_j,H_{j,N}}^{(N)} A_{j,N},
\end{equation}
where $A_{j,N}$ and $B_{j,N}$ are the local incidence and fixed-degree channel-coordinate maps attached to $p_j$ and $(p_j,e_j)$.  Lemma~\ref{lem:local-operator-complexity} gives
\begin{equation}\label{eq:leaf-AB-bound}
\|A_{j,N}\|_{2\to2} \|B_{j,N}\|_{2\to2}
\le C_d^{1+m(\alpha_{p_j})}.
\end{equation}

The residual operator network associated with $(K,\mathfrak p)$ is obtained by replacing every compressed leaf operator $\mathsf L_{j,N}$ by $\mathsf P_{\lambda_j,H_{j,N}}^{(N)}$.  The remaining plaquette tensors, edge projections, Wilson-collar tensor and fixed half-strand pairings define a residual edge space
\[
\mathcal H_{K,\mathfrak p,N}
=
\left(\bigotimes_{e\in V_E(K)}
\mathcal H_e(\Gamma_e;\bar\alpha_Y,L)\right)
\otimes
\left(\bigotimes_{j=1}^s H_{j,N}\right),
\]
assembly maps
\[
I^{\rm in}_{K,\mathfrak p,N},
I^{\rm out}_{K,\mathfrak p,N}:
\mathcal M_{K,\mathfrak p}(\Gamma)
\longrightarrow \mathcal H_{K,\mathfrak p,N},
\]
where $\mathcal M_{K,\mathfrak p}(\Gamma)$ is the finite Hilbert space spanned by the residual scalar-channel multiplicity coordinates left after the chosen peeling order.  Its orthonormal basis is the product of the local channel-coordinate bases, the Haar-invariant bases at the residual edge vertices, and the finite half-strand pairing coordinates created by the peeled appendices.  The maps $I^{\rm in}_{K,\mathfrak p,N}$ and $I^{\rm out}_{K,\mathfrak p,N}$ are the corresponding assembly maps from these multiplicity coordinates into the residual edge tensor space.  The residual projection-type operator is
\[
\mathcal P^{\rm res}_{K,\mathfrak p,N}
:=
\left(\bigotimes_{e\in V_E(K)}
\Pi^\Gamma_{e,N}(\bar\alpha_Y,L)\right)
\otimes
\left(\bigotimes_{j=1}^s
\mathsf P_{\lambda_j,H_{j,N}}^{(N)}\right).
\]
The residual scalar is
\begin{equation}\label{eq:residual-scalar-definition}
R_N^L(K;\alpha_Y,\Gamma;\mathfrak p)
:=
\Tr_{\mathcal M_{K,\mathfrak p}(\Gamma)}\!\left(
(I^{\rm out}_{K,\mathfrak p,N})^*
\mathcal P^{\rm res}_{K,\mathfrak p,N}
I^{\rm in}_{K,\mathfrak p,N}
\right).
\end{equation}
We shall also use a core-resolved absolute residual mass.  Set
\[
\mathsf P^{\rm peel}_{\mathfrak p,N}:=\bigotimes_{j=1}^s\mathsf P^{(N)}_{\lambda_j,H_{j,N}} .
\]
By Lemma~\ref{lem:partial-trace-haar-projection}, $\|\mathsf P^{\rm peel}_{\mathfrak p,N}\|_{2\to2}\le 1.$ The absolute mass is defined by resolving only the residual core.  More precisely, fix the deterministic ordering of plaquettes, elementary $V_N/V_N^*$ strands and incidences used below in Lemma~\ref{lem:core-path-encoding}, use the corresponding left-associated orthonormal Clebsch--Gordan resolution at every Haar-projection vertex of $K$, and open every intermediate channel in that canonical resolution.  The remaining fixed local coordinate maps supported on $K$ are kept as bounded operators.  The peeled operator $\mathsf P^{\rm peel}_{\mathfrak p,N}$ is not resolved; it is kept as a norm-one black box.  With this convention the signed residual scalar admits a finite core-resolved expansion
\[
R_N^L(K;\alpha_Y,\Gamma;\mathfrak p)
=
\sum_{\omega\in\Omega_{\rm core}(K,\alpha_Y,\Gamma;\mathfrak p)}
B_N(\omega;\mathsf P^{\rm peel}_{\mathfrak p,N}),
\]
where each core-resolved summand has the form
\[
B_N(\omega;\mathsf P^{\rm peel}_{\mathfrak p,N})
=
c_N(\omega)
\left(\prod_{c\in\mathcal C(\omega)}d_{\eta_c}^{-1}\right)
\left\langle
\xi_N(\omega),
\bigl(U_N(\omega)\otimes\mathsf P^{\rm peel}_{\mathfrak p,N}\bigr)
\zeta_N(\omega)
\right\rangle .
\]
Here $\omega$ records the internal labels visited by the canonical sequential resolution and the corresponding multiplicity history; there is no independent sum over bracketings or fusion schedules.  The vectors $\xi_N(\omega),\zeta_N(\omega)$ are unit vectors after absorbing fixed scalar coordinate normalizations into $c_N(\omega)$, and $U_N(\omega)$ is the remaining bounded core operator.  We define
\begin{equation}\label{eq:absolute-residual-mass}
\mathcal R_N^L(K;\alpha_Y,\Gamma;\mathfrak p)
:=
\sum_{\omega\in\Omega_{\rm core}(K,\alpha_Y,\Gamma;\mathfrak p)}
|c_N(\omega)|
\left(\prod_{c\in\mathcal C(\omega)}d_{\eta_c}^{-1}\right)
\|U_N(\omega)\|_{2\to2} .
\end{equation}
Since $\|\mathsf P^{\rm peel}_{\mathfrak p,N}\|_{2\to2}\le1$, this gives
\[
|R_N^L(K;\alpha_Y,\Gamma;\mathfrak p)|
\le
\mathcal R_N^L(K;\alpha_Y,\Gamma;\mathfrak p).
\]
The definition is independent of cancellations between core-resolved summands.  It is this core-resolved absolute residual mass, rather than the signed residual scalar, which is used in absolute-value estimates.  In particular, the peeled projections never contribute recoupling multiplicities to $\mathcal R_N^L$; their only role is through their norm-one bound.

The original block observable is obtained from the same residual assembly maps before the replacements, by using
\[
\mathcal L_{K,\mathfrak p,N}
:=
\left(\bigotimes_{e\in V_E(K)}
\Pi^\Gamma_{e,N}(\bar\alpha_Y,L)\right)
\otimes
\left(\bigotimes_{j=1}^s
\mathsf L_{j,N}\right)
\]
in place of $\mathcal P^{\rm res}_{K,\mathfrak p,N}$:
\begin{equation}\label{eq:block-observable-leaf-factorization}
\mathcal O_{Y,N}^{\mathrm{red},L}(\alpha_Y,\Gamma)
=
\Tr_{\mathcal M_{K,\mathfrak p}(\Gamma)}\!\left(
(I^{\rm out}_{K,\mathfrak p,N})^*
\mathcal L_{K,\mathfrak p,N}
I^{\rm in}_{K,\mathfrak p,N}
\right).
\end{equation}
Here $\mathsf L_{j,N}$ is placed on the same residual factor as $\mathsf P_{\lambda_j,H_{j,N}}^{(N)}$ and is factorized as in~\eqref{eq:leaf-factorization}.

Thus a tree appendix is peeled as follows.  If a terminal leaf is an unmarked edge leaf, Lemma~\ref{lem:no-dangling-edge} says that the block observable $\mathcal O_{Y,N}^{\mathrm{red},L}(\alpha_Y,\Gamma)$ is zero.  Hence, for non-zero block observables, one may start from a vacuum plaquette leaf.  At such a leaf, Lemma~\ref{lem:partial-trace-haar-projection} proves that $\mathsf P_{\lambda_j,H_{j,N}}^{(N)}$ is a norm-one projection, and Proposition~\ref{prop:relative-peeling} compares \eqref{eq:block-observable-leaf-factorization} with the absolute residual mass \eqref{eq:absolute-residual-mass}. 

\begin{lemma}
\label{lem:no-dangling-edge}
Let $Y$ be a connected active support, let $\alpha_Y$ be a plaquette label field on $Y$, and let $\Gamma\in\mathcal C_Y^L(\alpha_Y)$.  If an active edge vertex $e\in V_E(Y)$ carries no marked Wilson strand and is incident to exactly one active plaquette incidence in $Y$, then $\mathcal O_{Y,N}^{\mathrm{red},L}(\alpha_Y,\Gamma)=0.$
\end{lemma}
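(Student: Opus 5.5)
The plan is to exploit the product structure of the edge Haar projection at $e$. Since $e$ carries no marked Wilson strand, the Wilson space $\mathcal W_e(L)$ is trivial, so $\mathcal W_e(L)=\C$ and $\pi^W_{e,N}$ is the trivial representation. Since $e$ is incident to exactly one active plaquette incidence, say $(p,e)$, every other incidence $(p',e)$ carries vacuum data. Its incidence space is $\C$ with the identity coordinate operator. The edge space therefore reduces to $\mathcal H_e(\Gamma_e;\bar\alpha_Y,L)=\mathcal H_{p,e}(r_p;\alpha_p)$, with representation $\pi^N_{p,e,r_p}$. The edge kernel is
\[
K_e^N(\Gamma_e;\bar\alpha_Y,L)=\Tr_{\mathcal H_{p,e}}\bigl(C^N_{p,e}(r_p;\alpha_p)\,\Pi_{p,e}\bigr),\qquad \Pi_{p,e}=\int_{\U(N)}\pi^N_{p,e,r_p}(g)\,dg .
\]
Because the block observable~\eqref{eq:scalarization-of-block} factorizes over edge vertices, as a product of edge kernels times plaquette scalars, it suffices to show that this single kernel vanishes.

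First I use activity of the incidence. By Definition~\ref{def:active-support}, activity of the edge vertex means that $\Pi^\Gamma_{e,N}$ is not the vacuum scalar. So $\mathcal H_{p,e}$ is a genuine mixed tensor space $V_N^{\otimes a}\otimes(V_N^*)^{\otimes b}$ with $a+b\ge1$. If $a\neq b$, the central $\U(1)$ acts by $z^{a-b}$, so $\Pi_{p,e}=0$ by the same averaging over scalars $z_e$ used in the proof of Lemma~\ref{lem:central-selection}, and we are done.

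The remaining case is $a=b\ge1$, and this is the main obstacle. Here the invariant subspace is nonzero, spanned by walled-Brauer contractions, so vanishing must come from the plaquette side. I would argue as follows. Since $e$ occurs in $\partial p$ exactly once (the boundary of a lattice plaquette traverses each edge once), all legs in $\mathcal H_{p,e}$ come from a single boundary occurrence of $e$, with a common sign $\varepsilon_{p,r}$. Therefore $\mathcal H_{p,e}$ is the full plaquette tensor $T_{\alpha_p}$, possibly with orientation reversed, and the factor of $C^N_{p,e}$ that matters is, after the grouping in~\eqref{eq:decomp-cphi}, the restriction of $P_{\alpha_p}$ to this incidence. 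Integrating $U_e$ then replaces $P_{\alpha_p}\rho(U_e)$ by $P_{\alpha_p}\Pi_{p,e}$. By Schur's lemma this composite is the projection of $V_{\alpha_p}$ onto its invariant vectors, which is zero unless $\alpha_p$ is trivial. This is the same computation as Lemma~\ref{lem:partial-trace-haar-projection} with $H=\C$: the $V_{\alpha_p^*}$-isotypic part of the trivial representation vanishes unless $\alpha_p=0$.

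If $\alpha_p$ were trivial, the incidence coordinate would be the vacuum coordinate, because vacuum data are fixed canonically for trivial labels. This contradicts activity of $(p,e)$. To handle the inserted incidence coordinates $C^N_{p,e}$ cleanly, I would undo the scalarization at $p$ before integrating $U_e$: the equivariance of $P_{\alpha_p}$ lets one move the Haar average onto the intact isotypic projector, and the resulting scalar is zero independently of $\Gamma$.

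Finally, the dimension factors, Wilson normalization and other plaquette coefficients only multiply this zero, so $\mathcal O^{\mathrm{red},L}_{Y,N}(\alpha_Y,\Gamma)=0$. The delicate point is the justification that the scalarized coordinate $C^N_{p,e}$ at a single incidence sees the full irreducible channel $\alpha_p$. I would resolve it by appealing to the projection-supported realization~\eqref{eq:operator-block-contraction}, where plaquette tensors keep the unsplit isotypic projector.
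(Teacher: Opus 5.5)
Your argument is essentially the paper's: the single active incidence $(p,e)$ carries the whole plaquette tensor, hence the non-trivial irreducible channel $V_{\alpha_p}$ or its dual, and the Haar average over a non-trivial irreducible is zero by Schur's lemma (your central-charge treatment of $a\neq b$ is correct but is a redundant special case). The one adjustment is at the point you flag as delicate: for the fixed $\Gamma$ in the statement, use, as the paper does, that the incidence coordinates come from the orthonormal channel coordinates, so that $A_p^N(\Gamma_{\partial p};\alpha_p)\neq0$ forces $C^N_{p,e}$ onto the non-trivial $\alpha_p$-isotypic block, where $\Tr(C^N_{p,e}\Pi_{p,e})=0$; undoing the scalarization at $p$ only shows that the sum over the coordinates at $(p,e)$ vanishes, which is weaker than the per-channel claim.
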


\begin{proof}
Let $(p,e)$ be the unique active incidence adjacent to $e$.  In the chosen local-channel basis, all tensor legs entering $e$ through this incidence have been fused into an irreducible channel $V_\eta$, possibly replaced by its dual according to the orientation of $e$ in $\partial p$.  Since the incidence is active, this channel is non-trivial: $\eta\neq0$.  All other plaquette incidences adjacent to $e$ are vacuum incidences, and by assumption there is no marked Wilson strand at $e$.  Hence the Haar factor at $e$ is, up to harmless trivial tensor factors,
\[
\Pi_e
=
\int_{\U(N)} \rho_\eta(g) dg
=
\Proj\bigl(V_\eta^{\U(N)}\bigr),
\]
or the same expression with $V_\eta$ replaced by $V_\eta^*$.  Since $V_\eta$ is a non-trivial irreducible representation of $\U(N)$, $V_\eta^{\U(N)}=\{0\}$.  Therefore the edge projection $\Pi_e$ is the zero operator in the projection-supported realization~\eqref{eq:operator-block-contraction}.  Equivalently, after scalarization, the scalar edge kernel $K_e^N(\Gamma_e;\alpha,L)$ is a matrix coefficient of this zero projection.  Hence $\mathcal O_{Y,N}^{\mathrm{red},L}(\alpha_Y,\Gamma)=0$.
\end{proof}

The preceding lemma is only a vanishing statement.  The quantitative cancellation which performs a peeling step is the partial-trace Haar projection identity of Lemma~\ref{lem:partial-trace-haar-projection}.  In the notation of the peeled leaf, it says precisely that $\mathsf P_{\lambda_j,H_{j,N}}^{(N)}=d_{\lambda_j}\Tr_{V_{\lambda_j}}(\Pi_{\lambda_j,H_{j,N}})$ is a norm-one orthogonal projection on the residual edge space.

The next proposition is the main quantitative output of the peeling process.  It is stated in the relative form needed later: peel all tree appendices towards a prescribed residual subgraph $K$, replace each peeled appendix by the projections $\mathsf P_{\lambda_j,H_{j,N}}^{(N)}$, keep those peeled projections as norm-one black boxes, and estimate the residual matrix coefficient by resolving only the core $K$.  In applications the hypotheses on the existence of a vacuum plaquette-leaf order are guaranteed by the pruning construction of $\operatorname{Core}_*(Y)$, after discarding the zero block observables detected by Lemma~\ref{lem:no-dangling-edge}.

\begin{proposition}[Peeling estimate]\label{prop:relative-peeling}
There exists $C_d<\infty$ such that the following holds.  Let $Y$ be a connected active support and let $K\subset Y$ be a connected active subgraph containing all marked edge vertices of $Y$.  Assume that every connected component of $Y\setminus K$ is a tree attached to $K$, and let $\mathfrak p=(p_1,\ldots,p_s)$ be a plaquette-leaf peeling order for $Y\setminus K$ directed towards $K$.  Fix a reduced stable label field $\alpha_Y$ and a local-channel datum $\Gamma$.  Then
\begin{equation}
\label{eq:relative-peeling}
\left|\mathcal O_{Y,N}^{\mathrm{red},L}(\alpha_Y,\Gamma)\right|
\le
C_d^{|Y\setminus K|+M_{Y\setminus K}}
\mathcal R_N^L(K;\alpha_Y,\Gamma;\mathfrak p),
\end{equation}
where
\[
M_{Y\setminus K}:=\sum_{p\in P(Y\setminus K)}m(\alpha_p).
\]
The estimate is uniform in the peeling order $\mathfrak p$.
\end{proposition}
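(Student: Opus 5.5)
We prove \eqref{eq:relative-peeling} by induction on the length $s$ of the peeling order $\mathfrak p$, replacing one compressed leaf operator at a time. Start from the factorized expression \eqref{eq:block-observable-leaf-factorization}. By \eqref{eq:leaf-factorization}, every $\mathsf L_{j,N}$ equals $B_{j,N}\mathsf P^{(N)}_{\lambda_j,H_{j,N}}A_{j,N}$. We absorb the local maps $A_{j,N}$ and $B_{j,N}$ into the residual assembly maps $I^{\rm in}_{K,\mathfrak p,N}$ and $I^{\rm out}_{K,\mathfrak p,N}$. This is legitimate because these maps act on the tensor factors of the peeled plaquette and of its incidence $(p_j,e_j)$. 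Those factors are disjoint from the core factors and, by the peeling order directed towards $K$, from the factors of the later leaves $p_{j'}$, $j'>j$. What remains is exactly the residual operator $\mathcal P^{\rm res}_{K,\mathfrak p,N}$, sandwiched between modified assembly maps. The modification costs a multiplicative factor $\prod_j\|A_{j,N}\|\,\|B_{j,N}\|\le C_d^{s+M_{Y\setminus K}}$, by \eqref{eq:leaf-AB-bound}.

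The key point is the order of operations. The dimension factor $d_{\alpha_{p_j}}$ in \eqref{eq:scalarization-of-block} must not be estimated separately. We place it inside $\mathsf P^{(N)}_{\lambda_j,H_{j,N}}=d_{\lambda_j}\Tr_{V_{\lambda_j}}(\Pi_{\lambda_j,H_{j,N}})$, using $d_{\lambda_j}=d_{\alpha_{p_j}}$. The partial trace over $V_{\lambda_j}$ is precisely the contraction realized by the leaf plaquette's character trace once the leaf bundle has been fused into the irreducible channel $\lambda_j$. Lemma~\ref{lem:partial-trace-haar-projection} then shows that this operator is an orthogonal projection of norm at most $1$, so the dimension is cancelled exactly rather than bounded. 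The edge vertices in $I_j$ that become isolated are unmarked. Their Haar projections reduce to scalars on trivial spaces, so they contribute only bounded factors, and there are at most $C_d$ of them per step.

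After all $s$ replacements we expand the residual trace in the core-resolved form. Each summand $B_N(\omega;\mathsf P^{\rm peel}_{\mathfrak p,N})$ carries the modified unit vectors, whose norms are now bounded by $C_d^{1+m(\alpha_{p_j})}$ per leaf, together with $U_N(\omega)\otimes\mathsf P^{\rm peel}_{\mathfrak p,N}$. Applying Cauchy--Schwarz and $\|\mathsf P^{\rm peel}_{\mathfrak p,N}\|\le1$ to each summand, then summing in absolute value, bounds the left side by $C_d^{|Y\setminus K|+M_{Y\setminus K}}\mathcal R_N^L$. Here $s\le|Y\setminus K|$, and the constants depend on neither the order $\mathfrak p$ nor $N$, which gives the stated uniformity.

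The main obstacle is justifying the absorption step. We must check that conjugating by the leaf-local maps does not alter the core-resolved index set $\Omega_{\rm core}$ or the coefficients $c_N(\omega)$; it should only rescale the vectors $\xi_N,\zeta_N$. The first thing to try is the observation that the leaf factors $H_{j,N}$ enter only through the black box $\mathsf P^{\rm peel}$. The canonical Clebsch--Gordan resolution acts solely on factors of $K$, so the leaf maps commute with it and factor through the tensor slot of $\mathsf P^{\rm peel}$. Once this is established, the remainder of the argument is the norm bookkeeping described above.
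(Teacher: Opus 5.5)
Your proposal is correct and follows essentially the same route as the paper: you factor each leaf as $B_{j,N}\mathsf P^{(N)}_{\lambda_j,H_{j,N}}A_{j,N}$, use Lemma~\ref{lem:partial-trace-haar-projection} so that $d_{\alpha_{p_j}}$ is absorbed into a norm-one projection, resolve only the core, and apply Cauchy's inequality summandwise with the local maps acting only on the vectors adjacent to $\mathsf P^{\rm peel}_{\mathfrak p,N}$; the absorption step you flag as the main obstacle is exactly what the paper asserts directly. Your separate treatment of the isolated edges $I_j$ is unnecessary, and the claim that their projections act on trivial spaces is not accurate in general: since $I_j\subseteq\{e_j\}$, the only such projection is the one at $e_j$, which is already contained in $\mathsf L_{j,N}$.
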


\begin{proof}
For each $1\le j\le s$, the leaf factor in the block observable is
\[
\mathsf L_{j,N}=B_{j,N} \mathsf P_{\lambda_j,H_{j,N}}^{(N)} A_{j,N}
\]
by \eqref{eq:leaf-factorization}.  Lemma~\ref{lem:partial-trace-haar-projection}, applied to $\lambda_j$ and $H_{j,N}$, gives
\[
\|\mathsf P_{\lambda_j,H_{j,N}}^{(N)}\|_{2\to2}\le 1,
\]
and the fixed-degree local maps satisfy \eqref{eq:leaf-AB-bound}.  Insert this factorization in the trace representation \eqref{eq:block-observable-leaf-factorization} and resolve only the residual core, exactly as in the definition of $\mathcal R_N^L(K;\alpha_Y,\Gamma;\mathfrak p)$.  For each core-resolved summand, the maps $A_{j,N}$ and $B_{j,N}$ act only on the peeled black-box factor, hence only on the vectors adjacent to $\mathsf P^{\rm peel}_{\mathfrak p,N}$.  Since the core operator and the peeled projection are kept between those vectors, Cauchy's inequality gives the summandwise bound
\[
\begin{aligned}
|B_N^{\rm block}(\omega)|
&\le
\left(\prod_{j=1}^s
\|A_{j,N}\|_{2\to2}\|B_{j,N}\|_{2\to2}\right)
|c_N(\omega)|\\
&\qquad\times
\left(\prod_{c\in\mathcal C(\omega)}d_{\eta_c}^{-1}\right)
\|U_N(\omega)\|_{2\to2}.
\end{aligned}
\]
Summing over $\omega$ gives
\[
\left|\mathcal O_{Y,N}^{\mathrm{red},L}(\alpha_Y,\Gamma)\right|
\le
\left(\prod_{j=1}^s
\|A_{j,N}\|_{2\to2} \|B_{j,N}\|_{2\to2}\right)
\mathcal R_N^L(K;\alpha_Y,\Gamma;\mathfrak p).
\]
No cancellation in the signed residual scalar is used here.  Finally,
\[
s\le |Y\setminus K|,
\qquad
\sum_{j=1}^s m(\alpha_{p_j})=M_{Y\setminus K},
\]
and the claimed estimate follows.
\end{proof}

We isolate once and for all the elementary resolved summand which will be used both for terminal pinned blocks and for fully resolved residual cores.  This is the local algebraic unit from which the estimates below are assembled. Let us start with some terminology.

A \emph{resolved elementary channel summand} is one term in the expansion of a fixed matrix-coefficient tensor network after the edge Haar projections have been resolved in orthonormal invariant bases and after a finite set of internal channels has been opened.  Its data are denoted by $\omega$: they include the binary channel trees, the opened channel set $\mathcal C(\omega)$, the labels $\eta_c$ on these channels, and the orthonormal multiplicity indices.  The labelled channel graph $G(\omega)$ has one edge for each opened channel with $\eta_c\ne0$.  Its connected components are the connected components of the labelled channel graph.

A \emph{terminal opened channel} is an opened channel whose label is one of the external plaquette labels already present in the local-channel tensor.  In the residual-core estimates we use only the canonical sequential resolution fixed in Lemma~\ref{lem:core-path-encoding}; a refined channel assignment is therefore simply the sequence of intermediate irreducible labels, equivalently the corresponding multiplicity history, compatible with that fixed resolution.

The labelled channel graph $G(\omega)$ has as vertices the local tensor pieces left after opening the channels, together with the marked Wilson-collar boundary pieces, and has one edge for each opened channel with non-trivial label. A non-trivial opened channel lying on a path between two distinct marked Wilson-collar pieces is called a \emph{separating channel}.

\begin{lemma}\label{lem:degree-pinned-block}
Let $B_N(\omega)$ be a resolved elementary channel summand.  Assume that the labelled channel graph $G(\omega)$ is connected and meets $b(\omega,L)\ge1$ normalized marked Wilson trace components.  Then
\begin{equation}\label{eq:separating-channel-elementary-summand}
B_N(\omega)
=
c_N(\omega)
\left(\prod_{c\in\mathcal C(\omega)} d_{\eta_c}^{-1}\right)
\langle \xi_N(\omega),U_N(\omega)\zeta_N(\omega)\rangle,
\end{equation}
where $c_N(\omega)=O(1)$ has a controlled expansion, $\xi_N(\omega)$ and $\zeta_N(\omega)$ are unit vectors, and $\|U_N(\omega)\|_{2\to2}\le C_\omega$.  In particular
\[
|B_N(\omega)|\le C_\omega\prod_{c\in\mathcal C(\omega)}d_{\eta_c}^{-1}\le C_\omega .
\]
If $b(\omega,L)\ge2$ and $B_N(\omega)\ne0$, then some opened channel $c_*$ has non-trivial reduced label, hence
\begin{equation}\label{eq:separating-channel-dim-loss}
d_{\eta_{c_*}}^{-1}\le N^{-1},
\qquad
|B_N(\omega)|\le C_\omega N^{-1}.
\end{equation}
\end{lemma}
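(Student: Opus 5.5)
We would prove Lemma~\ref{lem:degree-pinned-block} by unwinding the construction of a resolved elementary channel summand and then running a Schur-lemma argument along a path in $G(\omega)$.

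\textbf{Step 1: the factorized form.} Start from the projection-supported realization~\eqref{eq:operator-block-contraction}. Resolve each edge Haar projection $\Pi_{e,N}$ in an orthonormal invariant basis. Each opened channel $c$ then carries a contraction--coevaluation projector $\Pi^{\rm inv}_{\eta_c}=d_{\eta_c}^{-1}\operatorname{coev}_{\eta_c}\operatorname{ev}_{\eta_c}$, as in~\eqref{eq:contraction-coevaluation-projector}. Extract all reciprocal dimension factors $d_{\eta_c}^{-1}$. What remains is a composition of three kinds of maps:
\begin{itemize}
\item the isometric intertwiners $I_{\eta,a}$;
\item the unnormalized evaluation--coevaluation pairs;
\item the fixed local maps of Lemma~\ref{lem:local-operator-complexity}.
\end{itemize}
Normalize the extreme vectors to unit length. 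Put the normalization scalars, the Wilson normalization $N^{-\#L}$ and the plaquette coordinates into $c_N(\omega)$, and call the bounded composite $U_N(\omega)$.

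For fixed $\omega$ all mixed degrees are fixed. Lemma~\ref{lem:fixed-degree-orthonormalization} therefore gives controlled expansions for $c_N(\omega)$. Lemma~\ref{lem:local-operator-complexity} gives $\|U_N(\omega)\|_{2\to2}\le C_\omega$. We still have to check that $c_N(\omega)=O(1)$. The point is that each normalized trace component $N^{-1}\Tr$ is exactly compensated by the unit normalization $N^{-1/2}\sum_a e_a\otimes e^a$ of its closing strand pair. The bound $|B_N|\le C_\omega\prod d^{-1}$ then follows from Cauchy--Schwarz, and the crude bound $\le C_\omega$ from $d_\eta\ge1$.

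\textbf{Step 2: a separating channel when $b(\omega,L)\ge2$.} Assume $b(\omega,L)\ge2$. Since $G(\omega)$ is connected, there is a path in $G(\omega)$ joining the collar pieces of two distinct marked trace components. Edges of $G(\omega)$ are by definition opened channels with $\eta_c\neq0$. It remains to rule out that every channel along the way is trivial.

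Suppose every channel separating the two collars had trivial label. Then the network would factor through the one-dimensional trivial isotypic piece, which splits it into two pieces. That contradicts connectedness of $G(\omega)$, because trivial-label channels are not edges of $G(\omega)$.

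Two cases can still occur:
\begin{itemize}
\item the path has no edge at all, which is impossible since the two collar vertices are distinct;
\item the channel carries a determinant power. This is excluded because we work with reduced stable labels, and the central selection rule~\eqref{eq:split-selection} removes determinant charge.
\end{itemize}
So some $c_*$ has a non-trivial reduced stable label. As in the proof of Lemma~\ref{lem:oscillating-path}, such a label has $d_{\eta_{c_*}}\ge N$. Bounding every other factor $d^{-1}$ by $1$ gives~\eqref{eq:separating-channel-dim-loss}. The hypothesis $B_N(\omega)\neq0$ is used only to discard summands in which the edge resolution forces a vanishing projection, as in Lemma~\ref{lem:no-dangling-edge}.

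\textbf{Main obstacle.} The delicate point is the $O(1)$ normalization claim in Step 1. We must check that the factors $N^{-\#L}$ exactly absorb the $N^{1/2}$ norms of the Wilson coevaluation vectors, so that no positive power of $N$ survives inside $c_N(\omega)$. We would settle this by tracking each marked trace component through its cyclic Wilson tensor $C^W_{\Lambda,L,N}$: one closed index loop costs $N$, which exactly matches the $N^{-1}$ of that component.
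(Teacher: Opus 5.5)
Your route is the paper's: pull a factor $d_{\eta_c}^{-1}$ out of each contraction--coevaluation projector, use the unit collar vector $N^{-1/2}\sum_i e_i\otimes e_i^*$ to see that each normalized trace $N^{-1}\Tr$ costs no positive power of $N$, and get controlled expansions from Lemma~\ref{lem:fixed-degree-orthonormalization}. For $b(\omega,L)\ge2$ you then take a path in the connected graph $G(\omega)$ between two collar pieces; its edges are by definition non-trivial reduced channels, hence of dimension at least $N$. Step~2 is fine. The ``suppose every separating channel were trivial'' detour is redundant, since any path with at least one edge already supplies $c_*$.

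The gap is in Step~1, in the claim $c_N(\omega)=O(1)$.

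\textbf{The plaquette dimension factors.} The block observable, and hence the vectors in~\eqref{eq:operator-block-contraction}, carries the prefactors $d_{\alpha_p}$. These grow like a power of $N$; for instance $d_{[(m),\varnothing]_N}=\binom{N+m-1}{m}\sim N^m/m!$. Your bookkeeping only balances $N^{-\#L}$ against the Wilson collars. Everything else, including the ``plaquette coordinates'' and all normalization scalars, goes into $c_N(\omega)$. As written, $c_N(\omega)$ is therefore of size $\prod_p d_{\alpha_p}$, and the obstacle you single out is the easy half of the problem.

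\textbf{The missing idea.} The paper states it explicitly: the \emph{terminal} opened channels, those carrying an external label $\alpha_p$ or its dual, cancel these prefactors. By Lemma~\ref{lem:partial-trace-haar-projection}, $d_\lambda\Tr_{V_\lambda}(\Pi_{\lambda,H})$ is a norm-one projection. So once the plaquette trace over $V_{\alpha_p}$ is closed, the reciprocal dimension of the terminal channel is exactly consumed by $d_{\alpha_p}$.

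\textbf{The bound on $U_N(\omega)$.} The same issue affects your claim $\|U_N(\omega)\|_{2\to2}\le C_\omega$. Lemma~\ref{lem:local-operator-complexity} is formulated for \emph{normalized} duality maps. Each unnormalized pair $\operatorname{coev}_\eta\operatorname{ev}_\eta$ that you keep inside $U_N(\omega)$ has operator norm $d_\eta$. Normalizing it instead just moves $d_\eta$ into $c_N(\omega)$. You have to say where each such factor goes. For the terminal channels, it is the partial trace over $V_{\alpha_p}$, combined with the prefactor $d_{\alpha_p}$, that turns the piece back into a norm-one operator.
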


\begin{proof}
All scalar coordinate changes in a fixed resolved summand are fixed-degree Schur--Weyl coefficients, hence are $O(1)$ with controlled expansions by Lemma~\ref{lem:fixed-degree-orthonormalization}.  A normalized Wilson trace is the matrix coefficient of the unit collar vector
\[
\Omega_N=N^{-1/2}\sum_{i=1}^N e_i\otimes e_i^*,
\qquad
N^{-1}\Tr(A)=\langle\Omega_N,(A\otimes\Id)\Omega_N\rangle,
\]
so marked collars produce no positive power of $N$.  Each opened channel contributes the reciprocal factor $d_{\eta_c}^{-1}$ from the contraction--coevaluation projector \eqref{eq:contraction-coevaluation-projector}; terminal channels cancel the plaquette dimension factors contained in the block observable.  This gives \eqref{eq:separating-channel-elementary-summand}.

If the connected labelled channel graph meets two marked trace components, any path between them contains a non-trivial opened channel.  Among reduced stable labels, every non-trivial representation has dimension at least $N$, which gives the extra factor $N^{-1}$.
\end{proof}

\begin{lemma}[Stable label-field count]
\label{lem:local-complexity-bound}
For fixed $Y$ and $M$, the number of reduced stable plaquette-label fields of total projective mass $M$ is bounded by
\[
C_d^{|Y|+M}.
\]
\end{lemma}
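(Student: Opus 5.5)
The count is a generating-function estimate. A reduced stable label field on $P(Y)$ is a tuple $(\alpha_p)_{p\in P(Y)}$ with $\alpha_p=[\lambda_p^+,\lambda_p^-]_N$. Since the map $\lambda_N$ is a bijection, each reduced stable label is determined by its pair of partitions $(\lambda_p^+,\lambda_p^-)$. We may therefore forget the length cutoffs $A_N,B_N$ and the stable-range restriction, which only decreases the count. This gives the upper bound
\[
\#\{\alpha_Y: M(\alpha_Y)=M\}\le \#\Bigl\{(\lambda_p^+,\lambda_p^-)_{p\in P(Y)}\in\Pfr^{2|P(Y)|}:\ \sum_{p}(|\lambda_p^+|+|\lambda_p^-|)=M\Bigr\}.
\]
The right-hand side is the coefficient of $x^M$ in $F(x)^{2|P(Y)|}$, where $F(x)=\sum_{n\ge0}p(n)x^n=\prod_{k\ge1}(1-x^k)^{-1}$ is the partition generating function.

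Fix any $x_0\in(0,1)$, say $x_0=1/2$. Since $F$ has nonnegative coefficients and converges on $(0,1)$, the elementary Cauchy/Chernoff bound gives
\[
[x^M]F(x)^{2|P(Y)|}\le x_0^{-M}F(x_0)^{2|P(Y)|}=2^M F(1/2)^{2|P(Y)|}.
\]
Finally $|P(Y)|\le|Y|$ by the definition of $|Y|$. So the count is at most $(F(1/2)^2)^{|Y|}2^M\le C_d^{|Y|+M}$ with $C_d:=\max(2,F(1/2)^2)$. This constant is in fact universal, and a fortiori depends only on $d$.

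There is no real obstacle. The only step needing care is the injectivity of $\alpha_p\mapsto(\lambda_p^+,\lambda_p^-)$ on reduced labels. This holds because the determinant shift is set to $q=0$ and the stable-coordinate map $\lambda_N$ is a bijection onto $\widehat{\U(N)}$. The same argument is already implicit in the proofs of Lemmas~\ref{lem:nonstable-weighted-tail} and~\ref{lem:product-tilted-mass-tail}, where the sum $\sum_a p(a)\rho^a$ played the role of $F(\rho)$.
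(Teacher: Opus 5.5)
Your argument is correct and is essentially the paper's: both reduce to counting pairs of partitions, use that the partition generating function converges on $(0,1)$, and finish with $|P(Y)|\le|Y|$. The only difference is in packaging. The paper first bounds $p_2(m)\le C_1^{m+1}$ plaquette by plaquette and then counts the compositions $(m_p)$ with a factor $\binom{M+P}{P}$. Your single Chernoff bound $[x^M]F(x)^{2|P(Y)|}\le 2^M F(1/2)^{2|P(Y)|}$ does both steps at once and gives an explicit universal constant.
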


\begin{proof}
Write $P=|P(Y)|$.  The number of reduced stable labels of projective mass $m$ is
\[
p_2(m)=\sum_{a+b=m}p(a)p(b).
\]
Since
\[
\sum_{m\ge0}p_2(m)z^m=\phi(z)^{-2}
\]
has radius of convergence $1$, there exists $C_1<\infty$ such that
$p_2(m)\le C_1^{m+1}$ for all $m\ge0$.  Hence the number of fields
$\alpha_Y=(\alpha_p)_{p\in P(Y)}$ with $\sum_pm(\alpha_p)=M$ is bounded by
\[
\sum_{\substack{(m_p)_{p\in P(Y)}\\ \sum_pm_p=M}}
\prod_{p\in P(Y)}C_1^{m_p+1}
\le
C_1^{M+P}\binom{M+P}{P}
\le C_2^{M+P}.
\]
The case $P=0$ is immediate.  Since $P\le |Y|$, the claim follows after increasing $C_d$.
\end{proof}

\subsection{Pinned residual cores and recoupling}\label{subsec:residual-core}

After the peeling process, the remaining graph $K=\operatorname{Core}_*(Y)$ is still pinned but may contain cycles or Wilson-anchored chains.  We now estimate the residual scalar directly by the channel-resolution machinery of Subsections~\ref{subsec:orthonormal-channel-recoupling} and~\ref{subsec:stable-young-graphs}.  The Wilson-collar factors are kept as fixed boundary tensors, while only the plaquette-generated part is fully recoupled.  Thus every refined internal label is produced by the plaquette labels alone.

\begin{lemma}[Canonical stable path encoding of a core resolution]\label{lem:core-path-encoding}
There is a constant $C_d<\infty$ with the following property.  Let $K$ be a residual core with reduced stable plaquette labels of total projective mass $M\le N/2$.  Fix once and for all a deterministic order of the plaquette vertices, of the elementary $V_N/V_N^*$ strands inside each plaquette representation, and of the incidences at every edge vertex; use the corresponding left-associated binary bracketing throughout the core.  Then every free internal irreducible label in the resulting core resolution is generated by elementary stable Pieri moves along the reduced Young graph.  We open the intermediate irreducible channel after every elementary Pieri tensoring, so the multiplicity history is exactly the sequence of visited labels.  The total length of all such paths is at most $C_d(|K|+M)$, and every visited intermediate label contributes the reciprocal dimension factor of its opened channel.

In particular, after the local operator norms have been removed, the sum over \emph{all} internal labels and multiplicity histories in this canonical resolution satisfies
\begin{equation}\label{eq:canonical-core-path-sum}
\sum_{\text{canonical internal histories}}
\prod_{c\in\mathcal C(\omega)}d_{\eta_c}^{-1}
\le C_d^{|K|+M}.
\end{equation}
\end{lemma}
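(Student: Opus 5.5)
The plan is to reduce the statement to a sequential fusion procedure on the core and then apply Lemma~\ref{lem:oscillating-path} along each fusion path. We work at a single Haar-projection vertex $e$ of $K$ at a time. The edge space $\mathcal H_e(\Gamma_e;\bar\alpha_Y,L)$ is a tensor product of incidence spaces, and each incidence space is a tensor product of elementary $V_N$ and $V_N^*$ legs, taken in the fixed lexicographic order. The Wilson-collar legs are kept aside as fixed boundary tensors, as stipulated before the lemma.

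With the left-associated bracketing, the resolution of $\Proj_{\mathcal H_e^{\U(N)}}$ proceeds by fusing strands one at a time. One starts at the trivial label $0$ and tensors successively by $V_N^{\pm}$. At each step the running irreducible label $\gamma_{k}$ is replaced by some $\gamma_{k+1}$ occurring in $V_{\gamma_k}\otimes V_N^{\pm}$. By the Pieri rule \eqref{eq:Pieri}, this decomposition is multiplicity-free. So the multiplicity space of $\mathcal H_e$ in any isotype is canonically spanned by the Pieri paths, and the multiplicity history is exactly the sequence of visited labels $(\gamma_0,\gamma_1,\ldots)$, which is the first claim.

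Next, every visited label must lie in the reduced stable range, so that it is a vertex of $\mathbb Y_N^{\mathrm{red}}$ and each step is a stable Pieri move, i.e.\ an edge of that graph. This is the step I expect to be the main obstacle. The argument runs as follows.
\begin{itemize}
\item The total number of elementary plaquette strands at $e$ is at most $C_d\sum_{p\ni e} b_p\,m(\alpha_p)$, since each plaquette contributes $m(\alpha_p)$ strands per boundary occurrence and $b_p\le4$.
\item Hence every intermediate label built from plaquette strands alone has projective mass at most the running strand count, which is at most $4M$.
\item A priori this only gives $4M$ rather than $N/2$. I would fix it by restricting to $M\le N/8$, or more simply by rescaling the mass cutoff in Lemma~\ref{lem:oscillating-path}. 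That lemma's proof only uses $m\le cN$ through $4(m+1)/N\le C$, so it holds verbatim for mass up to $4\cdot N/2$, at worst with the length cutoffs adjusted.
\item The only non-stable labels reachable from $0$ by at most $2N$ moves would require $\ell(\gamma^+)+\ell(\gamma^-)$ to exceed $N-1$. Such moves are excluded by the convention that terms violating the length cutoffs are omitted.
\end{itemize}

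The total length of all paths is then the sum over edge vertices of $K$ of their strand counts. Summing over edges gives at most $C_d(|K|+M)$, because each plaquette strand is counted at a bounded number of incidences and each active edge or plaquette contributes $O(1)$ vacuum overhead.

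Finally, each opened intermediate channel carries the factor $d_{\gamma_k}^{-1}$ from \eqref{eq:contraction-coevaluation-projector}. For a single edge with path length $s_e$ starting at $0$, Lemma~\ref{lem:oscillating-path} gives
\[
\sum_{\gamma_1,\ldots,\gamma_{s_e}}\prod_j d_{\gamma_j}^{-1}\le C^{s_e}.
\]
The histories at distinct edge vertices are chosen independently in the canonical resolution, so the sum factorizes over edges. Together with the length bound, this gives the product $C^{\sum_e s_e}\le C_d^{|K|+M}$, which is \eqref{eq:canonical-core-path-sum}.

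The terminal channel at each edge is forced to be $0$ (the Haar-invariant component), or the collar-compatible label. It only restricts the sum and so does not increase the bound. The plaquette-internal fusions used to realize the isotypic projectors are kept as bounded operators by Lemma~\ref{lem:local-operator-complexity}, so they introduce no extra label sums.
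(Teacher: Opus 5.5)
Your proposal uses the same ingredients as the paper's proof: one-strand-at-a-time fusion with fixed order and left-associated bracketing, multiplicity-freeness of the Pieri rule so that a history is just the sequence of visited labels, one reciprocal dimension per opened channel, and Lemma~\ref{lem:oscillating-path} applied path by path. You organize the paths differently, though. You fuse, at each Haar vertex $e$ of $K$, all plaquette strands incident to $e$ starting from the trivial label. The path sum then factorizes over edge vertices, and the total length is $\sum_p b_p\,m(\alpha_p)+O_d(|K|)$. The paper instead introduces the strands plaquette by plaquette ($M$ Pieri moves in all) and counts the edge routings separately as $O_d(|K|)$ openings. Your version has the merit of making the resolution of each edge Haar projection explicit.

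The step you single out as the main obstacle is, however, handled incorrectly, though the gap is easily closed.
\begin{enumerate}
\item The bound $4M$ on the running mass at $e$ is a miscount. A plaquette of $\Z^d$ has four distinct boundary edges, so a non-tree edge $e$ occurs at most once in $\partial^{\mathcal T}p$, and $\mathcal H_{p,e}$ consists of exactly $m(\alpha_p)$ elementary legs.
\item Hence every intermediate label at $e$ has mass at most $\sum_{p\ni e}m(\alpha_p)\le M\le N/2$. This is precisely the range of Lemma~\ref{lem:oscillating-path}, and it is the content of the paper's remark that only plaquette-generated strands are recoupled. No workaround is needed.
\end{enumerate}
The workarounds you propose would not have worked:
\begin{enumerate}
\item Assuming $M\le N/8$ proves a weaker statement than the lemma.
\item Lemma~\ref{lem:oscillating-path} does not extend to masses of order $2N$. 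The length cutoffs satisfy $A_N+B_N=N-1$ and cannot be enlarged. Labels such as $\wedge^NV_N=\det$, which is one-dimensional and reachable from the trivial label in $N$ steps, leave $\mathbb Y_N^{\mathrm{red}}$ and violate the bound $d_\gamma\ge N$ for non-trivial labels that the lemma relies on.
\item The sentence after \eqref{eq:Pieri-stable2} about omitting terms is a notational remark on stable coordinates. It is not permission to delete summands that genuinely occur in $V_\gamma\otimes V_N^{\pm}$ from a resolution of the identity.
\end{enumerate}
Lemma~\ref{lem:full-pieri-path} would rescue the numerical bound at large mass, but not the claimed encoding by stable Pieri moves. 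With the correct per-edge count, your argument proves the lemma as stated.
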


\begin{proof}
For each plaquette $p$, realize $V_{\alpha_p}$ in mixed tensor degree $m(\alpha_p)$ and introduce its elementary $V_N$- and $V_N^*$-strands in the fixed order.  Exactly $m(\alpha_p)$ elementary tensorings are required.  By the Pieri rules~\eqref{eq:Pieri-stable}--\eqref{eq:Pieri-stable2}, every tensoring changes exactly one of the two stable partitions by one box and is multiplicity-free.  Summing over the plaquettes gives $M$ elementary representation-theoretic moves.

The bounded-degree incidence graph contributes only $O_d(|K|)$ further openings and routings.  Because the ordering and bracketing were fixed in advance, there is no sum over permutations of fusion schedules or over binary bracketings.  All multiplicity information is instead the choice of intermediate labels along the resulting stable oscillating paths.  Only plaquette-generated strands are recoupled, so every intermediate projective mass is at most $M\le N/2$.

At each elementary tensoring we insert the orthonormal resolution of the corresponding intermediate irreducible channel on the two sides of the cut.  In the closed tensor contraction this is the contraction--coevaluation projector~\eqref{eq:contraction-coevaluation-projector}, hence contributes $d_{\eta}^{-1}$.  Because the Pieri step is multiplicity-free, a complete multiplicity history is uniquely encoded by the visited labels.  Applying Lemma~\ref{lem:oscillating-path} successively to the finitely many canonical paths, whose total length is at most $C_d(|K|+M)$, gives~\eqref{eq:canonical-core-path-sum} after increasing $C_d$.
\end{proof}

\begin{lemma}[Core connectivity of marked collars]\label{lem:core-collar-connectivity}
Let $Y$ be connected and pinned, and let $K=\operatorname{Core}_*(Y)$. In a non-zero core-resolved summand, the unresolved peeled factors cannot create a connection between two distinct marked Wilson collars which is absent from the labelled channel graph in $K$. Consequently, if a connected summand meets at least two marked Wilson collars, then a connected labelled component of the core meets at least two such collars. In particular, when $b(K,L)\ge2$, Lemma~\ref{lem:degree-pinned-block} produces a non-trivial separating channel inside the core.
\end{lemma}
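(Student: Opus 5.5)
The plan is to argue through the structure of the peeled factors and the operator realization of the residual scalar. Write $Y=K\cup(Y\setminus K)$ with $K=\operatorname{Core}_*(Y)$. By the construction in Definition~\ref{def:anchored-core}, every connected component of $Y\setminus K$ is a tree appendix. Each appendix is attached to $K$ through edge vertices. Because only vacuum plaquette leaves and isolated unmarked edges are removed, an appendix contains no marked edge vertex. The first step is to record this: all marked Wilson collars of $Y$ lie in $V_E(K)$. In the residual operator $\mathcal P^{\rm res}_{K,\mathfrak p,N}$, each peeled appendix is then replaced by a product of projections $\mathsf P^{(N)}_{\lambda_j,H_{j,N}}$. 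By Lemma~\ref{lem:partial-trace-haar-projection}, each such projection acts on the single edge space $H_{j,N}$ at its attaching edge vertex $e_j$, and there it is the isotypic projection onto the $V_{\lambda_j^*}$-component.

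The key step is to show that such a factor cannot transport anything between two collars. An isotypic projection on $H_{j,N}$ is a $\U(N)$-equivariant operator, local to the tensor factors incident to $e_j$. The peeling order is directed towards $K$, and the appendix is a tree. Hence, after peeling, the whole appendix collapses to operators acting at one attaching vertex of $K$; when an appendix meets $K$ at several points, one takes the successive $e_j$ along the order. I would argue by induction along $\mathfrak p$. At stage $j$, the factor $\mathsf P^{(N)}_{\lambda_j,H_{j,N}}$ lives on tensor factors already incident to the edge vertex $e_j\in G_{j}$. Composing it into the network therefore adds no new adjacency in the labelled channel graph beyond those already present at $e_j$. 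Concretely, once the core is resolved canonically as in Lemma~\ref{lem:core-path-encoding}, the channels passing through $e_j$ are channels of $K$. An operator supported on $H_{j,N}$ can be absorbed into the bounded operator $U_N(\omega)$ at that vertex without adding an edge to $G(\omega)$.

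It remains to deduce the consequence. Suppose a non-zero core-resolved summand connects two distinct marked collars. Two collars lying in different connected labelled components of the core would require the black box $\mathsf P^{\rm peel}_{\mathfrak p,N}$ to link them. The previous step excludes this, so some connected labelled component of the core meets at least two collars. If $b(K,L)\ge2$, Lemma~\ref{lem:degree-pinned-block} applied to that component yields an opened channel with non-trivial reduced label on a path between the collars. This is the claimed separating channel.

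The main obstacle is the tree appendix attached to $K$ at two distinct edge vertices. In that case its peeled projections might appear to couple two vertices of $K$. I would first check that this cannot happen: a tree appendix with two attachment points would form, together with the connected graph $K$, a cycle. The pruning procedure never removes plaquettes on cycles, so those plaquettes would belong to $\operatorname{Core}_*(Y)$ rather than to an appendix. Hence each appendix has a single attachment vertex, which gives the locality needed in the key step.
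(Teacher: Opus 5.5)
Your proof is correct and follows essentially the paper's route. The paper argues that each tree appendix of $Y\setminus K$ hangs off $K$ through a single attachment space, that the peeled replacement $\mathsf P^{\rm peel}_{\mathfrak p,N}$ acts only there and so only modifies the local operator at that vertex without adding an edge to the labelled channel graph, and that Lemma~\ref{lem:degree-pinned-block} applied to the connected core component then gives the separating channel. Your one addition is an explicit proof of the single-attachment property, which the paper takes directly from the construction of the pruning procedure: $K$ is connected, and leaf/isolated-vertex pruning never removes a vertex on a cycle.
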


\begin{proof}
Every connected component of $Y\setminus K$ is a tree appendix attached to $K$ through a single attachment datum, by construction of the pruning procedure. Relative peeling replaces such an appendix by the norm-one operator $\mathsf P^{\rm peel}_{\mathfrak p,N}$ acting on that attachment space; it does not introduce an identification between two different attachment spaces. Hence contracting the unresolved peeled factors can modify the local operator at one attachment but cannot join two labelled connected components of the core. Any connection between two marked collars in the full non-zero summand must therefore already be present in the core-resolved labelled channel graph. The final assertion is precisely the separating-channel conclusion of Lemma~\ref{lem:degree-pinned-block} applied to that connected core component.
\end{proof}

\begin{theorem}[Residual-core estimate]\label{thm:core-plancherel}
Let $K$ be a connected pinned residual core with reduced stable plaquette labels
\[
\alpha=(\alpha_p)_{p\in P(K)},\qquad
M=\sum_{p\in P(K)}m(\alpha_p)\le N/2,
\]
and fix a local-channel datum $\Gamma$ inducing data on $K$.  Then, for every peeling order $\mathfrak p$ which produces the residual data on $K$,
\begin{equation}\label{eq:residual-core-bound}
\mathcal R_N^L(K;\alpha_Y,\Gamma;\mathfrak p)
\le C_L C_d^{|K|+M}N^{-\delta(K,L)},
\end{equation}
where $\delta(K,L)\ge0$, and $\delta(K,L)\ge1$ if $b(K,L)\ge2$.  For fixed $K$ and $M$, the signed residual scalar $R_N^L(K;\alpha_Y,\Gamma;\mathfrak p)$ admits a controlled asymptotic expansion to all orders in powers of $1/N$.
\end{theorem}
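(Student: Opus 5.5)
The plan is to bound the core-resolved absolute residual mass \eqref{eq:absolute-residual-mass} summand by summand, and then sum over the canonical internal histories. We work in the canonical left-associated sequential resolution of Lemma~\ref{lem:core-path-encoding}, so that $\Omega_{\rm core}(K,\alpha_Y,\Gamma;\mathfrak p)$ is exactly the set of canonical internal histories. Each summand $\omega$ is a resolved elementary channel summand in the sense of Lemma~\ref{lem:degree-pinned-block}. Hence it has the form
\[
c_N(\omega)\Bigl(\prod_{c\in\mathcal C(\omega)}d_{\eta_c}^{-1}\Bigr)\langle\xi_N(\omega),(U_N(\omega)\otimes\mathsf P^{\rm peel}_{\mathfrak p,N})\zeta_N(\omega)\rangle ,
\]
with unit vectors. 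The marked Wilson traces enter as the unit collar vectors $\Omega_N$, so they produce no positive power of $N$. Terminal opened channels cancel the plaquette dimension factors $d_{\alpha_p}$ of the block observable.

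The first step is a uniform control of $|c_N(\omega)|\,\|U_N(\omega)\|_{2\to2}$ by $C_LC_d^{|K|+M}$, uniformly in $\omega$. Here I would use Lemma~\ref{lem:local-operator-complexity}. Every local operator on $K$ (plaquette incidence maps, normalized duality maps, orthogonal Schur--Weyl projections, Clebsch--Gordan isometries of the canonical resolution) is a composition of at most $C_d(1+m(\alpha_p))$ elementary operations of bounded norm. The edge Haar projections have norm one. The Wilson collar is a fixed bounded tensor, and its cost is absorbed into $C_L$. Multiplying over the $O_d(|K|)$ local positions gives $C_LC_d^{|K|+M}$. The scalars $c_N(\omega)$ come from orthonormalized fixed-degree coordinates and are bounded in the same way. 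The second step sums the reciprocal-dimension weights over all histories. By \eqref{eq:canonical-core-path-sum}, this sum is at most $C_d^{|K|+M}$. This gives \eqref{eq:residual-core-bound} with $\delta=0$.

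For $b(K,L)\ge2$, I would use Lemma~\ref{lem:core-collar-connectivity}. In every non-zero summand, some connected labelled component of the core meets two distinct marked collars. By Lemma~\ref{lem:degree-pinned-block}, that component contains a separating opened channel $c_*$ with non-trivial reduced label, so $d_{\eta_{c_*}}^{-1}\le N^{-1}$. To extract this factor without losing the path summability, I would factor it out of each product: bound $d_{\eta_{c_*}}^{-1}\le N^{-1}$, and bound the remaining weights by re-running the one-step estimate of Lemma~\ref{lem:oscillating-path}. At the step $c_*$ one uses only that the number of neighbours is at most $4(m+1)\le C(N+1)$. This costs at most a factor $C$ times $N\cdot N^{-1}$ relative to the original weight, so the net gain is at least $N^{-1}$. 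The choice of $c_*$ is at most $C_d(|K|+M)$-fold, and this is absorbed into $C_d^{|K|+M}$. I expect this to be the main obstacle: one must make sure the separating channel is a genuinely non-trivial intermediate label along a canonical path, and not a terminal label whose reciprocal dimension was already spent cancelling $d_{\alpha_p}$. I would handle this by noting that a path between two collars necessarily crosses an internal (non-terminal) cut of the fixed bracketing. The boundary case of a path passing through the zero label is excluded by non-triviality.

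Finally, for fixed $K$ and $M$, the set of reduced label fields is finite by Lemma~\ref{lem:local-complexity-bound}, and so is the set of canonical histories. Every ingredient is a fixed-degree Schur--Weyl coefficient, a dimension $d_\eta$ (a rational function of $N$ for fixed stable labels), or a matrix coefficient of an orthonormal channel map. Each of these has a controlled $1/N$ expansion by Proposition~\ref{prop:expansion-projector} and Lemmas~\ref{lem:fixed-degree-orthonormalization} and~\ref{lem:local-operator-complexity}. The peeled projections $\mathsf P^{(N)}_{\lambda_j,H_{j,N}}$ are isotypic projectors in fixed degree, so they also have controlled expansions. A finite sum of products of such expansions has a controlled expansion, which gives the asymptotic statement for the signed residual scalar $R_N^L$.
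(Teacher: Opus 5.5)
Your overall route is the paper's. You bound each core-resolved summand by the operator-norm estimate of Lemma~\ref{lem:local-operator-complexity}, sum the reciprocal-dimension weights with \eqref{eq:canonical-core-path-sum}, get a non-trivial separating channel from Lemma~\ref{lem:core-collar-connectivity}, and obtain the expansion from finiteness at fixed $K,M$. Your remark that the peeled isotypic projections also need controlled expansions fills in a point the paper leaves implicit.

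\textbf{Where the argument fails.} The one step where you go beyond the paper is extracting the separating-channel factor without destroying the path sum, and as written it does not give the claimed gain. At the step $c_*$ you replace $d_{\eta_{c_*}}^{-1}$ by $N^{-1}$ and then count neighbours without weight using $4(m+1)\le C(N+1)$. That step then contributes $N^{-1}\cdot C(N+1)=O(1)$. This is the same order as the weighted one-step bound $\sup R\mathbf 1\le C$ you started from, so the total stays $O(C^{s})$ and no power of $N$ is gained. The conclusion ``net gain at least $N^{-1}$'' does not follow.

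\textbf{The fix.} The missing point is that in the residual core only plaquette-generated strands are recoupled. Hence every intermediate label has projective mass at most $M$ (Lemma~\ref{lem:core-path-encoding}), not merely at most $N/2$. The unweighted neighbour count at $c_*$ is then at most $4(M+1)$. Equivalently, restricting the one-step operator of Lemma~\ref{lem:oscillating-path} to non-trivial neighbours gives at most $4(M+1)/N$. This yields a bound $N^{-1}\cdot 4(M+1)\cdot C^{s-1}$. The factor $4(M+1)$, together with your $C_d(|K|+M)$ choices for the position of $c_*$, is absorbed into $C_d^{|K|+M}$, and a genuine $N^{-1}$ remains.

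With this correction your argument is more careful than the paper's. The paper simply asserts that the separating channel's reciprocal dimension gives an extra $N^{-1}$, without addressing that this factor was already used in the path sum. Your separate worry that $c_*$ might be a terminal channel, whose $d^{-1}$ was spent cancelling $d_{\alpha_p}$, is also not treated in the paper. Your one-line answer, that a path between two collars must cross a non-terminal cut, is plausible but would still need an argument.
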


\begin{proof}
Resolve only the residual core, as in the definition of $\mathcal R_N^L$.  Thus the edge Haar projections belonging to $K$ are expanded in orthonormal Clebsch--Gordan bases, every external plaquette representation $V_{\alpha_p}$, $p\in P(K)$, is realized in mixed tensor degree $m(\alpha_p)$, and the plaquette-generated elementary $V_N$- and $V_N^*$-legs inside the core are introduced along a one-box fusion schedule.  The Wilson collars are kept as fixed boundary tensors.  The peeled factor $\mathsf P^{\rm peel}_{\mathfrak p,N}$ is not resolved and contributes only through the bound
\[
\|\mathsf P^{\rm peel}_{\mathfrak p,N}\|_{2\to2}\le1 .
\]
Therefore no recoupling multiplicity coming from the peeled tree appendices is counted in this estimate.

By the construction of $\mathcal R_N^L$, each core-resolved summand is bounded by
\[
|c_N(\omega)|
\left(\prod_{c\in\mathcal C(\omega)}d_{\eta_c}^{-1}\right)
\|U_N(\omega)\|_{2\to2}.
\]
Use the canonical ordering and bracketing of Lemma~\ref{lem:core-path-encoding}.  The fixed local coordinate maps are kept as operators; by Lemma~\ref{lem:local-operator-complexity}, the product of their norms is bounded by $C_LC_d^{|K|+M}$.  Terminal opened channels have labels among the external plaquette labels of the core, and their reciprocal dimension factors cancel the plaquette dimensions already included in the block observable.

There is no counting of raw multiplicity bases.  Lemma~\ref{lem:core-path-encoding} gives directly
\[
\sum_{\text{canonical internal histories}}
\prod_{c\in\mathcal C(\omega)}d_{\eta_c}^{-1}
\le C_d^{|K|+M}.
\]
Combining this weighted path sum with the operator-norm bound and enlarging $C_d$ gives, after summing the core-resolved majorants which define $\mathcal R_N^L$,
\[
\mathcal R_N^L(K;\alpha_Y,\Gamma;\mathfrak p)
\le C_LC_d^{|K|+M}.
\]
If $b(K,L)\ge2$, Lemma~\ref{lem:core-collar-connectivity} gives a non-trivial separating channel in the core. Since every non-trivial reduced stable $\U(N)$-label has dimension at least $N$, its reciprocal-dimension factor contributes an additional factor $N^{-1}$. This proves~\eqref{eq:residual-core-bound}.

For the signed residual scalar, one keeps the actual core-resolved expansion with the norm-one peeled factor $\mathsf P^{\rm peel}_{\mathfrak p,N}$ instead of taking absolute majorants.  For fixed $K$ and $M$, all core mixed degrees, core recoupling types and stable core intermediate labels range over finite sets. Lemma~\ref{lem:fixed-degree-orthonormalization} therefore gives controlled asymptotic expansions for all local core matrix coefficients, and the signed finite sum inherits a controlled expansion to all orders.
\end{proof}

\subsection{Proof of the pinned block estimate}

\begin{proof}[Proof of Theorem~\ref{thm:pinned-block}]
Fix a reduced stable label field $\alpha_Y$ of total mass $M$.  By Lemma~\ref{lem:support-resummation}, the exact local-channel resummation~\eqref{eq:resummed-reduced-block} expresses the fixed-support block as a linear combination of at most $C_LC_d^{|Y|}$ projection-supported tensor contractions, with all edge Haar integrals left as orthogonal projections and with coefficients of modulus at most one.  It is therefore enough to estimate one such projection-supported contraction; the support-resummation factor is absorbed into the constants.

Apply the pruning construction to each such projection-supported term.  If an unmarked dangling edge occurs, Lemma~\ref{lem:no-dangling-edge} annihilates the term.  Otherwise every component outside the residual core $K=\operatorname{Core}_*(Y)$ is removed by the plaquette-leaf factorization~\eqref{eq:leaf-factorization}.  Lemma~\ref{lem:partial-trace-haar-projection} replaces the dimension factor of each peeled plaquette together with its adjacent Haar integral by the norm-one projection
$\mathsf P_{\lambda,H}^{(N)}$.  The remaining local incidence maps have norm bounded by Lemma~\ref{lem:local-operator-complexity}.  Consequently the total cost of the peeled appendices is at most
\[
C_d^{|Y\setminus K|+M_{Y\setminus K}}.
\]

Only the residual core is now resolved.  We use the canonical ordering and bracketing of Lemma~\ref{lem:core-path-encoding}; in particular, no raw walled-Brauer basis and no family of fusion schedules is counted.  Lemma~\ref{lem:degree-pinned-block} expresses every resolved core contribution as a bounded matrix coefficient multiplied by the reciprocal dimensions of its opened channels.  Lemma~\ref{lem:core-path-encoding} sums these reciprocal-dimension weights over all internal labels and multiplicity histories and yields a factor at most
\[
C_LC_d^{|K|+M_K}.
\]
The peeled projections remain norm-one black boxes throughout this resolution.  If the connected support meets at least two marked trace components, Lemma~\ref{lem:core-collar-connectivity} supplies a non-trivial separating channel in the core.  Since every non-trivial reduced stable representation has dimension at least $N$, one reciprocal-dimension factor gives an additional $N^{-1}$.

Combining the peeled and core bounds, and using
$M_{Y\setminus K}+M_K=M$ and
$|Y\setminus K|+|K|\le C_d|Y|$, gives, for the resummed fixed-label block,
\[
\left|\mathcal O_{Y,N}^{\mathrm{red},L}(\alpha_Y)\right|
\le
C_LC_d^{|Y|+M}N^{-\delta(Y,L)},
\]
where $\delta(Y,L)\ge0$ and $\delta(Y,L)\ge1$ when $b(Y,L)\ge2$.

It remains only to sum over the external reduced stable label fields of total mass $M$.  Lemma~\ref{lem:local-complexity-bound} gives at most $C_d^{|Y|+M}$ such fields.  Enlarging $C_d$ once more therefore yields
\[
\left\|\mathcal O_{Y,N}^{\mathrm{red},L}\right\|_{M,1}
\le
C_LC_d^{|Y|+M}N^{-\delta(Y,L)},
\]
which is~\eqref{eq:pinned-block-bound}.

Finally fix $Y$ and $M$.  There are only finitely many external stable label fields of total mass $M$.  For each one, the signed projection-supported contraction may be evaluated with the same canonical core resolution.  All mixed degrees are fixed, the local operator coefficients have controlled expansions by Lemma~\ref{lem:local-operator-complexity}, and the finite signed sums preserve such expansions.  Summing over the finitely many external label fields proves the asserted all-order expansion of $\mathcal O_{Y,N}^{\mathrm{red},L}[M]$.
\end{proof}

\begin{corollary}[Coarse non-spectral block bound]\label{cor:coarse-nonspectral-block}
There are constants $C_d<\infty$ and $C_L<\infty$ such that, for every connected pinned support $Y$ and every reduced plaquette-label field $\alpha_Y$ of arbitrary total projective mass
$M=M(\alpha_Y)$,
\begin{equation}\label{eq:coarse-nonspectral-block}
\left|\mathcal O_{Y,N}^{\mathrm{red},L}(\alpha_Y)\right|
\le C_LC_d^{|Y|+M}.
\end{equation}
The same estimate holds after arbitrary determinant shifts are reinserted, provided the central charge selection rule is imposed.
\end{corollary}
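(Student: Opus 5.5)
The plan is to rerun the proof of Theorem~\ref{thm:pinned-block} at a fixed label field, but without the stable small-mass hypothesis $M\le N/2$. That hypothesis was used in exactly one place: the reciprocal-dimension path sum of Lemma~\ref{lem:core-path-encoding}, which relied on Lemma~\ref{lem:oscillating-path}. We will replace it by the full Pieri bound of Lemma~\ref{lem:full-pieri-path}, which holds for every $N$ and every starting label. We give up the factor $N^{-\delta(Y,L)}$, since the statement does not ask for it.

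\textbf{Step 1 (label-independent reductions).} Apply Lemma~\ref{lem:support-resummation} to write $\mathcal O_{Y,N}^{\mathrm{red},L}(\alpha_Y)$ as at most $C_LC_d^{|Y|}$ projection-supported contractions with coefficients of modulus at most one. This step is purely combinatorial and does not depend on $M$. Next prune to the residual core $K=\operatorname{Core}_*(Y)$. Terms containing an unmarked dangling edge vanish by Lemma~\ref{lem:no-dangling-edge}. Every other term is peeled by Proposition~\ref{prop:relative-peeling}. That proposition uses only two inputs: the norm-one partial-trace projection of Lemma~\ref{lem:partial-trace-haar-projection}, which holds for arbitrary irreducible labels, and the local norm bounds $C_d^{1+m}$ of Lemma~\ref{lem:local-operator-complexity}. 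The latter come from realizing $V_\lambda$ in mixed degree $m$ and hold for any $m$, because only the asymptotic-expansion part of that lemma needs $m$ fixed. The peeled appendices therefore cost $C_d^{|Y\setminus K|+M_{Y\setminus K}}$ for arbitrary mass.

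\textbf{Step 2 (core resolution at arbitrary mass).} Resolve the core with the canonical ordering and left-associated bracketing, opening a channel after every elementary tensoring by $V_N$ or $V_N^*$. By the Pieri rule~\eqref{eq:Pieri}, each step is multiplicity-free for arbitrary highest weights. Each multiplicity history is therefore a path in the full Pieri graph $\mathbb P_N^{\mathrm{full}}$, of total length at most $C_d(|K|+M)$. Lemma~\ref{lem:full-pieri-path} bounds the sum of $\prod_c d_{\eta_c}^{-1}$ over such paths by $4^{C_d(|K|+M)}$. This is uniform in $N$ and in the starting label, so it covers intermediate labels that leave the stable range or pass through determinant powers. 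The terminal channels again cancel the plaquette dimensions. The collar vectors $\Omega_N$ are unit vectors, so the normalized Wilson traces contribute no positive power of $N$. Combining these gives the core mass bound $C_LC_d^{|K|+M_K}$, and together with Step~1 this yields~\eqref{eq:coarse-nonspectral-block}.

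\textbf{Step 3 (determinant shifts).} When determinant shifts are reinserted, write $\lambda_p=q_p\mathbf 1_N+\alpha_p$. Since $\chi_{\lambda_p}=\det^{q_p}\chi_{\alpha_p}$, the topological integrand changes only by the scalar characters $\prod_e z_e$-type factors from the proof of Lemma~\ref{lem:central-selection}. Under the central selection rule $\partial^*c+j_L=0$, these combine with the Wilson insertion into charges that are balanced edge by edge. Each edge Haar projection is then taken on a space twisted by a one-dimensional character. Lemma~\ref{lem:partial-trace-haar-projection} and Lemma~\ref{lem:full-pieri-path} apply verbatim to twisted representations, since dimensions are unchanged by the twist. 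The same bound therefore follows.

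\textbf{Main obstacle.} The delicate point is the claim in Step~2 that the canonical sequential resolution still encodes multiplicities as Pieri paths when intermediate labels are no longer reduced stable. In that regime the stable Pieri rules~\eqref{eq:Pieri-stable} fail, and tensoring by a fundamental strand can hit one-dimensional labels, where $d^{-1}=1$. I would handle this by working directly with the full highest-weight Pieri rule~\eqref{eq:Pieri}, which is always multiplicity-free. The at most two determinant-power neighbours at each step are already absorbed into the constant $4$ of Lemma~\ref{lem:full-pieri-path}.
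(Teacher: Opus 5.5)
Your proposal is correct and follows essentially the same route as the paper: rerun the support resummation, pruning and peeling, none of which uses the stable-mass restriction. Then replace Lemma~\ref{lem:oscillating-path} by the full Pieri bound of Lemma~\ref{lem:full-pieri-path} in the canonical core resolution, give up the separating-channel gain $N^{-\delta(Y,L)}$, and treat reinserted determinant shifts as one-dimensional twists that change neither dimensions nor the norm-one character of the Haar projections, exactly as the paper does with its scalar determinant wires.
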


\begin{proof}
Repeat the projection-supported resummation, pruning and peeling argument from the proof of Theorem~\ref{thm:pinned-block}.  The leaf estimates and the norm-one partial-trace Haar projections do not use the stable mass restriction.  In the residual core, resolve every reduced plaquette representation into elementary $V_N/V_N^*$ strands in the same fixed canonical order as before, but now regard all intermediate labels as vertices of the full Pieri graph $\mathbb P_N^{\mathrm{full}}$.  Open every intermediate channel.  Lemma~\ref{lem:full-pieri-path} replaces Lemma~\ref{lem:oscillating-path} and bounds the sum of the reciprocal-dimension weights by $C_d^{|K|+M}$ with no restriction on $M$.  This yields~\eqref{eq:coarse-nonspectral-block}; we do not claim the additional $N^{-1}$ separating-channel gain outside the retained stable sector.

A determinant twist is one-dimensional.  In the tensor network it is therefore carried by scalar determinant wires of norm one; Haar invariance either annihilates the term by the central selection rule or leaves the reduced operator norm unchanged.  Hence the same bound holds after determinant shifts are restored.
\end{proof}

\section{Rooted master loop equation}\label{sec:rooted-master-loop}

In this section we recall the coefficientwise master loop equation, root it at a single occurrence, and use the resulting heat-kernel equation to prove a large-$N$ asymptotic expansion of rooted Wilson observables.
\subsection{Coefficientwise master loop equation and extended observables}\label{subsec:coeff-MLE}

We first recall the universal coefficientwise master loop equation of~\cite{Lem26a}.  It is obtained by Haar integration by parts in one edge variable of the topological coefficient~\eqref{eq:kappa-W}.  The equation is independent of the action; the action enters only after multiplying by the spectral weights and summing over plaquette labels.

More explicitly, with the notation of the companion paper, for every oriented edge $e$, every loop family $L$, and every plaquette decoration $\alpha:P(\Lambda)\to\widehat{\U(N)}$, the unrooted coefficientwise equation is
\begin{equation}\label{eq:coeffwise-MLE}
(\mathscr L_e\widehat W_{\Lambda,L})(\alpha)
+
\sum_{p\ni e}(\mathscr B_{e,p}\widehat W_{\Lambda,L})(\alpha)=0.
\end{equation}
Here $\mathscr L_e$ is a cut-and-join operator acting on the loop side, and $\mathscr B_{e,p}$ is a local loop-plaquette recoupling operator acting simultaneously on the Fourier side and the loop side. The rooted versions of these two operations are made explicit in Subsection~\ref{subsec:normalform}, where we prove the single-occurrence form used in the sequel. Rooted, single-location loop equations also appear in the Wilson-action setting of Cao--Park--Sheffield~\cite{CPS25}.

Throughout this section, $\widehat Q^{\HK}_{T,N}$ denotes the heat-kernel coefficient fixed in~\eqref{eq:Fourier-coef-HK}, and $\kappa_{\Lambda,T,N}$ is the corresponding spectral coefficient~\eqref{eq:heat-kernel-kappa}. Let $C^\infty_\Lambda=C^\infty(\U(N)^{E(\Lambda)})$. More generally, all differentiations below take place on the algebra of smooth cylinder functions on $\U(N)^{E(\Lambda)}$ generated by matrix coefficients of edge holonomies, Wilson loop traces and plaquette characters. We equip $\mathfrak u(N)$ with the Hilbert--Schmidt inner product $\langle X,Y\rangle_{\mathfrak u(N)}=-\Tr(XY).$ Let $(X_A)_{A=1}^{N^2}$ be an orthonormal basis of $\mathfrak u(N)$. For $e\in E(\Lambda)$, $X\in\mathfrak u(N)$, and $F\in\mathcal C^1_\Lambda$, define the left-invariant differential operator in the $e$-variable by
\[
(\partial_{e,X}F)(U)=\left.\frac{d}{dt}\right|_{t=0}F\bigl((U_f)_{f\neq e}, e^{tX}U_e\bigr).
\]
We write $\partial_{e,A}:=\partial_{e,X_A}$. With this convention,
\[
\partial_{e,A}U_e=X_AU_e,\qquad\partial_{e,A}U_e^{-1}=-U_e^{-1}X_A,
\]
which is the convention used in the rooted integration-by-parts computation below. If an oriented word contains the formal inverse $e^{-1}$, it is always interpreted through the variable $U_e^{-1}$; no additional edge variable is introduced.

\begin{definition}
Let $0$ denote the trivial representation of $\U(N)$. A \emph{compactly supported plaquette constraint} is a finite partial map $\zeta:S(\zeta)\longrightarrow \widehat{\U(N)}$, where $S(\zeta)\Subset P(\Lambda).$
\end{definition}
The indicator of a compactly supported plaquette constraint on full plaquette decorations is
\[
1_\zeta(\alpha):=\prod_{p\in S(\zeta)}1_{\{\alpha_p=\zeta(p)\}} .
\]
The empty constraint is denoted by $\varnothing$, and corresponds to the ordinary Wilson observable. We also write $D(\zeta):=\{p\in S(\zeta):\zeta(p)\neq 0\}$ for the non-trivial support of $\zeta$, and $\mathcal E(\zeta):=\sum_{p\in D(\zeta)}\mathcal E(\zeta(p))$ for its energy. If $p\in P(\Lambda)$ and $\lambda\in\widehat{\U(N)}$, the update
$\zeta^{p\to\lambda}$ is the compactly supported constraint with support $S(\zeta)\cup\{p\}$, value $\lambda$ at $p$, and the same values as $\zeta$
away from $p$. For a compactly supported plaquette decoration $\zeta$, define the extended Wilson coefficient by
\begin{equation}\label{eq:atomic-extended-coef}
\Phi_{\Lambda,T,N}(L;\zeta)
=
\frac{
\widehat\E_{\Lambda,T,N}\left[\mathbf 1_\zeta(\lambda)\mathcal O^L_{\Lambda,N}(\lambda)\right]
}{
\widehat\E_{\Lambda,T,N}\left[\mathcal O^\varnothing_{\Lambda,N}(\lambda)\right]
}.
\end{equation}
In particular, $\Phi_{\Lambda,T,N}(L;\varnothing)=\Phi_{\Lambda,T,N}(L).$ The formula extends by linearity to compactly supported plaquette-decoration test functions.

\begin{definition}\label{def:rooted-spectral-loop-states}
A \emph{rooted decorated-loop state} is a triple $X=(L,x,\zeta),$ where $L$ is a finite loop family, $x$ is a chosen occurrence of an oriented edge $e_x$ in one of the loops of $L$, and $\zeta$ is a compactly supported plaquette decoration.  We write $\ell(L)$ for the total word length of $L$ and put $|X|_*:=\ell(L)+|S(\zeta)|,$ $\mathcal E(X):=\mathcal E(\zeta).$
\end{definition}

Let $\mathfrak X_\Lambda$ denote the set of rooted decorated-loop states in $\Lambda$. We also use a cemetery symbol $\dagger$ for terminal outputs, which are not elements of $\mathfrak X_\Lambda$. Fix once and for all a deterministic rule $\mathfrak r$ which, to every finite loop family $M$, assigns the first non-trivial edge occurrence in a fixed ordering of the output words, if such an occurrence exists, and assigns $\mathfrak r(M)=\dagger$ otherwise. After every local operation, the outgoing rooted state is defined using the root $\mathfrak r(M)$ when $\mathfrak r(M)\ne\dagger$; if $\mathfrak r(M)=\dagger$, the output is terminal and is placed in the source.

For $a,\eta>0$ we use the weighted supremum norm on functions of rooted states
\begin{equation}\label{eq:raw-weighted-norm}
\|H\|_{a,\eta}:=
\sup_X e^{-a|X|_* -\eta \mathcal E(X)}|H(X)|.
\end{equation}
The corresponding operator estimates below are formulated as weighted row-sum bounds with respect to this norm.
Finally, let
\[
\mu(X):=\#\{\hbox{occurrences of } e_x^{\pm1}\hbox{ in the loop family }L\}
\]
be the rooted-edge multiplicity.  The root is part of the state, and all local steps below are performed at that occurrence. The deterministic rule $\mathfrak r$ is used only to choose the next root after a non-terminal local operation; the value of $\Psi_{\Lambda,T,N}$ itself is independent of this bookkeeping choice.

We denote by $\Psi_{\Lambda,T,N}(L,x,\zeta):=\Phi_{\Lambda,T,N}(L;\zeta)$ the rooted version of the extended Wilson coefficient. Its value does not depend on the root $x$, but the rooted equation computing it does.

\subsection{Rooted coefficientwise normal form}\label{subsec:normalform}

We now prove the rooted coefficientwise identity used in the rest of the paper. It is the single-occurrence version of the coefficientwise master loop equation~\eqref{eq:coeffwise-MLE}. We first recall the local operations and fix the rooted notation.  Choose cyclic representatives
\[
w_{\ell_i}=e_{i,1}^{\varepsilon_{i,1}}\cdots e_{i,m_i}^{\varepsilon_{i,m_i}},
\qquad \varepsilon_{i,r}\in\{\pm1\},
\]
for the loops of $L=(\ell_1,\ldots,\ell_k)$.  For an oriented edge $e$, set
\[
\begin{aligned}
A_i(e;L)&=\{r:e_{i,r}^{\varepsilon_{i,r}}=e\},\\
B_i(e;L)&=\{r:e_{i,r}^{\varepsilon_{i,r}}=e^{-1}\},\\
C_i(e;L)&=A_i(e;L)\cup B_i(e;L).
\end{aligned}
\]
An occurrence $x\in C_i(e;L)$ is said to have sign $+$ if $x\in A_i(e;L)$, and sign $-$ if $x\in B_i(e;L)$.  We write $[w]$ for the loop represented by the word $w$, up to cyclic rotation.  The rooted surgery $L_{x,y}$ is defined as follows, with the cyclic representatives chosen so that the rooted occurrence $x$ appears first.

If $x$ and $y$ lie on the same loop $\ell_i$, then $L_{x,y}$ is obtained by replacing $\ell_i$ by two loops $(\ell_{i,1},\ell_{i,2})$:
\[
\begin{array}{lll}
[ae be c] & \longrightarrow & ([ae c],[be]),\\[2mm]
[ae^{-1}be^{-1}c] & \longrightarrow & ([ae c],[be^{-1}]),\\[2mm]
[ae be^{-1}c] & \longrightarrow & ([ac],[b]),\\[2mm]
[ae^{-1}be c] & \longrightarrow & ([ac],[b]).
\end{array}
\]
The first two cases are the positive splittings, and the last two are the negative splittings.  If $x\in C_i(e;L)$ and $y\in C_j(e;L)$ with $i\neq j$, then $L_{x,y}$ is obtained by replacing $\ell_i$ and $\ell_j$ by one loop $\ell_{ij}$:
\[
\begin{array}{lll}
([ae b],[ce d]) & \longrightarrow & [ae dc e b],\\[2mm]
([ae^{-1}b],[ce^{-1}d])& \longrightarrow & [ae dc e b],\\[2mm]
([ae b],[ce^{-1}d]) & \longrightarrow & [adcb],\\[2mm]
([ae^{-1}b],[ce d]) & \longrightarrow & [adcb].
\end{array}
\]
The first two cases are the positive mergers, and the last two are the negative mergers. Splittings and mergers are illustrated in Figure~\ref{fig:split-merge}.

\begin{figure}[h!]
\centering
\includegraphics[width=0.9\linewidth]{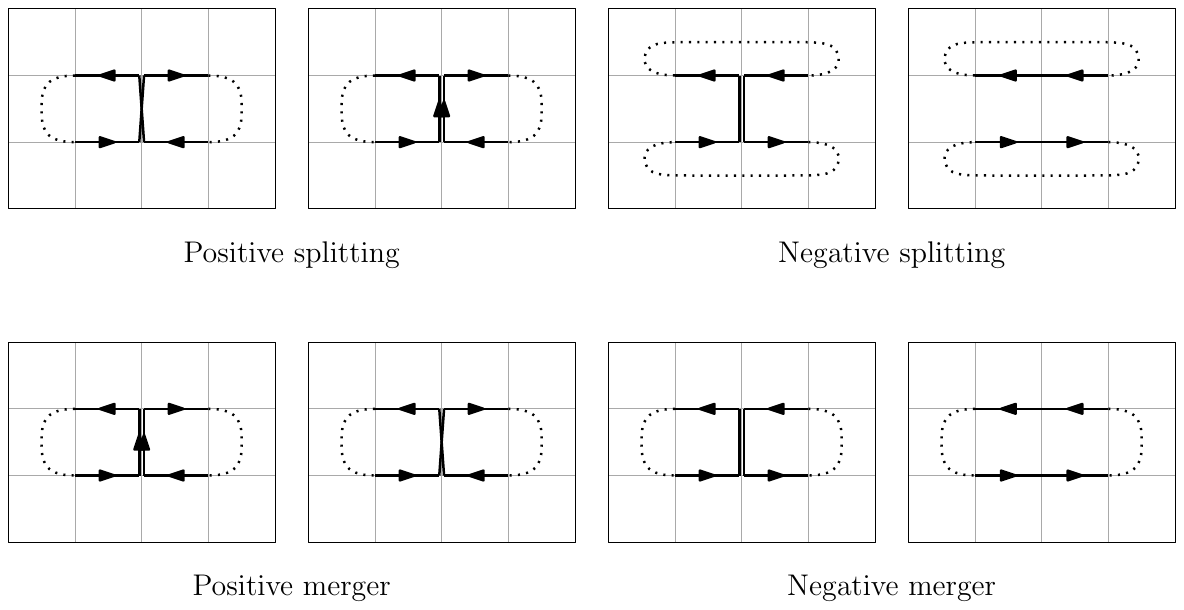}
\caption{Local representations of splittings and mergers. The dotted parts may be arbitrary lattice paths.}
\label{fig:split-merge}
\end{figure}

Define
\[
\sigma_{\mathrm{cj}}(x,y)=
\begin{cases}
-1, & x\text{ and }y\text{ have the same sign},\\
+1, & x\text{ and }y\text{ have opposite signs}.
\end{cases}
\]
Thus positive splittings and positive mergers carry the sign $-1$, whereas negative splittings and negative mergers carry the sign $+1$, exactly as in the unrooted operator $\mathscr L_e$.

For a fixed root $x\in C_i(e;L)$, the rooted loop-only operator is
\begin{equation}\label{eq:rooted-S-operator}
(\mathscr S_x\widehat W_{\Lambda,\bullet})(L,\alpha)
:=
\sum_{\substack{y\neq x\\ y\in \cup_j C_j(e;L)}}
\sigma_{\mathrm{cj}}(x,y) \widehat W_{\Lambda,L_{x,y}}(\alpha).
\end{equation}
If $\mathfrak r(L_{x,y})\ne\dagger$, the outgoing root is $x_{x,y}:=\mathfrak r(L_{x,y})$; if $\mathfrak r(L_{x,y})=\dagger$, the corresponding output is terminal.  The deterministic outgoing-root rule is only a bookkeeping device. The estimate does not rely on monotonicity of the number of occurrences of the rooted edge, which may fail for positive same-orientation splittings. Finiteness at fixed asymptotic order will instead be supplied by the finite-order loop-only entropy estimate, Lemma~\ref{lem:loop-only-entropy}.

We now spell out the rooted loop-plaquette recoupling statement which is used to define the operator $\mathscr B^{\mathrm{root}}_{x,p}$. Fix an oriented incidence $(e,p)$, choose a cyclic representative $\partial p=aeb$, and let $x\in C_i(e;L)$ be a rooted occurrence in the loop $\ell_i$.  Write, in determinant-reduced mixed-tensor notation,
\[
\alpha_p=[\lambda_p^+,\lambda_p^-]_N,
\qquad
n_p^+=|\lambda_p^+|,
\qquad
n_p^-=|\lambda_p^-|,
\qquad
r_p=n_p^++n_p^- .
\]
For $1\le u\le r_p$, set
\[
\gamma_{p,u}:=
\begin{cases}
aeb=\partial p, & 1\le u\le n_p^+,\\[1mm]
b^{-1}e^{-1}a^{-1}=(\partial p)^{-1}, & n_p^+<u\le r_p,
\end{cases}
\]
so that covariant slots carry the positively oriented plaquette word and contravariant slots carry the oppositely oriented plaquette word.  Let $y_u$ be the active occurrence of $e$ or $e^{-1}$ in $\gamma_{p,u}$.  The rooted loop-plaquette surgery
\[
L_{x,p,u}:=\tau_{x,u}(L)
\]
is obtained by replacing $\ell_i$ with the merger of $\ell_i$ and $\gamma_{p,u}$ at the active occurrences $(x,y_u)$: it is the positive merger $\ell_i\oplus_{x,y_u}\gamma_{p,u}$ if the two active occurrences have the same orientation, and the negative merger $\ell_i\ominus_{x,y_u}\gamma_{p,u}$ if they have opposite orientations.  All other loops are left unchanged. Attach to the pair $(x,u)$ the auxiliary mixed tensor space
\[
T^{\mathrm{aux}}_{x,p,u}:=
\begin{cases}
V^{\otimes(n_p^+-1)}\otimes (V^*)^{\otimes n_p^-},
& x\in A_i(e;L),\ 1\le u\le n_p^+,\\[1mm]
V^{\otimes(n_p^++1)}\otimes (V^*)^{\otimes n_p^-},
& x\in B_i(e;L),\ 1\le u\le n_p^+,\\[1mm]
V^{\otimes n_p^+}\otimes (V^*)^{\otimes(n_p^-+1)},
& x\in A_i(e;L),\ n_p^+<u\le r_p,\\[1mm]
V^{\otimes n_p^+}\otimes (V^*)^{\otimes(n_p^--1)},
& x\in B_i(e;L),\ n_p^+<u\le r_p.
\end{cases}
\]
Equivalently, the four cases are: covariant contraction, covariant coevaluation, contravariant coevaluation, and contravariant contraction, respectively. In all cases $T^{\mathrm{aux}}_{x,p,u}$ differs from $T_{\alpha_p}$ by one elementary $V_N$- or $V_N^*$-leg. Denote by $D^x_{e,A}W_L$ the contribution to $\partial_{e,A}W_L$ in which the derivative falls at the single occurrence $x$:
\[
D^x_{e,A}W_L(U)=
\begin{cases}
\displaystyle
\Tr(U_aX_AU_eU_b)\prod_{j\neq i}\Tr(U_{\ell_j}),
& x\in A_i(e;L),\\[2mm]
\displaystyle
-\Tr(U_aU_e^{-1}X_AU_b)\prod_{j\neq i}\Tr(U_{\ell_j}),
& x\in B_i(e;L).
\end{cases}
\]

The following is the rooted specialization of Propositions~4.8 and~4.9 of \cite{Lem26a}, including the explicit isotypic-trace formula for the recoupling coefficients.

\begin{proposition}\label{prop:rooted-local-incidence-recoupling}
For every $1\le u\le r_p$, there is a $\U(N)$-equivariant endomorphism $A_{x,p,u;N}\in\End(T^{\mathrm{aux}}_{x,p,u}),$ independent of the gauge variables, such that
\[
\sum_A D^x_{e,A}W_L(U) \partial_{e,A}\chi_{\alpha_p}(U_{\partial p})
=
\sum_{u=1}^{r_p}
W_{L_{x,p,u}}(U)
\Tr_{T^{\mathrm{aux}}_{x,p,u}}
\bigl(A_{x,p,u;N}\rho_{x,p,u}(U_{\partial p})\bigr).
\]
Moreover, for each $u$, there is a finite set $\mathcal B_{x,p,u}(\alpha_p)$ of stable Pieri moves such that
\begin{equation}\label{eq:rooted-pieri-character-expansion}
\Tr_{T^{\mathrm{aux}}_{x,p,u}}\bigl(A_{x,p,u;N}\rho_{x,p,u}(g)\bigr)=\sum_{\beta\in\mathcal B_{x,p,u}(\alpha_p)}c^N_{x,p,u}(\alpha_p,\beta)\chi_\beta(g),
\end{equation}
where
\begin{equation}\label{eq:rooted-pieri-coefficient}
c^N_{x,p,u}(\alpha_p,\beta)
=
d_\beta^{-1}
\Tr_{T^{\mathrm{aux}}_{x,p,u}}
\bigl(A_{x,p,u;N}P^{T^{\mathrm{aux}}}_{\beta,N}\bigr),
\end{equation}
and $P^{T^{\mathrm{aux}}}_{\beta,N}$ is the orthogonal projection onto the $\beta$-isotypic component.
\end{proposition}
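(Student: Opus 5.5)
The proof splits into two parts of different nature. The first is a local Lie-algebra computation producing the operators $A_{x,p,u;N}$. The second is a Schur-lemma decomposition of the resulting central functions, which gives~\eqref{eq:rooted-pieri-character-expansion}--\eqref{eq:rooted-pieri-coefficient}.

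\emph{Step 1: the derivative of the plaquette character.} Use the trace realization~\eqref{eq:trace-character}:
\[
\chi_{\alpha_p}(U_{\partial p})=\Tr_{T_{\alpha_p}}\bigl(P_{\alpha_p}\rho_{n_p^+,n_p^-}(U_a)\rho_{n_p^+,n_p^-}(U_e)\rho_{n_p^+,n_p^-}(U_b)\bigr),
\]
with $\partial p=aeb$. I will treat the case where $e$ occurs positively and handle $e^{-1}$ by the same computation applied to $U_e^{-1}$. The derivative $\partial_{e,A}$ of $\rho_{n_p^+,n_p^-}(e^{tX_A}U_e)$ is a sum over slots $u$:
\begin{itemize}
\item on a covariant slot ($u\le n_p^+$), $X_A$ acts on slot $u$;
\item on a contravariant slot, $-X_A^{t}$ acts on slot $u$.
\end{itemize}
This produces the sum over $1\le u\le r_p$. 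Rewriting the contravariant slot through the reversed word $(\partial p)^{-1}$ is what makes $\gamma_{p,u}$ equal to $(\partial p)^{-1}$ for $u>n_p^+$.

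\emph{Step 2: summing over $A$.} I contract $\sum_A D^x_{e,A}W_L\,\partial_{e,A}\chi_{\alpha_p}$ using the completeness relation for the Hilbert--Schmidt orthonormal basis of $\mathfrak u(N)$:
\[
\sum_A (X_A)_{ij}(X_A)_{kl}=-\delta_{il}\delta_{jk}.
\]
In index form, this splices the open index line of $\ell_i$ at the occurrence $x$ into slot $u$ of the plaquette tensor. The matrix product along the loop then runs through the slot-$u$ copy of $U_{\partial p}$ or $U_{\partial p}^{-1}$, giving the trace of the merged word $L_{x,p,u}$. The sign of the merger is fixed by comparing the orientations of $x$ and $y_u$.

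The remaining slots still carry $\rho(U_{\partial p})$. Relative to $T_{\alpha_p}$, slot $u$ has either been contracted away or doubled, according to the four cases (covariant contraction, covariant coevaluation, contravariant coevaluation, contravariant contraction). The result is the space $T^{\mathrm{aux}}_{x,p,u}$. I then define $A_{x,p,u;N}$ as $P_{\alpha_p}$ composed with the corresponding evaluation or coevaluation maps on slot $u$, after cyclically moving the factors so that $\rho_{x,p,u}(U_{\partial p})$ stands alone. Equivariance of $A_{x,p,u;N}$ follows from the equivariance of $P_{\alpha_p}$ and of evaluation/coevaluation. It is independent of the gauge variables because only $P_{\alpha_p}$ and fixed Kronecker contractions enter.

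\emph{Step 3: the character expansion.} Since $A:=A_{x,p,u;N}$ is equivariant, I decompose $T^{\mathrm{aux}}_{x,p,u}\simeq\bigoplus_\beta M_\beta\otimes V_\beta$. By Schur's lemma, $A=\bigoplus_\beta A_\beta\otimes\Id_{V_\beta}$, while $\rho(g)=\bigoplus_\beta\Id\otimes\rho_\beta(g)$. Hence
\[
\Tr(A\rho(g))=\sum_\beta\Tr(A_\beta)\chi_\beta(g),
\qquad
\Tr(AP^{T^{\mathrm{aux}}}_{\beta,N})=d_\beta\Tr(A_\beta),
\]
which gives~\eqref{eq:rooted-pieri-coefficient}.

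\emph{Step 4: restricting the support of $\beta$.} By construction, $A$ factors through a copy of $V_{\alpha_p}\otimes V_N^{\pm}$: the range of $P_{\alpha_p}$, tensored with or contracted against one elementary leg. Therefore $A P^{T^{\mathrm{aux}}}_{\beta,N}=0$ unless $V_\beta$ occurs in $V_{\alpha_p}\otimes V_N$ or in $V_{\alpha_p}\otimes V_N^*$. By the Pieri rule~\eqref{eq:Pieri}, and in the stable range by~\eqref{eq:Pieri-stable}--\eqref{eq:Pieri-stable2}, these $\beta$ form a finite set $\mathcal B_{x,p,u}(\alpha_p)$ of stable Pieri moves.

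I expect the main obstacle to be Steps 1--2. The difficulty is not conceptual: it is making the orientation and sign bookkeeping in the four cases match exactly the positive and negative mergers and the space $T^{\mathrm{aux}}_{x,p,u}$. This is most delicate for contravariant slots, where the transpose reverses the word. The clean way to check it is to verify the identity for a single covariant slot with $n_p^-=0$, and then to obtain the dual cases by the substitution $U\mapsto \overline U$ on $V_N^*$ legs.
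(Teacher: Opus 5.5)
The gap is in Step~2, and it is structural rather than a matter of signs. The completeness relation does splice the open index line of $\ell_i$ at $x$ into slot $u$. But after running through the slot-$u$ copy of $U_{\partial p}^{\pm1}$, that line re-enters $P_{\alpha_p}$; it does not close on itself. Write $g$ for the plaquette holonomy read from $e$, $\rho_{\hat u}(g)$ for its action on the slots other than $u$, and $w_u$ for the holonomy of the merged word acting on slot $u$. What the computation actually gives for each $u$ is, up to sign, $\Tr_{T_{\alpha_p}}\bigl(P_{\alpha_p}(w_u\otimes\rho_{\hat u}(g))\bigr)\prod_{j\ne i}\Tr(U_{\ell_j})$. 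Since $P_{\alpha_p}$ is not of the form $\Id_{u}\otimes(\cdot)$, no cyclic rearrangement extracts a separate factor $\Tr(w_u)$. In the expansion~\eqref{eq:projector-walled-brauer-expansion}, only diagrams with a vertical strand from $u$ to $u$ factor this way; the others splice the loop with further copies of $\partial p^{\pm1}$. Your $A_{x,p,u;N}$, a contraction of $P_{\alpha_p}$ at slot $u$, yields $\Tr(w_u)\,\Tr_{T^{\mathrm{aux}}}(A_{x,p,u;N}\rho(g))$, which is a different function. Steps~3--4 are correct Schur-lemma facts about an equivariant operator, but they are applied to something that is not the left-hand side.

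This cannot be repaired by a better choice of $A_{x,p,u;N}$, because the first display fails as written. Take $\alpha_p=[(2),\varnothing]_N$, so $P_{\alpha_p}=\tfrac12(\Id+\mathrm{sw})$ on $V_N^{\otimes2}$, take $x\in A_i(e;L)$, and let $h$ be the holonomy of $\ell_i$ read from $x$. Then $D^x_{e,A}W_L=\Tr(X_Ah)\prod_{j\ne i}\Tr(U_{\ell_j})$ and $\partial_{e,A}\chi_{\alpha_p}(U_{\partial p})=\Tr(g)\Tr(X_Ag)+\Tr(X_Ag^2)$. The left-hand side is therefore $-\bigl(\Tr(g)\Tr(hg)+\Tr(hg^2)\bigr)\prod_{j\ne i}\Tr(U_{\ell_j})$, with $\Tr(hg^2)$ coming from the swap. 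For both $u$, $W_{L_{x,p,u}}=\Tr(hg)\prod_{j\ne i}\Tr(U_{\ell_j})$ and $T^{\mathrm{aux}}_{x,p,u}=V_N$, whose equivariant endomorphisms are scalars. So any right-hand side of the stated form is $c\,\Tr(hg)\Tr(g)\prod_{j\ne i}\Tr(U_{\ell_j})$. With $L=(\partial p)$ and every edge variable other than $U_e=U$ set to $I_N$, this forces $\Tr(U^3)=-(1+c)\Tr(U)\Tr(U^2)$ on $\U(N)$. That is false for $N\ge2$: compare $U=I_N$ with $U=\mathrm{diag}(i,1,\ldots,1)$.

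The four-case table you inherit is also inconsistent. In the opposite-orientation cases, the $U_e^{\pm1}$ on slot $u$ cancels against the loop's $U_e^{\mp1}$, so slot $u$ is absorbed exactly as in the same-orientation cases. The charge balance~\eqref{eq:local-reduced-charge-balance} then forces $s(\beta_p)=s(\alpha_p)-1$ for a covariant slot and $s(\alpha_p)+1$ for a contravariant one, never a coevaluation space. Concretely, take $L=((\partial p)^{-2})$ on a single plaquette with $\alpha_p=[(2),\varnothing]_N$. The Haar-integrated left-hand side is $N+1$. Every term on the right is $\int\Tr(U_e^{-1})\chi_\beta(U_e)\,dU_e$ with $\beta$ a constituent of $V_N^{\otimes3}$, which vanishes by Lemma~\ref{lem:central-selection}.

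A correct statement has to keep the merged loop threaded through the plaquette tensor, as a decorated observable, and nothing in your proposal supplies that. Note that the paper itself gives no argument here; it imports the proposition from \cite{Lem26a}.
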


\begin{definition}
The \emph{rooted loop-plaquette operator} is the operator $\mathscr B_{x,p}^{\mathrm{root}}$ defined by
\begin{equation}\label{eq:rooted-B-operator}
(\mathscr B^{\mathrm{root}}_{x,p}\widehat W_{\Lambda,\bullet})(L,\alpha):=\sum_u\sum_{\beta\in\mathcal B_{x,p,u}(\alpha_p)}c^N_{x,p,u}(\alpha_p,\beta) \widehat W_{\Lambda,L_{x,p,u}}(\alpha^{p\to\beta}).
\end{equation}
\end{definition}
The rooted loop-plaquette operator has the finite local kernel
\begin{equation}\label{eq:Broot-local-kernel}
(\mathscr B^{\mathrm{root}}_{x,p}\widehat W_{\Lambda,\bullet})(L,\alpha)
=
\sum_u\sum_{\beta_p}
b^{N,\mathrm{root}}_{x,p,u}(L,\alpha_p;L_u,x_u,\beta_p)
\widehat W_{\Lambda,L_u}(\alpha^{p\to\beta_p}),
\end{equation}
where $u$ ranges over finitely many tensor slots and local orientation types, $L_u=L_{x,p,u}$, and $x_u=\mathfrak r(L_u)$ if $\mathfrak r(L_u)\ne\dagger$, while $x_u=\dagger$ for terminal outputs, setting
\[
b^{N,\mathrm{root}}_{x,p,u}(L,\alpha_p;L_u,x_u,\beta_p)=c^N_{x,p,u}(\alpha_p,\beta_p)
\]
after grouping terms which give the same output.  If $x_u=\dagger$, the term is terminal and is put into the source.  Moreover, if $\alpha_p$ is the trivial representation, the corresponding rooted plaquette-transfer term is zero.

The same central-character bookkeeping gives the local charge balance which will be used in the area-law argument.  Let $\boldsymbol 1_p\in\Z^{P(\Lambda)}$ be the function equal to $1$ at $p$ and to $0$ elsewhere.  Whenever
\[
b^{N,{\rm root}}_{x,p,u}(L,\alpha_p;L_u,\mathfrak r(L_u),\beta_p)\neq0,
\]
one has
\begin{equation}\label{eq:local-reduced-charge-balance}
j_{L_u}-j_L-\partial^*\bigl((s(\alpha_p)-s(\beta_p))\boldsymbol 1_p\bigr)=0.
\end{equation}
Equivalently, if a transfer is recorded by adding the signed charge $s(\alpha_p)-s(\beta_p)$ to the aggregate plaquette current, then the quantity $j-\partial^*s$ is unchanged by the transfer.

The following lemma gives the rooted version of~\eqref{eq:coeffwise-MLE}.

\begin{lemma}[Rooted coefficientwise master loop equation]\label{lem:rooted-coeff-normal-form}
Let $x$ be a rooted occurrence of the oriented edge $e$ in a loop family $L$.  For every plaquette decoration $\alpha:P(\Lambda)\to\widehat{\U(N)}$,
\begin{equation}\label{eq:rooted-coeff-MLE}
N\widehat W_{\Lambda,L}(\alpha)
=
(\mathscr S_x\widehat W_{\Lambda,\bullet})(L,\alpha)
+
\sum_{p\ni e}(\mathscr B^{\mathrm{root}}_{x,p}\widehat W_{\Lambda,\bullet})(L,\alpha).
\end{equation}
\end{lemma}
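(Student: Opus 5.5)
The plan is a single Haar integration by parts in the variable $U_e$, keeping only the derivative that falls on the rooted occurrence $x$. After gauge fixing, $e$ may be a tree edge; in that case I would either choose $\mathcal T$ avoiding $e$ (the coefficient is gauge invariant) or work with the ungauged integral \eqref{eq:kappa-W}. I start from the integrand
\[
F(U)=W_L(U)\prod_{p}\chi_{\alpha_p}(U_{\partial p}).
\]
For each $A$, left invariance of Haar measure gives $\int \partial_{e,A}G\,dU=0$. I apply this to
\[
G_A(U)=\Tr\bigl(\cdots X_A\cdots\bigr)\prod_{j\ne i}\Tr(U_{\ell_j})\prod_p\chi_{\alpha_p}(U_{\partial p}),
\]
where $X_A$ is inserted at the position of $x$ exactly as in $D^x_{e,A}W_L$: in front of $U_e$ if $x\in A_i(e;L)$, and after $U_e^{-1}$ with a minus sign if $x\in B_i(e;L)$. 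Then I sum over $A$. The Leibniz rule splits $\sum_A\partial_{e,A}G_A$ into three groups of terms.

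\begin{enumerate}
\item \emph{The derivative hits the same occurrence $x$.} Using the completeness relation $\sum_A X_A\otimes X_A=-\sum_{a,b}E_{ab}\otimes E_{ba}$, which follows from the Hilbert--Schmidt normalization, we get $\sum_A X_A^2=-N\,\Id$. This yields $\pm N\,W_L$. The two cases $x\in A$ and $x\in B$ give the same sign after accounting for the sign convention in $D^x$, and this term produces the left-hand side $N\widehat W_{\Lambda,L}(\alpha)$.
\item \emph{The derivative hits another occurrence $y\neq x$ of $e^{\pm1}$ in $L$.} The same completeness relation produces a cut-and-join. When $x$ and $y$ lie on the same loop, the identity $\sum_a \Tr(PE_{ab}QE_{ba})$-type gives $\Tr(P)\Tr(Q)$, a splitting. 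When they lie on different loops, $\Tr(PE_{ab})\Tr(QE_{ba})=\Tr(PQ)$ gives a merger. The sign is $\sigma_{\mathrm{cj}}(x,y)$: two minus signs meet for like orientations, and for unlike orientations the $U_e U_e^{-1}$ cancellation produces the negative surgeries listed above. Summing over $y$ gives $\mathscr S_x\widehat W$ from \eqref{eq:rooted-S-operator}.
\item \emph{The derivative hits a plaquette character $\chi_{\alpha_p}(U_{\partial p})$ with $p\ni e$.} This group is exactly $\sum_A D^x_{e,A}W_L\,\partial_{e,A}\chi_{\alpha_p}$. Proposition~\ref{prop:rooted-local-incidence-recoupling} rewrites it as $\sum_u W_{L_{x,p,u}}\Tr(A_{x,p,u;N}\rho(U_{\partial p}))$, and \eqref{eq:rooted-pieri-character-expansion} turns it into $\sum_{u,\beta}c^N_{x,p,u}(\alpha_p,\beta)\,W_{L_{x,p,u}}\,\chi_\beta(U_{\partial p})$. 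Multiplying by the remaining characters and integrating gives $\widehat W_{\Lambda,L_{x,p,u}}(\alpha^{p\to\beta})$, that is, $\mathscr B^{\mathrm{root}}_{x,p}\widehat W$ by definition \eqref{eq:rooted-B-operator}.
\end{enumerate}

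Moving the term from group~1 to the other side gives \eqref{eq:rooted-coeff-MLE}. Overall signs may have to be absorbed into the convention for $\sigma_{\mathrm{cj}}$ and for $A_{x,p,u;N}$, and I would check these by comparing with \eqref{eq:coeffwise-MLE}: summing the rooted identity over all occurrences $x$ of $e$ must reproduce the unrooted equation of \cite{Lem26a}.

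Two remaining points need care.
\begin{itemize}
\item \emph{Bookkeeping of $e$ inside plaquette boundaries.} If $e$ occurs in $\partial p$, the chain rule on $\chi_{\alpha_p}(U_{\partial p})$ is already packaged inside Proposition~\ref{prop:rooted-local-incidence-recoupling}, so nothing further is needed. Terminal outputs, where $\mathfrak r=\dagger$, are just scalar traces, and they are included in the same formulas.
\item \emph{Sign consistency.} This is the main obstacle: getting the signs of the same-occurrence self-term and of the inverse-orientation cases consistent with the sign factor $\sigma_{\mathrm{cj}}$ and with the sign convention already built into $c^N_{x,p,u}$. I would verify it on a single loop $[e e^{-1}]$ and on a single plaquette boundary before trusting the general bookkeeping.
\end{itemize}
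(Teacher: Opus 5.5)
Your proposal is correct and follows essentially the paper's route: integrate by parts in $U_e$ against the rooted derivative $D^x_{e,A}W_L$ (the paper uses $Y^x_{e,A}=-D^x_{e,A}W_L$, which only flips a global sign), then split via Leibniz into three groups: the self-term from $\sum_A X_A^2=-NI_N$, the cut-and-join terms carrying $\sigma_{\mathrm{cj}}(x,y)$, and the mixed incidence terms, which are identified with $\mathscr B^{\mathrm{root}}_{x,p}$ through Proposition~\ref{prop:rooted-local-incidence-recoupling}. Your sign worry resolves without changing any convention: with your $G_A$, the self-term equals $-N W_L$ in both orientation cases, each partner $y$ contributes exactly $+\sigma_{\mathrm{cj}}(x,y)W_{L_{x,y}}$, and the mixed term is literally the quantity in that proposition, so moving the self-term across gives \eqref{eq:rooted-coeff-MLE} directly.
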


\begin{proof}
Let $x\in C_i(e;L)$ be the rooted occurrence.  If $x\in A_i(e;L)$, choose a cyclic representative $w_{\ell_i}=a e b,$ where the displayed $e$ is the occurrence $x$.  If $x\in B_i(e;L)$, choose instead $w_{\ell_i}=a e^{-1}b.$ Let $(X_A)_{A=1}^{N^2}$ be an orthonormal basis of $\mathfrak u(N)$ for the Hilbert--Schmidt inner product. We shall use the ``magic formulas'' \cite{DHK13}
\begin{equation}\label{eq:magic-formula}
\begin{aligned}
\sum_A X_A^2&=-NI_N,\\
\sum_A X_AUX_A&=-\Tr(U)I_N,\\
\sum_A\Tr(X_AU)\Tr(X_AV)&=-\Tr(UV).
\end{aligned}
\end{equation}
If we set $Y^x_{e,A}(U):=-D^x_{e,A}W_L(U)$, then Haar invariance in the $U_e$-variable gives, for every $A$,
\[
\int_{\U(N)^{E(\Lambda)}}\partial_{e,A}
\left(
Y^x_{e,A}(U)\prod_{p\in P(\Lambda)}\chi_{\alpha_p}(U_{\partial p})
\right)dU=0.
\]
Summing over $A$, we obtain
\begin{align}\label{eq:rooted-ibp-expanded}
0={}&
\int\sum_A(\partial_{e,A}Y^x_{e,A})(U)
\prod_{p}\chi_{\alpha_p}(U_{\partial p})\,dU\notag\\
&+
\sum_{p\ni e}
\int\sum_A Y^x_{e,A}(U)
\partial_{e,A}\chi_{\alpha_p}(U_{\partial p})
\prod_{q\neq p}\chi_{\alpha_q}(U_{\partial q})\,dU.
\end{align}

We first identify the loop-only part.  Since $Y^x_{e,A}$ is the negative rooted derivative, differentiating the same occurrence $x$ once more gives the positive diagonal term.  Indeed, if $x\in A_i(e;L)$, then
\[
\sum_A \partial_{e,A}\bigl(-\Tr(U_aX_AU_eU_b)\bigr)
=
-\sum_A\Tr(U_aX_A^2U_eU_b)
=
N\Tr(U_{\ell_i}),
\]
and the case $x\in B_i(e;L)$ is identical, using the first magic formula~\eqref{eq:magic-formula} and $\partial_{e,A}U_e^{-1}=-U_e^{-1}X_A$.  After multiplying by the undifferentiated traces, this gives $N W_L(U)$.

If the second derivative hits another occurrence $y\neq x$ of $e^{\pm1}$, the magic formulas give the rooted cut-and-join terms.  Let us spell out the sign using the splittings and mergers defined above.  If $x$ and $y$ have the same sign, the contraction is one of the positive splittings or positive mergers, and its contribution to the right-hand side of the pointwise identity below is $+W_{L_{x,y}}(U)$; this is exactly $-\sigma_{\mathrm{cj}}(x,y)W_{L_{x,y}}(U)$ because $\sigma_{\mathrm{cj}}(x,y)=-1$.  If $x$ and $y$ have opposite signs, the derivative of the inverse occurrence contributes one additional minus sign, the contraction is one of the negative splittings or negative mergers, and the contribution is $-W_{L_{x,y}}(U)$; this is $-\sigma_{\mathrm{cj}}(x,y)W_{L_{x,y}}(U)$ because $\sigma_{\mathrm{cj}}(x,y)=+1$.  Equivalently, the pointwise loop-only identity is
\begin{equation}\label{eq:rooted-loop-only-pointwise}
\sum_A\partial_{e,A}Y^x_{e,A}(U)
=
N W_L(U)
-
\sum_{\substack{y\neq x\\ y\in\cup_j C_j(e;L)}}
\sigma_{\mathrm{cj}}(x,y) W_{L_{x,y}}(U).
\end{equation}
By the definitions of $L_{x,y}$ and $\sigma_{\mathrm{cj}}(x,y)$, same-trace pairs give the positive and negative splittings, while distinct-trace pairs give the positive and negative $\U(N)$-mergers.  Integrating~\eqref{eq:rooted-loop-only-pointwise} against the undifferentiated plaquette characters gives
\begin{equation}\label{eq:rooted-loop-only-integrated}
\int\sum_A(\partial_{e,A}Y^x_{e,A})(U)
\prod_p\chi_{\alpha_p}(U_{\partial p}) dU
=
N\widehat W_{\Lambda,L}(\alpha)
-
(\mathscr S_x\widehat W_{\Lambda,\bullet})(L,\alpha).
\end{equation}

We now identify the mixed terms.  Since $Y^x_{e,A}=-D^x_{e,A}W_L$, the second term in~\eqref{eq:rooted-ibp-expanded} is the negative of the rooted mixed incidence term
\[
I^{\mathrm{root}}_{x,p}(L,\alpha)
:=
\int
\sum_A
D^x_{e,A}W_L(U) 
\partial_{e,A}\chi_{\alpha_p}(U_{\partial p})
\prod_{q\neq p}\chi_{\alpha_q}(U_{\partial q}) dU.
\]
For fixed $x$ and $p$, Proposition~\ref{prop:rooted-local-incidence-recoupling} applies to the single rooted occurrence $x$.  It gives the finite expansion
\begin{align*}
I^{\mathrm{root}}_{x,p}(L,\alpha)
&=
\sum_u\sum_{\beta\in\mathcal B_{x,p,u}(\alpha_p)}
c^N_{x,p,u}(\alpha_p,\beta)
\widehat W_{\Lambda,L_{x,p,u}}(\alpha^{p\to\beta})\\
&=(\mathscr B^{\mathrm{root}}_{x,p}\widehat W_{\Lambda,\bullet})(L,\alpha).
\end{align*}
The construction is local at the incidence $(e,p)$: it changes no plaquette label except $\alpha_p$, and the loop is changed only by the loop-plaquette surgery $L\mapsto L_{x,p,u}$.  The sum is finite because, for fixed $\alpha_p$, there are finitely many active tensor slots in the mixed-tensor realization of $\chi_{\alpha_p}$, and the auxiliary mixed tensor space has only finitely many irreducible summands.  If $\alpha_p=0$, the character $\chi_{\alpha_p}$ is constant, so $\partial_{e,A}\chi_{\alpha_p}(U_{\partial p})=0$, and the whole rooted plaquette-transfer term vanishes.

Substituting~\eqref{eq:rooted-loop-only-integrated} and the last identity into~\eqref{eq:rooted-ibp-expanded} gives
\[
\begin{aligned}
0={}&N\widehat W_{\Lambda,L}(\alpha)
-(\mathscr S_x\widehat W_{\Lambda,\bullet})(L,\alpha)\\
&-\sum_{p\ni e}
(\mathscr B^{\mathrm{root}}_{x,p}\widehat W_{\Lambda,\bullet})(L,\alpha).
\end{aligned}
\]
which is equivalent to~\eqref{eq:rooted-coeff-MLE}.
\end{proof}

By construction,
\[
\mathscr L_e=-N\sum_{i:e\in D_i(L)}m_i(e;L) \Id+
\sum_{x\in\cup_i C_i(e;L)}\mathscr S_x,
\qquad
\mathscr B_{e,p}=
\sum_{x\in\cup_i C_i(e;L)}\mathscr B^{\mathrm{root}}_{x,p},
\]
with the same conventions as in Subsection~\ref{subsec:coeff-MLE}. Summing the rooted coefficientwise master loop equation over all occurrences $x\in\cup_i C_i(e;L)$ gives
\[
N\sum_i m_i(e;L)\widehat W_{\Lambda,L}(\alpha)
=
\sum_x(\mathscr S_x\widehat W_{\Lambda,\bullet})(L,\alpha)
+
\sum_{p\ni e}\sum_x
(\mathscr B^{\mathrm{root}}_{x,p}\widehat W_{\Lambda,\bullet})(L,\alpha),
\]
which is equivalent, after moving the diagonal term to the left, to the unrooted coefficientwise equation recalled in~\eqref{eq:coeffwise-MLE}.

\subsection{The rooted heat-kernel equation}

We now insert the heat-kernel spectral weights into the rooted coefficientwise master equation, and put it in a form that will become exploitable later.

The plaquette term in Lemma~\ref{lem:rooted-coeff-normal-form} is moved onto the compactly supported plaquette decoration by adjunction with respect to the reference law $\widehat\Pbb_{T,N}$. This makes the resulting kernel explicit.

Let $X=(L,x,\zeta)\in\mathfrak X_\Lambda$ and let $p\ni e_x$.  Define
\[
A_p(\zeta)=
\begin{cases}
\{\zeta(p)\},& p\in S(\zeta),\\
\widehat{\U(N)},& p\notin S(\zeta).
\end{cases}
\]
For an input label $\alpha_p\in A_p(\zeta)$ and an output label $\beta_p$, put
\[
X_{u,\beta_p}:=(L_u,\mathfrak r(L_u),\zeta^{p\to\beta_p})
\]
whenever $\mathfrak r(L_u)\ne\dagger$.  If $\beta_p=0$, the update records the constraint that the output plaquette label is trivial.  If $\mathfrak r(L_u)=\dagger$, the output is terminal and is put into the source.

\begin{definition}\label{def:raw-transfer}
The \emph{rooted plaquette-transfer kernel} is the kernel on $\mathfrak X_\Lambda\times\mathfrak X_\Lambda$ defined by
\begin{equation}\label{eq:raw-transfer-kernel}
\begin{aligned}
\mathcal K_{\Lambda,T,N}(X,X')
:={}&\frac1N\sum_{p\ni e_x}
\sum_{\substack{\alpha_p\in A_p(\zeta)\\ \alpha_p\ne0}}
\sum_u\sum_{\beta_p:X'=X_{u,\beta_p}}\\
&\times
\frac{\widehat{Q_{T,N}^\HK}(\alpha_p)}
{\widehat{Q_{T,N}^\HK}(\beta_p)}
 b^{N,\mathrm{root}}_{x,p,u}
 (L,\alpha_p;L_u,\mathfrak r(L_u),\beta_p).
\end{aligned}
\end{equation}
\end{definition}

For fixed $\alpha_p$, the local recoupling support in $\beta_p$ is finite. If $p\notin S(\zeta)$, the sum over $\alpha_p$ is over all input plaquette labels. The summability needed for the exact finite-$N$ identity is proved in Theorem~\ref{thm:raw-rooted-HK-equation}; uniform large-$N$ coefficient estimates are obtained through the truncation argument of Subsection~\ref{subsec:coefficientwise-construction}.

The kernel $\mathcal K_{\Lambda,T,N}$ is obtained by taking the local loop-plaquette recoupling coefficient of the finite-$N$ rooted master loop equation and moving the plaquette label from the observable side to the spectral side by adjunction with respect to the heat-kernel reference law.  The ratio $\widehat{Q_{T,N}^\HK}(\alpha_p)/\widehat{Q_{T,N}^\HK}(\beta_p)$ is precisely the cost of this adjunction.

We also make the loop-only part and the terminal source explicit.  If a non-terminal channel-and-join surgery at the rooted occurrence $x$ and a partner occurrence $y$ produces the rooted state $X_{x,y}:=(L_{x,y},\mathfrak r(L_{x,y}),\zeta)$, set
\begin{equation}\label{eq:loop-only-operator-definition}
(\mathcal T_{\Lambda,T,N}F)(X):=\sum_{\substack{y\neq x: y\in\cup_i C_i(e_x;L)\\ L_{x,y}\ {\rm non\text{-}terminal}}}\sigma_{\rm cj}(x,y) \theta_N(L,L_{x,y}) F(X_{x,y}),
\end{equation}
where $\theta_N(L,L_{x,y}):=N^{\#L_{x,y}-\#L-1}.$ Thus same-trace splittings have coefficient of order $N^0$, while mergers
carry the factor $N^{-2}$.

The terminal source is the part of the same rooted equation whose output loop family has no non-trivial edge occurrence.  More explicitly, define
\begin{align*}
G^0_{\Lambda,T,N}(X)
:={}&
\sum_{\substack{y\ne x: y\in\cup_i C_i(e_x;L)\\ \mathfrak r(L_{x,y})=\dagger}}
\sigma_{\rm cj}(x,y)\theta_N(L,L_{x,y})
\Phi_{\Lambda,T,N}(L_{x,y};\zeta)\\
&+\frac1N\sum_{p\ni e_x}
\sum_{\substack{\alpha_p\in A_p(\zeta)\\ \alpha_p\ne0}}
\sum_{\substack{u,\beta_p:\ \mathfrak r(L_u)=\dagger}}
\frac{\widehat{Q_{T,N}^\HK}(\alpha_p)}
{\widehat{Q_{T,N}^\HK}(\beta_p)}\\
&\hspace{2cm}\times
b^{N,{\rm root}}_{x,p,u}(L,\alpha_p;L_u,\dagger,\beta_p)
\Phi_{\Lambda,T,N}(L_u;\zeta^{p\to\beta_p}).
\end{align*}
Thus $G^0_{\Lambda,T,N}$ is a scalar function on $\mathfrak X_\Lambda$. The first line is the terminal loop-only source and the second line is the terminal plaquette-transfer source; both use exactly the same normalized-output factors as in \eqref{eq:loop-only-operator-definition} and \eqref{eq:raw-transfer-kernel}.

\begin{theorem}\label{thm:raw-rooted-HK-equation}
For every rooted state $X=(L,x,\zeta)\in\mathfrak X_\Lambda$ in a finite volume,
\begin{equation}\label{eq:raw-rooted-HK-equation}
\begin{aligned}
\Psi_{\Lambda,T,N}(X)
={}&G^0_{\Lambda,T,N}(X)
+\mathcal T_{\Lambda,T,N}\Psi_{\Lambda,T,N}(X)\\
&+\sum_{X'\in\mathfrak X_\Lambda}
\mathcal K_{\Lambda,T,N}(X,X')
\Psi_{\Lambda,T,N}(X').
\end{aligned}
\end{equation}
where $\mathcal T_{\Lambda,T,N}$ is the finite loop-only operator coming from $S_x$.
\end{theorem}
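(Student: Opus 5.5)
The plan is to start from the rooted coefficientwise master loop equation, Lemma~\ref{lem:rooted-coeff-normal-form}, applied at the root $x$ of $X=(L,x,\zeta)$. For every plaquette decoration $\alpha$ it reads
\[
N\widehat W_{\Lambda,L}(\alpha)=(\mathscr S_x\widehat W_{\Lambda,\bullet})(L,\alpha)+\sum_{p\ni e_x}(\mathscr B^{\mathrm{root}}_{x,p}\widehat W_{\Lambda,\bullet})(L,\alpha).
\]
We multiply this identity by $N^{-\#L-1}\mathbf 1_\zeta(\alpha)\kappa_{\Lambda,T,N}(\alpha)$ and sum over $\alpha$. After passing through the local-channel rewriting that converts $\kappa\widehat W$ into $\mathcal Z^{|P|}\widehat\E[\mathcal O^L]$ (so that the dimension factors $d_{\alpha_p}$ are part of $\widehat Q^{\HK}_{T,N}$), we divide by the vacuum expectation $\widehat\E_{\Lambda,T,N}[\mathcal O^\varnothing_{\Lambda,N}]$. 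The left-hand side then becomes $\Phi_{\Lambda,T,N}(L;\zeta)=\Psi_{\Lambda,T,N}(X)$.

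The loop-only term is handled next. Since $\mathbf 1_\zeta$ and $\kappa$ do not depend on $y$, each summand $\sigma_{\rm cj}(x,y)\widehat W_{\Lambda,L_{x,y}}$ gives, after normalization, $\sigma_{\rm cj}\,N^{\#L_{x,y}-\#L-1}\Phi_{\Lambda,T,N}(L_{x,y};\zeta)$; the extra power of $N$ is exactly $\theta_N(L,L_{x,y})$. The outputs then split according to $\mathfrak r(L_{x,y})$. Non-terminal outputs, rerooted by $\mathfrak r$, form $\mathcal T_{\Lambda,T,N}\Psi_{\Lambda,T,N}(X)$, using that $\Psi$ does not depend on the root. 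Terminal outputs give the first line of $G^0_{\Lambda,T,N}$. This part is a finite sum, so nothing needs to be checked.

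For the plaquette term at a fixed $p\ni e_x$, we expand $\mathscr B^{\mathrm{root}}_{x,p}$ as in~\eqref{eq:Broot-local-kernel}. Each term has the form $c^N_{x,p,u}(\alpha_p,\beta_p)\widehat W_{\Lambda,L_u}(\alpha^{p\to\beta_p})$ and vanishes when $\alpha_p=0$. We then change variables $\alpha\mapsto\alpha'=\alpha^{p\to\beta_p}$. The weight satisfies $\kappa(\alpha)=\kappa(\alpha')\,\widehat Q(\alpha_p)/\widehat Q(\beta_p)$. The indicator satisfies $\mathbf 1_\zeta(\alpha)=\mathbf 1_{\{\alpha_p\in A_p(\zeta)\}}\mathbf 1_{\zeta^{p\to\beta_p}}(\alpha')$ on the other coordinates. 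This last identity holds because $A_p(\zeta)$ is $\{\zeta(p)\}$ or all of $\widehat{\U(N)}$, and the new constraint pins $\alpha'_p=\beta_p$. Summing over $\alpha'$ produces $\Phi_{\Lambda,T,N}(L_u;\zeta^{p\to\beta_p})$ multiplied by the ratio $\widehat Q(\alpha_p)/\widehat Q(\beta_p)$ and the coefficient $b^{N,\mathrm{root}}$. The Wilson normalization contributes $N^{\#L_u-\#L-1}$, and since the loop-plaquette surgery replaces one loop by one loop, this equals $N^{-1}$, the prefactor in~\eqref{eq:raw-transfer-kernel}. We finally group the outputs by $X'=X_{u,\beta_p}$, which gives the sum $\sum_{X'}\mathcal K(X,X')\Psi(X')$. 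Terminal outputs give the second line of $G^0$.

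The main obstacle is justifying the interchange of summations when $p\notin S(\zeta)$. There $\alpha_p$ ranges over all of $\widehat{\U(N)}$, and the adjunction ratio $\widehat Q(\alpha_p)/\widehat Q(\beta_p)$ is not small. At finite $N$, however, everything is absolutely convergent. The term $\kappa(\alpha)|\widehat W_{\Lambda,L}(\alpha)|$ is summable because $|\widehat W|\le N^{\#L}\prod_p d_{\alpha_p}$ and $\sum_\lambda d_\lambda^2 e^{-TC_2(\lambda)/2N}<\infty$. The $\beta_p$ run over finitely many Pieri neighbours with coefficients $|c^N|\le \mathrm{poly}(m(\alpha_p))$, by~\eqref{eq:rooted-pieri-coefficient} and the bounded norm of $A_{x,p,u;N}$. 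Applying Fubini with the Gaussian decay of $\kappa(\alpha)$ justifies the reindexing, and it also shows that the series defining the kernel action converges absolutely. The only inputs are the finite-$N$ identity itself and the original Casimir weights, so no uniform-in-$N$ estimate is required here.
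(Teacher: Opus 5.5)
Your proposal is correct and follows the paper's proof essentially step by step: multiply the rooted coefficientwise equation by $N^{-\#L-1}\mathbf 1_\zeta(\alpha)\kappa_{\Lambda,T,N}(\alpha)$ and normalize by the vacuum; send non-terminal cut-and-join outputs to $\mathcal T_{\Lambda,T,N}$ and terminal ones to $G^0_{\Lambda,T,N}$; reindex the transfer terms via $\gamma=\alpha^{p\to\beta_p}$ using $\kappa(\alpha)=\kappa(\gamma)\widehat Q(\alpha_p)/\widehat Q(\beta_p)$ and $\#L_u=\#L$, with finite-$N$ absolute summability justifying the interchanges. One small correction: the bound $|c^N_{x,p,u}(\alpha_p,\beta)|\le\mathrm{poly}(m(\alpha_p))$ does not follow from the norm of $A_{x,p,u;N}$ alone (the isotypic trace also carries a multiplicity factor, and $m$ does not see determinant shifts); however, at fixed $N$ one has $C_2(\lambda)\ge\sum_i\lambda_i^2$, so the Gaussian weight beats any exponential bound and your Fubini step still goes through.
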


\begin{proof}
We first record the finite-$N$ summability which justifies the manipulations below.  For fixed $N$, $T>0$, and finite $\Lambda$, the heat-kernel identity gives
\[
\sum_{\lambda\in\widehat{\U(N)}} d_\lambda^2e^{-\frac{T}{2N}C_2(\lambda)}=p_{T/N}(I_N)<\infty.
\]
Since the heat kernel is smooth, the same estimate remains true after inserting any fixed number of left-invariant derivatives, equivalently after multiplying the Fourier coefficient by a fixed polynomial in $C_2(\lambda)^{1/2}$.  Together with $|\chi_\lambda(g)|\le d_\lambda$ and $|\Tr(U_\ell)|\le N$, this implies absolute convergence of the decorated Wilson sums below and of the one-derivative sums obtained from Lemma~\ref{lem:rooted-coeff-normal-form}. Set $\mathcal Z_{\Lambda,T,N}:=\widehat\E_{\Lambda,T,N}\big[\mathcal O^\varnothing_{\Lambda,N}(\lambda)\big].$ For every loop family $M$ and every compactly supported plaquette constraint $\eta$, the Fourier-side representation gives
\[
\Phi_{\Lambda,T,N}(M;\eta)
=
\mathcal Z_{\Lambda,T,N}^{-1} Z_{T,N}^{-|P(\Lambda)|}
\sum_{\alpha:P(\Lambda)\to\widehat{\U(N)}}
1_\eta(\alpha) \kappa_{\Lambda,T,N}(\alpha)
N^{-\#M}\widehat W_{\Lambda,M}(\alpha).
\]
Indeed, the factor $Z_{T,N}^{-|P(\Lambda)|}$ comes from the reference law $\widehat\Pbb_{\Lambda,T,N}$, while the plaquette dimension factors in $\widehat Q^{\HK}_{T,N}$ are exactly those included in $\mathcal O^M_{\Lambda,N}$.

Fix $X=(L,x,\zeta)$.  Multiply~\eqref{eq:rooted-coeff-MLE} by
\[
Z_{T,N}^{-|P(\Lambda)|}\mathcal Z_{\Lambda,T,N}^{-1} 
1_\zeta(\alpha) \kappa_{\Lambda,T,N}(\alpha) N^{-\#L-1}
\]
and sum over all plaquette decorations $\alpha$.  The diagonal term gives
\[
\mathcal Z_{\Lambda,T,N}^{-1} Z_{T,N}^{-|P(\Lambda)|}
\sum_\alpha 1_\zeta(\alpha)\kappa_{\Lambda,T,N}(\alpha)
N^{-\#L}\widehat W_{\Lambda,L}(\alpha)
=
\Psi_{\Lambda,T,N}(X).
\]

We next identify the loop-only terms.  If the rooted cut-and-join output associated with $y$ is non-terminal and equal to $X_{x,y}=(L_{x,y},\mathfrak r(L_{x,y}),\zeta)$, then its contribution is
\[
\begin{aligned}
&\mathcal Z_{\Lambda,T,N}^{-1} Z_{T,N}^{-|P(\Lambda)|}
\sum_\alpha
1_\zeta(\alpha)\kappa_{\Lambda,T,N}(\alpha)
N^{-\#L-1}\sigma_{\rm cj}(x,y)
\widehat W_{\Lambda,L_{x,y}}(\alpha)
\\
&\qquad=
\sigma_{\rm cj}(x,y)N^{\#L_{x,y}-\#L-1}
\Psi_{\Lambda,T,N}(L_{x,y},\mathfrak r(L_{x,y}),\zeta).
\end{aligned}
\]
Summing over all non-terminal partners $y$ gives $(\mathcal T_{\Lambda,T,N}\Psi_{\Lambda,T,N})(X)$.  The terminal partners give the loop-only part of $G^0_{\Lambda,T,N}(X)$.

We now treat the plaquette-transfer terms.  Fix $p\ni e_x$, an input label $\alpha_p\in A_p(\zeta)$ with $\alpha_p\ne0$, a local slot $u$, and an output label $\beta_p$.  Put $\gamma=\alpha^{p\to\beta_p}.$ The labels away from $p$ are unchanged and $\gamma_p=\beta_p$.  By the definition of the updated constraint, $1_\zeta(\alpha)=1_{\zeta^{p\to\beta_p}}(\gamma)$ on this summand.
The product heat-kernel coefficient factorizes as
\[
\kappa_{\Lambda,T,N}(\alpha)=\kappa_{\Lambda,T,N}(\gamma)\frac{\widehat{Q_{T,N}^\HK}(\alpha_p)}{\widehat{Q_{T,N}^\HK}(\beta_p)}.
\]
Therefore, if the output is non-terminal, its contribution is
\[
N^{\#L_u-\#L-1}\frac{\widehat{Q_{T,N}^\HK}(\alpha_p)}{\widehat{Q_{T,N}^\HK}(\beta_p)} b^{N,{\rm root}}_{x,p,u}(L,\alpha_p;L_u,\mathfrak r(L_u),\beta_p)\Psi_{\Lambda,T,N}(L_u,\mathfrak r(L_u),\zeta^{p\to\beta_p}).
\]
For the rooted loop-plaquette surgery considered here one has $\#L_u=\#L$; hence this is precisely the contribution encoded by the prefactor $1/N$ in \eqref{eq:raw-transfer-kernel}.  Summing over $p$, $\alpha_p$, $u$ and $\beta_p$ gives
\[
\sum_{X'\in\mathfrak X_\Lambda}\mathcal K_{\Lambda,T,N}(X,X')\Psi_{\Lambda,T,N}(X').
\]
The terminal plaquette-transfer outputs give the second part of $G^0_{\Lambda,T,N}(X)$, with the same normalized-output factor.

Combining the diagonal term, the loop-only terms and the plaquette-transfer terms gives \eqref{eq:raw-rooted-HK-equation}.
\end{proof}

The equation \eqref{eq:raw-rooted-HK-equation} is closed only after the terminal outputs have themselves been controlled. We record this control separately, because it will be used as an input in the rooted trajectory expansion.

\begin{lemma}[Linked-cluster cancellation for a local constraint]\label{lem:constraint-component-cancellation}
Let $\zeta$ be a compactly supported plaquette constraint. In the strong-coupling local-channel expansion of
\[
\frac{\widehat\E_{\Lambda,T,N}[\mathbf 1_\zeta(\lambda)\mathcal O^\varnothing_{\Lambda,N}(\lambda)]}
{\widehat\E_{\Lambda,T,N}[\mathcal O^\varnothing_{\Lambda,N}(\lambda)]},
\]
the normalized ratio has an absolutely convergent linked-cluster expansion in which every surviving connected cluster intersects $S(\zeta)$. This remains true when the prescribed label at a plaquette of $S(\zeta)$ is the trivial representation.
\end{lemma}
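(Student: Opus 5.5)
The plan is a standard polymer/linked-cluster argument on the vacuum Fourier-side measure, using the pinned block estimates of Section~\ref{sec:pinned-haar-projection} with the constraint support $S(\zeta)$ playing the role of the Wilson pin. Both numerator and denominator are expanded by the projection-supported local-channel formula~\eqref{eq:local-channel-observable} with $L=\varnothing$. For a fixed label field and channel, the active incidence graph splits into connected active supports. With no Wilson insertion, Haar projections on distinct components act on disjoint tensor factors, so each summand factorizes over components. Resumming channels at fixed support, as in Lemma~\ref{lem:support-resummation}, turns the denominator into a hard-core polymer gas. The polymers are connected supports $Y$, and each carries the activity
\[
w(Y)=\widehat\E_{Y,T,N}\bigl[\mathcal O^{\varnothing}_{Y,N}(\lambda_Y)\bigr].
\]
The numerator is the same gas, except that polymers meeting $S(\zeta)$ carry modified activities $w_\zeta(Y)$. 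These include the indicator $\mathbf 1_\zeta$ restricted to $P(Y)\cap S(\zeta)$ and the prescribed-label weights at the plaquettes of $S(\zeta)$. The constraint at a plaquette $p\in S(\zeta)$ not covered by any active polymer is also handled here. It is treated as a single-site factor $\widehat\Pbb_{T,N}(\lambda_p=\zeta(p))$ attached to a distinguished ``root'' polymer consisting of $S(\zeta)$ and its touching supports. In particular, the trivial-label case $\zeta(p)=0$ is just such a factor and costs nothing special.

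The key estimate is a Koteck\'y--Preiss criterion $\sum_{Y\ni v}|w(Y)|e^{a|Y|}\le a$, uniform in $N$ and $\Lambda$. I would obtain it from Theorem~\ref{thm:stable-hk-block} and its proof, rerun with $S(\zeta)$, or any fixed plaquette, replacing the marked Wilson edges as the pin. Three ingredients carry over. The peeling argument needs only that $K$ contain the pinned vertices. Lemma~\ref{lem:minimal-support-size} holds because every active plaquette in a vacuum polymer has $\mathcal E\ge1$, so no $C_L$ correction arises. Corollary~\ref{cor:coarse-nonspectral-block} with Proposition~\ref{prop:nonstable-polymer-tail} handles large masses. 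Together these give $|w(Y)|\le e^{-\mu(T)|Y|}$. For unpinned vacuum supports there is no marked collar, so $b=0$. Peeling then terminates at a residual core that may contain cycles, and Theorem~\ref{thm:core-plancherel} still gives $C_d^{|K|+M}$. Since $\#\{Y\ni v,\ |Y|=n\}\le C_d^n$, the criterion holds once $T>T_0(d)$. The modified activities satisfy the same bound up to a factor $C_\zeta^{|S(\zeta)|}$, because fixing a label at a plaquette replaces a sum by one term. The Gaussian weight then bounds that term by the same single-site estimates (Lemma~\ref{lem:product-tilted-mass-tail}).

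With convergence in hand, the cluster expansion writes $\log$ of each partition function as an absolutely convergent sum over connected clusters of polymers, weighted by Ursell coefficients. Clusters not intersecting $S(\zeta)$ carry identical activities in numerator and denominator, so they cancel in $\log(\text{num})-\log(\text{den})$. The ratio is therefore $\exp$ of a sum over clusters meeting $S(\zeta)$. Equivalently, after expanding the exponential and re-collecting, it is a convergent sum of connected clusters each intersecting $S(\zeta)$. The sum converges absolutely by the Koteck\'y--Preiss tree-graph bound, uniformly in $\Lambda$ and $N$.

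The main obstacle is the factorization step, which needs the vacuum block observable to be genuinely multiplicative over disjoint connected supports after resummation. Two supports that share no active incidence may still share a non-tree edge vertex through vacuum incidences, or interact through the gauge-fixing tree. I would check this first in the projection-supported form: at a shared edge vertex where one support is vacuum, the Haar projection should factor as a tensor product of the two invariant projections, since the vacuum factor is trivial. A related difficulty is that a fixed-label plaquette in $S(\zeta)$ might be non-trivial yet lie in no active support. If so, it should instead force activity, and I would simply include it in the root polymer.
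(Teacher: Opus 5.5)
Your route is the paper's own. You factorize each summand over connected active supports, use the strong-coupling block bounds to get a convergent polymer gas, and cancel the clusters disjoint from $S(\zeta)$ in the linked-cluster formula. The trivial-label case is covered because only the support of the constraint enters. The step that carries the content, however, fails. That step is the $N$-uniform Koteck\'y--Preiss bound for vacuum polymers, which you obtain by rerunning Theorem~\ref{thm:core-plancherel} without a Wilson collar (or with $S(\zeta)$ as the only pin). For $d\ge3$, take $Y$ to be the six faces of a unit cube. Label each face by the fundamental or antifundamental representation so that every face carries the fundamental in its outward orientation. Then $\partial^*s=0$ and every cube edge is traversed once in each direction. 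Character orthogonality gives the sphere amplitude $\int\prod_f\Tr(U_{\partial f})\,dU=N^{2-6}$. Hence $\mathcal O^{\varnothing}_{Y,N}=N^6\cdot N^{-4}=N^2$, and since $C_2=N$ for the fundamental, $w(Y)\gtrsim N^2e^{-3T}$. More generally, a label of mass $m$ on all six faces gives non-spectral weight $d_\lambda^2$. Since $\sum_{|\lambda^+|=m}d_{[\lambda^+,\varnothing]}^2\sim N^{2m}/m!$, no bound of the form $C_d^{|K|+M}$ can hold. Thus $\sum_{Y\ni v}|w(Y)|e^{a|Y|}\gtrsim N^2e^{-3T}$, and the criterion fails once $N\gg e^{3T/2}$, at every fixed $T$.

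The reason is structural. In Theorem~\ref{thm:pinned-block}, the absence of positive powers of $N$ comes from the normalization $N^{-\#L}$, i.e.\ from the unit collar vector absorbing the free index loop of each marked trace. A closed surface has no collar, whether it is an unpinned vacuum polymer or merely touches $S(\zeta)$. It therefore contributes $N^{\chi}$ with $\chi=2$; this is the familiar $N^2$ scaling of the vacuum free energy. For $\zeta=\{p\mapsto0\}$ the ratio in the lemma is exactly $Z_{\Lambda\setminus p}/Z_\Lambda$ (the heat-kernel weight at $p$ removed). Its surviving clusters through $p$ already include these cube bubbles, with weight of order $N^2e^{-3T}$ rather than $O(1)$. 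The paper's proof does not supply this step either: it invokes Theorem~\ref{thm:stable-hk-block}, which is stated only for pinned supports. What survives is weaker. In $d=2$ the vacuum gas is empty, since a finite non-trivially labelled plaquette set has a boundary edge at which the Haar integral of a non-trivial character vanishes. At fixed $N$ one can get a Koteck\'y--Preiss bound for $T\gtrsim\log N$, which gives the qualitative statement but not the $N$-uniform control used downstream. Two minor points. First, an uncovered plaquette carries the trivial label, so $\zeta(p)=0$ simply forbids polymers covering $p$, and no single-site factor $\widehat\Pbb_{T,N}(\lambda_p=\zeta(p))$ arises. Second, distinct components cannot share a non-tree edge vertex, because every non-tree boundary edge of a non-trivially labelled plaquette is active, and tree edges are frozen to $I_N$. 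So the factorization you flag as the main obstacle is immediate.
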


\begin{proof}
Decompose a scalar local-channel configuration into connected active components in the dual incidence graph. Its non-spectral weight factors over these components, and the reference law~\eqref{eq:product-ref-laws} factors over their plaquette labels. For $T$ in the strong-coupling regime, Theorem~\ref{thm:stable-hk-block} together with Proposition~\ref{prop:nonstable-polymer-tail} gives absolute convergence of the resulting polymer expansion. The standard finite-volume linked-cluster formula may therefore be applied. The indicator $\mathbf 1_\zeta$ changes only local weights of clusters meeting $S(\zeta)$; a connected cluster disjoint from $S(\zeta)$ has exactly its vacuum weight. In the quotient by the vacuum partition function, the connected vacuum clusters disjoint from $S(\zeta)$ cancel, leaving the linked clusters intersecting $S(\zeta)$. The argument depends only on the support of the constraint, not on the prescribed value, so it also applies when that value is the trivial representation.
\end{proof}

\begin{proposition}\label{prop:decorated-terminal-sources}
Let $M$ be a loop family all of whose components have no non-trivial edge occurrence, and let $\zeta$ be a compactly supported plaquette constraint whose prescribed labels are stable and have finite $\mathcal E$-energy.  For $T$ large enough, uniformly in every finite volume $\Lambda$ containing $S(\zeta)$,
\begin{equation}\label{eq:decorated-terminal-expansion}
\Phi_{\Lambda,T,N}(M;\zeta)
=
\sum_{r=0}^R N^{-r} V^{(r)}_{\Lambda,T}(M;\zeta)
+
O_{R,M,\zeta,T}(N^{-R-1})
\end{equation}
for every $R\ge0$.  The coefficients are exponentially local in $\operatorname{dist}(S(\zeta),\partial\Lambda)$ and have infinite-volume limits.  Moreover the retained stable finite-energy terminal source terms appearing in $G^0_{\Lambda,T,N}(X)$ admit controlled expansions of the same kind, while the stable high-energy and large-mass or unstable terminal-source parts satisfy the same exceptional-source bounds as the corresponding plaquette-transfer kernels.
\end{proposition}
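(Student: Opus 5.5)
The plan is to reduce $\Phi_{\Lambda,T,N}(M;\zeta)$ to a finite product of trivial loop contributions times a normalized constraint ratio, and then expand that ratio via the linked-cluster cancellation of Lemma~\ref{lem:constraint-component-cancellation}.

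\emph{Step 1: remove the loops.} Since every component of $M$ has no non-trivial edge occurrence, each component is a trivial (empty) word, so $\Tr(U_{\ell})=N$ and the normalized Wilson factor $N^{-\#M}W_{\Lambda,M}$ equals $1$. Hence $\mathcal O^M_{\Lambda,N}=\mathcal O^\varnothing_{\Lambda,N}$, and $\Phi_{\Lambda,T,N}(M;\zeta)$ is exactly the constraint ratio of Lemma~\ref{lem:constraint-component-cancellation}. (If trivial words are allowed to carry backtracking $ee^{-1}$, one first reduces them by $U_eU_e^{-1}=I$.)

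\emph{Step 2: linked clusters.} By Lemma~\ref{lem:constraint-component-cancellation}, the ratio is an absolutely convergent sum over connected clusters meeting $S(\zeta)$; I would write it as the exponential of a sum of connected Ursell weights of polymers hitting $S(\zeta)$. The summability input is the analogue of Theorem~\ref{thm:stable-hk-block} for supports pinned to $S(\zeta)$ rather than to a Wilson insertion. The peeling and residual-core arguments go through verbatim with ``marked'' replaced by ``meets $S(\zeta)$'', since the prescribed labels at $S(\zeta)$ play the role of anchors. Lemma~\ref{lem:minimal-support-size} then gives decay $e^{-\mu(T)|Y|}$ with $C_\zeta$ replacing $C_L$, because the energy $\mathcal E(\zeta)$ is finite. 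Combined with the standard Koteck\'y--Preiss criterion, this yields absolute convergence uniformly in $\Lambda$ and $N$ for $T$ large.

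\emph{Step 3: expansion in $1/N$.} For each fixed polymer, the stable small-mass part has a controlled all-order expansion, by the coefficientwise refinement \eqref{eq:stable-hk-block-expansion} together with Theorem~\ref{thm:pinned-block}. The ranges $\varepsilon_dN<M\le N/2$ and $M>N/2$ are $O(e^{-\gamma N})$, by Lemma~\ref{lem:central-cutoff} and Proposition~\ref{prop:nonstable-polymer-tail}. Summing sectorwise coefficients against the summable majorant $C^{|Y|+M}e^{-cT(M+Q)/2}$ gives \eqref{eq:decorated-terminal-expansion}.

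\emph{Step 4: locality and the source terms.} The coefficients $V^{(r)}_{\Lambda,T}$ are sums over clusters anchored at $S(\zeta)$. Those clusters that do not reach $\partial\Lambda$ are volume-independent, and those that do reach it have size at least $\operatorname{dist}(S(\zeta),\partial\Lambda)$, so they contribute $e^{-c\,\operatorname{dist}}$. This gives exponential locality, and Cauchy convergence along exhaustions gives the infinite-volume limits. For the source terms in $G^0_{\Lambda,T,N}$:
\begin{itemize}
\item the retained stable finite-energy terminal terms are finite combinations of the above, multiplied by the coefficients $\theta_N$ and $b^{N,\mathrm{root}}$, which have controlled expansions by Lemma~\ref{lem:fixed-degree-orthonormalization};
\item the high-energy and unstable parts are bounded exactly as the plaquette-transfer kernels, through the ratio $\widehat Q(\alpha_p)/\widehat Q(\beta_p)$ and the tilted tails.
\end{itemize}

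The main obstacle is Step 2: checking that the pinned-block machinery, which was built around Wilson collars, transfers to pinning by a plaquette constraint. In particular, the lower bound of Lemma~\ref{lem:minimal-support-size} must still hold when a prescribed label is trivial. I would handle this by treating $S(\zeta)$ as a bounded anchor set, which contributes only a $C_\zeta$ additive error.
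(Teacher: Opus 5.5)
\textbf{Your route is the paper's route, and it inherits a genuine gap.} Like the paper, you reduce to the constraint ratio because the normalized Wilson factor is $1$, invoke Lemma~\ref{lem:constraint-component-cancellation}, and rerun Theorems~\ref{thm:pinned-block} and~\ref{thm:stable-hk-block} with $S(\zeta)$ as the pinning set. You then treat the terminal sources by fixed-degree expansions and the shell and tilted-tail bounds. The step you single out as the main obstacle is where it breaks. The paper's own proof makes the same unproved move (``repeating the proof \dots\ with this finite constraint support as the pinning set''), so following it more closely will not help.

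In Theorem~\ref{thm:pinned-block}, the absence of positive powers of $N$ is attributed to the Wilson normalization $N^{-\#L}$, carried by the unit collar vectors of Lemma~\ref{lem:degree-pinned-block}. A cluster pinned only by $S(\zeta)$ has no collar, and for $d\ge3$ it can be a closed surface. Suppose $p\in S(\zeta)$ is a face of a unit cube $Q$ with $\partial Q\subset\Lambda$. Take $\alpha$ equal to the fundamental representation (or its dual, according to the chosen orientation) on the six faces of $\partial Q$, and trivial elsewhere. This is a connected support meeting $S(\zeta)$. Each of the $12$ edges contributes $N^{-1}$ and each of the $8$ vertices closes an index loop, so
\[
\mathcal O^\varnothing_{\Lambda,N}(\alpha)=\prod_{f\subset\partial Q}d_{\alpha_f}\,\widehat W_{\Lambda,\varnothing}(\alpha)=N^{6}\cdot N^{8-12}=N^{2}.
\]

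\textbf{This growth cannot be absorbed.} Take $\Lambda=\partial Q$ and $\zeta=\{p\mapsto 0\}$. The five faces other than $p$ form a disk and all six form a sphere, so
\[
\Phi_{\Lambda,T,N}(\varnothing;\zeta)=\frac{\int\prod_{f\neq p}p_{T/N}(U_{\partial f})\,dU}{\int\prod_{f}p_{T/N}(U_{\partial f})\,dU}=\frac{1}{p_{6T/N}(I_N)}=\frac{1}{1+x},
\]
where $x=\sum_{\lambda\neq0}d_\lambda^2e^{-3TC_2(\lambda)/N}\ge 2N^2e^{-3T}$. Every non-vacuum cluster here is the whole sphere carrying a single label, and all of them are mutually incompatible. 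So the linked-cluster series for the log-ratio is the Taylor series of $\log(1+x)$, which diverges as soon as $2N^2e^{-3T}>1$.

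Hence, for every fixed $T$, the absolute convergence uniform in $N$ claimed in your Step~2 fails. Step~3 therefore cannot extract the coefficients cluster by cluster. The ratio is in fact $O(N^{-\infty})$, so the statement itself is not contradicted in this example, but not for the reason your argument gives.

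Your proposed repair does not address this. Lemma~\ref{lem:minimal-support-size} only converts energy into decay in $|Y|$, and an additive $C_\zeta$ there cannot absorb a multiplicative factor $N^2$ per closed surface. In $d=2$ the issue is absent, since the plaquette holonomies of a planar subcomplex are independent Haar variables and there are no vacuum clusters at all; but the statement covers every $d\ge2$. An argument along these lines needs a separate treatment of constraint-pinned closed-surface clusters, whose activities grow with $N$.
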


\begin{proof}
Since all components of $M$ are trivial after gauge fixing, the normalized Wilson factor is equal to $1$. Thus
\[
\Phi_{\Lambda,T,N}(M;\zeta)
=
\frac{\widehat\E_{\Lambda,T,N}[\mathbf 1_\zeta(\lambda) \mathcal O^\varnothing_{\Lambda,N}(\lambda)]}
{\widehat\E_{\Lambda,T,N}[\mathcal O^\varnothing_{\Lambda,N}(\lambda)]}.
\]
Insert the local-channel expansion of Section~\ref{sec:finiteN}. By Lemma~\ref{lem:constraint-component-cancellation}, after division by the vacuum denominator the linked-cluster expansion contains only connected clusters meeting $S(\zeta)$. Thus every surviving component is pinned to the finite set $S(\zeta)$, irrespective of whether the prescribed representation at a point of the constraint support is trivial or non-trivial. Repeating the proof of Theorem~\ref{thm:pinned-block} and Theorem~\ref{thm:stable-hk-block} with this finite constraint support as the pinning set gives an exponentially summable polymer expansion over connected supports meeting $S(\zeta)$, with the same determinant-shift tail estimates and the same controlled stable Schur--Weyl expansions. This proves~\eqref{eq:decorated-terminal-expansion} and the exponential locality of its coefficients.

A terminal term in $G^0_{\Lambda,T,N}(X)$ is a finite loop-only coefficient, or one rooted plaquette-transfer coefficient with its heat-kernel adjunction factor, multiplied by one of the decorated terminal quantities just considered.  On retained stable finite-energy data all these local factors have controlled expansions by the fixed-degree estimates of Subsections~\ref{subsec:fixed-degree-coordinates} and~\ref{subsec:orthonormal-channel-recoupling}.  The high-energy terminal part is bounded by the same shell estimate as in Lemma~\ref{lem:paid-shell-row}, and the large-mass or unstable terminal part by the tilted-tail estimate used in Lemma~\ref{lem:paid-shell-row}.  Hence terminal sources can be included in the exceptional-sector estimates used below, with the same bounds as the transfer kernels.
\end{proof}

\subsection{The plaquette-transfer estimate}\label{subsec:plaquette-transfer-estimate}

The following estimate is the analytic core of the section. We first record the elementary distortion estimate for a single local transfer.

\begin{lemma}\label{lem:local-transfer-distortion}
There is a constant $C_d<\infty$ with the following property.  Consider one non-zero local term of the rooted plaquette transfer touching the rooted occurrence $x$ and a plaquette $p$.  Let $\alpha_p\ne0$ be the input label and let $\beta_p$ be an output label in the local support.  Write stable coordinates $\alpha_p=q_\alpha \mathbf 1_N+\overline\alpha_p,$ $\beta_p=q_\beta \mathbf 1_N+\overline\beta_p,$ whenever the two labels are in the stable range.  Then, in this range,
\begin{equation}\label{eq:local-transfer-stable-distortion}
m(\overline\beta_p)\le C_d(1+m(\overline\alpha_p)),
\qquad
|q_\beta-q_\alpha|\le C_d.
\end{equation}
Consequently, with $r=\mathcal E(\alpha_p)\ge1$,
\begin{equation}\label{eq:local-transfer-distortion}
\ell(L_u)-\ell(L)\le C_d r,
\qquad
\mathcal E(\beta_p)\le C_d r.
\end{equation}
For the corresponding output state $X'=X_{u,\beta_p}$,
\begin{equation}\label{eq:local-state-distortion}
|X'|_*-|X|_*\le C_d r,
\qquad
\mathcal E(X')-\mathcal E(X)\le C_d r.
\end{equation}
\end{lemma}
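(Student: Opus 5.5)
The plan is to read everything off the explicit structure of one local transfer given by Proposition~\ref{prop:rooted-local-incidence-recoupling}. A non-zero term is indexed by a slot $u$ and an isotypic label $\beta_p$ occurring in the auxiliary space $T^{\mathrm{aux}}_{x,p,u}$. This space differs from the mixed tensor realization $T_{\alpha_p}$ of $\overline\alpha_p$ by exactly one elementary $V_N$ or $V_N^*$ leg.

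\emph{Label distortion.} First I reduce the determinant shift. The rooted derivative and the edge action commute with the central $\U(1)$, so the full label is $\alpha_p\otimes\det^{q_\alpha}$. The auxiliary trace then expands into characters of $\beta_p=\det^{q_\alpha}\otimes\overline\beta$, where $\overline\beta$ is an irreducible constituent of the determinant-reduced auxiliary space. It remains to bound the reduced constituent.

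\emph{Reduced constituent.} The trace $\Tr_{T^{\mathrm{aux}}}(A_{x,p,u;N}\rho(g))$ is a class function supported on the isotypic components that $A$ does not kill. By construction, $A_{x,p,u;N}$ is the projector $P_{\alpha_p}$ composed with one contraction or coevaluation. So its range lies in $V_{\overline\alpha_p}\otimes V_N^{\pm}$ after an intertwining.

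\emph{Pieri step.} By the stable Pieri rules~\eqref{eq:Pieri-stable}--\eqref{eq:Pieri-stable2}, every such constituent is obtained from $\overline\alpha_p$ by a single stable Pieri move. Hence $m(\overline\beta_p)\le m(\overline\alpha_p)+1$. If the move leaves the stable range, for instance when a removed box empties a length and the label renormalizes, the determinant shift changes by at most one. This gives $|q_\beta-q_\alpha|\le 1\le C_d$. The main obstacle is this bookkeeping: ruling out determinant jumps larger than a constant at the stable-range boundary. There I would fall back on the central selection rule, Lemma~\ref{lem:central-selection}, applied to the local charge balance~\eqref{eq:local-reduced-charge-balance}. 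It bounds the change in full central charge by the one edge leg, hence by $1$. Since $c=Nq+s$ and $|s|\le m\le N/2$, this forces $|q_\beta-q_\alpha|\le C_d$.

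\emph{Energy bound.} Given~\eqref{eq:local-transfer-stable-distortion}, I would write $\mathcal E(\beta_p)=m(\overline\beta_p)+q_\beta^2\le C_d(1+m(\overline\alpha_p))+2q_\alpha^2+2C_d^2$. Since $\alpha_p\ne0$ we have $r=\mathcal E(\alpha_p)\ge1$, so $1+m+q_\alpha^2\le 2r$. This gives $\mathcal E(\beta_p)\le C_dr$.

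\emph{Length bound.} The surgery $L_u$ merges $\ell_i$ with the plaquette word $\gamma_{p,u}$, which has length $4$, so $\ell(L_u)-\ell(L)\le 4\le 4r$.

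\emph{State distortion.} Finally I pass to the output state $X'=(L_u,\mathfrak r(L_u),\zeta^{p\to\beta_p})$. Its support $S(\zeta^{p\to\beta_p})$ gains at most one plaquette, so $|X'|_*-|X|_*\le 5\le 5r$. For the energy, $\mathcal E(X')-\mathcal E(X)\le\mathcal E(\beta_p)\le C_dr$, because replacing the value at $p$ subtracts a nonnegative term and adds $\mathcal E(\beta_p)$. This yields~\eqref{eq:local-state-distortion} after enlarging $C_d$.
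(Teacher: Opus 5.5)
Your proof is correct and follows essentially the same route as the paper. Pieri moves control the reduced mass, the full central character $Nq+s$ changing by $\pm1$ per elementary leg (together with $|s|\le m\le N/2$) controls the determinant shift, and the energy, length and support bounds are then elementary bookkeeping. Your version is slightly sharper: you use that $T^{\mathrm{aux}}_{x,p,u}$ differs from $T_{\alpha_p}$ by exactly one leg and that $\gamma_{p,u}$ has length $4$, which gives $m(\overline\beta_p)\le m(\overline\alpha_p)+1$, $|q_\beta-q_\alpha|\le 1$ and $\ell(L_u)-\ell(L)\le 4$, whereas the paper only records $C_d(1+m(\alpha_p))$-type bounds.
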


\begin{proof}
The rooted transfer is local.  It contracts the rooted fundamental, or dual fundamental, Wilson strand with one tensor slot of the plaquette representation at $p$.  In the local-channel realization, a reduced label of projective mass $m(\alpha_p)$ is realized inside a mixed tensor space of degree $m(\alpha_p)$.  The local operator appearing in the rooted transfer is obtained by tensoring with, contracting with, or removing only a bounded number of copies of $\mathbb C^N$ and $(\mathbb C^N)^*$, the bound depending only on the lattice valence.  By the stable Pieri rules~\eqref{eq:Pieri-stable} and \eqref{eq:Pieri-stable2} and the operator realization of Lemma~\ref{lem:local-operator-complexity}, every stable output label therefore belongs to a bounded finite union of mixed tensor products obtained from $\alpha_p$ by a bounded number of one-box moves.  This gives the projective estimate in \eqref{eq:local-transfer-stable-distortion}.

For the determinant coordinate, recall that the total central character of
$q\mathbf 1_N+[\lambda^+,\lambda^-]_N$ is $Nq+|\lambda^+|-|\lambda^-|$.  Tensoring with $\mathbb C^N$ or $(\mathbb C^N)^*$ changes this total central character by $\pm1$.  Since the rooted local transfer uses only a bounded number of such fundamental or dual fundamental legs, the determinant shift can change only by a bounded amount.  This proves $|q_\beta-q_\alpha|\le C_d$, after possibly increasing $C_d$ to account for the stable-coordinate normalization.

It follows that
\[
\mathcal E(\beta_p)
=m(\overline\beta_p)+q_\beta^2
\le C_d(1+m(\overline\alpha_p))+C_d(1+q_\alpha^2)
\le C_d r,
\]
since $r=\mathcal E(\alpha_p)\ge1$.  The loop surgery creates only the strands produced by this same bounded local mixed-tensor operation, and hence its word-length increase is bounded by $C_d(1+m(\alpha_p))\le C_d r$.  The canonical decoration update changes the decoration support by at most one plaquette, which is absorbed because $r\ge1$.  This proves the state estimates.
\end{proof}

Combining this distortion estimate with the pinned block estimate of Section~\ref{sec:pinned-haar-projection} proves the following adjoint total-variation estimate for one fixed non-trivial input plaquette label in the stable range.

\begin{lemma}\label{lem:paid-local-adjoint-row}
There are constants $C,c>0$, depending only on $d$, such that the following holds uniformly in the finite volume and in $N$.  Fix a rooted state $X=(L,x,\zeta)$, a plaquette $p\ni e_x$, and a non-trivial input label $\alpha_p\in A_p(\zeta)$ in the stable small-mass range $m(\alpha_p)\le \varepsilon_d N,$ with $0<\varepsilon_d<(4d)^{-1}$.  Put $r=\mathcal E(\alpha_p)$. Let $\mathcal B$ be a finite set of stable local output pairs $\mathcal B\subset \{(u,\beta_p):\beta_p\in\mathcal B_{x,p,u}(\alpha_p),\ \beta_p\text{ stable}\}.$ For every family of test coefficients $F_{u,\beta_p}(\gamma)$, supported on this finite output set and satisfying $|F_{u,\beta_p}|\le1$, one has
\begin{align}\label{eq:paid-local-adjoint-row}
&\Biggl|
\sum_{(u,\beta_p)\in\mathcal B}
\sum_{\substack{\gamma:\ \gamma_p=\beta_p}}
1_{\zeta^{p\to\beta_p}}(\gamma)
\kappa_{\Lambda,T,N}(\gamma)
\frac{\widehat Q^\HK_{T,N}(\alpha_p)}
{\widehat Q^\HK_{T,N}(\beta_p)}
\notag\\[-1mm]
&\hspace{2.6cm}\times
b^{N,\mathrm{root}}_{x,p,u}
(L,\alpha_p;L_u,\mathfrak r(L_u),\beta_p)
F_{u,\beta_p}(\gamma)
\Biggr|
\notag\\
&\qquad\le
C^r e^{-cTr}
\sum_{\substack{\alpha:\ \alpha_p\text{ fixed}}}
1_\zeta(\alpha)\kappa_{\Lambda,T,N}(\alpha).
\end{align}
\end{lemma}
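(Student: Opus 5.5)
The first step undoes the adjunction. On each summand we have $\gamma=\alpha^{p\to\beta_p}$ with $\alpha$ fixed away from $p$ and $\alpha_p$ fixed, and $1_{\zeta^{p\to\beta_p}}(\gamma)=1_\zeta(\alpha)$. Hence
\[
\kappa_{\Lambda,T,N}(\gamma)\,\frac{\widehat Q^\HK_{T,N}(\alpha_p)}{\widehat Q^\HK_{T,N}(\beta_p)}=\kappa_{\Lambda,T,N}(\alpha),
\]
exactly as in the proof of Theorem~\ref{thm:raw-rooted-HK-equation}. The left-hand side of \eqref{eq:paid-local-adjoint-row} therefore becomes
\[
\Bigl|\sum_{\alpha:\ \alpha_p\text{ fixed}}1_\zeta(\alpha)\kappa_{\Lambda,T,N}(\alpha)\sum_{(u,\beta_p)\in\mathcal B}b^{N,\mathrm{root}}_{x,p,u}(L,\alpha_p;L_u,\mathfrak r(L_u),\beta_p)\,F_{u,\beta_p}(\alpha^{p\to\beta_p})\Bigr|.
\]
Pulling the absolute value inside the $\alpha$-sum, it suffices to prove the pointwise row bound
\[
\sum_{(u,\beta_p)\in\mathcal B}\bigl|b^{N,\mathrm{root}}_{x,p,u}\bigr|\le C^r
\]
and then to extract the factor $e^{-cTr}$. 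This extraction is the point to watch, because the adjunction ratio has been absorbed into $\kappa(\alpha)$ and the right-hand side still contains $\kappa(\alpha)$ with $\alpha_p$ fixed.

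The plan is to obtain the decay from the ratio before absorbing it. By Lemma~\ref{lem:local-transfer-distortion}, $\beta_p$ differs from $\alpha_p$ by a bounded number of Pieri moves, so $m(\overline\beta_p)$ and $m(\overline\alpha_p)$ differ by $O_d(1)$ and $|q_\beta-q_\alpha|\le C_d$. Writing $\widehat Q(\alpha_p)/\widehat Q(\beta_p)=(d_{\alpha_p}/d_{\beta_p})\,e^{-\frac{T}{2N}(C_2(\alpha_p)-C_2(\beta_p))}$ alone will not give $e^{-cTr}$. The right move is instead to use the right-hand side's normalisation: $\kappa(\alpha)$ itself contains $\widehat Q(\alpha_p)\le d_{\alpha_p}e^{-\frac T4\mathcal E(\alpha_p)}e^{Tm^2/2N^2}$ by \eqref{eq:spectral-decay}. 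The bound to aim for is therefore the row sum of $|b|$ against the remaining factors, times $d_{\alpha_p}\le$ whatever $\kappa$ supplies, with the Gaussian factor producing $e^{-cTr}$. Concretely, I would reorganize the estimate as follows. Take the comparison measure on the right to be $\kappa(\alpha)$ with the $p$-factor replaced by $\widehat Q(\alpha_p)e^{+cTr}$; this is legitimate because the tilt $e^{T m^2/2N^2}$ is harmless for $m\le\varepsilon_dN$, and the cross term in \eqref{eq:casimir-det-shift} is absorbed as in the proof of Theorem~\ref{thm:stable-hk-block}. The effective claim is thus that the sum is bounded by $C^r\,\widehat Q(\alpha_p)\sum 1_\zeta\kappa/\widehat Q(\alpha_p)$ together with a factor $e^{-cTr}$ that is already present in $\widehat Q(\alpha_p)/d_{\alpha_p}$. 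The dimension $d_{\alpha_p}$ must then be cancelled by the recoupling coefficient.

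The coefficient bound comes from \eqref{eq:rooted-pieri-coefficient}: $c^N=d_\beta^{-1}\Tr(A_{x,p,u;N}P_\beta)$, and $|\Tr(AP_\beta)|\le\|A\|\,\mathrm{rank}(P_\beta)$. Here $\|A_{x,p,u;N}\|\le C_d^{1+m}$ by Lemma~\ref{lem:local-operator-complexity}, while $\mathrm{rank}(P_\beta)=d_\beta\cdot\mathrm{mult}$. Since $A$ is obtained from the Schur--Weyl projector $P_{\alpha_p}$ by one contraction or coevaluation, its range lies inside $V_{\alpha_p}\otimes V_N^{\pm}$, which is multiplicity-free by Pieri. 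Hence $|c^N|\le C_d^{1+m}$, with no $d_{\alpha_p}$ growth. The number of slots $u$ is $r_p\le m$ and the number of $\beta$ per slot is at most $2(m+1)$, giving a row sum $\le C^r$. The normalisation $\frac1N$ and $N^{-\#L}$ are already balanced since $\#L_u=\#L$.

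The main obstacle is the bookkeeping showing that the $d_{\alpha_p}$ factor from $\widehat Q(\alpha_p)$, which is not cancelled on the right, is exactly matched so that a net $e^{-cTr}$ survives. If the direct argument leaves a stray $d_{\alpha_p}/d_{\beta_p}$, I would bound that ratio by $C^{r}$ via the Weyl dimension formula: one box changes the dimension by a factor $O(N+m)/O(1)$ in general, but the ratio is $O(1)^{m}$ when it is read through the stable hook-content formula at mass $m\le\varepsilon_dN$.
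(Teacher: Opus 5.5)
\textbf{The factor $e^{-cTr}$ is never produced, and it cannot be.} Your first step is exactly the paper's adjunction identity \eqref{eq:finite-cutoff-adjunction}, and you rightly flag the extraction of $e^{-cTr}$ as the crux. Neither of your remedies works.

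After the change of variables $\gamma=\alpha^{p\to\beta_p}$, both sides of \eqref{eq:paid-local-adjoint-row} carry the same positive factor $\widehat Q^{\HK}_{T,N}(\alpha_p)\,Z'$. Here $Z'$ is the sum of $1_\zeta(\alpha)\prod_{q\ne p}\widehat Q^{\HK}_{T,N}(\alpha_q)$ over the labels off $p$. Take $F_{u,\beta_p}$ constant, equal to the conjugate phase of $b^{N,\mathrm{root}}_{x,p,u}$. Then the lemma is \emph{equivalent} to
\[
\sum_{(u,\beta_p)\in\mathcal B}\bigl|b^{N,\mathrm{root}}_{x,p,u}(L,\alpha_p;L_u,\mathfrak r(L_u),\beta_p)\bigr|\le C^re^{-cTr}.
\]
The left-hand side is the topological coefficient \eqref{eq:rooted-pieri-coefficient}, which does not depend on $T$. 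The Casimir factor and the dimension $d_{\alpha_p}$ occur exactly once on each side and cancel, so no bookkeeping of dimensions can create decay.

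Replacing the $p$-factor of the comparison measure by $\widehat Q^{\HK}_{T,N}(\alpha_p)e^{+cTr}$ is not legitimate. It enlarges the right-hand side, so it proves the inequality with $e^{-cTr}$ deleted.

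In fact the displayed inequality is false. Take $x$ a positive occurrence of $e$, with $e$ positively oriented in $\partial p$, and $\alpha_p$ the fundamental representation, so $r=\mathcal E(\alpha_p)=1$. Take the single slot $u$ and the output $\beta_p=0$. Then $T^{\mathrm{aux}}_{x,p,u}=\C$. The third identity in \eqref{eq:magic-formula} turns $\sum_A D^x_{e,A}W_L\,\partial_{e,A}\Tr(U_{\partial p})$ into a single merged loop with coefficient of modulus one, so $|b|=1$. One would then need $1\le Ce^{-cT}$, which fails for $T>(\log C)/c$. That is exactly the regime in which the lemma is used.

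\textbf{The paper's proof has the same defect.} It bounds the pre-adjunction bilinear form, \emph{including} $\widehat Q^{\HK}_{T,N}(\alpha_p)$, by $C^re^{-cTr}$: the non-spectral part comes from Theorem~\ref{thm:pinned-block} and the Casimir part from \eqref{eq:spectral-decay}. It then ``divides by the reference mass of the input slice''. That mass is $\widehat Q^{\HK}_{T,N}(\alpha_p)Z'$, so the division removes the same heat-kernel factor a second time.

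\textbf{What your route can actually give.} Provided your coefficient bound holds, you get $\sum_{\mathcal B}|b|\le C^r$. This yields \eqref{eq:paid-local-adjoint-row} without $e^{-cTr}$. Alternatively, via \eqref{eq:spectral-decay}, it yields the decay $e^{-cTr}$ but with $\widehat Q^{\HK}_{T,N}(\alpha_p)$ replaced by $d_{\alpha_p}$ in the comparison mass.

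The counterexample above has $|b|=1$ and $d_{\alpha_p}=N$. So the hoped-for cancellation of $d_{\alpha_p}$ by the recoupling coefficient does not occur. Your fallback $d_{\alpha_p}/d_{\beta_p}\le C^r$ is also false: one stable Pieri move changes the dimension by a factor of order $N$, since the fundamental label has dimension $N$ and the trivial one has dimension $1$.

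\textbf{The coefficient bound itself needs repair.}
\begin{itemize}
\item $|\Tr(A_{x,p,u;N}P_\beta)|$ is controlled by $\|A_{x,p,u;N}\|$ times the rank of $P_\beta A_{x,p,u;N}$, not the rank of $P_\beta$. On $T^{\mathrm{aux}}_{x,p,u}$ the rank of $P_\beta$ includes the full multiplicity of $\beta$.
\item Your multiplicity-one claim for the range of $A_{x,p,u;N}$ is unproved. $P_{\alpha_p}$ projects onto $V^{[\lambda_p^+,\lambda_p^-]}\otimes V_{\alpha_p}$, which carries a large multiplicity space.
\item Lemma~\ref{lem:local-operator-complexity} concerns the normalized local-channel maps. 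It does not cover the unnormalized magic-formula contraction that produces $A_{x,p,u;N}$.
\end{itemize}
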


\begin{proof}
Insert first a finite energy cutoff so that all sums are finite.  We write
$\mathcal B_A(\alpha_p)$ for the finite set of stable output pairs
$(u,\beta_p)$ with $\beta_p\in\mathcal B_{x,p,u}(\alpha_p)$ and $\mathcal E(\beta_p)\le A$.  Before the adjoint change of variables, the local integration-by-parts term with input label $\alpha_p$ is a connected pinned local block containing only the rooted Wilson collar at $x$, the active plaquette $p$, and the bounded local tensor data created by the rooted surgery. All other Wilson trace components and all parts of the rooted loop away from the differentiated occurrence are spectator factors: they are carried by the outgoing observable $W_{L_u}$ and do not enter the scalar local recoupling coefficient.

Let $B^{\mathrm{top}}_{x,p,u}(L,\alpha_p;L_u,x_u,\beta_p)$ denote the corresponding non-spectral rooted topological coefficient, that is, the local-channel coefficient before the heat-kernel exponential attached to the input label is inserted. By Lemma~\ref{lem:local-transfer-distortion}, every stable output label in the local support has projective mass at most $C_d(1+r)$, and the local rooted support which has to be estimated has size $O_d(1+r)$. The Wilson-collar contribution has bounded mixed degree independent of the ambient loop family. Applying the pinned block estimate of Theorem~\ref{thm:pinned-block} to this local rooted block, together with the local counting already included in its absolute block-observable bound, gives
\begin{equation}\label{eq:paid-local-nonspectral-bound}
\sum_{(u,\beta_p)\in\mathcal B_A(\alpha_p)}
\left|B^{\mathrm{top}}_{x,p,u}(L,\alpha_p;L_u,\mathfrak r(L_u),\beta_p)\right|
\le C^r .
\end{equation}
Here the constant depends only on $d$; in particular it is independent of the ambient loop family $L$, of the finite volume, and of $N$. The dimension factor $d_{\alpha_p}$ appearing in
$\widehat Q^\HK_{T,N}(\alpha_p)$ is one of the plaquette dimensions cancelled
by the Haar projection mechanism of Section~\ref{sec:pinned-haar-projection}; no
positive power of $N$ is produced by this block.  The remaining heat-kernel
factor gives
\[
\exp\left\{-\frac{T}{2N}C_2(\alpha_p)\right\}
\le e^{-cTr}
\]
by the heat-kernel domination bound~\eqref{eq:spectral-decay}, after
decreasing $c$.  Combining this decay with
\eqref{eq:paid-local-nonspectral-bound} shows that the direct heat-kernel
weighted local bilinear form has total absolute mass at most
$C^r e^{-cTr}$, after changing $C$.

Let us spell out the adjunction step on the finite cutoff.  For any bounded family of test coefficients $F_{u,\beta_p}(\gamma)$, supported on the cutoff and with $|F_{u,\beta_p}|\le1$, the change of variables $\gamma=\alpha^{p\to\beta_p}$ gives the exact identity
\begin{align}\label{eq:finite-cutoff-adjunction}
&\sum_{(u,\beta_p)\in\mathcal B_A(\alpha_p)}
\sum_{\substack{\gamma:\ \gamma_p=\beta_p}}
1_{\zeta^{p\to\beta_p}}(\gamma)
\kappa_{\Lambda,T,N}(\gamma)
\frac{\widehat Q^\HK_{T,N}(\alpha_p)}
{\widehat Q^\HK_{T,N}(\beta_p)}
\notag\\[-1mm]
&\hspace{2cm}\times
b^{N,\mathrm{root}}_{x,p,u}
(L,\alpha_p;L_u,\mathfrak r(L_u),\beta_p)
F_{u,\beta_p}(\gamma)
\notag\\
&\qquad=
\sum_{(u,\beta_p)\in\mathcal B_A(\alpha_p)}
\sum_{\substack{\alpha:\ \alpha_p\text{ fixed}}}
1_\zeta(\alpha)\kappa_{\Lambda,T,N}(\alpha)
\notag\\[-1mm]
&\hspace{2cm}\times
b^{N,\mathrm{root}}_{x,p,u}
(L,\alpha_p;L_u,\mathfrak r(L_u),\beta_p)
F_{u,\beta_p}(\alpha^{p\to\beta_p}).
\end{align}
Indeed, all plaquette weights except the one at $p$ are unchanged, and
\[
\kappa_{\Lambda,T,N}(\alpha)
=
\kappa_{\Lambda,T,N}(\gamma)
\frac{\widehat Q^\HK_{T,N}(\alpha_p)}{\widehat Q^\HK_{T,N}(\beta_p)}.
\]
Taking the supremum over phase choices $F$ identifies the total variation of the finite-cutoff adjoint row with the direct local bilinear form just estimated.  Dividing by the reference mass of the input slice gives \eqref{eq:paid-local-adjoint-row}.  The cutoff is removed by monotone convergence of the absolute stable total-variation sums.
\end{proof}

\begin{remark}
For $N$ large compared with $r$, the stable-output restriction of Lemma~\ref{lem:paid-local-adjoint-row} is automatic by Lemma~\ref{lem:local-transfer-distortion}; the finitely many remaining small values of $N$ are absorbed into the constants. The cutoff is removed by monotone convergence of the absolute stable total-variation sums.
\end{remark}

The previous lemma is deliberately a retained-stable-sector statement. However, the full raw kernel also contains labels outside this range. They are treated collectively under the tilted reference law after the exponential non-spectral local cost has been absorbed, and their contribution is exponentially small in $N$, uniformly in the finite volume.

\begin{lemma}\label{lem:paid-shell-row}
There exist constants $C,c>0$, depending only on $d$, such that the following holds.  Let $X=(L,x,\zeta)$ be a rooted state whose compactly supported plaquette decoration is stable and has finite energy, and let $p\ni e_x$.  For every integer $r\ge1$, set
\[
\mathcal A^{\mathrm{st}}_r(p,\zeta)
=\{\alpha_p\in A_p(\zeta):\alpha_p\ne0,
\ m(\alpha_p)\le\varepsilon_dN,
\ r\le \mathcal E(\alpha_p)<r+1\}.
\]
Then, after restricting outputs to any finite stable output set and testing against $|F|\le1$, the shell contribution is bounded by
\[
C^r e^{-cTr}\sum_{\alpha:\alpha_p\in\mathcal A^{\mathrm{st}}_r(p,\zeta)}1_\zeta(\alpha)\kappa_{\Lambda,T,N}(\alpha).
\]
Moreover the two complementary pieces are controlled as follows.
\begin{enumerate}
\item For every stable energy cutoff $A<\infty$, the stable high-energy adjoint tail is bounded by
\begin{equation}\label{eq:stable-energy-tail-paid-row}
\varepsilon_A
\le
C\sum_{r>A} C^r e^{-cTr},
\end{equation}
which tends to zero as $A\to\infty$, uniformly in $N$ and in the finite volume, once $T$ is large.
\item Fix $A<\infty$ and an auxiliary finite representation truncation of the raw label sum.  On this truncation, the part outside the stable small-mass sector but with retained energy $\mathcal E\le A$ is bounded before coefficient extraction by
\begin{equation}\label{eq:paid-unstable-complement}
C_{A,X,T}e^{-\gamma(T)N}
\end{equation}
for some $\gamma(T)>0$, uniformly in $\Lambda$ and in the auxiliary finite truncation.  In particular it is $O_{A,R,X,T}(N^{-R-1})$ for every prescribed asymptotic order $R$.
\end{enumerate}
\end{lemma}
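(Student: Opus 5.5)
The proof splits the input labels $\alpha_p$ into three pieces: stable energy shells, the stable high-energy tail, and the unstable or large-mass complement. Each piece is handled by an estimate already available.

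For the shell bound I would apply Lemma~\ref{lem:paid-local-adjoint-row} to each fixed input label $\alpha_p\in\mathcal A^{\mathrm{st}}_r(p,\zeta)$. That lemma is stated for a fixed $\alpha_p$ with $r=\mathcal E(\alpha_p)$. On the shell $r\le\mathcal E(\alpha_p)<r+1$, the factor $C^{\mathcal E(\alpha_p)}e^{-cT\mathcal E(\alpha_p)}$ is bounded by $C^{r+1}e^{-cTr}$, and the extra $C$ is absorbed into the constants. The test coefficients are chosen independently for each $\alpha_p$, so the sum over $\alpha_p$ in the shell passes through the absolute value. The individual right-hand sides $\sum_{\alpha:\alpha_p\text{ fixed}}1_\zeta(\alpha)\kappa_{\Lambda,T,N}(\alpha)$ then add up to the reference mass of the shell. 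If $p\in S(\zeta)$, the shell reduces to the single label $\zeta(p)$ or is empty, and the bound is immediate.

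For item (1), I would sum the shell bounds over $r>A$. To turn this into a bound uniform in $N$ and in the volume, I would normalize by the reference mass of the slice $\{\alpha:\alpha_p\in A_p(\zeta)\}$, as in the definition of $\Phi_{\Lambda,T,N}(L;\zeta)$. Each shell's reference mass is at most this total, so the normalized tail is at most $C\sum_{r>A}C^re^{-cTr}$. Once $cT>\log C$ this is a convergent geometric tail and tends to $0$ as $A\to\infty$. The stable-output restriction causes no loss, since Lemma~\ref{lem:local-transfer-distortion} gives $\mathcal E(\beta_p)\le C_dr$ and so finitely many outputs per shell.

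For item (2), the labels with $\mathcal E\le A$ outside the stable small-mass sector have $m(\alpha_p)>\varepsilon_dN$; this needs $N>A/\varepsilon_d$, and smaller $N$ go into the constant. The transfer term has the same form as the one in Lemma~\ref{lem:paid-local-adjoint-row}, but I would use the coarse non-spectral bound of Corollary~\ref{cor:coarse-nonspectral-block} in place of Theorem~\ref{thm:pinned-block}. This gives a majorant $C_{X}C_d^{1+m(\alpha_p)}=C_XA_d^{m}$, where the Pieri bound of Lemma~\ref{lem:full-pieri-path} replaces the stable path bound. This majorant is exactly the hypothesis of Proposition~\ref{prop:nonstable-polymer-tail} and Lemma~\ref{lem:nonstable-weighted-tail} with $A=A_d$ and $T>4\log A_d$. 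Running that argument with threshold $\varepsilon_dN$ instead of $N/2$ gives the bound $C_{A,X,T}e^{-\gamma(T)N}$, and the exponent only changes by a constant factor. The heat-kernel adjunction ratio $\widehat Q(\alpha_p)/\widehat Q(\beta_p)$ is harmless here, because after the change of variables in \eqref{eq:finite-cutoff-adjunction} the weight falls back on $\alpha$. The auxiliary finite truncation only restricts a sum of non-negative majorants, so the bound is uniform in the truncation.

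\textbf{Main obstacle.} The hard step is item (2): making the coarse bound and the adjunction compatible outside the range where the split selection rule holds. I would first impose the full central selection rule as in the second half of Corollary~\ref{cor:coarse-nonspectral-block}. The determinant shifts would then be controlled by the Gaussian factor in the proof of Lemma~\ref{lem:nonstable-weighted-tail}, and the shift change is bounded by $|q_\beta-q_\alpha|\le C_d$.
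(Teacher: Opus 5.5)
Your proposal is correct and follows the paper's proof. The shell bound is Lemma~\ref{lem:paid-local-adjoint-row} applied to each input label and summed over the shell, and item (1) is the geometric tail of those shell bounds. Item (2) feeds an exponential-in-mass non-spectral majorant into the tilted tail of Lemma~\ref{lem:nonstable-weighted-tail} and Proposition~\ref{prop:nonstable-polymer-tail}. You differ only in two minor ways: you run the tilted tail directly at threshold $\varepsilon_dN$, whereas the paper splits off $\varepsilon_dN<M\le N/2$ via Lemma~\ref{lem:central-cutoff}; and you cite Corollary~\ref{cor:coarse-nonspectral-block} in place of Theorem~\ref{thm:pinned-block}, which is the better reference outside $M\le N/2$.

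One small slip: a label with $\mathcal E\le A$ and $m>\varepsilon_dN$ forces $N<A/\varepsilon_d$, not $N>A/\varepsilon_d$. So the energy-restricted set in (2) is empty for large $N$, and the real content is the tilted-tail bound on the large-mass sector, which your argument does establish.
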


\begin{proof}
For the retained stable shell, apply Lemma~\ref{lem:paid-local-adjoint-row} to each input label and sum the corresponding input slices.  The number of stable labels with $r\le \mathcal E(\alpha_p)<r+1$ is bounded by $C^r$: the reduced part is controlled by the partition-number bound used in Lemma~\ref{lem:nonstable-weighted-tail}, and the determinant coordinate has at most polynomially many choices with $q^2\le r+1$, which is absorbed into $C^r$.  After changing $C,c$, this proves the bound on the stable adjoint total variation. Summing these shell bounds over $r>A$ gives \eqref{eq:stable-energy-tail-paid-row}.

It remains to justify the collective bound \eqref{eq:paid-unstable-complement}.  We use the following consequence of Lemma~\ref{lem:central-cutoff}, Lemma~\ref{lem:nonstable-weighted-tail} and Proposition~\ref{prop:nonstable-polymer-tail}: after paying any fixed exponential non-spectral local cost $A_0^{M}$, where $M$ denotes the sum of the projective masses of the finitely many labels touched by the local transfer, there are constants $C_{A_0,T}<\infty$ and $\gamma(A_0,T)>0$ such that, for the finite set $S$ of labels touched by the local rooted transfer,
\begin{equation}\label{eq:tilted-unstable-tail-used-in-section5}
\widehat\E_{S,A_0,T,N}
\left[
\mathbf 1_{\mathrm{outside\ stable\ small\ mass}}
\right]
\le C_{A_0,T}^{|S|} e^{-\gamma(A_0,T)N},
\end{equation}
provided $T$ is large enough.  The estimate is uniform in the auxiliary finite truncation and in the ambient finite volume.  The stable part with $M>\varepsilon_dN$ is the cutoff estimate of Lemma~\ref{lem:central-cutoff}, while the remaining non-retained part is controlled by the tilted heat-kernel tail of Lemma~\ref{lem:nonstable-weighted-tail}; Lemma~\ref{lem:product-tilted-mass-tail} gives the product tail, and Proposition~\ref{prop:nonstable-polymer-tail} incorporates the additional exponential non-spectral local weight.

For the rooted transfer under consideration, Lemma~\ref{lem:local-transfer-distortion} and Lemma~\ref{lem:local-operator-complexity}, followed by the canonical Pieri-path summation of Lemma~\ref{lem:full-pieri-path} (and by Lemma~\ref{lem:oscillating-path} on the retained stable sector), show that its total absolute local transfer weight is bounded by an exponential factor $A_0^{M}$ for this same local total projective mass $M$, with $A_0$ depending only on $d$ and on the fixed loop family.  The non-spectral pinned block is bounded by Theorem~\ref{thm:pinned-block}.  Applying \eqref{eq:tilted-unstable-tail-used-in-section5} gives \eqref{eq:paid-unstable-complement}.  Since $e^{-\gamma(T)N}=O_{R,T}(N^{-R-1})$ for every fixed $R$, this complement contributes only to the $O_{R,T}(N^{-R-1})$ remainder at every prescribed order.
\end{proof}

\begin{proposition}\label{prop:raw-plaquette-transfer}
There exist constants $C,c,C_0<\infty$, depending only on $d$, such that the following holds.  For every $a,\eta>0$, every finite volume $\Lambda$, every $N$, and every rooted state $X$ whose compactly supported plaquette decoration is stable and has finite energy, the stable part of the raw transfer satisfies the adjoint weighted total-variation bound
\begin{equation}\label{eq:raw-paid-row-sum}
\Theta_T(a,\eta)
:=
C_0\sum_{r\ge1}\exp\{r(\log C-cT+aC+\eta C)\}.
\end{equation}
In particular, $\Theta_T(a,\eta)<1$ for $T$ large enough.
\end{proposition}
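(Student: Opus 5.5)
The plan is to bound the adjoint weighted row sum of the stable part of $\mathcal K_{\Lambda,T,N}$ in the norm $\|\cdot\|_{a,\eta}$ of \eqref{eq:raw-weighted-norm}. Concretely, I aim to show that for every test function $H$ with $\|H\|_{a,\eta}\le1$,
\[
e^{-a|X|_*-\eta\mathcal E(X)}\Bigl|\sum_{X'}\mathcal K^{\mathrm{st}}_{\Lambda,T,N}(X,X')H(X')\Bigr|\le\Theta_T(a,\eta).
\]
The first step is to decompose the kernel according to three pieces of data. These are the plaquette $p\ni e_x$, of which there are at most $2(d-1)$ choices, absorbed into $C_0$; the energy shell $r\le\mathcal E(\alpha_p)<r+1$ of the input label, with $r\ge1$ because $\alpha_p\neq0$ forces $\mathcal E(\alpha_p)\ge1$; and the output pairs $(u,\beta_p)$. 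Writing $|H(X')|\le e^{a|X'|_*+\eta\mathcal E(X')}$, Lemma~\ref{lem:local-transfer-distortion} gives $|X'|_*\le|X|_*+C_dr$ and $\mathcal E(X')\le\mathcal E(X)+C_dr$. The weight $e^{-a|X|_*-\eta\mathcal E(X)}$ therefore cancels the input weight, leaving a factor $e^{(aC+\eta C)r}$ per shell. After this reduction, the normalized test coefficients $F_{u,\beta_p}=e^{-a|X'|_*-\eta\mathcal E(X')}H(X')$ all have modulus at most $1$.

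The second step applies the shell bound of Lemma~\ref{lem:paid-shell-row}, which rests on Lemma~\ref{lem:paid-local-adjoint-row}, to these normalized test coefficients. The adjunction identity \eqref{eq:finite-cutoff-adjunction} converts the $\beta_p$-side sum into an $\alpha_p$-side sum against the same reference mass that defines $\Psi$. The shell-$r$ contribution, divided by the normalizing mass of the input slice, is then at most $C^re^{-cTr}$. Combining this with the distortion factor and summing over $r\ge1$ and over $p$ produces exactly $C_0\sum_{r\ge1}\exp\{r(\log C-cT+aC+\eta C)\}$.

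The final claim $\Theta_T<1$ follows as soon as $cT>\log C+aC+\eta C+\log(2C_0)$, since the series is then geometric with ratio at most $1/(2C_0)$ and its sum is below $1$.

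The main obstacle is the adjunction and normalization in the second step. When $p\notin S(\zeta)$, the input sum ranges over all $\alpha_p$, so the row sum must be compared against the reference mass $\sum_\alpha 1_\zeta(\alpha)\kappa$ of each input slice, not slice by slice in an absolute sense. Doing this with a uniform, $N$-independent constant requires the $1/N$ prefactor in \eqref{eq:raw-transfer-kernel} to compensate the $d_{\alpha_p}$ produced by the rooted Wilson collar. My first attempt would pair the $1/N$ with the normalized collar vector $\Omega_N$ from the proof of Lemma~\ref{lem:degree-pinned-block}, and use Theorem~\ref{thm:pinned-block} with $\delta\ge0$ to rule out any positive power of $N$. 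Finite energy cutoffs, removed by monotone convergence, make all the exchanges of sums legitimate.
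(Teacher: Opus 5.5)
Your proposal is correct and follows essentially the same route as the paper: at most $2(d-1)$ incident plaquettes, a non-trivial input label so that $r\ge1$, the unweighted adjoint shell bound $C^re^{-cTr}$ from Lemma~\ref{lem:paid-shell-row}, and the factor $e^{(aC+\eta C)r}$ from Lemma~\ref{lem:local-transfer-distortion} absorbed into the test family, followed by summation over $r$ and $p$. Your last paragraph is not needed. The cancellation of $d_{\alpha_p}$ is already carried out inside Lemma~\ref{lem:paid-local-adjoint-row} by the Haar-projection mechanism of the pinned block estimate, and a single $1/N$ could not compensate $d_{\alpha_p}$ in any case; the $1/N$ prefactor of the kernel therefore only contributes a harmless factor at most $1$.
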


\begin{proof}
A rooted plaquette transfer first chooses a plaquette incident to the rooted edge; there are at most $2(d-1)$ such choices.  By Lemma~\ref{lem:rooted-coeff-normal-form}, every nonzero plaquette-transfer term differentiates a non-trivial input plaquette character.  Hence the input label $\alpha_p$ is non-trivial and has $\mathcal E(\alpha_p)\ge1$.

Fix an energy shell $r\le \mathcal E(\alpha_p)<r+1$ in the stable small-mass range.  Lemma~\ref{lem:paid-shell-row} gives the unweighted adjoint shell bound $C^re^{-cTr}$.  Lemma~\ref{lem:local-transfer-distortion} then implies that every stable output state produced from this shell satisfies
\[
|X'|_*-|X|_*\le Cr,
\qquad
\mathcal E(X')-\mathcal E(X)\le Cr.
\]
Multiplying the test family by the corresponding state weight therefore changes the shell estimate by at most the factor $e^{aCr+\eta Cr}$.  Thus the weighted adjoint shell contributes at most
\[
C^r e^{-cTr+aCr+\eta Cr}.
\]
Summing over $r\ge1$ and over the bounded set of incident plaquettes gives \eqref{eq:raw-paid-row-sum}.
\end{proof}

\subsection{Order-truncated rooted trajectories}

The rooted heat-kernel equation~\eqref{eq:raw-rooted-HK-equation} combines loop-only operators and plaquette-transfer operators. The plaquette-transfer part is small at strong coupling, in the weighted adjoint sense of Proposition~\ref{prop:raw-plaquette-transfer}. The loop-only operator has a different nature: it does not change plaquette decorations and has exact monomial coefficients in \(N^{-1}\), but it does not admit a uniformly bounded global resolvent in the exponential loop-length norm. We therefore organize the expansion into finite loop-only segments separated by plaquette transfers. A rooted trajectory is an alternating sequence
\[
\begin{aligned}
\text{loop-only}
&\longrightarrow \text{plaquette-transfer}
\longrightarrow \text{loop-only}\\
&\longrightarrow\cdots\longrightarrow \text{terminal source}.
\end{aligned}
\]
The loop-only segments are controlled by a finite-order entropy estimate, whereas the plaquette transfers are controlled by the strong-coupling weighted total-variation bound. Since all local coefficients have non-negative asymptotic order after the normalizations used above, we may fix an order \(R\) and keep only trajectories whose total order is at most \(R\). The complementary stable high-energy and large-mass or unstable sectors are excluded from the retained histories and are controlled by the truncation argument of Subsection~\ref{subsec:coefficientwise-construction}.

We begin by isolating the loop-only dynamics. A loop-only elementary move is one of the non-terminal channel-and-join terms in $\mathcal T_{\Lambda,T,N}$.  It uses the rooted occurrence and another occurrence of the same unoriented edge. The two differentiated insertions at these occurrences are contracted by the magic formulas~\eqref{eq:magic-formula}, and the adjacent strands are reconnected according to the four surgeries displayed above.  In the positive same-orientation cases the resulting words may still contain occurrences of the same oriented edge.  If the output loop family $M$ satisfies $\mathfrak r(M)\ne\dagger$, we use the root $\mathfrak r(M)$; otherwise the move is terminal.  Loop-only moves never change the compactly supported plaquette decoration.

For fixed loop words, the loop-only expansion is exact: every loop-only coefficient is one of the monomials $N^{\#L_{x,y}-\#L-1}$.  Thus
\begin{equation}\label{eq:T-Laurent}
\mathcal T_{\Lambda,T,N}=\sum_{j\ge0}N^{-j}\mathcal T^{[j]}_{\Lambda,T},
\end{equation}
where only finitely many terms are present for the fixed input words. A same-trace splitting has $\#L_{x,y}=\#L+1$ and therefore coefficient of order $N^0$, whereas a different-trace merger has $\#L_{x,y}=\#L-1$ and therefore coefficient of order $N^{-2}$. Thus the order-zero loop-only moves are precisely the same-trace splittings, while mergers have order $2$.

The finite-order entropy estimate below is based on the following elementary picture.  Order-zero loop-only histories are encoded by rooted non-crossing forests on the occurrences of the initial loop words: every same-trace split chooses the root occurrence and a non-crossing partner.  Positive-order loop-only moves, namely mergers, are exceptional.  At total asymptotic order at most $R$, there are at most $R/2$ such exceptional events.  Thus the order-zero part has merely exponential entropy in the initial length, and the positive-order part adds only a polynomial factor depending on $R$.  In the estimate below this entropy is absorbed by allowing a slightly larger exponential weight on the input loop length than on the endpoint loop length.

\begin{definition}
Let $X=(L,x,\zeta)$ be a rooted state.  A loop-only history starting from $X$ is a finite sequence
\[
\mathfrak l=(X=X_0,s_1,X_1,\ldots,s_m,X_m)
\]
where each $s_i$ is one non-terminal loop-only elementary move.  The endpoint is $Y=X_m$.  The history is allowed to have length $m=0$.  If after the endpoint one chooses a terminal loop-only output, that terminal coefficient is included in the terminal source rather than in $\mathfrak l$.  If $s_i$ contributes through the coefficient of order $j_i$ in \eqref{eq:T-Laurent}, the asymptotic order of $\mathfrak l$ is $\operatorname{ord}(\mathfrak l)=j_1+\cdots+j_m$, and its coefficient is the product of the corresponding local coefficients.
\end{definition}

\begin{lemma}\label{lem:loop-only-entropy}
There exists $\kappa<\infty$ with the following property.  For every $R\ge0$ and every $\gamma\ge0$ there exists $C_R(\gamma)<\infty$ such that, for every rooted state $X=(L,x,\zeta)$, if $\mathcal L_{\le R}(X)$ denotes the set of loop-only histories starting from $X$ whose total asymptotic order is at most $R$, then
\begin{equation}\label{eq:loop-only-entropy}
\sum_{j=0}^R\sum_{\substack{\mathfrak l\in\mathcal L_{\le R}(X)\\
\operatorname{ord}(\mathfrak l)=j}}
|w^{[j]}_{\mathrm{loop}}(\mathfrak l)| e^{\gamma\ell(L_{Y(\mathfrak l)})}
\le
C_R(\gamma)e^{(\gamma+\kappa)\ell(L)}.
\end{equation}
The constants are independent of $N$, $\Lambda$, the compactly supported plaquette decoration $\zeta$, and the ambient finite volume.
\end{lemma}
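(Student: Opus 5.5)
The plan is a purely combinatorial entropy count on loop-only histories. We first split each history into its order-zero moves (same-trace splittings, coefficient $\pm1$) and its exceptional moves (mergers, each of order $2$, coefficient $\pm N^{-2}$, and contributing $N^{-2}$ only through $w^{[2]}$). Since $\operatorname{ord}(\mathfrak l)\le R$, a history contains at most $\lfloor R/2\rfloor$ mergers. Between two consecutive mergers, and before the first one, the history is a maximal run of non-terminal splittings. Every local coefficient has modulus one in its order-$j$ part. It therefore suffices to count histories, each weighted by $e^{\gamma\ell(L_{Y})}$.

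The first key observation concerns total word length. A splitting preserves it or lowers it by $2$: positive splittings keep all letters, negative ones delete the two cancelled occurrences. A merger also preserves it or lowers it by $2$. Hence $\ell(L_Y)\le\ell(L)$ along every history, and $e^{\gamma\ell(L_Y)}\le e^{\gamma\ell(L)}$. This already disposes of the $\gamma$-dependence, and the whole task reduces to showing that the number of order-$\le R$ histories from $X$ is at most $C_Re^{\kappa\ell(L)}$.

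Next, we bound the number of pure-splitting histories by $e^{\kappa_0\ell(L)}$. A split at the root $x$ with partner $y$ cuts one cyclic word into two cyclic words, whose letters are complementary arcs of the original word, possibly minus the two contracted letters. The successive root choices are dictated by the deterministic rule $\mathfrak r$, so a split history is determined by its sequence of partner choices. These cuts are nested or disjoint arcs, so the resulting decomposition is a non-crossing partition-type structure on the occurrences of the initial words. I would encode a split history as a rooted planar forest, or as a non-crossing chord diagram together with a sign per chord, on the $\ell(L)$ occurrences. The number of such objects is at most a Catalan-type quantity $\le 16^{\ell(L)}$. The number of moves is at most $\ell(L)/2$ because every split consumes a new root occurrence; I would justify this by checking that the root $x$ itself leaves the active part of the diagram, and positive splittings still preserve the chord structure.

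Finally, we handle mergers. A merger glues two words, and total length stays at most $\ell(L)$, so the same counting applies to each subsequent split run, with the current length bounded by $\ell(L)$. The choice of merger partner costs at most $\ell(L)$. The history is therefore a concatenation of at most $R/2+1$ split runs, each with at most $e^{\kappa_0\ell(L)}$ choices, interleaved with at most $R/2$ partner choices costing $\ell(L)$ each. This gives at most $\ell(L)^{R/2}e^{(R/2+1)\kappa_0\ell(L)}$ histories. Absorbing the polynomial factor requires $\kappa$ to depend on $R$. To keep $\kappa$ uniform, I would instead encode the entire history as a single non-crossing structure on the initial occurrences, with at most $R/2$ marked crossing chords. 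Each merger then costs a polynomial factor $\ell(L)^2$ rather than a fresh exponential, and the polynomial is absorbed into $C_R(\gamma)$ at the price of a slightly enlarged $\kappa$.

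The main obstacle is rigor in this encoding. Positive same-orientation splittings retain occurrences of the rooted edge, so the number of moves is not obviously controlled by $\ell(L)$. Injectivity of the map from histories to decorated non-crossing diagrams must be checked carefully through the deterministic root rule $\mathfrak r$. The fallback, if uniform $\kappa$ fails, is an $R$-dependent $\kappa$, which still suffices for fixed order.
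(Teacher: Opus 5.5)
Your plan is essentially the paper's proof: $\ell(L_Y)\le\ell(L)$ removes the $\gamma$-dependence, order-zero (splitting) segments are counted by a rooted non-crossing, Catalan-type encoding, and the at most $R/2$ mergers are kept to polynomial cost by encoding the whole history as one non-crossing structure with marked merger data. That last step is the one you actually need. The paper also only asserts it. Your $R$-dependent fallback would not give the uniform $\kappa$ of the statement, and it would make the threshold $T_0$ in Proposition~\ref{prop:trajectory-estimate} depend on $R$.

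One incidental correction. A split history can have more than $\ell(L)/2$ non-terminal moves, because positive splittings keep the root occurrence: from $[(ee^{-1})^k]$ one can perform $k-1$ positive splittings followed by $k-1$ non-terminal cancellations, and $2k-2>k$ for $k\ge3$. Use the easy bound of at most $\ell(L)$ moves and allow chords that share the root endpoint instead. The Catalan-type count is unaffected.
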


\begin{proof}
Loop-only moves do not change the compactly supported plaquette decoration, and they do not increase the total loop length. The estimate does not rely on monotonicity of the number of occurrences of the rooted edge, which may fail in the positive same-orientation cases. The finiteness and entropy bound instead come from the same-trace/merger degree bookkeeping described above.

We first count an order-zero segment.  Such a segment uses only same-trace splittings.  At every order-zero step, the root occurrence and its partner lie on the same trace component, and the normalized magic formula splits this component into two trace components without crossing the cyclic order.  Record the two occurrences paired by the split, viewed as two marked points on the cyclic word on which the split was performed.  Pulling these pairs back to the word on which the segment starts gives a collection of non-crossing chords: nested splittings give nested chords and disjoint splittings give disjoint chords.  The deterministic outgoing-root rule records which child interval continues the rooted evolution.  Therefore the collection of rooted non-crossing chords determines the order-zero segment up to bounded local multiplicity.  Consequently the number of order-zero segments starting from total loop length $n$ is bounded by $C^n$; for instance, after cutting the cyclic words at the relevant roots and forgetting some structure, such segments inject with bounded multiplicity into plane rooted forests, which are Catalan-bounded.  The order-zero local coefficients are universal normalized trace-contraction coefficients and have absolute value bounded by a fixed constant, so the total absolute weight of all order-zero segments starting from length $n$ is at most $C^n$, after changing $C$.

Now allow total asymptotic order at most $R$.  A loop-only move which is not an order-zero same-trace split is exceptional: it is a trace-merging term.  Each exceptional move has positive asymptotic order, hence at most $R$ exceptional moves can appear in a history of total order at most $R$.  During the whole history all loop words have length at most $n=\ell(L)$.  Thus the exceptional moves contribute at most a polynomial factor $n^{q_R}$, together with a coefficient bound depending only on $R$.  Between successive exceptional moves there are order-zero segments, and each of these segments is bounded by the preceding non-crossing-forest estimate on subwords of total length at most $n$.  Equivalently, the whole history can be encoded by one rooted non-crossing forest, together with at most $R$ exceptional marked trace-merging operations and their attachment data.  The exceptional data have only polynomially many choices in $n$, with exponent depending on $R$, while the non-crossing forest has Catalan-type exponential growth with a universal base.  Therefore the total absolute loop-only weight of all histories of order at most $R$ is bounded by $C_R n^{q_R}B^n$ for constants $C_R,q_R<\infty$ depending on $R$ and a universal constant $B<\infty$.

Finally, every endpoint $Y(\mathfrak l)$ satisfies $\ell(L_{Y(\mathfrak l)})\le n$.  Hence
\[
\sum_{j=0}^R\sum_{\substack{\mathfrak l\in\mathcal L_{\le R}(X)\\
\operatorname{ord}(\mathfrak l)=j}}
|w^{[j]}_{\mathrm{loop}}(\mathfrak l)| e^{\gamma\ell(L_{Y(\mathfrak l)})}
\le C_R n^{q_R}B^n e^{\gamma n}.
\]
Choose $\kappa>\log B$.  The polynomial factor $n^{q_R}$ is absorbed into $e^{(\kappa-\log B)n}$, after increasing the constant to $C_R(\gamma)$.  Thus
\[
C_R n^{q_R}B^n e^{\gamma n}
\le C_R(\gamma)e^{(\gamma+\kappa)n},
\]
which proves \eqref{eq:loop-only-entropy}.
\end{proof}

\begin{definition}
Fix an initial rooted state $X_0$.  An order-truncated rooted history of order $r$ starting from $X_0$ is a finite alternating sequence
\[
\mathfrak h=(\mathfrak l_0,\tau_1,\mathfrak l_1,\ldots,\tau_m,\mathfrak l_m,\sigma),
\]
where each $\mathfrak l_i$ is a loop-only history, each $\tau_i$ is one plaquette-transfer term of the stable raw kernel, and $\sigma$ is one coefficient in the controlled expansion of a retained terminal source term, as supplied by Proposition~\ref{prop:decorated-terminal-sources}.  The total asymptotic order of all local and terminal-source coefficients is $r$.  Its coefficient is denoted by $w^{[r]}_{\Lambda,T}(\mathfrak h)$.  The support of $\mathfrak h$ is the union of all loops, decoration supports and plaquettes touched by the local terms appearing in the history.  The energy of a transfer $\tau_i$ is $\mathcal E(\alpha_i)$, where $\alpha_i$ is its input plaquette label.  Thus the histories counted in coefficients are stable finite-energy histories; the complementary sectors are treated as remainders in Theorem~\ref{thm:rooted-coeff-construction}.
\end{definition}

For a stable rooted history $\mathfrak h$ appearing in the coefficientwise expansion, we define its projective mass by
\[
M(\mathfrak h)=\sum_p m_{\mathfrak h}(p),
\]
where $m_{\mathfrak h}(p)$ is the aggregate projective mass at $p$ obtained by summing, with multiplicity one for each local contribution of the history, the projective masses of the input labels at $p$ and of the labels appearing at $p$ in the terminal source. Labels merely transported unchanged through intermediate loop-only segments are not counted again.

\begin{proposition}\label{prop:trajectory-estimate}
Let $R\ge0$ and fix an initial rooted state $X_0$ whose compactly supported plaquette decoration is either zero, or fixed, stable and of finite $\mathcal E$-energy.  There exists $T_0(d)<\infty$ such that, for every $T>T_0(d)$, the sum of the absolute values of all stable rooted histories of order at most $R$ starting from $X_0$ is finite, uniformly in $N$ and $\Lambda$.  More precisely, there are constants $C_{R,X_0,T},c>0$ such that
\begin{equation}\label{eq:history-locality}
\sum_{r=0}^R\sum_{\substack{\mathfrak h\text{ stable}: \operatorname{ord}(\mathfrak h)=r\\
\operatorname{dist}(\operatorname{supp}\mathfrak h,\operatorname{supp}X_0)\ge A}}
|w^{[r]}_{\Lambda,T}(\mathfrak h)|
\le
C_{R,X_0,T}e^{-cA}.
\end{equation}
Moreover the same bound remains true, on retained stable finite-energy local data, if one replaces one local coefficient by its order-$(R+1)$ remainder, after multiplying the right-hand side by $N^{-R-1}$.
\end{proposition}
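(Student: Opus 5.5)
The plan is to decompose each stable rooted history $\mathfrak h=(\mathfrak l_0,\tau_1,\mathfrak l_1,\ldots,\tau_m,\mathfrak l_m,\sigma)$ into its constituent blocks and bound the total absolute weight by an iterated weighted row-sum argument. The weight is carried by the norm $\|\cdot\|_{a,\eta}$ of \eqref{eq:raw-weighted-norm}. First, I fix the order budget: since every local coefficient has non-negative asymptotic order, a history of order at most $R$ distributes its order among the loop-only segments, the transfers and the terminal coefficient in at most $\binom{2m+R+1}{R}$ ways. It is then enough to bound the number $m$ of transfers geometrically rather than by the order. Transfers carry the explicit prefactor $N^{-1}$ in \eqref{eq:raw-transfer-kernel}, but after adjunction that factor is compensated, so I do not rely on order to limit $m$. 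Instead I use the smallness $\Theta_T(a,\eta)<1$ from Proposition~\ref{prop:raw-plaquette-transfer}.

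Second, I set up a one-step majorant. For a test function $H\ge0$ with $H(X)\le e^{a|X|_*+\eta\mathcal E(X)}$, a loop-only segment of order at most $R$ followed by one stable transfer maps $H$ to a function bounded by $C_R\Theta_T(a',\eta)e^{a|X|_*+\eta\mathcal E(X)}$. This uses Lemma~\ref{lem:loop-only-entropy} with $\gamma=a$, which costs the shift $a\mapsto a+\kappa$ on the input loop length. To avoid the drift of $a$ accumulating over $m$ steps, I will not reapply the entropy lemma with increasing exponents. I will use the fact that loop-only moves do not increase $\ell(L)$, together with a fixed choice $a\ge\kappa$, so that the $e^{\kappa\ell(L)}$ entropy is charged once against the fixed initial length $\ell(L_0)$. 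Each transfer increases length by at most $C_d r$ (Lemma~\ref{lem:local-transfer-distortion}), and that increase is paid by the factor $e^{aCr}$ already included in $\Theta_T$. Concretely, I would weight a state by $e^{(a+\kappa)\ell}$ and absorb the new loop-only entropy of the length added since the last transfer into $\Theta_T(a+\kappa,\eta)$. This is legitimate because the Catalan entropy is exponential in length and hence subadditive along the history. The terminal factor $\sigma$ is bounded by Proposition~\ref{prop:decorated-terminal-sources}, uniformly on retained stable finite-energy data. Summing the resulting geometric series over $m$ gives finiteness uniformly in $N$ and $\Lambda$ once $T>T_0(d)$.

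Third, to obtain locality \eqref{eq:history-locality}, I note that a history reaching distance $A$ from $\operatorname{supp}X_0$ must accumulate support growth at least $A$. Loop-only moves stay inside the current loop support, so the growth comes from the transfers, each of which extends the support by at most $C_d r_i$, and from the terminal linked clusters of Lemma~\ref{lem:constraint-component-cancellation}, which are exponentially decaying in size by Theorem~\ref{thm:stable-hk-block}. Hence $\sum_i r_i+|\text{terminal cluster}|\ge A/C_d$. I would then repeat the row-sum argument with an extra weight $e^{c'r}$ per transfer, which keeps $\Theta_T<1$ for $T$ large, and conclude by Markov's inequality to get $e^{-c'A/C_d}$. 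For the remainder statement, replacing one local coefficient by its order-$(R+1)$ remainder gives a factor $N^{-R-1}$ times a constant that depends on the sector. Those constants are dominated by the same shell majorants $C^r e^{-cTr}$ of Lemma~\ref{lem:paid-shell-row}, so the whole sum carries $N^{-R-1}$.

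The main obstacle is the second step: keeping the loop-only entropy shift $\kappa$ from compounding across the $m$ alternations. My first attempt would be the subadditive charging scheme described there, in which the entropy of each loop-only segment is billed only to the length created since the previous transfer. If the non-crossing forest encoding of Lemma~\ref{lem:loop-only-entropy} does not localize in this way, I would instead enlarge $T_0(d)$ so that $\Theta_T(a+j\kappa,\eta)$ decays fast enough to absorb a linear drift in $j$.
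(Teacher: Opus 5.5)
Your architecture matches the paper's: Lemma~\ref{lem:loop-only-entropy} for loop-only segments, the weighted adjoint bound $\Theta_T<1$ of Proposition~\ref{prop:raw-plaquette-transfer} for stable transfers, and a geometric series in the number of transfers. Locality comes from an extra exponential weight plus Markov's inequality, and your inclusion of the terminal linked clusters there is, if anything, more careful than the paper. The $N^{-R-1}$ claim comes from the same majorant with one factor replaced by its remainder. The paper handles the entropy shift in one sentence: choose $a>\gamma+\kappa$ and let $C_{R,\gamma,X_0}$ absorb only the initial factor $e^{(\gamma+\kappa)\ell(L_0)}$. So you have correctly located the one step that needs a real argument, but your proposal does not supply it. Applied segment by segment, Lemma~\ref{lem:loop-only-entropy} charges $e^{\kappa\ell_i}$ on the \emph{full} current length $\ell_i$ at the start of the $i$-th segment, not on the length created since the previous transfer. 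After a transfer the root is reset by $\mathfrak r$ to the first non-trivial occurrence of the output words, so the next segment may split parts of the family that existed long before. With only positive splittings the lengths never decrease, and the product of segment bounds can be as large as $\exp\{\kappa(m+1)\ell(L_0)+\kappa C_d\sum_i(m+1-i)r_i\}$, which no fixed $T$ absorbs. ``Exponential in length, hence subadditive'' does not help, because the problem is that the same letters are recounted in every segment. Your fallback fails outright: $\Theta_T(a+j\kappa,\eta)=+\infty$ as soon as $(a+j\kappa)C\ge cT-\log C-\eta C$, so no fixed $T$ can absorb a drift linear in the unbounded number $j$ of transfers.

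What is needed is a history-level entropy bound of the form $C^{\ell(L_0)}\prod_iC^{r_i}$, up to $R$-dependent factors subexponential in the total length. This is exactly what the paper's displayed majorant $C_{R,\gamma,X_0}\prod_iC'e^{-c'Tr_i}$ presupposes. The natural route is to count all loop-only moves of the whole history at once:
- each order-zero move is a same-trace splitting, which either creates a new non-empty component at fixed total length (positive case) or lowers the total length by two (negative case);
- transfers preserve $\#L$ (recall $\#L_u=\#L$) and add at most $C_dr_i$ letters (Lemma~\ref{lem:local-transfer-distortion});
- there are at most $R/2$ mergers.

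Hence the whole history contains $O(\ell(L_0)+\sum_ir_i+R)$ splittings. A transfer only inserts its new letters at a single location of one cycle, while splittings cut cycles into arcs. So the non-crossing encoding of Lemma~\ref{lem:loop-only-entropy} can be run once on all letter positions ever created, together with an interleaving pattern with the transfers. The factor $C^{\ell(L_0)}$ then goes into $C_{R,X_0,T}$, and each $C^{r_i}$ into $e^{-cTr_i}$.

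Relatedly, do not pay a factor $C_R$ at every alternation as in your second step. Requiring $C_R\Theta_T<1$ makes $T_0$ depend on $R$, whereas the statement, and its use for all $R$ at fixed $T$ in Theorem~\ref{thm:main}, needs $T_0(d)$ uniform in $R$. $R$-dependent constants may only be charged to the at most $R/2$ merger events.
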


\begin{proof}
Choose any $\gamma\ge0$, let $\kappa$ be the constant of Lemma~\ref{lem:loop-only-entropy}, choose $a>\gamma+\kappa$, and choose any $\eta>0$.  We work throughout this proposition with the stable retained transfer, estimated in the adjoint weighted total-variation sense of Proposition~\ref{prop:raw-plaquette-transfer}; the large-mass/unstable complement has already been separated as the exponentially small remainder of that proposition.  Then choose $T_0(d)$ so large that
\begin{equation}\label{eq:paid-geometric-small}
\Theta_T:=\sum_{r\ge1}\exp\{r(\log C-cT+aC+\eta C)\}<1.
\end{equation}
Here $C,c$ are the constants of Proposition~\ref{prop:raw-plaquette-transfer}. Consider a history with input energies $r_1,\ldots,r_m$.  Between two transfers, or before the first transfer, the sum over all loop-only segments of total order at most $R$ is controlled by Lemma~\ref{lem:loop-only-entropy}.  The factor $e^{(\gamma+\kappa)\ell}$ produced on the input side of this lemma is dominated by the $e^{a\ell}$ weight transported by the adjoint plaquette-transfer estimate because $a>\gamma+\kappa$.  More explicitly, if $X_i^-$ is the state just before the $i$-th transfer and $X_i^+$ the state just after it, then the adjoint shell estimate of Proposition~\ref{prop:raw-plaquette-transfer} gives the corresponding weighted total-variation bound
\[
C^{r_i}e^{-cTr_i+aCr_i+\eta Cr_i}.
\]
After the loop-only entropy factors are inserted, the contribution of all histories with these energies is bounded by
\[
C_{R,\gamma,X_0}
\prod_{i=1}^m C'e^{-c'Tr_i},
\]
where $C'$ and $c'$ depend only on $d$ and on the choice of $a,\eta$, while $C_{R,\gamma,X_0}$ absorbs the initial factor $e^{(\gamma+\kappa)\ell(L_0)}$ and the finitely many exceptional loop-only moves of total order at most $R$.  Summing over $m\ge0$ and $r_i\ge1$ gives a convergent geometric series by \eqref{eq:paid-geometric-small}. This proves absolute summability, uniformly in $N$ and $\Lambda$.

For locality, observe first that a loop-only segment stays inside the support already created by the preceding steps: it only reconnects existing occurrences of the rooted edge and never introduces a new plaquette.  A transfer is local around the current root.  More precisely, by Lemma~\ref{lem:local-transfer-distortion}, a transfer with energy $r$ can create only $O_d(r)$ new tensor strands and can move the outgoing root only by a distance $O_d(r)$ in the lattice.  Consequently, if a history reaches distance at least $A$ from $\operatorname{supp}X_0$, then either the sum of its energies or the total length created by its transfers is at least $c_dA$.  Loop-only pieces cannot be responsible for such a displacement unless this support has already been created by previous transfers.  Keeping an extra factor $e^{\rho\operatorname{diam}(\operatorname{supp}\mathfrak h)}$ in the previous row estimate, with $\rho>0$ small, therefore only changes the plaquette-transfer factor by $e^{\rho C_dr}$.  After increasing $T_0(d)$, the strict inequality \eqref{eq:paid-geometric-small} is preserved.  Removing this extra factor gives \eqref{eq:history-locality}.

The assertion about remainders follows from the same majorant.  On every fixed stable finite-mass local datum, the Schur--Weyl, Haar, recoupling and heat-kernel factors have controlled asymptotic expansions; after truncation at order $R$, their remainders are bounded by $N^{-R-1}$ times the same local majorant used above.  If one local factor is replaced by such a remainder, the preceding summation gives the same history bound multiplied by $N^{-R-1}$.
\end{proof}

\begin{proposition}\label{prop:mass-sensitive-summability}
There exists a constant $c>0$, depending only on $d$, such that the following holds.  For every coefficient order $r\ge0$, there exist $T_{\mathrm{mass}}(d,r)<\infty$ and $C_r<\infty$ such that, for every $T>T_{\mathrm{mass}}(d,r)$, every finite loop family $L$, every root $x$, every finite volume $\Lambda$, and every $A\ge0$,
\begin{equation}\label{eq:mass-sensitive-summability}
\sum_{\substack{\operatorname{ord}(\mathfrak h)=r\\ M(\mathfrak h)\ge A}}
\left|w^{[r]}_{\Lambda,T}(\mathfrak h)\right|
\le C_r^{\ell(L)}\exp\{-(cT-\log C_r)A\},
\end{equation}
where the sum is over stable finite-energy rooted histories starting from $(L,x,\varnothing)$.  After increasing $T_{\mathrm{mass}}(d,r)$ if necessary, the exponent $cT-\log C_r$ is positive.  The same estimate holds in infinite volume.
\end{proposition}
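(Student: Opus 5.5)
The plan is to rerun the geometric-series argument of Proposition~\ref{prop:trajectory-estimate}, but with the heat-kernel decay split between summability and a mass penalty. Start from $(L,x,\varnothing)$, so the initial decoration carries no energy and the only projective mass in a history comes from the input labels $\alpha_i$ of its transfers $\tau_1,\ldots,\tau_m$ and from the labels in the terminal source $\sigma$. By definition, $M(\mathfrak h)\le\sum_i m(\alpha_i)+M(\sigma)\le\sum_i r_i+M(\sigma)$, since $m\le\mathcal E$ and labels merely transported through loop-only segments are not recounted. Thus it suffices to bound the weighted sum
\[
\sum_{\operatorname{ord}(\mathfrak h)=r}|w^{[r]}_{\Lambda,T}(\mathfrak h)|\,e^{\theta M(\mathfrak h)}
\]
with $\theta=cT/2$, and then apply the Markov inequality $\mathbf 1_{\{M\ge A\}}\le e^{\theta(M-A)}$.

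\textbf{Transfers.} Each transfer with energy shell $r_i$ has weighted adjoint bound $C^{r_i}e^{-cTr_i+aCr_i+\eta Cr_i}$ by Proposition~\ref{prop:raw-plaquette-transfer}. Inserting $e^{\theta r_i}$ turns this into $C^{r_i}e^{-(cT/2)r_i+(a+\eta)Cr_i}$. Summability over $r_i\ge1$ with total below $1$ then holds once $T$ exceeds a threshold depending on $d$, $a$ and $\eta$.

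\textbf{Loop-only segments.} Fix $\gamma=0$ and $a>\kappa$ in Lemma~\ref{lem:loop-only-entropy}. The loop-only segments of total order at most $r$ then contribute $C_r(0)e^{\kappa\ell}$. At the start this yields the factor $C_r^{\ell(L)}$. Between transfers it is absorbed into the $e^{a|X|_*}$ weight transported by the adjoint estimate, exactly as in the proof of Proposition~\ref{prop:trajectory-estimate}.

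\textbf{Terminal source.} The factor $e^{\theta M(\sigma)}$ on the terminal source is controlled by Proposition~\ref{prop:decorated-terminal-sources}. Its linked-cluster expansion over supports pinned to $S(\zeta)$ is bounded via Theorem~\ref{thm:stable-hk-block}, whose proof gives $C^{|Y|+M}e^{-cT(M+Q)}$ per block. Keeping half of $e^{-cTM}$ in reserve makes $e^{\theta M}$ affordable, after shrinking $c$ by a $d$-dependent factor and enlarging $T_{\mathrm{mass}}$.

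\textbf{Dependence on $r$.} The $r$-dependence enters only through the constants $C_r(0)$ of the loop-only entropy bound and the finitely many coefficient-extraction constants of order at most $r$. This is why $T_{\mathrm{mass}}$ and $C_r$ may depend on $r$ while $c$ depends only on $d$. Combining these pieces gives
\[
\sum|w|e^{\theta M}\le C_r^{\ell(L)}\sum_m\Theta^m
\]
with $\Theta<1$, hence \eqref{eq:mass-sensitive-summability} with exponent $\theta-\log C_r$ after renaming constants.

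\textbf{Uniformity and infinite volume.} All bounds are uniform in $\Lambda$. The infinite-volume statement follows by the exponential locality of Proposition~\ref{prop:trajectory-estimate}, passing to the limit term by term under the uniform majorant via dominated convergence.

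\textbf{Main obstacle.} I expect the hardest step to be the terminal source, specifically checking that $M(\sigma)$ is dominated by energies already paid for with spare decay. The transfers are straightforward because each shell bound carries $e^{-cTr_i}$ with $r_i\ge m(\alpha_i)$. For the source, however, one must reopen the proof of Theorem~\ref{thm:stable-hk-block} rather than use its statement, because the extra factor $e^{\theta M}$ must be absorbed blockwise before summing over $M$. I would first try to reprove that theorem with $T$ replaced by $T/2$ plus the tilt $A=e^{\theta}$, invoking Proposition~\ref{prop:nonstable-polymer-tail} for the large-mass range.
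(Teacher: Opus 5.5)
Your architecture is the paper's: tilt each history by $e^{\theta M(\mathfrak h)}$, rerun the geometric series of Proposition~\ref{prop:trajectory-estimate} with tilted shells, and conclude by Markov. The difference is how $M(\mathfrak h)$ is paid for, and that is where your argument fails.

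The paper charges \emph{all} of $M(\mathfrak h)$ to the transfers via Lemma~\ref{lem:local-transfer-distortion}. This gives \eqref{eq:mass-controlled-by-paid-energy}, $M(\mathfrak h)\le C_d\sum_i\mathcal E(\alpha_i)$. The tilt then replaces the shell exponent $cT$ by $cT-C_d\theta$, which is why the paper takes $\theta_T=c_1T-\log C_r$ with $c_1$ small compared with $c/C_d$. You never use the distortion lemma. You charge only $\sum_i m(\alpha_i)$ to the transfers and try to pay for $M(\sigma)$ with spare heat-kernel decay inside the terminal source.

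That works for the free labels summed inside the decorated vacuum coefficient. It does not work for the prescribed labels of the final decoration $\zeta$, which are outputs $\beta_p$ of earlier transfers.

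\begin{itemize}
\item In $\Phi_{\Lambda,T,N}(M;\zeta)$ the only decay attached to a prescribed label is its reference weight $\widehat Q^{\HK}_{T,N}(\beta_p)$, entering through $\kappa_{\Lambda,T,N}$.
\item That factor is exactly what the adjunction ratio $\widehat Q^{\HK}_{T,N}(\alpha_p)/\widehat Q^{\HK}_{T,N}(\beta_p)$ cancels in the adjoint row bound of Lemma~\ref{lem:paid-local-adjoint-row}. There it is the $\kappa_{\Lambda,T,N}(\gamma)$ with $\gamma_p=\beta_p$ on the left-hand side, and hence it is already used in Proposition~\ref{prop:raw-plaquette-transfer}.
\item Reusing it in a tilted version of Theorem~\ref{thm:stable-hk-block} therefore double counts.
\end{itemize}

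Once $m(\beta_p)$ is charged to the transfer that created it, $\theta=cT/2$ is too large. Take an input with $q=0$ and $m(\alpha_p)=r$, and a Pieri coevaluation output with $m(\beta_p)=r+1$. The tilted shell is then $C^re^{-cTr+\frac{cT}{2}(2r+1)}=C^re^{cT/2}$, with no decay in $r$ at all.

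The fix is the paper's: bound the decoration masses by Lemma~\ref{lem:local-transfer-distortion}, and take $\theta=c_1T$ with $c_1$ small relative to $c/C_d$. Your tilted reopening of the block estimate is then a genuinely useful addition only for the free cluster labels of the terminal coefficient, whose decay has not been spent. The paper folds those labels into \eqref{eq:mass-controlled-by-paid-energy} without a separate argument.
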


\begin{proof}
Let $\mathfrak h$ be a stable rooted history starting from $(L,x,\varnothing)$, and let $\alpha_1,\ldots,\alpha_m$ be the input plaquette labels of its transfers. Loop-only histories do not change plaquette decorations. By the local distortion estimate of Lemma~\ref{lem:local-transfer-distortion}, a transfer with input energy $\mathcal E(\alpha_i)$ can create only $O_d(\mathcal E(\alpha_i))$ projective mass in its output decoration and in the terminal source produced at that step. Therefore
\begin{equation}\label{eq:mass-controlled-by-paid-energy}
M(\mathfrak h)
\le C_d\sum_{i=1}^m \mathcal E(\alpha_i).
\end{equation}
The constant is independent of the ambient loop family; the initial state has empty plaquette decoration, and the Wilson collar carries no projective plaquette mass.

Repeat the proof of Proposition~\ref{prop:trajectory-estimate} with the additional factor $e^{\theta M(\mathfrak h)}$. Using \eqref{eq:mass-controlled-by-paid-energy}, this multiplies the plaquette-transfer shell with input energy $s$ by at most $e^{\theta C_d s}$. Thus the stable adjoint plaquette-transfer estimate has the same form as in Proposition~\ref{prop:raw-plaquette-transfer}, but with the exponent $cT$ replaced by $cT-C_d\theta$, after changing constants depending only on $d$.

Choose $\theta=\theta_T$ of the form
\[
\theta_T=c_1T-\log C_r,
\]
with $c_1>0$ small enough, depending only on $d$, so that the weighted transfer remains summable in the proof of Proposition~\ref{prop:trajectory-estimate}. For every fixed order $r$, this is possible for all $T>T_{\mathrm{mass}}(d,r)$, after enlarging $C_r$. The finite-order loop-only entropy estimate, Lemma~\ref{lem:loop-only-entropy}, is unchanged by the mass weight. Hence
\[
\sum_{\operatorname{ord}(\mathfrak h)=r}
|w^{[r]}_{\Lambda,T}(\mathfrak h)|e^{\theta_T M(\mathfrak h)}
\le C_r^{\ell(L)}.
\]
Markov's inequality gives
\[
\sum_{\substack{\operatorname{ord}(\mathfrak h)=r\\ M(\mathfrak h)\ge A}}
|w^{[r]}_{\Lambda,T}(\mathfrak h)|
\le C_r^{\ell(L)}e^{-\theta_TA}.
\]
Renaming the resulting positive constant as $c$ gives \eqref{eq:mass-sensitive-summability}.

The large-mass and unstable complementary sectors are separated in the coefficientwise construction and contribute $O(N^{-R-1})$ at every prescribed order. They therefore do not enter the fixed coefficient $w^{[r]}$. The infinite-volume estimate follows from the same bound and the exponential locality of Theorem~\ref{thm:rooted-coeff-construction}, or directly by the same support-counting argument on $\Z^d$.
\end{proof}

\subsection{Coefficientwise construction}\label{subsec:coefficientwise-construction}

We now make precise the limiting protocol used to remove the auxiliary finite representation cutoffs.  The raw plaquette-label sums are first interpreted with an auxiliary finite representation truncation $F$.  Inside $F$, we separate three parts of the plaquette-transfer kernel: a retained stable finite-energy kernel, a stable high-energy tail, and a large-mass or unstable complement.  Exceptional transfers are controlled at their first occurrence, so the error bounds remain uniform after summing over an arbitrary number of retained transfers.

Fix $a,\eta>0$ as in Proposition~\ref{prop:raw-plaquette-transfer}, and choose $T$ so large that
\begin{equation}
\label{eq:good-transfer-contraction}
\theta_T
:=C_0\sum_{r\ge1}\exp\{r(\log C-cT+aC+\eta C)\}<1.
\end{equation}
Let $F\subset\widehat{\U(N)}$ be a finite representation set containing the labels of the fixed input decoration.  The finite-truncation unknown $\Psi^F_{\Lambda,T,N}$ is obtained from the raw equation by restricting every free plaquette-label sum, both in the reference-law adjunction and in terminal sources, to labels in $F$.  For $A<\infty$, set
\[
\begin{aligned}
S_A(N)&=\{\lambda\in\widehat{\U(N)}:
\lambda\text{ is stable},\ m(\lambda)\le\varepsilon_dN,
\ \mathcal E(\lambda)\le A\},\\
S_A^F(N)&:=F\cap S_A(N).
\end{aligned}
\]
We decompose the finite-cutoff raw plaquette-transfer kernel according to the input and output labels touched by the one-step transfer. A one-step plaquette transfer
\[
X=(L,x,\zeta)\longrightarrow X'=(L_u,\mathfrak r(L_u),\zeta^{p\to\beta_p})
\]
with input label $\alpha_p\in A_p(\zeta)$ is called $(A,F)$-retained if
\[
\alpha_p\in S_A^F(N)\qquad\text{and}\qquad \beta_p\in S_A^F(N).
\]
The retained kernel $K_A^F$ is the restriction of the finite-truncation kernel to $(A,F)$-retained transfers. The stable high-energy kernel $H_A^F$ is the stable small-mass part inside $F$ for which at least one of $\alpha_p$ and $\beta_p$ is stable but not in $S_A(N)$. The kernel $U_A^F$ is the complement inside $F$, namely the part in which at least one of $\alpha_p$ and $\beta_p$ is unstable or outside the stable small-mass range.  The same decomposition is applied to terminal plaquette-transfer sources:
\[
G^F=G_A^F+G_{H,A}^F+G_{U,A}^F,
\]
where $G_A^F$ is the retained terminal source, $G_{H,A}^F$ the stable high-energy terminal source, and $G_{U,A}^F$ the large-mass or unstable terminal source.  The loop-only terminal source is included in $G_A^F$, since it carries no plaquette label.  The estimates for these terminal pieces are those of Proposition~\ref{prop:decorated-terminal-sources}; in particular they enter as terminal coefficients in the finite-cutoff expansion. By Proposition~\ref{prop:raw-plaquette-transfer}, each retained transfer satisfies the adjoint weighted total-variation estimate
\begin{equation}
\label{eq:retained-good-kernel-bound}
\theta_T
:=C_0\sum_{r\ge1}\exp\{r(\log C-cT+aC+\eta C)\}<1,
\end{equation}
uniformly in $A,F,N$ and $\Lambda$.  Moreover, Lemma~\ref{lem:local-transfer-distortion} implies that if a stable transfer is not retained by $K_A^{F}$, then either its input energy is larger than $A/C_d$, or its output energy is larger than $A$ and therefore its input energy is larger than $A/C_d$ after changing $C_d$.  Hence Lemma~\ref{lem:paid-shell-row} gives the adjoint first-exceptional-transfer bound
\begin{equation}
\label{eq:stable-exceptional-kernel-bound}
\varepsilon_A\le C\sum_{r>A/C_d} C^r e^{-cTr}
\xrightarrow[A\to\infty]{}0,
\end{equation}
uniformly in $F,N$ and $\Lambda$.  Finally, for every prescribed expansion order $R$, the collective large-mass/unstable estimate \eqref{eq:paid-unstable-complement}, combined with the exponential non-spectral local bound, gives the adjoint first-exceptional-transfer estimate
\begin{equation}
\label{eq:unstable-exceptional-kernel-bound}
C_{A,R,X_0,T}N^{-R-1}
\end{equation}
after coefficient extraction up to order $R$, uniformly in $F$ and in the finite volume.

For later reference we also record the corresponding finite-cutoff identity.  With the notation above,
\begin{equation}
\label{eq:finite-cutoff-rooted-equation}
\begin{aligned}
\Psi^F_{\Lambda,T,N}
={}&G_A^F+G_{H,A}^F+G_{U,A}^F
+\mathcal T_{\Lambda,T,N}\Psi^F_{\Lambda,T,N}\\
&+K_A^F\Psi^F_{\Lambda,T,N}
+H_A^F\Psi^F_{\Lambda,T,N}
+U_A^F\Psi^F_{\Lambda,T,N}.
\end{aligned}
\end{equation}
A retained finite-cutoff history from $X_0$ is a rooted trajectory obtained by alternating finite loop-only segments generated by $\mathcal T_{\Lambda,T,N}$ with retained transfers from $K_A^F$, and ending with the retained source $G_A^F$.  Its weight $w_{\Lambda,T,N}(\mathfrak h)$ is the product of the local coefficients, including the terminal source coefficient.  We denote by $\mathcal H^{F,A,\mathrm{ret}}_{\le m}(X_0)$ the finite set of retained histories with at most $m$ retained transfers.  Iterating only the retained kernel $K_A^F$ and the loop-only operator $\mathcal T_{\Lambda,T,N}$, and stopping after $m$ transfers, gives the finite algebraic identity
\begin{equation}
\label{eq:finite-cutoff-history-identity}
\Psi^F_{\Lambda,T,N}(X_0)
=
\sum_{\mathfrak h\in\mathcal H^{F,A,\mathrm{ret}}_{\le m}(X_0)}
w_{\Lambda,T,N}(\mathfrak h)
+
\mathcal T^{F,A,m}_{\Lambda,T,N}(X_0)
+
\mathcal B^{F,A,m}_{\Lambda,T,N}(X_0),
\end{equation}
where $\mathcal T^{F,A,m}$ is the retained-history tail with more than $m$ retained transfers, and $\mathcal B^{F,A,m}$ is the contribution of histories in which the first non-retained input object is either an $H_A^F$-transfer or terminal source, or a $U_A^F$-transfer or terminal source.

The first-exceptional-transfer decomposition gives the bounds
\begin{align}
\label{eq:first-exceptional-stable-bound}
|\mathcal B^{F,A,m}_{H}(X_0)|
&\le C_{R,X_0,T}\frac{\varepsilon_A}{(1-\theta_T)^2},\\
\label{eq:first-exceptional-unstable-bound}
|\mathcal B^{F,A,m}_{U}(X_0)|
&\le C_{A,R,X_0,T}\frac{N^{-R-1}}{(1-\theta_T)^2}
\end{align}
after coefficient extraction up to order $R$, uniformly in $m,F$ and $\Lambda$.  Indeed, mark the first exceptional transfer or exceptional terminal source.  The retained histories before it and after it are summed by the geometric majorant generated by~\eqref{eq:retained-good-kernel-bound}; the loop-only pieces are already included in the finite-order loop-only estimate of Lemma~\ref{lem:loop-only-entropy} and Proposition~\ref{prop:trajectory-estimate}.  This removes any dependence of the exceptional-sector constants on the number of transfers.  The retained tail satisfies
\begin{equation}
\label{eq:retained-history-tail-bound}
\lim_{m\to\infty}\sup_{F,A,N,\Lambda}
|\mathcal T^{F,A,m}_{\Lambda,T,N}(X_0)|=0,
\end{equation}
by Proposition~\ref{prop:trajectory-estimate}.

\begin{theorem}\label{thm:rooted-coeff-construction}
Let $d\ge2$.  There exists $T_0(d)<\infty$ such that the following holds for $T>T_0(d)$.  Fix either a Wilson initial state $X_0=(L,x,\varnothing)$, or more generally a rooted state $X_0=(L,x,\zeta)$ whose compactly supported plaquette decoration is fixed, stable and has finite $\mathcal E$-energy, and fix an integer $R\ge0$.  Then, uniformly in the finite volume $\Lambda$,
\begin{equation}\label{eq:rooted-asymptotic-expansion}
\Psi_{\Lambda,T,N}(X_0)
=
\sum_{r=0}^R N^{-r}\Psi^{(r)}_{\Lambda,T}(X_0)
+
O_{R,X_0,T}(N^{-R-1}).
\end{equation}
The coefficient $\Psi^{(r)}_{\Lambda,T}(X_0)$ is the absolutely convergent sum of all stable order-$r$ rooted histories starting from $X_0$:
\begin{equation}\label{eq:rooted-history-coeff}
\Psi^{(r)}_{\Lambda,T}(X_0)
=
\sum_{\operatorname{ord}(\mathfrak h)=r}w^{[r]}_{\Lambda,T}(\mathfrak h).
\end{equation}
The coefficients are exponentially local in the sense of \eqref{eq:history-locality}, and therefore have infinite-volume limits along every exhaustion.
\end{theorem}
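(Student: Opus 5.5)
The plan is to start from the finite-cutoff history identity \eqref{eq:finite-cutoff-history-identity} and remove, in order, the auxiliary parameters $m$, $A$ and $F$. Fix $R$ and $X_0$, choose $a>\gamma+\kappa$ and $\eta>0$, and choose $T_0(d)$ so large that \eqref{eq:good-transfer-contraction} holds. For each retained history $\mathfrak h\in\mathcal H^{F,A,\mathrm{ret}}_{\le m}(X_0)$, expand every local coefficient in $1/N$ up to order $R$. These coefficients are the loop-only monomials \eqref{eq:T-Laurent}, the transfer coefficients $b^{N,\mathrm{root}}$ multiplied by the heat-kernel ratio, and the terminal source coefficients of Proposition~\ref{prop:decorated-terminal-sources}. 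On retained data they are finite-degree Schur--Weyl quantities, so they have controlled expansions by Lemma~\ref{lem:local-operator-complexity} and Lemma~\ref{lem:fixed-degree-orthonormalization}. The heat-kernel ratio $\widehat Q(\alpha_p)/\widehat Q(\beta_p)$ also expands, by \eqref{eq:casimir-det-shift}, because $C_2/N$ equals $\mathcal E$ plus explicit $1/N$ corrections on stable labels. Multiplying out produces, for each history, the coefficients $w^{[r]}_{\Lambda,T}(\mathfrak h)$ and a remainder in which at least one factor has been replaced by its order-$(R+1)$ remainder.

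Next comes summation. By Proposition~\ref{prop:trajectory-estimate}, the absolute sum of the $w^{[r]}$ over all stable histories of order at most $R$ is finite uniformly in $N,\Lambda$. The same proposition shows that the sum of the remainder terms is $O(N^{-R-1})$ with a constant independent of $m,A,F$; there are only $O_R(1)$ ways to choose which factor carries the remainder once the entropy bound of Lemma~\ref{lem:loop-only-entropy} is used. With this in hand, let $m\to\infty$ using \eqref{eq:retained-history-tail-bound}. Then handle the exceptional sectors with \eqref{eq:first-exceptional-stable-bound} and \eqref{eq:first-exceptional-unstable-bound}. The $U$-part is already $O(N^{-R-1})$ uniformly in $A$ and $F$. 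For the $H$-part, let $A\to\infty$: the bound $\varepsilon_A$ tends to $0$ uniformly, while the retained coefficient sums increase to the absolutely convergent full stable sums \eqref{eq:rooted-history-coeff} by dominated convergence. Finally, let $F\uparrow\widehat{\U(N)}$; at fixed $N$ the finite-$N$ summability shown in the proof of Theorem~\ref{thm:raw-rooted-HK-equation} gives $\Psi^F\to\Psi$.

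One point needs care in this limiting argument: the constant in the $U$-remainder depends on $A$. I would therefore fix $A$ first, let $N\to\infty$, and then let $A\to\infty$, in the usual $\varepsilon/3$ fashion. More precisely, I would show that for every $\epsilon>0$ there is an $A$ with $|\Psi-\sum_{r\le R}N^{-r}\Psi^{(r)}|\le \epsilon N^{-R}+C_A N^{-R-1}$. This alone does not give a clean $O(N^{-R-1})$. I would upgrade it by applying the argument at order $R+1$: the order-$(R+1)$ coefficient is bounded, so the difference is $N^{-R-1}\Psi^{(R+1)}+o(N^{-R-1})=O(N^{-R-1})$. This also requires $\varepsilon_A$ to be absorbed at the next order, which I expect to be the delicate point. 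If this does not work directly, I would instead make the stable high-energy tail mass-sensitive, in the manner of Proposition~\ref{prop:mass-sensitive-summability}, so that it decays coefficientwise.

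Exponential locality of each $\Psi^{(r)}_{\Lambda,T}$ follows from \eqref{eq:history-locality}. Two volumes $\Lambda\subset\Lambda'$ have identical history weights for histories whose support avoids $\partial\Lambda$, since the local coefficients depend only on local data. The difference is therefore bounded by the sum over histories reaching distance $\operatorname{dist}(\operatorname{supp}X_0,\partial\Lambda)$, which is $\le Ce^{-c\operatorname{dist}}$. This gives a Cauchy property along any exhaustion, hence the infinite-volume limits. I expect the main obstacle to be the uniform interchange of the $A$, $F$ and $N$ limits in the remainder: controlling the $A$-dependence of the unstable constant against the $A$-uniform but non-$N$-decaying stable tail.
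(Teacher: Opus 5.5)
Your outline is essentially the paper's own proof. The steps match one for one:
the finite-cutoff identity \eqref{eq:finite-cutoff-history-identity};
coefficientwise expansion of the retained local factors, with one-factor remainders summed by Proposition~\ref{prop:trajectory-estimate};
removal of $m$, $A$, $F$ through \eqref{eq:retained-history-tail-bound}, \eqref{eq:first-exceptional-stable-bound} and \eqref{eq:first-exceptional-unstable-bound};
locality by comparing history weights in two volumes via \eqref{eq:history-locality}.

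The issue you flag is genuine. The paper's proof does not resolve it either: it lets $A\to\infty$ even though the constant in \eqref{eq:first-exceptional-unstable-bound} depends on $A$. Your repair, however, fails as written. Write $\Psi^{(r)}_A$ for the order-$r$ sum over $A$-retained histories only.
\begin{itemize}
\item The bound \eqref{eq:first-exceptional-stable-bound} gives $|\mathcal B^{F,A,m}_H|\lesssim\varepsilon_A$, with no power of $N$.
\item The coefficient defect $\sum_{r\le R}N^{-r}(\Psi^{(r)}-\Psi^{(r)}_A)$ is likewise only small in $A$, because its $r=0$ term carries no factor $N^{-r}$.
\end{itemize}
So these estimates give $\epsilon+C_AN^{-R-1}$, not $\epsilon N^{-R}+C_AN^{-R-1}$. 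Rerunning the argument at order $R+1$ then yields only $o(1)$, which says nothing about an $O(N^{-R-1})$ remainder.

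The missing step is to compare $\mathcal B^{F,A,m}_H$ with $\sum_{r\le R}N^{-r}(\Psi^{(r)}-\Psi^{(r)}_A)$ history by history, rather than bounding each separately. Both are built from the same stable small-mass histories: one gives their exact values, the other their truncated expansions.
\begin{itemize}
\item Stable histories of arbitrary finite energy are summable by the same geometric majorant. The shell sum in Proposition~\ref{prop:raw-plaquette-transfer} runs over all $r\ge1$, so $\theta_T<1$ needs no energy cutoff.
\item The last assertion of Proposition~\ref{prop:trajectory-estimate} bounds their one-factor remainders by $N^{-R-1}$ times that majorant.
\end{itemize}
Hence the only split that matters is whether a history contains a large-mass or unstable event.
\begin{itemize}
\item Histories containing such an event are $O(N^{-R-1})$ by the first-occurrence argument, with everything before that event summed by the full stable majorant.
\item The remaining, purely stable small-mass histories give $\sum_{r\le R}N^{-r}\Psi^{(r)}+O(N^{-R-1})$. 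This holds up to the part of $\Psi^{(r)}$ coming from histories with a label of mass above $\varepsilon_dN$. By \eqref{eq:mass-controlled-by-paid-energy}, that part is exponentially small in $N$ once a small weight $e^{\rho\sum_i\mathcal E(\alpha_i)}$ is inserted in the majorant, as in the locality argument.
\end{itemize}
With this, $A$ can be frozen or dropped entirely, and no bootstrap is needed. Your mass-sensitivity fallback would only re-prove $\Psi^{(r)}_A\to\Psi^{(r)}$. The fact that actually closes the argument is the $A$-uniform remainder majorant for high-energy stable histories.

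Two smaller points:
\begin{itemize}
\item The number of positions where the remainder can sit equals the number of transfers in the history, not $O_R(1)$. This is harmless, since $\sum_m(m+1)\theta_T^m<\infty$.
\item \eqref{eq:casimir-det-shift} controls only the exponential part of $\widehat Q^{\HK}_{T,N}(\alpha_p)/\widehat Q^{\HK}_{T,N}(\beta_p)$. The factor $d_{\alpha_p}/d_{\beta_p}$ is of order $N^{\pm1}$ for a one-box move. The non-negativity of local orders, which your grading by order presupposes, is exactly where the paper invokes Lemma~\ref{lem:degree-pinned-block} and Theorem~\ref{thm:pinned-block}.
\end{itemize}
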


\begin{proof}
We start from the finite-cutoff identity \eqref{eq:finite-cutoff-rooted-equation}.  For fixed $F$, $A$ and transfer-count cutoff $m$, the iterated identity \eqref{eq:finite-cutoff-history-identity} is finite: all label sets, local channel sets and history sets are finite.  At this stage the identity is a finite algebraic expansion.

On every retained stable finite-energy local datum, expand the local Schur--Weyl coefficients, Haar coefficients, recoupling coefficients, heat-kernel Casimir factors and local adjoint coefficients in powers of $1/N$ up to order $R$.  The local asymptotic orders are non-negative after the normalizations used throughout the rooted equation.  A same-trace loop-only splitting has degree zero, whereas every merger has degree $2$. For a pinned plaquette block, Lemma~\ref{lem:degree-pinned-block} and Theorem~\ref{thm:pinned-block} ensure that, after trace normalization and after the dimension cancellations of Section~\ref{sec:pinned-haar-projection}, no positive power of $N$ remains.  If at least two marked trace components remain in one connected block, the separating-channel reciprocal dimension factor gives one additional negative power of $N$.  Hence no local retained coefficient starts with a positive power of $N$.

By Lemma~\ref{lem:fixed-degree-orthonormalization} and Subsection~\ref{subsec:orthonormal-channel-recoupling}, all retained local factors have controlled expansions in fixed stable degree. Proposition~\ref{prop:trajectory-estimate} then sums uniformly the contributions in which one factor is replaced by its order-$(R+1)$ remainder. The same majorant controls products whose accumulated order first exceeds $R$, after extracting the factor $N^{-R-1}$ at that first excess.

We now remove the auxiliary cutoffs.  First let $m\to\infty$.  The retained-history tail tends to zero uniformly by \eqref{eq:retained-history-tail-bound}.  Histories with at least one stable high-energy transfer are controlled uniformly in $m$ by the first-exceptional-transfer estimate \eqref{eq:first-exceptional-stable-bound}; since $\varepsilon_A\to0$ by \eqref{eq:stable-exceptional-kernel-bound}, their total contribution disappears when $A\to\infty$.  Histories with at least one large-mass or unstable transfer are controlled by \eqref{eq:first-exceptional-unstable-bound}; for fixed $A$ they contribute $O_{A,R,X_0,T}(N^{-R-1})$, uniformly in $F$ and $\Lambda$.  Finally, for each fixed $A$ and $N$, the retained set $S_A(N)$, together with all one-step local recoupling outputs from $S_A(N)$, is finite.  Therefore every sufficiently large finite truncation $F$ gives the same retained coefficients, and the complement is already included in the preceding stable high-energy and large-mass/unstable estimates.  We may let $F\uparrow\widehat{\U(N)}$.

After sending $m\to\infty$, $A\to\infty$ and $F\uparrow\widehat{\U(N)}$, the coefficient $\Psi^{(r)}_{\Lambda,T}(X_0)$ is therefore well-defined as the absolutely convergent sum of all stable finite-energy histories of total asymptotic order $r$, and the preceding paragraph shows that this definition is independent of the auxiliary truncation procedure.  Collecting the terms of orders $0,\ldots,R$ gives \eqref{eq:rooted-asymptotic-expansion}, with a remainder uniform in the finite volume.  The exponential locality is exactly the locality estimate \eqref{eq:history-locality}, obtained by the same history majorant with the additional weight $e^{\rho\operatorname{diam}(\operatorname{supp}\mathfrak h)}$ used in Proposition~\ref{prop:trajectory-estimate}.
\end{proof}

\section{The master field: construction and area law}\label{sec:master-field}

In this section we construct the strongly coupled large-$N$ heat-kernel master field and prove its area law. The construction will follow from the coefficientwise rooted construction of Theorem~\ref{thm:rooted-coeff-construction}.  We first extract root-independent Wilson-loop coefficients, then prove determinant reduction at leading order, infinite-volume locality, and factorization through connected cumulants.

Let us start with the following extended version of Theorem~\ref{thm:main}.

\begin{theorem}\label{thm:main-expanded}
Let $d\ge2$.  There exists $T_0(d)<\infty$ such that, for every $T>T_0(d)$ and every fixed finite loop family $L$, the following hold.
\begin{enumerate}
\item For every finite $\Lambda$ containing $L$ and every $R\ge0$,
\begin{equation}\label{eq:finite-volume-expansion}
\Phi_{\Lambda,T,N}(L)
=
\sum_{r=0}^R N^{-r}\Phi^{(r)}_{\Lambda,T}(L)
+
O_{R,L,T}(N^{-R-1}),
\end{equation}
with a remainder uniform in $\Lambda$.
\item Each coefficient has an infinite-volume limit
\begin{equation}\label{eq:coefficient-limit}
\Phi^{(r)}_{\infty,T}(L)=\lim_{\Lambda\uparrow\Z^d}\Phi^{(r)}_{\Lambda,T}(L),
\end{equation}
and the convergence is exponentially fast in $\operatorname{dist}(\operatorname{supp}L,\partial\Lambda)$.
\item More generally, if $N_n\to\infty$ and $\Lambda_n$ is any sequence of finite volumes with $\operatorname{dist}(\operatorname{supp}L,\partial\Lambda_n)\to\infty$, then
\begin{equation}\label{eq:largeN-limit}
\lim_{n\to\infty}\Phi_{\Lambda_n,T,N_n}(L)=\Phi^{(0)}_{\infty,T}(L).
\end{equation}
In particular, this implies the iterated limit along every exhaustion $\Lambda_n\uparrow\Z^d$.
\item The leading coefficient factorizes:
\begin{equation}\label{eq:factorization}
\Phi^{(0)}_{\infty,T}(\ell_1,\ldots,\ell_k)
=
\prod_{i=1}^k\Phi^{(0)}_{\infty,T}(\ell_i).
\end{equation}
\end{enumerate}
We call $\Phi^{(0)}_{\infty,T}$ the heat-kernel master field at strong coupling.
\end{theorem}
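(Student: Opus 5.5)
The plan is to derive all four items from Theorem~\ref{thm:rooted-coeff-construction}, applied to the Wilson initial state. For a loop family $L$ with at least one non-trivial edge occurrence, I choose the root $x=\mathfrak r(L)$ and set $X_0=(L,x,\varnothing)$. Then $\Psi_{\Lambda,T,N}(X_0)=\Phi_{\Lambda,T,N}(L;\varnothing)=\Phi_{\Lambda,T,N}(L)$. If every component of $L$ is trivial after gauge fixing, I instead invoke Proposition~\ref{prop:decorated-terminal-sources} with $\zeta=\varnothing$, where the value is identically $1$.

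\textbf{Items (1) and (2).} The expansion \eqref{eq:rooted-asymptotic-expansion} gives item~(1) with $\Phi^{(r)}_{\Lambda,T}(L):=\Psi^{(r)}_{\Lambda,T}(X_0)$, and its remainder is uniform in $\Lambda$. This definition is root-independent. Indeed, $\Psi$ does not depend on $x$, so two choices of root yield two asymptotic expansions of the same function of $N$, uniformly in $\Lambda$. Uniqueness of asymptotic coefficients then forces the coefficients to agree for each fixed $\Lambda$.

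For item~(2), I use the locality bound \eqref{eq:history-locality}. The histories contributing to $\Psi^{(r)}_{\Lambda,T}$ and $\Psi^{(r)}_{\Lambda',T}$ for $\Lambda\subset\Lambda'$ coincide with equal weights whenever their support stays away from $\partial\Lambda$. Local coefficients depend only on the local cells touched, and the only volume dependence enters through which plaquettes exist. The difference is therefore bounded by the histories reaching distance $\operatorname{dist}(\operatorname{supp}L,\partial\Lambda)$, which is $C e^{-c\,\operatorname{dist}}$. So $(\Phi^{(r)}_{\Lambda_n,T})$ is Cauchy along any exhaustion, and it has a common limit with rate \eqref{eq:intro-exp-locality}.

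\textbf{Item (3).} Item~(3) follows from items (1) and (2) with $R=0$:
\[
|\Phi_{\Lambda_n,T,N_n}(L)-\Phi^{(0)}_{\infty,T}(L)|\le C_{0,L,T}N_n^{-1}+C e^{-c\,\operatorname{dist}(\operatorname{supp}L,\partial\Lambda_n)}.
\]
Here the uniformity in $\Lambda$ of the remainder is what allows the limits to be taken jointly.

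\textbf{Item (4): factorization.} I would write
\[
\Phi(\ell_1,\dots,\ell_k)-\prod_i\Phi(\ell_i)
\]
as a sum of joint cumulants. I would aim to show that every cumulant with at least two loops is $O(N^{-1})$, uniformly in $\Lambda$. The route is the linked-cluster/polymer representation of the ratio \eqref{eq:observable-reference-ratio}, which is absolutely convergent by Theorem~\ref{thm:stable-hk-block} and Proposition~\ref{prop:nonstable-polymer-tail}, exactly as in Lemma~\ref{lem:constraint-component-cancellation}. In the expansion of $\widehat\E[\mathcal O^L]$, clusters are pinned connected blocks together with vacuum clusters. Grouping by which trace components each pinned block meets, configurations in which every pinned block meets only one trace component reproduce, after dividing by the vacuum and applying the standard Mayer resummation of overlaps, the product of the one-loop ratios. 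The terms not of this form contain either a block with $b(Y,L)\ge2$, or an overlap/compatibility link between blocks attached to different loops. The former carries the factor $N^{-1}$ from the separating channel in Theorem~\ref{thm:pinned-block}, summed with exponential weight in $|Y|$ as in Theorem~\ref{thm:stable-hk-block}. For the latter, the hard-core polymer interaction is handled by cluster expansion, and the truncated correlation is again controlled by the exponentially decaying block weights.

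\textbf{Main obstacle.} The cluster-expansion treatment of the hard-core polymer interaction in the last step yields exponential decay in the distance between the loops, not an $N^{-1}$ gain. I expect this to be the main obstacle. The order-zero cumulant need not vanish from it: blocks sharing a plaquette, which is a compatibility interaction, are what could break factorization at $N^0$. What I would try first is to treat the overlapping case as a single merged connected block meeting two trace components. By Lemma~\ref{lem:core-collar-connectivity}, such a block contains a non-trivial separating channel, which contributes $N^{-1}$. Once this is established, the leading-order cumulant vanishes. Passing to infinite volume by item~(2) then gives \eqref{eq:factorization}.
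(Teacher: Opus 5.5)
Your items (1)--(3) follow the paper's own argument: Proposition~\ref{prop:wilson-expansion-from-rooted}, Proposition~\ref{prop:coefficient-locality} with Corollary~\ref{cor:infinite-volume-coefficients}, and Corollary~\ref{cor:joint-largeN-volume-limit}. The gap is item~(4), and it is sharper than you suspect: neither branch of your dichotomy produces a factor $N^{-1}$.

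Take the overlap branch first. The Mayer term from the compatibility constraint between a block $Y_1$ meeting only $\ell_1$ and a block $Y_2$ meeting only $\ell_2$ (and likewise a vacuum cluster touching both) is $-w(Y_1)w(Y_2)$ times vacuum factors. This is a product of two separately evaluated blocks. Each factor is of order $N^0$, because each feeds $\Phi^{(0)}_{\Lambda,T}(\ell_i)$. No single core-resolved summand has a labelled channel graph joining the two collars, so Lemma~\ref{lem:core-collar-connectivity} has nothing to act on, and treating the overlap as one merged block is not an identity of the expansion.

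The other branch fails too, because connected configurations meeting two trace components are in general of order one. Take the one-plaquette volume with $L=(\partial p,\partial p)$. Every contributing configuration carries a mass-$2$ label on $p$, attached to the single non-tree edge, which carries both traces, and
\[
\Phi_{\Lambda,T,N}(\partial p,\partial p)
=N^{-2}\sum_{\nu\in\{(2),(1,1)\}}d_\nu\,e^{-\frac{T}{2N}C_2(\nu)}
=e^{-T}\Bigl(\cosh\tfrac{T}{N}-\tfrac{1}{N}\sinh\tfrac{T}{N}\Bigr)
=e^{-T}+O(N^{-2}),
\]
whereas $\Phi_{\Lambda,T,N}(\partial p)=e^{-T/2}$. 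The same happens for $\ell_1=\partial p$ and $\ell_2$ the boundary of a $1\times2$ rectangle containing $p$. So no bound $\delta(Y,L)\ge1$ for $b(Y,L)\ge2$ can hold, and the cumulant is small only because the two branches cancel.

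That cancellation is the missing idea. You must show that the merged configurations reproduce, at order $N^0$, exactly the products that the hard-core constraint removes. In a merged configuration, a plaquette shared by an $\ell_1$-block with label $\alpha$ and an $\ell_2$-block with label $\beta$ carries a label $\nu$ with $V_\nu\subset V_\alpha\otimes V_\beta$. The natural ingredients are:
\begin{enumerate}
\item the dimension identity $\sum_\nu c^\nu_{\alpha\beta}d_\nu=d_\alpha d_\beta$, where $c^\nu_{\alpha\beta}$ is the multiplicity of $V_\nu$;
\item the Casimir estimate $C_2(\nu)/N=m(\nu)+O(m(\nu)^2/N)$ for reduced stable labels;
\item the fact that summands in which strands of different traces are actually contracted together are of lower order. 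These are the analogue of mergers, which carry $\theta_N=N^{-2}$ in \eqref{eq:loop-only-operator-definition}.
\end{enumerate}

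For $d\ge3$ the same mechanism is needed even for the absolute convergence you import. Theorem~\ref{thm:stable-hk-block} only controls pinned supports. The boundary of a unit cube with fundamental labels is a vacuum component of weight $N^{2}e^{-3T}$, which is not small uniformly in $N$.

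For comparison, the paper's Proposition~\ref{prop:factorization} follows your route. It closes the argument by asserting, through Lemma~\ref{lem:wilson-cumulant-linked-cluster}, that every cluster surviving the M\"obius inversion receives the separating-channel factor of Theorem~\ref{thm:pinned-block} and Lemma~\ref{lem:core-collar-connectivity}. The overlap terms are not treated separately there. The example above shows this blockwise factor is unavailable, so appealing to those results would not close your gap either.
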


By Proposition~\ref{prop:determinant-reduction-leading}, the leading coefficient in Theorem~\ref{thm:main-expanded} may be computed in determinant-reduced form.

Throughout this section $T>T_0(d)$, where $T_0(d)$ is chosen so that Theorem~\ref{thm:rooted-coeff-construction} holds.  Let $L=(\ell_1,\ldots,\ell_k)$ be fixed.  If all components of $L$ are trivial, then $\Phi_{\Lambda,T,N}(L)=1$ and there is nothing to prove.  Otherwise choose a non-trivial occurrence $x_L$ of an oriented edge in one of the loops and set $X_L=(L,x_L,\varnothing)$.  By definition of the extended coefficient,
\begin{equation}\label{eq:ext-coef}
\Psi_{\Lambda,T,N}(X_L)=\Phi_{\Lambda,T,N}(L).
\end{equation}
Although $X_L$ contains a root, the quantity in \eqref{eq:ext-coef} does not.

\subsection{Extraction of the Wilson loop expansion}

The rooted construction applies to a rooted decorated-loop state, while the Wilson expectation itself is unrooted.  We first remove this auxiliary choice and obtain root-independent Wilson-loop coefficients.

\begin{proposition}\label{prop:wilson-expansion-from-rooted}
For every finite $\Lambda$ containing $L$ and every $R\ge0$, there are coefficients $\Phi^{(r)}_{\Lambda,T}(L)$ such that
\begin{equation}\label{eq:expansion-phi}
\Phi_{\Lambda,T,N}(L)
=
\sum_{r=0}^R N^{-r}\Phi^{(r)}_{\Lambda,T}(L)
+
O_{R,L,T}(N^{-R-1}),
\end{equation}
with a remainder uniform in $\Lambda$.  The coefficients do not depend on the choice of root.
\end{proposition}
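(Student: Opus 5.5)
The plan is to apply Theorem~\ref{thm:rooted-coeff-construction} to the rooted state $X_L=(L,x_L,\varnothing)$ and transfer the resulting expansion to $\Phi_{\Lambda,T,N}(L)$ through the identity~\eqref{eq:ext-coef}. If all components of $L$ are trivial, then $\Phi_{\Lambda,T,N}(L)=1$, and we set $\Phi^{(0)}_{\Lambda,T}(L)=1$ and $\Phi^{(r)}_{\Lambda,T}(L)=0$ for $r\ge1$. Otherwise we fix a non-trivial occurrence $x_L$ and define
\[
\Phi^{(r)}_{\Lambda,T}(L):=\Psi^{(r)}_{\Lambda,T}(X_L),
\]
where the right-hand side is the absolutely convergent history sum~\eqref{eq:rooted-history-coeff}. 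The initial state $X_L$ has empty decoration, so it is admissible in Theorem~\ref{thm:rooted-coeff-construction}. The expansion~\eqref{eq:rooted-asymptotic-expansion} then gives~\eqref{eq:expansion-phi} with remainder $O_{R,X_L,T}(N^{-R-1})$, uniformly in $\Lambda$. Since $L$ is fixed and the root is chosen among finitely many occurrences, this remainder is $O_{R,L,T}(N^{-R-1})$.

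Next I would prove that the coefficients do not depend on the root. I would not try to compare the two history expansions combinatorially. Instead I would use the uniqueness of asymptotic expansions. Let $x,x'$ be two non-trivial occurrences. Then $\Psi_{\Lambda,T,N}(L,x,\varnothing)=\Psi_{\Lambda,T,N}(L,x',\varnothing)=\Phi_{\Lambda,T,N}(L)$ for every $N$, because the extended coefficient~\eqref{eq:atomic-extended-coef} makes no reference to the root. Subtracting the two expansions at order $R$ gives
\[
\sum_{r=0}^R N^{-r}\bigl(\Psi^{(r)}_{\Lambda,T}(L,x,\varnothing)-\Psi^{(r)}_{\Lambda,T}(L,x',\varnothing)\bigr)=O(N^{-R-1})
\]
as $N\to\infty$, with $\Lambda$ fixed. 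The coefficients themselves do not depend on $N$, so a polynomial in $N^{-1}$ of degree at most $R$ that is $O(N^{-R-1})$ must vanish identically. Multiplying successively by $N^{0},N^{1},\dots$ and letting $N\to\infty$ shows that every difference is zero.

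The main obstacle I expect is checking that the coefficients $\Psi^{(r)}_{\Lambda,T}$ really are independent of $N$, so that the uniqueness argument applies. The retained sets $S_A(N)$ depend on $N$ through the cutoff $m\le\varepsilon_d N$. I would handle this with the removal of cutoffs in the proof of Theorem~\ref{thm:rooted-coeff-construction}: the coefficients are sums over stable finite-energy histories, and each history has fixed stable labels once $N$ is large. The local coefficients' expansion terms are $N$-independent by Lemma~\ref{lem:fixed-degree-orthonormalization}. Absolute convergence then lets the $N$-dependent cutoff increase to all stable labels, which yields $N$-independent limits. The exclusions are already accounted for in the $O(N^{-R-1})$ remainder.
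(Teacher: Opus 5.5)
Your proposal is correct and follows the paper's proof. You define $\Phi^{(r)}_{\Lambda,T}(L):=\Psi^{(r)}_{\Lambda,T}(X_L)$ via Theorem~\ref{thm:rooted-coeff-construction} and \eqref{eq:ext-coef}, and you get root-independence from the equality $\Psi_{\Lambda,T,N}(L,x,\varnothing)=\Phi_{\Lambda,T,N}(L)$ together with uniqueness of asymptotic coefficients; the $N$-independence you flag as an obstacle is already part of the conclusion of Theorem~\ref{thm:rooted-coeff-construction}, where the coefficients $\Psi^{(r)}_{\Lambda,T}$ are $N$-free history sums, so you can cite it rather than re-derive it.
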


\begin{proof}
Apply Theorem~\ref{thm:rooted-coeff-construction} to $X_L=(L,x_L,\varnothing)$ and use \eqref{eq:ext-coef}.  Define
\[
\Phi^{(r)}_{\Lambda,T}(L):=\Psi^{(r)}_{\Lambda,T}(X_L).
\]
If $x$ and $y$ are two possible roots, then for every $N$,
\[
\Psi_{\Lambda,T,N}(L,x,\varnothing)=\Phi_{\Lambda,T,N}(L)=
\Psi_{\Lambda,T,N}(L,y,\varnothing).
\]
Both sides have controlled asymptotic expansions with remainders of arbitrary order.  Uniqueness of asymptotic coefficients gives equality of the coefficients for the two roots.
\end{proof}

\subsection{Determinant reduction at leading order}

Recall that a stable label is written
\[
\lambda_p=q_p\mathbf 1_N+[\lambda_p^+,\lambda_p^-]_N,
\qquad
s_p=|\lambda_p^+|-|\lambda_p^-|.
\]
The determinant variables $q_p$ are one-dimensional charges, whereas $[\lambda_p^+,\lambda_p^-]_N$ is the projective, or reduced, part.

\begin{proposition}\label{prop:determinant-reduction-leading}
At order $N^0$, the determinant-shift sector factors as a vacuum determinant partition function independent of $L$ and of the reduced labels. Consequently, the leading normalized coefficient is computed by the determinant-reduced expansion.
\end{proposition}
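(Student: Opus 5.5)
The plan is to work at the level of the Fourier-side ratio~\eqref{eq:observable-reference-ratio}, restricted to the stable small-mass sector. The large-mass and unstable sectors contribute $O(N^{-R-1})$ by Lemma~\ref{lem:central-cutoff} and Proposition~\ref{prop:nonstable-polymer-tail}, so they do not affect the order-$N^0$ coefficient.

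\textbf{Step 1: separate determinant shifts from reduced labels.} On the retained sector, Lemma~\ref{lem:central-selection} applies in the split form~\eqref{eq:split-selection}. It gives $\partial^*q=0$ and $\partial^*s+j_L=0$, with no coupling between $q$ and the reduced data. Since $d_{\lambda_p}=d_{\alpha_p}$ and the topological coefficient only sees $\alpha$ once determinant characters cancel, the only coupling between $q$ and $\alpha$ sits in the Casimir~\eqref{eq:casimir-det-shift}. By that formula,
\[
\frac{T}{2N}C_2(\lambda_p)=\frac{T}{2}q_p^2+\frac{T}{N}q_ps_p+\frac{T}{2N}C_2(\alpha_p).
\]
Step 2 will show that the cross term $Tq_ps_p/N$ is the only obstruction to exact factorization.

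\textbf{Step 2: expand the cross term.} I would write
\[
e^{-Tq_ps_p/N}=1+O\!\left(\frac{|q_p|m_p}{N}e^{T|q_p|m_p/N}\right).
\]
The remainder term needs a summable majorant. The intended one is the Gaussian factor $e^{-Tq_p^2/4}$ together with the mass decay $C^Me^{-cTM}$ of Theorem~\ref{thm:stable-hk-block}, using the inequality $2|q_p|m_p/N\le q_p^2/2+2m_p^2/N^2$ already exploited there. Summed against the pinned-block bounds of Theorem~\ref{thm:pinned-block}, this shows that replacing the cross term by $1$ changes both numerator and denominator by $O(N^{-1})$.

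After the replacement, the weight becomes a product $\prod_p e^{-Tq_p^2/2}\cdot\prod_p d_{\alpha_p}e^{-TC_2(\alpha_p)/(2N)}$. The constraint $\partial^*q=0$ does not involve $L$ or $\alpha$. The $q$-sum therefore factors as the vacuum determinant partition function $Z^{\det}_\Lambda(T)=\sum_{q:\partial^*q=0}\prod_p e^{-Tq_p^2/2}$. This factor is identical in numerator and vacuum denominator, so it cancels in the ratio.

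\textbf{Step 3: pass to leading coefficients.} Combining Step 2 with the expansion of Theorem~\ref{thm:rooted-coeff-construction} and the uniqueness argument of Proposition~\ref{prop:wilson-expansion-from-rooted} identifies $\Phi^{(0)}_{\Lambda,T}(L)$ with the leading coefficient of the determinant-reduced ratio. In that ratio, labels range over reduced stable $\alpha$ with weight $d_\alpha e^{-TC_2(\alpha)/(2N)}$ and constraint $\partial^*s+j_L=0$.

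\textbf{Main obstacle: uniformity in $\Lambda$.} The ratio is a quotient of volume-extensive sums, so the $O(N^{-1})$ bound cannot be obtained naively. I would handle it at the level of the linked-cluster/polymer expansion, as in Lemma~\ref{lem:constraint-component-cancellation}. Connected clusters disjoint from the pinning cancel. Each pinned cluster $Y$ then contributes a relative error of at most $C_LN^{-1}e^{-\mu(T)|Y|}$, which sums over pinned supports uniformly in the volume.

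There is one further difficulty. The $q$-field is not strictly plaquette-local under $\partial^*q=0$, since it is a cocycle constraint. It nevertheless factors globally, because the constraint is independent of $\alpha$. So I would first factor out the $q$-sum exactly at the level of the full weight, with the cross term treated as a perturbation, and only afterwards run the cluster expansion on $\alpha$.
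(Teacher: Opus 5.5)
Your Steps~1--2 are essentially the paper's proof. You restrict to the stable small-mass sector, since the complement is exponentially small in $N$. You use the split selection rule, so that $\prod_p\det(U_{\partial p})^{q_p}=\prod_e\det(U_e)^{(\partial^*q)(e)}=1$ and the topological coefficient sees only reduced labels. You split the Casimir by~\eqref{eq:casimir-det-shift}. You control $|e^{-Tq_ps_p/N}-1|$ by absorbing $e^{T|q_p||s_p|/N}$ into half of the Gaussian; the paper's inequality $a|q|\le (T/4)q^2+a^2/T$ with $a=T|s_p|/N$ is exactly yours multiplied by $T/2$. Finally you note that, once the cross term is replaced by $1$, the sum $Z^{\det}_{\Lambda,T}=\sum_{\partial^*q=0}\prod_pe^{-Tq_p^2/2}$ is independent of $L$ and of $\alpha$, and cancels against the vacuum denominator. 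The differences are minor. The paper works sector by sector at fixed reduced mass, whereas you sum the sectors with the mass decay of Theorem~\ref{thm:stable-hk-block}. You also make explicit the identification with $\Phi^{(0)}_{\Lambda,T}(L)$ via uniqueness of asymptotic coefficients, which the paper leaves implicit.

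The paper does not treat uniformity in $\Lambda$ separately. It relies only on the per-sector $O(N^{-1})$ bound and on the exact cancellation of the common factor $Z^{\det}_{\Lambda,T}$. Your ``main obstacle'' section is therefore an addition, and you are right that the naive relative error $\sum_pT|q_p||s_p|/N$ is extensive in the volume.

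However, the remedy in your last paragraph does not work as stated. You can factor the $q$-sum out exactly only after discarding $\prod_pe^{-Tq_ps_p/N}$ over the whole volume. Reinserting it afterwards ``as a perturbation'' amounts to reweighting $\alpha$ by
$g(s)=(Z^{\det}_{\Lambda,T})^{-1}\sum_{\partial^*q=0}\prod_pe^{-Tq_p^2/2-Tq_ps_p/N}$.
This weight is non-local in $s$: distinct $\alpha$-components are coupled through the correlations of the $q$-field. So the $\alpha$-cluster weights no longer factor over disjoint components, and you cannot attach a relative error $C_LN^{-1}e^{-\mu(T)|Y|}$ to individual pinned $\alpha$-clusters.

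The premise is also off. The constraint $\partial^*q=0$ is edge-local, and an integral closed $2$-chain splits into edge-connected components, each of which is again closed. So $q$ is itself a dilute polymer gas of closed lattice surfaces at large $T$.

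The consistent route is a joint expansion:
\begin{enumerate}
\item At every plaquette where $q_ps_p\neq0$, write $e^{-Tq_ps_p/N}=1+(e^{-Tq_ps_p/N}-1)$.
\item Take as polymers the connected unions of $\alpha$-clusters and $q$-surfaces joined by such link factors.
\item Unpinned joint clusters cancel in the ratio.
\item Pinned joint clusters with at least one link carry a factor $O(|q_p||s_p|/N)$ together with exponential decay in their size, which gives the volume-uniform $O(N^{-1})$ error.
\end{enumerate}
Equivalently, you may cluster-expand $\log g(s)$ and absorb its local terms into the $\alpha$-polymer weights.
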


\begin{proof}
In the stable small-mass range, Lemma~\ref{lem:central-selection} gives the split central selection rule~\eqref{eq:split-selection}.  The large-mass/unstable complement is exponentially small in $N$, by Lemma~\ref{lem:central-cutoff}, Lemma~\ref{lem:nonstable-weighted-tail} and Proposition~\ref{prop:nonstable-polymer-tail}; hence it does not contribute to the coefficient of order $N^0$.

For labels satisfying \eqref{eq:split-selection}, the determinant part of the character contributes
\[
\prod_p\det(U_{\partial p})^{q_p}
=\prod_e\det(U_e)^{(\partial^*q)(e)}=1.
\]
Thus the finite-$N$ topological coefficient depends only on the reduced labels.  The heat-kernel factor is the only remaining place where $q$ appears.  Using the Casimir identity~\eqref{eq:casimir-det-shift}, we get the exact factorization
\begin{align*}
&\exp\left[-\frac{T}{2N}C_2
\bigl(q_p\mathbf 1_N+[\lambda_p^+,\lambda_p^-]_N\bigr)\right]\\
&\qquad=
e^{-Tq_p^2/2}
\exp\left[-\frac{T}{2N}C_2([\lambda_p^+,\lambda_p^-]_N)\right]
e^{-Tq_ps_p/N}.
\end{align*}
On every fixed reduced-mass sector, $|s_p|$ is bounded independently of $N$, and
\[
\bigl|e^{-Tq_ps_p/N}-1\bigr|
\le \frac{T|q_p||s_p|}{N}e^{T|q_p||s_p|/N}.
\]
Using $a|q|\le (T/4)q^2+a^2/T$ with $a=T|s_p|/N$, the product of the right-hand side with $e^{-Tq_p^2/2}$ is bounded by $N^{-1}$ times a constant depending only on the fixed reduced-mass sector and the summable Gaussian $|q_p|e^{-Tq_p^2/4}$. Hence the replacement of $e^{-Tq_ps_p/N}$ by $1$ has an absolutely summable $O(N^{-1})$ error, uniformly under the determinant-sector sum.  Therefore the order-$N^0$ determinant contribution is not the single configuration $q=0$, but the whole determinant-sector sum
\begin{equation}\label{eq:z-det}
Z^{\det}_{\Lambda,T}=\sum_{\partial^*q=0}\prod_{p\in P(\Lambda)}e^{-Tq_p^2/2}.
\end{equation}
This factor depends only on the finite volume and on $T$.  It is independent of the Wilson loop family and independent of the projective labels satisfying the second equation in~\eqref{eq:split-selection}.  The same determinant factor appears in the vacuum denominator for every reduced vacuum label field satisfying $\partial^*s=0$: the determinant constraint is again $\partial^*q=0$, and it is independent of the reduced labels. Thus, at order $N^0$, $Z^{\det}_{\Lambda,T}$ factors out of the entire vacuum denominator, not merely out of the trivial reduced sector. No uniform bound on $Z^{\det}_{\Lambda,T}$ is needed, because this leading determinant factor is common to numerator and denominator and cancels before the normalized leading coefficient is read off.  After this cancellation the leading coefficient is exactly the reduced one.
\end{proof}

\begin{remark}
Proposition~\ref{prop:determinant-reduction-leading} is the lattice analogue of the determinant decoupling which appears in the large-$N$ analysis of two-dimensional Yang--Mills theory on compact surfaces \cite{Lem22,Lem25}. In the latter setting, the $\U(N)$ partition function asymptotically factors into an abelian determinant sector and a determinant-reduced, or $\SU(N)$-type, projective sector. Here the same mechanism appears locally on the plaquette field: the determinant shifts form an integer field $q:P(\Lambda)\to\Z$ constrained by $\partial^*q=0$, and their leading heat-kernel contribution is the Gaussian factor $Z^{\det}_{\Lambda,T}$. Since this factor is independent of the Wilson insertion and of the reduced plaquette labels, it cancels against the vacuum denominator before the leading normalized coefficient is read off.
\end{remark}

\subsection{Infinite-volume locality}

For a rooted state $X=(L,x,\zeta)$, write $\operatorname{supp}(X)$ for the union of the loop support, the root edge and the decoration support $D(\zeta)$.

\begin{proposition}\label{prop:coefficient-locality}
For every fixed rooted state $X$ and every $r\ge0$, the coefficient $\Psi^{(r)}_{\Lambda,T}(X)$ has an infinite-volume limit
\[
\Psi^{(r)}_{\infty,T}(X)=\lim_{\Lambda\uparrow\Z^d}\Psi^{(r)}_{\Lambda,T}(X).
\]
Moreover, there exist constants $C_{r,X,T},c>0$ such that
\begin{equation}\label{eq:bound-psi-r}
\left|\Psi^{(r)}_{\Lambda,T}(X)-\Psi^{(r)}_{\infty,T}(X)\right|
\le
C_{r,X,T}e^{-c\operatorname{dist}(\operatorname{supp}X,\partial\Lambda)}.
\end{equation}
\end{proposition}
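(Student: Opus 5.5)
The plan is to read the infinite-volume limit off the history representation \eqref{eq:rooted-history-coeff} of Theorem~\ref{thm:rooted-coeff-construction}, combined with the locality bound \eqref{eq:history-locality} of Proposition~\ref{prop:trajectory-estimate}.

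The first step is to construct a candidate limit directly on $\Z^d$. Every stable rooted history $\mathfrak h$ starting from $X$ has finite support, and its weight $w^{[r]}_{\Lambda,T}(\mathfrak h)$ is a product of local coefficients. These are the loop-only monomials, the rooted plaquette-transfer coefficients with their heat-kernel adjunction factor, and a terminal-source coefficient. Each of them depends only on the local lattice data touched by the step, and hence only on the configuration of $\Lambda$ near $\operatorname{supp}\mathfrak h$. I will argue that if $\operatorname{supp}\mathfrak h$ lies at distance at least some $C_d$ from $\partial\Lambda$, then every local factor except the terminal source coincides with the factor computed in any larger volume, in particular in $\Z^d$. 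The reason is that rooted transfers only see plaquettes incident to the current root, and those plaquettes are interior. We then define $\Psi^{(r)}_{\infty,T}(X)$ as the sum over histories in $\Z^d$. Absolute convergence of this sum follows because the majorants in Proposition~\ref{prop:trajectory-estimate} are uniform in $\Lambda$ and are built from row-sum bounds that never use finiteness of the volume; the same support-counting argument runs on $\Z^d$.

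The second step handles the terminal sources. By Proposition~\ref{prop:decorated-terminal-sources}, their coefficients $V^{(r)}_{\Lambda,T}(M;\zeta)$ are exponentially local in $\operatorname{dist}(S(\zeta),\partial\Lambda)$ and have infinite-volume limits. So for a history whose support lies at distance $A$ from $\partial\Lambda$, replacing the finite-volume terminal coefficient by its limit costs at most $C e^{-cA}$ times the modulus of the remaining local weight. That remaining weight is again summable with the same majorant, now with the terminal factor bounded instead of expanded.

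The third step assembles the estimate. Split the histories according to whether $\operatorname{dist}(\operatorname{supp}\mathfrak h,\operatorname{supp}X)\ge D/2$ or not, where $D=\operatorname{dist}(\operatorname{supp}X,\partial\Lambda)$.
\begin{itemize}
\item The far histories, in $\Lambda$ and in $\Z^d$ alike, contribute at most $C_{r,X,T}e^{-cD/2}$ by \eqref{eq:history-locality}.
\item The near histories lie at distance at least $D/2$ from $\partial\Lambda$. They correspond bijectively between $\Lambda$ and $\Z^d$ and have identical non-terminal factors. Their terminal factors differ by $O(e^{-cD/2})$ by the second step, and the sum over them is controlled by the summable majorant.
\end{itemize}
Together these give \eqref{eq:bound-psi-r} after renaming $c$. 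The Cauchy property along exhaustions and the existence of the limit follow, and the limit agrees with the $\Z^d$ sum.

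The main obstacle I expect is the claim that local coefficients are volume-independent away from the boundary. The loop-only and transfer coefficients are manifestly local, but the terminal sources involve a ratio over the whole volume, and their locality relies on the linked-cluster cancellation of Lemma~\ref{lem:constraint-component-cancellation}. I would have to check that the decorated terminal expansion's exponential locality is uniform over the constraint $\zeta^{p\to\beta_p}$ reached by the history, with constants growing at most like $C^{\mathcal E(\zeta)}$. That growth is then absorbed by the energy weight $e^{\eta\mathcal E}$ already present in the weighted norm.
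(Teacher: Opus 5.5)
Your proposal is correct and follows the paper's argument. The paper also reads the coefficient off the history expansion \eqref{eq:rooted-history-coeff}, notes that histories contained in the common $A$-neighbourhood of $\operatorname{supp}X$ have identical weights in two volumes, and bounds the remaining histories by \eqref{eq:history-locality}. It then concludes by a Cauchy argument along exhaustions, instead of summing histories directly on $\Z^d$ as you do.

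The one place where you go further is the terminal-source factor. The paper's proof simply counts it among the ``identical weights'', although $V^{(r)}_{\Lambda,T}(M;\zeta)$ is a ratio over the whole volume. Your use of the exponential locality from Proposition~\ref{prop:decorated-terminal-sources} is the right way to make that step honest. The constants must grow at most like $C^{\mathcal E(\zeta)+|S(\zeta)|}$, so that the weight $e^{a|X|_*+\eta\mathcal E(X)}$ absorbs them.
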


\begin{proof}
By Theorem~\ref{thm:rooted-coeff-construction}, $\Psi^{(r)}_{\Lambda,T}(X)$ is an absolutely convergent sum of order-$r$ rooted histories, and those histories satisfy the locality estimate~\eqref{eq:history-locality}.  If two finite volumes agree on the $A$-neighbourhood of $\operatorname{supp}(X)$, all histories contained in this neighbourhood have identical weights in the two volumes.  The difference is therefore bounded by the total contribution of histories reaching distance at least $A$, which is $O(e^{-cA})$.  Taking $A=\operatorname{dist}(\operatorname{supp}X,\partial\Lambda)$ proves that the coefficients are Cauchy along every exhaustion and gives \eqref{eq:bound-psi-r}.
\end{proof}

\begin{corollary}\label{cor:infinite-volume-coefficients}
For every fixed loop family $L$ and every $r\ge0$, the limit
\[
\Phi^{(r)}_{\infty,T}(L):=\lim_{\Lambda\uparrow\Z^d}\Phi^{(r)}_{\Lambda,T}(L)
\]
exists, is independent of the exhaustion, and satisfies
\[
\left|\Phi^{(r)}_{\Lambda,T}(L)-\Phi^{(r)}_{\infty,T}(L)\right|
\le C_{r,L,T}e^{-c\operatorname{dist}(\operatorname{supp}L,\partial\Lambda)}.
\]
\end{corollary}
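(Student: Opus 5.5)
The plan is to deduce the corollary directly from Proposition~\ref{prop:coefficient-locality} applied to the Wilson initial state $X_L=(L,x_L,\varnothing)$, using the root-independence of Proposition~\ref{prop:wilson-expansion-from-rooted}. If all components of $L$ are trivial, then $\Phi_{\Lambda,T,N}(L)=1$ for every $\Lambda$ and $N$. Hence $\Phi^{(0)}_{\Lambda,T}(L)=1$ and $\Phi^{(r)}_{\Lambda,T}(L)=0$ for $r\ge1$, by uniqueness of asymptotic coefficients, and there is nothing to prove. Otherwise I fix a non-trivial occurrence $x_L$ once and for all. By the definition in the proof of Proposition~\ref{prop:wilson-expansion-from-rooted}, we have $\Phi^{(r)}_{\Lambda,T}(L)=\Psi^{(r)}_{\Lambda,T}(X_L)$ for every finite $\Lambda$ containing $L$, and this identification does not depend on the chosen root.

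Next I compare supports. Since $\zeta=\varnothing$, the set $\operatorname{supp}(X_L)$ is the loop support together with the root edge, and the root edge already lies in the loop support. So $\operatorname{supp}(X_L)=\operatorname{supp}L$, up to the cell-incidence convention fixed in Subsection~\ref{subsec:lattice-loop-conventions}, which changes distances by at most a constant depending on $d$. Proposition~\ref{prop:coefficient-locality} then gives the existence of $\Psi^{(r)}_{\infty,T}(X_L)$ and the bound~\eqref{eq:bound-psi-r}. I set $\Phi^{(r)}_{\infty,T}(L):=\Psi^{(r)}_{\infty,T}(X_L)$ and absorb the bounded distance discrepancy into $C_{r,L,T}:=C_{r,X_L,T}e^{cC_d}$. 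This yields the displayed estimate.

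Finally, independence of the exhaustion follows from this estimate. Along any exhaustion $\Lambda_n\uparrow\Z^d$, the distance $\operatorname{dist}(\operatorname{supp}L,\partial\Lambda_n)$ tends to infinity, because $\operatorname{supp}L$ is finite and every finite neighbourhood of it is eventually contained in $\Lambda_n$ together with all ambient plaquettes adjacent to its edges. Hence $\Phi^{(r)}_{\Lambda_n,T}(L)\to\Phi^{(r)}_{\infty,T}(L)$, and the limit is the same for every exhaustion. The limit also does not depend on the root: for two roots $x,y$, the finite-volume coefficients agree by Proposition~\ref{prop:wilson-expansion-from-rooted}, so their limits agree. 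The main point requiring care is this support identification and the verification that the exhaustion makes the boundary distance diverge. Both are bookkeeping, since all the analytic content is already in Proposition~\ref{prop:coefficient-locality}.
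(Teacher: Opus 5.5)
Your proposal is correct and follows the paper's proof, which applies Proposition~\ref{prop:coefficient-locality} to $X_L=(L,x_L,\varnothing)$ and uses the coefficient-level identification $\Phi^{(r)}_{\Lambda,T}(L)=\Psi^{(r)}_{\Lambda,T}(X_L)$. Your extra bookkeeping (the trivial-loop case and the divergence of the boundary distance along exhaustions) is sound. The hedge about cell-incidence conventions is unnecessary: since $D(\varnothing)=\varnothing$ and the root edge lies in the loop support, $\operatorname{supp}(X_L)=\operatorname{supp}L$ exactly, so there is no constant to absorb.
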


\begin{proof}
Apply Proposition~\ref{prop:coefficient-locality} to $X_L=(L,x_L,\varnothing)$ and use \eqref{eq:ext-coef} at the coefficient level.
\end{proof}

\begin{corollary}\label{cor:joint-largeN-volume-limit}
If $N_n\to\infty$ and $\operatorname{dist}(\operatorname{supp}L,\partial\Lambda_n)\to\infty$, then
\[
\Phi_{\Lambda_n,T,N_n}(L)\longrightarrow\Phi^{(0)}_{\infty,T}(L).
\]
\end{corollary}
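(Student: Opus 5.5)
The plan is a standard $\varepsilon/3$ argument combining the uniform-in-volume asymptotic expansion of Proposition~\ref{prop:wilson-expansion-from-rooted} with the exponential locality of Corollary~\ref{cor:infinite-volume-coefficients}, applied only at order $R=0$. The essential point is that the remainder in \eqref{eq:expansion-phi} is uniform in the finite volume $\Lambda$. Because of this, the large-$N$ error and the finite-volume error can be controlled independently along the joint sequence $(\Lambda_n,N_n)$, with no relation imposed between the two rates.

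First, I would take $R=0$ in Proposition~\ref{prop:wilson-expansion-from-rooted}. This gives a constant $C_{0,L,T}$, independent of $\Lambda$, such that for all $N$
\[
\left|\Phi_{\Lambda,T,N}(L)-\Phi^{(0)}_{\Lambda,T}(L)\right|\le C_{0,L,T}N^{-1}.
\]
The finitely many small values of $N$ are already absorbed into this constant, by the convention of Lemma~\ref{lem:central-cutoff}. Second, Corollary~\ref{cor:infinite-volume-coefficients} with $r=0$ gives
\[
\left|\Phi^{(0)}_{\Lambda,T}(L)-\Phi^{(0)}_{\infty,T}(L)\right|\le C_{0,L,T}e^{-c\operatorname{dist}(\operatorname{supp}L,\partial\Lambda)}.
\]

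Third, I would specialize to $\Lambda=\Lambda_n$ and $N=N_n$ and add the two bounds with the triangle inequality:
\[
\left|\Phi_{\Lambda_n,T,N_n}(L)-\Phi^{(0)}_{\infty,T}(L)\right|\le C\bigl(N_n^{-1}+e^{-c\operatorname{dist}(\operatorname{supp}L,\partial\Lambda_n)}\bigr).
\]
Both terms tend to zero under the stated hypotheses, which gives the claim. The iterated limit follows by the same inequality, since for fixed $\Lambda$ the first term vanishes as $N\to\infty$, and then the second term vanishes as $\Lambda\uparrow\Z^d$.

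I do not expect a genuine obstacle here. The one thing to check is that the remainder constant in \eqref{eq:expansion-phi} truly does not depend on $\Lambda$. This uniformity is inherited from Theorem~\ref{thm:rooted-coeff-construction}, whose remainder is uniform in the finite volume, and from the identity \eqref{eq:ext-coef}. I would therefore cite that uniformity explicitly rather than reprove it.
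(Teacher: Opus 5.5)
Your proposal is correct and is essentially the paper's own proof: take $R=0$ in Proposition~\ref{prop:wilson-expansion-from-rooted}, use that its remainder is uniform in the volume, and control $\Phi^{(0)}_{\Lambda_n,T}(L)-\Phi^{(0)}_{\infty,T}(L)$ with the exponential locality of Corollary~\ref{cor:infinite-volume-coefficients}. Your remark about small values of $N$ is not needed, since $N_n\to\infty$ and the $O(N^{-1})$ bound is only used for large $n$.
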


\begin{proof}
Take $R=0$ in Proposition~\ref{prop:wilson-expansion-from-rooted}. Uniformity in the finite volume gives
\[
\Phi_{\Lambda_n,T,N_n}(L)=\Phi^{(0)}_{\Lambda_n,T}(L)+O_{L,T}(N_n^{-1}).
\]
The first term converges to $\Phi^{(0)}_{\infty,T}(L)$ by Corollary~\ref{cor:infinite-volume-coefficients}, with an exponentially small error in the distance to the boundary. Both errors tend to zero.
\end{proof}

\subsection{Connected cumulants and factorization}\label{subsec:connected-cumulants-factorization}

We now prove leading-order factorization by showing that connected cumulants involving at least two trace components vanish at order $N^0$.

If $L=(\ell_1,\ldots,\ell_k)$, define the normalized connected cumulant by
\begin{equation}\label{eq:norm-conn-cumulant}
\Phi^{\conn}_{\Lambda,T,N}(L)
=
\sum_{\pi\in\mathcal P(k)}(|\pi|-1)!(-1)^{|\pi|-1}
\prod_{B\in\pi}\Phi_{\Lambda,T,N}(L_B),
\end{equation}
where $L_B=(\ell_i)_{i\in B}$.

\begin{lemma}[Linked-cluster form of Wilson cumulants]\label{lem:wilson-cumulant-linked-cluster}
For $T$ in the strong-coupling regime, the normalized connected cumulant~\eqref{eq:norm-conn-cumulant} admits an absolutely convergent local-channel expansion over connected polymer clusters which meet every marked trace component. In particular, all contributions whose active support splits according to a non-trivial partition of $\{1,\ldots,k\}$ cancel in the M\"obius sum defining the cumulant.
\end{lemma}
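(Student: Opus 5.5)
I would prove the lemma by a standard polymer (linked-cluster) argument applied to the normalized Wilson expectations, followed by a M\"obius inversion on set partitions. First I would write each $\Phi_{\Lambda,T,N}(L_B)$, for $B\subset\{1,\ldots,k\}$, through the Fourier-side ratio~\eqref{eq:observable-reference-ratio}. Next I would expand both numerator and vacuum denominator in the local-channel representation~\eqref{eq:local-channel-observable}, and decompose every scalar channel configuration into connected active supports of the dual incidence graph. The non-spectral weight factorizes over connected components, because edge Haar projections and plaquette tensors are local. The product reference law $\widehat\Pbb_{\Lambda,T,N}$ factorizes over plaquette labels. Two polymers are declared incompatible when their supports overlap or are adjacent in the dual incidence graph. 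Each pinned polymer also carries the set $B(Y)\subset B$ of marked trace components it meets.

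The normalized Wilson factor is $N^{-\#L_B}\prod_{i\in B}\Tr(U_{\ell_i})$. A trace component meeting no active support contributes its vacuum value. So the numerator is a hard-core polymer gas in which every $i\in B$ must be covered by some pinned polymer, with the pinned polymers partitioning $B$ according to which traces they connect. Absolute convergence comes from Theorem~\ref{thm:stable-hk-block} for stable labels and from Proposition~\ref{prop:nonstable-polymer-tail} together with Corollary~\ref{cor:coarse-nonspectral-block} for the complement. These give an activity bound $C_Le^{-\mu(T)|Y|}$, and the standard Koteck\'y--Preiss criterion then holds for $T$ large, uniformly in $\Lambda$ and $N$. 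This gives convergent cluster expansions for $\log$ of the vacuum partition function. Dividing by the vacuum denominator, the unpinned clusters cancel exactly as in Lemma~\ref{lem:constraint-component-cancellation}.

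The resulting formula expresses $\Phi_{\Lambda,T,N}(L_B)$ as a sum over collections of clusters. Each cluster is a connected, in the incompatibility sense, family of polymers containing pinned ones, and the sets $B(\cdot)$ of these clusters partition $B$. Equivalently,
\[
\Phi_{\Lambda,T,N}(L_B)=\sum_{\pi\in\mathcal P(B)}\prod_{D\in\pi}\kappa(D),
\]
where $\kappa(D)$ is the sum of connected clusters whose marked trace set is exactly $D$. The cumulant formula~\eqref{eq:norm-conn-cumulant} is precisely M\"obius inversion on the partition lattice. Hence $\Phi^{\conn}_{\Lambda,T,N}(L)=\kappa(\{1,\ldots,k\})$, the sum over connected clusters meeting every marked trace component. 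Every contribution that splits along a non-trivial partition cancels, and absolute convergence is inherited from the cluster bounds.

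The main obstacle is the product structure itself. The moment formula factorizes over partitions of $B$ only if a configuration whose clusters meet disjoint trace sets has weight equal to the product of the separate weights. Two issues threaten this. The first is that a single edge can carry strands from several traces, so the cyclic Wilson tensor $C^W_{\Lambda,L,N}$ is not obviously a product over trace components. The second is that the central selection rule couples determinant shifts globally through $\partial^*q=0$. For the first, I would use that $C^W$ is a tensor product over $i$ of the individual cyclic trace tensors, and define polymers at the level of trace components, so that a cluster touching two traces at one edge is counted as connecting them. For the second, I would treat the determinant field as part of the polymer labels. I would check that the constraint is local on each connected support, being an exact cocycle condition supported on $P(Y)$ with $q=0$ off the active set. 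If that fails, I would fall back on Proposition~\ref{prop:determinant-reduction-leading}, which cancels the common factor $Z^{\det}_{\Lambda,T}$ at leading order.
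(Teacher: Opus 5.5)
Your argument follows the paper's route: Fourier-side ratio, local-channel expansion, factorization over disconnected components and of the product reference law, the linked-cluster formula, and M\"obius inversion on $\mathcal P(k)$. Your extra bookkeeping is sound and fills holes the paper's sketch leaves open. Putting each trace component inside a single polymer is exactly what makes the weights of compatible families multiply despite the global Wilson scalarization index $\omega$. And $\partial^*c+j_L=0$ is a product of edge-local constraints, so it factorizes over non-adjacent supports once plaquettes with $q_p\neq0$ are counted as active. The fallback to Proposition~\ref{prop:determinant-reduction-leading} is therefore not needed.

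The step that does not hold is the Koteck\'y--Preiss bound for the vacuum gas, uniformly in $N$. The paper's proof uses the same bound, via Lemma~\ref{lem:constraint-component-cancellation}, so this is not a deviation from the paper, but it is a genuine gap.

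\begin{itemize}
\item Theorem~\ref{thm:stable-hk-block} is stated only for pinned supports, and for $d\ge3$ no uniform vacuum bound is available. Label the six faces of a unit cube by the fundamental representation, oriented as the boundary of the cube.
\item Relative to the vacuum configuration, the reference law gives $e^{-3T}$, since $C_2(\square)=N$. The observable gives $d_\square^6=N^6$. The closed-surface Haar integral is $N^{V-E}=N^{8-12}=N^{-4}$. So this polymer has activity $z=N^2e^{-3T}$.
\item Clusters made of $n$ copies of this polymer contribute $(-1)^{n-1}z^n/n$ to the cluster expansion of the logarithm of the vacuum partition function. Absolute convergence therefore forces $T>\frac23\log N$.
\item The same clusters enter the moments $\Phi_{\Lambda,T,N}(L_B)$ through the exclusion around pinned polymers.
\end{itemize}

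With lattice-adjacency polymers you therefore get the linked-cluster identity only above an $N$-dependent coupling threshold, and that cannot feed Proposition~\ref{prop:factorization}. A repair needs one of two things. The first option is weights organized by the topology of the glued surfaces: truncate along Weingarten/recoupling connectivity rather than lattice adjacency, so that each connected cluster carries a single factor $N^{\chi}$ and vacuum spheres resum into $N^2$ times a convergent series. The second option is a proof of factorization that bypasses this polymer gas altogether.
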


\begin{proof}
For each subfamily $L_B$, use the defect-ratio/local-channel representation and the strong-coupling block bounds of Section~\ref{sec:pinned-haar-projection} to write the normalized moment as an absolutely convergent polymer expansion. The weight of a collection of mutually disconnected active components factors over those components, since both the local channel kernels and the product reference law do. The usual linked-cluster formula therefore expresses moments as sums over set partitions of connected cluster weights attached to the corresponding trace labels. Substituting this expression into~\eqref{eq:norm-conn-cumulant} performs M\"obius inversion on the lattice of set partitions: any family of clusters whose marked trace set decomposes according to a non-trivial partition has total coefficient zero. The surviving terms are exactly the connected clusters meeting every marked trace component. Absolute convergence justifies all regroupings.
\end{proof}

\begin{proposition}\label{prop:factorization}
For every finite loop family $L=(\ell_1,\ldots,\ell_k)$,
\begin{equation}\label{eq:factorization-phi}
\Phi^{(0)}_{\infty,T}(L)=\prod_{i=1}^k\Phi^{(0)}_{\infty,T}(\ell_i).
\end{equation}
\end{proposition}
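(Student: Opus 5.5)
The plan is to prove factorization through connected cumulants, using the Möbius inversion between moments and cumulants. Since
\[
\Phi_{\Lambda,T,N}(L)=\sum_{\pi\in\mathcal P(k)}\prod_{B\in\pi}\Phi^{\conn}_{\Lambda,T,N}(L_B),
\]
it is enough to show that, for every subfamily $L_B$ with $|B|\ge2$, the connected cumulant is $O(N^{-1})$ uniformly in $\Lambda$. Two facts then finish the argument. First, each moment $\Phi_{\Lambda,T,N}(L_B)$ has a leading coefficient $\Phi^{(0)}_{\Lambda,T}(L_B)$ by Proposition~\ref{prop:wilson-expansion-from-rooted}. Second, finite sums and products of uniformly controlled expansions preserve expansions. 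Extracting the $N^0$ coefficient then gives $\Phi^{(0)}_{\Lambda,T}(L)=\prod_i\Phi^{(0)}_{\Lambda,T}(\ell_i)$ for every finite $\Lambda$. Passing to $\Lambda\uparrow\Z^d$ with Corollary~\ref{cor:infinite-volume-coefficients} yields \eqref{eq:factorization-phi}, since the product of finitely many convergent sequences converges.

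The cumulant bound comes from Lemma~\ref{lem:wilson-cumulant-linked-cluster}. It writes $\Phi^{\conn}_{\Lambda,T,N}(L_B)$ as an absolutely convergent sum over connected polymer clusters meeting every marked trace component of $L_B$. Each such cluster is a union of connected pinned supports glued by overlap. The key step is to show that every surviving cluster carries at least one factor $N^{-1}$.

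I would split the surviving clusters into two cases. In the first case, some single connected active support $Y$ of the cluster has $b(Y,L)\ge2$. Then Theorem~\ref{thm:pinned-block}, with $\delta(Y,L)\ge1$, gives $\|\mathcal O^{\mathrm{red},L}_{Y,N}\|_{M,1}\le C_LC_d^{|Y|+M}N^{-1}$. Inserting this into the proof of Theorem~\ref{thm:stable-hk-block} gives the bound $C_LN^{-1}e^{-\mu(T)|Y|}$ after the heat-kernel and determinant sums. The complementary large-mass and unstable ranges are exponentially small in $N$ by Lemma~\ref{lem:central-cutoff} and Proposition~\ref{prop:nonstable-polymer-tail}. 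In the second case, the cluster's connectivity comes only from overlaps between different polymers, as in the Ursell-function terms of the linked-cluster formula. Here I need to argue that each polymer is individually $O(1)$ and the overlap structure costs nothing in $N$. Each loop component is normalized by its own $N^{-1}$ trace factor, so a polymer meeting only one trace component gives an $N^0$ contribution. I would then show that the overlap Ursell coefficients combine these into an $O(N^{-1})$ term, by comparing with the moments of $L_{B'}$ for $B'\subsetneq B$.

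I expect this second case to be the main obstacle. Such terms need not have an obvious $N^{-1}$, since the Ursell truncation is of order one when polymers overlap on a shared plaquette. What I would try first is to show that, on the stable sector at order $N^0$, the plaquette labels of polymers attached to different loops decouple. By the split selection rule \eqref{eq:split-selection} and Proposition~\ref{prop:determinant-reduction-leading}, the leading weight of a shared plaquette is a product over the contributing reduced charges. I would then show that any genuine coupling of two loops through a common plaquette forces the tensor networks into one connected active support, which returns us to the first case. If that fails, my fallback is the rooted route. In the expansion of Theorem~\ref{thm:rooted-coeff-construction}, the only moves joining different trace components are mergers, which have order $2$. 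Transfers keep $\#L$ fixed, and the terminal decorated sources factor at order $0$ by Proposition~\ref{prop:decorated-terminal-sources} together with the $b\ge2$ gain. An induction on $k$ along order-zero histories, which only split traces, would then show that the leading history sum for $L$ factorizes over its components.
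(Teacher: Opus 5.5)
Your reduction is the one the paper uses: moment--cumulant inversion, vanishing of the order-$N^0$ connected cumulants with at least two components uniformly in $\Lambda$, coefficient extraction, and passage to $\Lambda\uparrow\Z^d$. The paper then settles the cumulant bound in one line. Every cluster surviving Lemma~\ref{lem:wilson-cumulant-linked-cluster} meets two marked collars, and Theorem~\ref{thm:pinned-block} with Lemma~\ref{lem:core-collar-connectivity} is invoked to give each such cluster a separating channel, hence a factor $N^{-1}$. It does not distinguish a single support meeting two collars from an Ursell cluster of overlapping polymers. Your proposal leaves the second kind explicitly open, so it does not prove the statement: the $O(N^{-1})$ cumulant bound is the whole content of the argument, and you establish it only in your case~1.

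You are right that overlap terms are not covered by a separating-channel argument. But neither of your repairs works, because the two cases must cancel against each other rather than be small separately. The overlap terms come from the subtracted products $\prod_{B\in\pi}\Phi_{\Lambda,T,N}(L_B)$, in which the polymers carry independent labels on shared plaquettes. They are not channel configurations of the joint observable, so they cannot be ``forced into one connected active support''.

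Conversely, case~1 is not small in general. Take a one-plaquette complex with $\ell_1=\ell_2=\partial p$. Then $\Phi_{\Lambda,T,N}(\partial p)=e^{-T/2}$, while $\Phi_{\Lambda,T,N}(\ell_1,\ell_2)$ comes entirely from the single block on $p$ meeting both collars, so $b(Y,L)=2$. Its labels are $[\varnothing,(2)]_N$ and $[\varnothing,(1,1)]_N$ (up to orientation), and it equals $\frac{N+1}{2N}e^{-T(1+1/N)}+\frac{N-1}{2N}e^{-T(1-1/N)}=e^{-T}+O(N^{-2})$. This block is of order one, so the gain $\delta(Y,L)\ge1$ that your case~1 takes from Theorem~\ref{thm:pinned-block} is not available. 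The paper's one-line argument relies on that same gain. The cumulant is $O(N^{-2})$ only because the joint block and the product of the two overlapping single-collar blocks agree at order $N^0$.

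The missing ingredient is a leading-order multiplicativity of the local heat-kernel weights under fusion of labels on shared plaquettes. A typical instance is $\sum_\nu c^\nu_{\lambda\mu}d_\nu e^{-\frac{T}{2N}C_2(\nu)}=d_\lambda d_\mu e^{-\frac{T}{2N}(C_2(\lambda)+C_2(\mu))}(1+O(N^{-1}))$ at fixed stable mass, where $c^\nu_{\lambda\mu}$ is the multiplicity of $V_\nu$ in $V_\lambda\otimes V_\mu$. This must be propagated through the cluster expansion so that each joint term cancels its overlapping counterpart at order $N^0$. Your ``try first'' idea points in this direction, but you use it to reduce to case~1 instead of cancelling case~1 against case~2.

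The rooted fallback needs the same input. Decorations produced along $\ell_1$ constrain later transfers on the other components through $A_p(\zeta)=\{\zeta(p)\}$, and they also enter the terminal sources. Those sources have trivial loops, so they carry no marked collars and get no $b\ge2$ gain. Without a decoupling statement of the above type, the order-zero history sum does not split over components.
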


\begin{proof}
It is enough to prove that every leading connected cumulant involving at least two trace components vanishes. Fix $k\ge2$. By Lemma~\ref{lem:wilson-cumulant-linked-cluster}, the connected cumulant has an absolutely convergent expansion over connected local-channel clusters which meet every marked trace component. In particular, every surviving cluster meets at least two marked Wilson collars. The resummed connected-block estimate of Theorem~\ref{thm:pinned-block}, together with Lemma~\ref{lem:core-collar-connectivity}, therefore supplies an additional reciprocal dimension from a non-trivial separating channel. Since every non-trivial reduced stable label has dimension at least $N$, every such connected cluster is $O(N^{-1})$ after trace normalization. The stable high-energy and large-mass/unstable sectors are exponentially small in $N$ by the estimates used in Theorem~\ref{thm:rooted-coeff-construction}, so they do not contribute at order $N^0$.

Consequently the coefficient of order $N^0$ in $\Phi^{\conn}_{\Lambda,T,N}(\ell_1,\ldots,\ell_k)$ vanishes uniformly in the finite volume. Passing to the infinite-volume coefficients by Corollary~\ref{cor:infinite-volume-coefficients} gives
\[
\Phi^{\conn,(0)}_{\infty,T}(\ell_1,\ldots,\ell_k)=0, \qquad k\ge2.
\]
Finally, the inverse moment--cumulant formula expresses \(\Phi^{(0)}_{\infty,T}(L)\) as a sum over partitions of products of leading connected cumulants. Since all leading connected cumulants with blocks of size at least two vanish, only the partition into singletons contributes. This gives the factorization~\eqref{eq:factorization-phi}.
\end{proof}

We can now complete the proof of the main theorem.

\begin{proof}[Proof of Theorem~\ref{thm:main-expanded}]
Proposition~\ref{prop:wilson-expansion-from-rooted} gives the uniform finite-volume expansion \eqref{eq:finite-volume-expansion}.  Corollary~\ref{cor:infinite-volume-coefficients} gives the infinite-volume limits of all coefficients and their exponential locality, proving \eqref{eq:coefficient-limit}.  The joint large-$N$/infinite-volume statement is Corollary~\ref{cor:joint-largeN-volume-limit}; in particular it contains the iterated limit along every exhaustion.  Finally, determinant reduction follows from Proposition~\ref{prop:determinant-reduction-leading}, and factorization follows from Proposition~\ref{prop:factorization}.  This completes the proof.
\end{proof}

\subsection{Area law}\label{sec:area-law}

We will now derive an area-law upper bound from the strong-coupling expansion constructed above. As usual in strong-coupling arguments, the decay comes from two ingredients: a filling constraint imposed by the central charge selection rule, and the exponential heat-kernel decay in the projective mass of the plaquette labels.

\begin{definition}\label{def:lattice-area}
Let $\ell$ be a closed lattice loop in $\Z^d$.  We write $j_\ell\in\Z^{E(\Z^d)}$ for its signed edge-current.  If $\Lambda\subset\Z^d$ is finite and contains $\ell$, define the finite-volume lattice area of $\ell$ by
\begin{equation}
\mathcal A_\Lambda(\ell):=\inf\left\{\sum_{p\in P(\Lambda)} |n_p|:n\in\Z^{P(\Lambda)},\ \partial^*n=-j_\ell\right\}.
\end{equation}
If no such $n$ exists in $\Lambda$, we put $\mathcal A_\Lambda(\ell)=+\infty$.  In infinite volume, set
\begin{equation}
\mathcal A(\ell):=\inf\left\{\sum_{p\in P(\Z^d)} |n_p|:n\in\Z^{P(\Z^d)},\ n \text{ finitely supported},\ \partial^*n=-j_\ell\right\}.
\end{equation}
\end{definition}
In other terms, $\mathcal A(\ell)$ is the minimal plaquette $\ell^1$-area of an integral lattice surface spanning the current of $\ell$.

When $\ell$ is a simple lattice loop, its edge-current has $\|j_\ell\|_1=|\ell|$.  Since every plaquette has four boundary edges, every filling $n$ satisfies
\[
|\ell|=\|j_\ell\|_1
\le
4\sum_p |n_p|.
\]
Therefore
\begin{equation}\label{eq:bound-l}
|\ell|\le 4\mathcal A(\ell).
\end{equation}

The following elementary lemma is the place where the area enters.

\begin{lemma}\label{lem:mass-fills-loop}
Let $\ell$ be a single loop and let $\mathfrak h$ be a stable small-mass rooted history contributing to the coefficient $\Phi^{(r)}_{\Lambda,T}(\ell)$. For each plaquette $p$, let $s_{\mathfrak h}(p)\in\Z$ be the aggregate reduced central charge carried by the plaquette labels of the history at $p$, counted with the signs inherited from the local loop--plaquette transfers. Let $m_{\mathfrak h}(p)$ be the corresponding aggregate projective mass. Then
\begin{equation}\label{eq:history-filling-current}
\partial^*s_{\mathfrak h}+j_\ell=0,
\end{equation}
and
\begin{equation}\label{eq:history-charge-mass}
|s_{\mathfrak h}(p)|\le m_{\mathfrak h}(p)\qquad\text{for every }p.
\end{equation}
Consequently,
\begin{equation}\label{eq:mass-fills-loop}
M(\mathfrak h):=\sum_p m_{\mathfrak h}(p)\ge \mathcal A_\Lambda(\ell).
\end{equation}
The same inequality holds in infinite volume with $\mathcal A(\ell)$.
\end{lemma}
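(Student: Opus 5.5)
The plan is to prove the three assertions in order. The filling identity \eqref{eq:history-filling-current} comes from a conservation law along the history. The charge--mass inequality \eqref{eq:history-charge-mass} is a pointwise statement about stable coordinates. The area bound \eqref{eq:mass-fills-loop} then follows from the definition of $\mathcal A_\Lambda(\ell)$.

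For \eqref{eq:history-filling-current}, I will follow the quantity $j-\partial^*s$ along the history $\mathfrak h$. At the initial state $(\ell,x,\varnothing)$ the aggregate plaquette charge is zero, so this quantity equals $j_\ell$.
\begin{itemize}
\item Loop-only moves are cut-and-join surgeries at two occurrences of the same edge (see the splittings and mergers above). They preserve the total signed edge current of the loop family, and they do not touch plaquette labels.
\item Each plaquette transfer satisfies the local reduced charge balance \eqref{eq:local-reduced-charge-balance}. If the transfer is recorded by adding $s(\alpha_p)-s(\beta_p)$ to the aggregate plaquette charge at $p$, the quantity $j-\partial^*s$ is unchanged. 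This is exactly the sign convention in the statement.
\item The history ends in a terminal source $\sigma$, coming from a decorated quantity $\Phi_{\Lambda,T,N}(M;\zeta)$ with $M$ trivial after gauge fixing, so $j_M=0$. In the stable small-mass range, Lemma~\ref{lem:central-cutoff} makes the split selection rule \eqref{eq:split-selection} hold on each surviving linked cluster of Proposition~\ref{prop:decorated-terminal-sources}. Adding the reduced charges of the terminal-source labels (also counted in $m_{\mathfrak h}$) therefore contributes an element of $\ker\partial^*$.
\end{itemize}
Combining these, the final loop current is zero and the aggregate charge satisfies $j_\ell+\partial^* s_{\mathfrak h}=0$, which is \eqref{eq:history-filling-current}. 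Here I use the convention that $-\partial^*$ of the accumulated charge balances $j_\ell$. I will choose the overall sign convention once so that it matches \eqref{eq:split-selection}.

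For \eqref{eq:history-charge-mass}, each contribution to $s_{\mathfrak h}(p)$ is $\pm s(\lambda)$ for a single label $\lambda$, and $|s(\lambda)|=\bigl||\lambda^+|-|\lambda^-|\bigr|\le m(\lambda)$. The triangle inequality over the finitely many contributions at $p$ then gives the bound. It is essential that $m_{\mathfrak h}(p)$ counts exactly the same contributions as $s_{\mathfrak h}(p)$, with multiplicity one. I will match this bookkeeping to the definition of $M(\mathfrak h)$ stated just before Proposition~\ref{prop:mass-sensitive-summability}. There, transported labels are not recounted, so I must check that they also do not enter $s_{\mathfrak h}$ a second time.

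Finally, $n:=s_{\mathfrak h}\in\Z^{P(\Lambda)}$ is finitely supported and satisfies $\partial^*n=-j_\ell$. It is therefore admissible in Definition~\ref{def:lattice-area}, and $\mathcal A_\Lambda(\ell)\le\sum_p|s_{\mathfrak h}(p)|\le M(\mathfrak h)$. The infinite-volume statement follows in the same way, since $\Lambda\subset\Z^d$ and any admissible $n$ on $\Lambda$ extends by zero, so $\mathcal A(\ell)\le\mathcal A_\Lambda(\ell)$.

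The main obstacle is the first step. One must make precise the signs with which transferred charges and terminal-source charges are aggregated, and justify that the terminal source's own labels satisfy $\partial^*s=0$ on the relevant cluster. This relies on the small-mass split rule, so $\varepsilon_d$ must be chosen so that the aggregate mass stays below the threshold of Lemma~\ref{lem:central-selection}.
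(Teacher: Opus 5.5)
Your proposal is correct and follows the paper's proof essentially step for step. Both arguments use the invariant $j-\partial^*s_{\mathfrak h}$ under loop-only moves and under the local balance \eqref{eq:local-reduced-charge-balance}, divergence-free terminal charges from the split selection rule, the pointwise bound $\bigl||\lambda^+|-|\lambda^-|\bigr|\le m(\lambda)$, and admissibility of the finitely supported $s_{\mathfrak h}$ as a filling. On the bookkeeping point you flag: an output label $\beta_p$ does enter $s_{\mathfrak h}(p)$ twice, once with sign $-$ and once with sign $+$ as the next constrained input or constrained terminal label. These contributions telescope, so every surviving charge belongs to a label that is counted exactly once in $m_{\mathfrak h}(p)$.
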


\begin{proof}
We work on the stable small-mass sector, so the central selection rule splits as in Lemma~\ref{lem:central-selection}. Undo the adjunction changes used to define the rooted heat-kernel transition kernel. Each rooted-history term then comes from a finite sequence of coefficientwise topological coefficients before summing over the heat-kernel reference law.

Loop-only cut-and-join operations preserve the total edge-current of the loop family.  For a rooted loop--plaquette transfer at a plaquette $p$, the local reduced charge balance~\eqref{eq:local-reduced-charge-balance} says that the change of loop current is exactly the coboundary of the signed charge $s(\alpha_p)-s(\beta_p)$ removed from the plaquette label at that step.  Thus, if this signed charge is added to the aggregate plaquette current $s_{\mathfrak h}$, the quantity
\[
j-\partial^*s_{\mathfrak h}
\]
is invariant along the rooted history, where $j$ denotes the current of the current loop family.  Initially $s_{\mathfrak h}=0$ and $j=j_\ell$.

At a terminal coefficientwise source, the remaining loop family has zero current.  The terminal decorated vacuum coefficient may still contain plaquette labels, but the split central selection rule applied to this terminal coefficient makes their total reduced charge divergence-free.  Hence these terminal labels can only add a closed plaquette current to $s_{\mathfrak h}$ and do not change the boundary.  The invariant identity therefore gives
\[
-\partial^*s_{\mathfrak h}=j_\ell,
\]
which is \eqref{eq:history-filling-current}.

For every plaquette label
\[
\lambda_p=q_p\mathbf 1_N+[\lambda_p^+,\lambda_p^-]_N
\]
appearing in the history, its reduced central charge and projective mass satisfy
\[
\bigl||\lambda_p^+|-|\lambda_p^-|\bigr|
\le |\lambda_p^+|+|\lambda_p^-|.
\]
Summing this inequality over all signed contributions at the plaquette gives \eqref{eq:history-charge-mass}. Therefore $s_{\mathfrak h}$ is an integral plaquette filling of $\ell$. In particular, if $\mathcal A_\Lambda(\ell)=+\infty$, no such stable small-mass history can exist. Otherwise, by the definition of the finite-volume filling area,
\[
\mathcal A_\Lambda(\ell)
\le \sum_p |s_{\mathfrak h}(p)|
\le \sum_p m_{\mathfrak h}(p)
=M(\mathfrak h).
\]
The same finite support $s_{\mathfrak h}$ is an admissible filling in $\Z^d$, so the infinite-volume inequality follows directly from the definition of $\mathcal A(\ell)$.
\end{proof}

We can now state the area-law upper bound for all coefficients. Theorem~\ref{thm:main-area-law} will follow easily.

\begin{theorem}[Coefficientwise area law upper bound]
\label{thm:area-law}
Let $d\ge2$.  For every coefficient order $r\ge0$ there exists $T_{\mathrm{area}}(d,r)<\infty$ such that, for every $T>T_{\mathrm{area}}(d,r)$, there are constants $\sigma_r(T)>0$ and $C_r(T)<\infty$, with $\sigma_r(T)\to\infty$ linearly as $T\to\infty$, such that for every lattice loop $\ell$,
\begin{equation}\label{eq:infinite-volume-arealaw}
\left|
\Phi_{\infty,T}^{(r)}(\ell)
\right|
\le
C_r(T)^{|\ell|}
\exp\{-\sigma_r(T)\mathcal A(\ell)\}.
\end{equation}
The same finite-volume estimate holds with $\mathcal A_\Lambda(\ell)$:
\begin{equation}\label{eq:finite-volume-arealaw}
\left|\Phi_{\Lambda,T}^{(r)}(\ell)\right|\le C_r(T)^{|\ell|}\exp\{-\sigma_r(T)\mathcal A_\Lambda(\ell)\}.
\end{equation}
\end{theorem}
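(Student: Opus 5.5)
The plan is to combine the rooted history representation of the coefficients (Theorem~\ref{thm:rooted-coeff-construction}) with the filling constraint of Lemma~\ref{lem:mass-fills-loop} and the mass-sensitive summability of Proposition~\ref{prop:mass-sensitive-summability}. Fix $r\ge0$ and a loop $\ell$, choose a non-trivial root $x_\ell$, and write, by Proposition~\ref{prop:wilson-expansion-from-rooted} and \eqref{eq:rooted-history-coeff},
\[
\Phi^{(r)}_{\Lambda,T}(\ell)=\Psi^{(r)}_{\Lambda,T}(\ell,x_\ell,\varnothing)=\sum_{\operatorname{ord}(\mathfrak h)=r}w^{[r]}_{\Lambda,T}(\mathfrak h),
\]
the sum running over stable finite-energy rooted histories and being absolutely convergent. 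The stable high-energy and large-mass or unstable sectors have already been removed from the coefficient in the proof of Theorem~\ref{thm:rooted-coeff-construction}, so only the histories in this sum need to be controlled. If $\ell$ is trivial, then $\mathcal A(\ell)=0$ and $|\Phi^{(r)}|$ is bounded by a constant, so the bound is immediate; from now on I assume $\ell$ is non-trivial.

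The first step is to apply Lemma~\ref{lem:mass-fills-loop} to every history in the sum. Each such history satisfies $M(\mathfrak h)\ge\mathcal A_\Lambda(\ell)$, and there is no history at all when $\mathcal A_\Lambda(\ell)=+\infty$, in which case the coefficient vanishes. Consequently the coefficient sum equals the restricted sum over histories with $M(\mathfrak h)\ge\mathcal A_\Lambda(\ell)$. The second step is to invoke \eqref{eq:mass-sensitive-summability} with $A=\mathcal A_\Lambda(\ell)$ and $L=(\ell)$. For $T>T_{\mathrm{mass}}(d,r)$ this gives
\[
|\Phi^{(r)}_{\Lambda,T}(\ell)|\le C_r^{|\ell|}\exp\{-(cT-\log C_r)\mathcal A_\Lambda(\ell)\},
\]
which is \eqref{eq:finite-volume-arealaw} with $\sigma_r(T)=cT-\log C_r$. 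This rate grows linearly in $T$ and is positive once $T_{\mathrm{area}}(d,r)\ge T_{\mathrm{mass}}(d,r)$ is chosen large enough.

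For the infinite-volume estimate \eqref{eq:infinite-volume-arealaw} I see two routes. The first is to observe that $\mathcal A_\Lambda(\ell)\ge\mathcal A(\ell)$, because any finite-volume filling is also an admissible filling in $\Z^d$; the finite-volume bound then holds with $\mathcal A(\ell)$ in place of $\mathcal A_\Lambda(\ell)$, uniformly in $\Lambda$, and letting $\Lambda\uparrow\Z^d$ via Corollary~\ref{cor:infinite-volume-coefficients} transfers it to $\Phi^{(r)}_{\infty,T}$. The second route is to apply the infinite-volume version of Proposition~\ref{prop:mass-sensitive-summability} together with the infinite-volume form of Lemma~\ref{lem:mass-fills-loop} directly.

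The main obstacle is to make sure the filling lemma genuinely applies to every history in the coefficient. Two points need checking. First, the aggregate mass $M(\mathfrak h)$ used in Proposition~\ref{prop:mass-sensitive-summability} must be the same quantity that dominates $\sum_p|s_{\mathfrak h}(p)|$ in Lemma~\ref{lem:mass-fills-loop}. Both are defined by counting input labels and terminal-source labels once each, and I would verify that the charge bookkeeping in \eqref{eq:local-reduced-charge-balance} uses exactly these labels. Second, the terminal sources must have been expanded through their own linked-cluster histories, as in Proposition~\ref{prop:decorated-terminal-sources}; their labels then contribute only divergence-free charge while still being counted in $M$, so they can only increase $M(\mathfrak h)$ without altering the filling.
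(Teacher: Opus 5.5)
Your proposal is correct and follows the paper's proof essentially step by step. You use the root-independent rooted-history expansion, the filling lemma to get $M(\mathfrak h)\ge\mathcal A_\Lambda(\ell)$ (the coefficient vanishes when $\mathcal A_\Lambda(\ell)=+\infty$), and the mass-sensitive summability with $A=\mathcal A_\Lambda(\ell)$, giving $\sigma_r(T)=cT-\log C_r$; you then pass to infinite volume via convergence of the coefficients. Your explicit remark that $\mathcal A_\Lambda(\ell)\ge\mathcal A(\ell)$ makes the finite-volume bound uniform in $\Lambda$ before the limit is taken, which spells out a step the paper leaves implicit.
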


\begin{proof}
We first prove the finite-volume estimate.  By the rooted coefficientwise expansion of Theorem~\ref{thm:rooted-coeff-construction}, and since $\Phi_{\Lambda,T}^{(r)}(\ell)$ is independent of the choice of root, we may choose any non-trivial root $x$.

If $\mathcal A_\Lambda(\ell)=+\infty$, Lemma~\ref{lem:mass-fills-loop} shows that there is no stable finite-energy rooted history contributing to the coefficient. The exceptional sectors are remainders which are $O(N^{-R})$ for every $R$ in the coefficient construction, hence the coefficient itself is zero; with the convention $e^{-\sigma\cdot\infty}=0$, the desired estimate follows. We may therefore assume $\mathcal A_\Lambda(\ell)<\infty$.

By Lemma~\ref{lem:mass-fills-loop}, every stable small-mass rooted history $\mathfrak h$ contributing to the coefficientwise expansion satisfies
\begin{equation}
M(\mathfrak h)\ge \mathcal A_\Lambda(\ell).
\end{equation}
On the other hand, Proposition~\ref{prop:mass-sensitive-summability} gives, for every fixed coefficient order
\(r\), constants \(c>0\) and \(C_r<\infty\), depending only on \(d,r\), such that
\[
\sum_{\substack{\operatorname{ord}(h)=r\\ M(h)\ge A}}
\left|w^{[r]}_{\Lambda,T}(h)\right|
\le
C_r^{|\ell|}
\exp\{-(cT-\log C_r)A\},
\]
uniformly in the finite volume. Applying this with \(A=A_\Lambda(\ell)\) gives
\[
\left|\Phi^{(r)}_{\Lambda,T}(\ell)\right|
\le
C_r^{|\ell|}
\exp\{-(cT-\log C_r)A_\Lambda(\ell)\}.
\]
Thus~\eqref{eq:finite-volume-arealaw} holds with
\[
C_r(T):=C_r,\qquad \sigma_r(T)=cT-\log C_r.
\]
Choosing $T_{\mathrm{area}}(d,r)$ large enough makes $\sigma_r(T)>0$.  The infinite-volume estimate~\eqref{eq:infinite-volume-arealaw} follows by taking $\Lambda\uparrow\Z^d$, using the exponential locality and the convergence of the coefficients proved in Corollary~\ref{cor:infinite-volume-coefficients}.
\end{proof}

\begin{proof}[Proof of Theorem~\ref{thm:main-area-law}]
Apply Theorem~\ref{thm:area-law} with $r=0$ and set $T_{\mathrm{area}}(d)=T_{\mathrm{area}}(d,0)$, $C(T)=C_0\vee1$ and $\sigma(T)=\sigma_0(T)$. This gives~\eqref{eq:mf-area-law1}.

If $\ell$ is simple, then $|\ell|\le4\mathcal A(\ell)$ by~\eqref{eq:bound-l}.  Hence
\[
C(T)^{|\ell|}
e^{-\sigma_0(T)\mathcal A(\ell)}
\le
\exp\left\{
-\left(\sigma_0(T)-4\log C(T)\right)\mathcal A(\ell)
\right\}.
\]
After increasing $T_{\mathrm{area}}(d)$, the quantity
\[
\widetilde\sigma(T):=
\sigma_0(T)-4\log C(T)
\]
is positive.  This proves~\eqref{eq:mf-area-law2}.
\end{proof}

\subsection{Concluding remarks}\label{subsec:concluding-remarks}

We close the construction of the master field by discussing two complementary aspects which were not needed in the proof: its interpretation as a non-commutative probabilistic object, and the relation between the present coefficientwise expansion and gauge/string duality.

\subsubsection*{The master field as a non-commutative process}

The leading coefficient constructed above has a useful interpretation in non-commutative probability.  Fix a base point $o\in\Z^d$, and let $\mathcal L_o$ be the group of reduced lattice loops based at $o$, with product given by concatenation and inverse given by reversal.  The formula
\[
\tau_{\infty,T}^{(o)}(u_\ell):=\Phi^{(0)}_{\infty,T}(\ell),\qquad \ell\in\mathcal L_o,
\]
extended linearly to the group $*$-algebra $\mathbb C[\mathcal L_o]$, defines a normalized positive tracial state.  Indeed, it is the pointwise large-$N$ limit of the normalized matrix-trace states
\[
\tau_{\Lambda,T,N}^{(o)}(u_\ell)
=
\frac1N\mathbb E_{\Lambda,T,N}\Tr(U_\ell).
\]
Positivity is checked on finite linear combinations: if $a=\sum_i c_i u_{\ell_i}$, then all loops appearing in $a^*a$ are contained in a common finite support, and $\tau_{\infty,T}^{(o)}(a^*a)$ is the limit along finite volumes and $N\to\infty$ of $\tau_{\Lambda,T,N}^{(o)}(a^*a)\ge0$.  Thus the master field is not only a scalar Wilson-loop functional; it is the tracial non-commutative law of limiting lattice holonomies.  The factorization statement of Proposition~\ref{prop:factorization} says that scalar normalized traces become deterministic at leading order, but it does not imply freeness of the underlying loop-holonomy variables.

\subsubsection*{The master field and gauge/string duality}

Let us clarify the relation between the expansion proved above and the different gauge/string
duality statements which appear in the Wilson-action literature.  There are several distinct
notions which should not be conflated.

First, the finite-$N$ surface-sum of Cao--Park--Sheffield \cite{CPS25} is an exact surface expansion for the Wilson-action loop expectation, with the partition-function factor kept outside the surface sum.  Its surfaces are genuine embedded maps, but the formula is not a normalized connected expansion: vacuum components are not cancelled inside the surface sum itself.  The finite-$N$ decorated-surface expansion of the companion paper \cite{Lem26a} has the same structural feature in the Fourier setting.  It expands fixed topological coefficients, before division by the partition function.

Second, the string trajectories of Chatterjee and Jafarov \cite{Cha19,Jaf16} arise from a different mechanism. For the Wilson action, the large-$N$ master loop equation closes on Wilson loop observables. At strong coupling it can be solved by a contraction/Picard expansion, and the terms of this Neumann series may be interpreted as vanishing string trajectories.  This gives a normalized large-$N$ expansion, but it is an algebraic trajectory expansion rather than a surface formula by embedded maps.

Third, the planar surface formula of Borga--Cao--Shogren-Knaak \cite{BCSK24,BCSK25} should be viewed as a further Wilson-specific bridge between these two viewpoints.  It is not obtained by taking a termwise large-$N$ limit of the finite-$N$ Weingarten surface sum, nor by literally thickening Chatterjee's trajectories.  Rather, one proposes a planar embedded-map ansatz, motivated by the finite-$N$ surface formula, and then proves that this ansatz satisfies the limiting Wilson master loop equation.  This works because two simplifications occur at leading order for the Wilson action: the limiting equation is trace-closed, and the relevant planar weights reduce, after the required cancellations, to explicit signed Catalan weights.

The present heat-kernel/Fourier setting lacks both simplifications.  The coefficientwise master loop equation does not close on Wilson loops alone; it lives on rooted decorated-loop states and non-zero plaquette transfers change Fourier labels.  The rooted histories constructed above therefore keep track of spectral data and of an auxiliary order of resolution.  Their loop-only steps are trace-algebra operations; this auxiliary order has no geometric meaning as a time direction in the lattice, and the histories cannot be turned into embedded maps by the Borga--Cao--Shogren-Knaak ansatz.

Conversely, the decorated surfaces of \cite{Lem26a} are available only at the level of fixed finite-$N$ topological coefficients.  Since the partition-function normalization is external to that expansion, vacuum components created by the Fourier sum are not removed at the surface level.  Passing from those finite-$N$ surfaces to a normalized large-$N$ surface expansion would require an additional connected, or pinned, surface-level cluster expansion together with large-$N$ control of its local weights.  This is not provided by the present argument.

Thus the result proved here should be understood as a normalized, coefficientwise expansion over rooted strong-coupling histories, built from the same local-channel and recoupling calculus as the companion paper. The addition of local spectral data makes it much harder to give a gauge/string interpretation in the spirit of \cite{Cha19,Jaf16}.

\bibliographystyle{alpha}
\bibliography{master-field-Zd}

\end{document}